\documentclass{article}

\usepackage[final]{neurips_2025}

\usepackage{xcolor}
\definecolor{darkblue}{rgb}{0.0, 0.0, 0.55}
\definecolor{plumlike}{HTML}{9C2A66}
\definecolor{tocgray}{HTML}{666666} 

\usepackage[pagebackref=false,breaklinks=true,colorlinks=true,
            citecolor=darkblue,linkcolor=plumlike,urlcolor=darkblue]{hyperref}

\usepackage[utf8]{inputenc} %
\usepackage[T1]{fontenc}    %
\usepackage{hyperref}       %
\usepackage{url}            %
\usepackage{booktabs}       %
\usepackage{amsfonts}       %
\usepackage{nicefrac}       %
\usepackage{microtype}      %
\usepackage{xcolor}         %

\usepackage{algpseudocode}
\usepackage[linesnumbered,ruled,vlined]{algorithm2e}
\SetKwComment{tcp}{$\triangleright$\ }{}
\DontPrintSemicolon
\SetNoFillComment

\setlength{\algomargin}{1.5em}

\usepackage{amsmath,amssymb,bbm}
\usepackage{mathtools}
\usepackage{cases}

\usepackage{microtype}
\usepackage{nicefrac}

\usepackage{color}
\usepackage{appendix}

\usepackage[utf8]{inputenc} %

\usepackage{natbib}

\usepackage{xcolor}

\usepackage{multirow}
\usepackage{array}
\usepackage{setspace}
\usepackage{float}
\usepackage[font=small]{caption}
\usepackage{hhline}
\usepackage{adjustbox}

\usepackage[format=hang]{subcaption}

\usepackage{bbm}
\usepackage{thmtools}
\usepackage{mathtools}
\usepackage{amsmath}
\usepackage{enumitem}
\usepackage{amsfonts}
\usepackage{graphicx}
\usepackage{comment}
\usepackage{setspace}
\usepackage{amssymb}

\newenvironment{proof}[1][Proof]{\begin{trivlist}
\item[\hskip \labelsep {\bfseries #1}]}{\qed\end{trivlist}}

\newcommand*\x[0]{\textbf{x}}

\newcommand*\w[0]{\textbf{w}}

\newcommand*\R[0]{\mathbb{R}}

\newcommand*\E[1]{\mathbb{E}\left[#1\right]}

\newcommand*\lrb[1]{\left[#1\right]}

\newcommand*\lrbb[1]{\left\{#1\right\}}
\newcommand*\lrp[1]{\left(#1\right)}
\newcommand*\lrn[1]{\left\|#1\right\|}

\newcommand*\B[0]{\mathcal{B}}

\newcommand*{\qed}{\hfill\ensuremath{\blacksquare}}

\renewcommand*{\P}{\mathbf{P}}

\def\rd{\mathrm{d}}

\def\ball{\mathbb{B}}

\newcounter{assumption}%
\renewcommand{\theassumption}{\arabic{assumption}}

\newenvironment{assumption}[1][]{\begin{trivlist}\item[] \refstepcounter{assumption}%
 {\bf{Assumption\ \theassumption .}}  }{%
 \ifvmode\smallskip\fi\end{trivlist}}

\newcommand{\Prob}{\ensuremath{{\mathbb{P}}}}

\def\E{\mathbb{E}}
\def\P{\mathbb{P}}

\def\R{\mathbb{R}}
\def\cA{\mathcal{A}}
\def\cB{\mathcal{B}}
\def\cC{\mathcal{C}}
\def\cD{\mathcal{D}}

\def\cF{\mathcal{F}}
\def\cG{\mathcal{G}}

\def\cL{\mathcal{L}}
\def\cM{\mathcal{M}}
\def\cN{\mathcal{N}}
\def\cO{\mathcal{O}}

\def\cS{\mathcal{S}}

\def\cX{\mathcal{X}}
\def\cY{\mathcal{Y}}

\def\cU{\mathcal{U}}

\def\TV{d_\mathrm{{TV}}}

\def\dist{\mathrm{dist}}
\def\vol{\mathrm{Vol}}
\def\wq{W_{\infty}^{\ell_q}}
\def\w1{W_{\infty}^{\ell_1}}
\def\lapd{\mathrm{Lap}^{\otimes d}}
\def\mg{\gamma}
\def\dirac{\boldsymbol{\delta}^{\mathrm{Dirac}}}

\DeclareMathOperator*{\esssup}{ess\,sup}
\DeclareMathOperator*{\argmin}{arg\,min}

\def\bv{\mathbf{v}}
\def\apx{\text{apx}}
\def\pure{\text{pure}}
\def\unif{\mathrm{Unif}}
\def\lap{\mathrm{Laplace}}
\def\lapd{\mathrm{Laplace^{\otimes d}}}
\def\Lap{\mathrm{Lap}}
\def\occ{\mathrm{occ}}
\def\mx{\mathrm{x}}
\def\amac{\mathcal{A}_{\textrm{pure}}}
\def\abc{\mathcal{A}_{\textrm{pure-discrete}}}

\NewDocumentCommand{\yl}{ mO{} }{\textcolor{brown}{\textsuperscript{\textit{YL}}\textsf{{\small[#1]}}}}

\NewDocumentCommand{\ew}{ mO{} }{\textcolor{teal}{\textsuperscript{\textit{EW}}\textsf{{\small[#1]}}}}

\NewDocumentCommand{\tbd}{ mO{} }{\textcolor{red}{\textsuperscript{\textit{TBD}}\textsf{\small[#1]}}}

\usepackage{tikz}
\usetikzlibrary{arrows.meta, positioning}

\usepackage{wrapfig}

\usepackage{floatflt}

\let\citet\cite

\newtheorem{theorem}{Theorem}
\newtheorem{lemma}{Lemma}
\newtheorem{remark}[lemma]{Remark}
\newtheorem{example}[lemma]{Example}

\newtheorem{corollary}[lemma]{Corollary}
\newtheorem{definition}[lemma]{Definition}

\title{Purifying Approximate Differential Privacy with Randomized Post-processing}

\author{%
 Yingyu Lin\thanks{Equal Contribution. Alphabetical order.}, Erchi Wang\footnotemark[1], Yi-An Ma, Yu-Xiang Wang\\
      University of California, San Diego\\
  \texttt{\{yil208, erw011, yianma, yuxiangw\}@ucsd.edu} \\
}

\begin{document}

\maketitle

\begin{abstract}%
We propose a framework to convert $(\varepsilon, \delta)$-approximate Differential Privacy (DP) mechanisms into $(\varepsilon', 0)$-pure DP mechanisms under certain conditions, a process we call ``purification.'' This algorithmic technique leverages randomized post-processing with calibrated noise to eliminate the $\delta$ parameter while achieving near-optimal privacy-utility tradeoff for pure DP. It enables a new design strategy for pure DP algorithms: first run an approximate DP algorithm with certain conditions, and then purify. 
This approach allows one to leverage techniques such as strong composition and propose-test-release that require $\delta>0$ in designing pure-DP methods with $\delta=0$. We apply this framework in various settings, including Differentially Private Empirical Risk Minimization (DP-ERM), stability-based release, and query release tasks. To the best of our knowledge, this is the first work with a statistically and computationally efficient reduction from approximate DP to pure DP. Finally, we illustrate the use of this reduction for proving lower bounds under approximate DP constraints with explicit dependence in $\delta$, avoiding the sophisticated fingerprinting code construction.  
\end{abstract}

\section{Introduction}
\label{sec:intro}

Differential privacy (DP), in its original form \citep[Definition 1]{dwork2006calibrating}, has only one privacy parameter $\varepsilon$. Over the two decades of research in DP, many have advocated that DP is too stringent to be practical and have proposed several relaxations. Among them, the most popular is arguably the \emph{approximate DP} \citep{dwork2006our}, which introduces a second parameter $\delta$.

\begin{definition}[Differential privacy \citep{dwork2006calibrating,dwork2014algorithmic}]
\label{defn:dp}
A mechanism $\mathcal{M}$ satisfies $(\varepsilon, \delta)$-differential privacy if, for all neighboring datasets $D \simeq D'$ (datasets differing in at most one entry) and for any measurable set $S \subseteq \text{Range}(\mathcal{M})$, it holds that:
\[
\P[\mathcal{M}(D) \in S] \leq e^\varepsilon \P[\mathcal{M}(D') \in S] + \delta.
\]
\end{definition}
\vspace{-0.5em}
When $\delta=0$, the definition is now fondly referred to as $\varepsilon$-\emph{pure} DP.
Choosing $\delta>0$ significantly weakens the protection, as it could leave any event with probability smaller than $\delta$ completely unprotected. %

Two common reasons why DP researchers adopt this relaxation are:
\textbf{1. Utility gain:} approximate DP is perceived to be more practical, as it allows for larger utility;   \textbf{2. Flexible algorithm design:} Many algorithmic tools (such as advanced composition and Propose-Test-Release) support approximate DP but not pure DP, which enables more flexible (and often more efficient) algorithm design when $\delta > 0$ is permitted. For these two reasons, it is widely believed that the relaxation is a \emph{necessary evil}.

\begin{floatingfigure}{0.4\textwidth}
    \centering
    \includegraphics[width=0.35\linewidth]{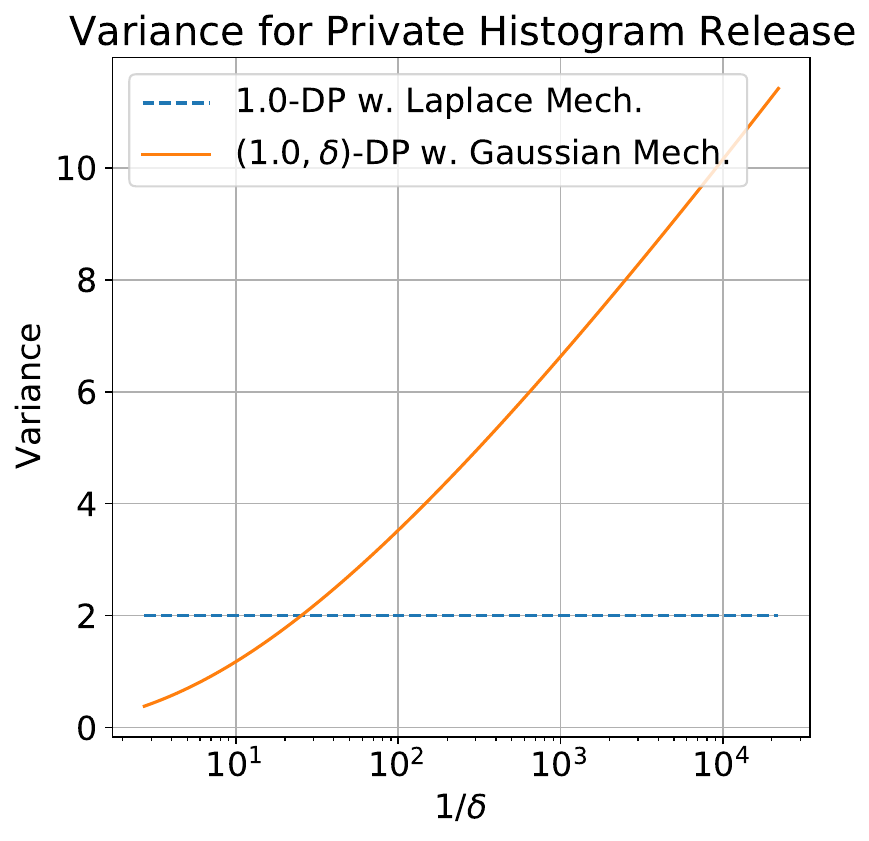}
    \caption{Per-coordinate variance of private histogram release under pure-DP and approximate DP with the same $\varepsilon$ parameter. The dashed line is given by Laplace mechanism with pure $1.0$-DP, regardless of $\delta$. The solid line is generated using the analytic Gaussian mechanism in \cite{balle2018improving}.} 
    \label{fig:myth}
\end{floatingfigure}

We argue that the claim of ``worse utility for pure-DP'' is oftentimes a \emph{myth}. In some applications of DP, a pure-DP mechanism has both stronger utility and stronger privacy. For example, in low-dimensional private histogram release, the Laplace mechanism enjoys smaller variance in almost all regimes except when $\delta$ is too large to be meaningful (see Figure~\ref{fig:myth}).
In other cases, the poor utility is sometimes not caused by an information-theoretic barrier, but rather due to the second issue --- it is often much harder to design optimal pure-DP mechanisms. 

In the problem of privately releasing $k$ linear queries of the dataset in $\{0,1\}^d$, it may appear that pure-DP mechanisms are much worse if we only consider composition-based methods.  The composition of the Gaussian mechanism enjoys an expected worst-case error of $\cO_p(\frac{\sqrt{k\log(1/\delta)}}{n\varepsilon})$. On the other hand, if we wish to achieve pure DP, the composition of the Laplace mechanism's error bound becomes $\cO_p(\frac{k}{n\varepsilon})$. However, the bound can be improved to $\cO_p(\frac{\sqrt{k d}}{n\varepsilon})$
when we use a more advanced pure-DP algorithm known as the K-norm mechanism \citep{hardt2010geometry} that avoids composition. %

In the problem of private empirical risk minimization (DP-ERM) \citep{bassily2014private}, the optimal excess empirical risk of $\mathbf{\Theta}\left(\frac{\sqrt{d\log(1/\delta)}}{n\varepsilon}\right)$ under approximate DP is achieved by the noisy-stochastic gradient descent (NoisySGD) mechanism and its full-batch gradient counterpart. These methods are natural and are by far the strongest approaches for differentially private deep learning as well \citep{abadi2016deep, de2022unlocking}. The optimal rate of $\mathbf{\Theta}(\frac{d}{n\varepsilon})$ for pure DP, on the contrary, cannot be achieved by a Laplace noise version of NoisySGD, as advanced composition is not available for pure DP. Instead, it requires an exponential mechanism that demands a delicate method to deal with the mixing rate and sampling error of certain Markov chain Monte Carlo (MCMC) sampler~\citep{lin2024tractable}. 

The issue with the lack of algorithmic tools for pure DP becomes more severe in data-adaptive DP mechanisms.
For example, all Propose-Test-Release (PTR) style methods \citep{dwork2009differential} involve privately testing certain properties of the input dataset, which inevitably incurs a small failure probability that requires choosing $\delta>0$. While smooth-sensitivity-based methods \citep{nissim2007smooth} can achieve pure DP, they require adding heavy-tailed noise, which causes the utility to deteriorate exponentially as the dimension $d$ increases. 

\subsection{Summary of the Results}
In this paper, we develop a new algorithmic technique called ``purification'' that takes any $(\varepsilon,\delta)$-approximate DP mechanism and converts it into an $(\varepsilon+\varepsilon')$-pure DP mechanism, while still enjoying similar utility guarantees of the original algorithm. 

The contributions of our new technique for achieving pure DP are as follows.
\begin{enumerate}
\item It simplifies the design of the near-optimal pure DP mechanism by allowing the use of tools reserved for approximate DP.
\item It enables an $O(\sqrt{k})$-type composition without compromising the DP guarantee with $\delta>0$, where $k$ is the number of composition. %
\item It shows that PTR-like mechanisms with pure DP are possible! To the best of our knowledge, our method is the only pure DP method for such purposes.          
\item In DP-ERM and private selection problems, we show that, up to a logarithmic factor, the resulting pure DP mechanism enjoys an error rate that matches the optimal rate for pure DP, i.e., replacing the $\log(1/\delta)$ term with the dimension $d$ or $\log(|\mathrm{OutputSpace}|)$.
\end{enumerate}

\begin{table}
\centering
\caption{
Summary of applications of our purification technique for constructing new pure-DP mechanisms from existing approximate DP mechanisms. 
The resulting pure-DP mechanisms are either \emph{information-theoretically optimal or match the results of the best-known pure-DP mechanisms for the task} (we include more discussion in Appendix~\ref{appen:opt}).
DP-ERM (SC) refers to DP-ERM with a strongly convex objective function, where $\lambda$ denotes the strong convexity parameter of the individual loss function. DP-ERM ($\ell_1$) refers to DP-ERM with an $\ell_1$ constraint. $\lambda_{\min}$ in linear regression is the smallest eigenvalue of the sample covariance matrix, $X^TX / n$. All results are presented up to a logarithmic factor (in $n$ and other parameters, but not in $1/\delta$. We note that in the table $\log(1/\delta)$ is proportional to $d$, the effective dimension.)
}
\resizebox{\textwidth}{!}{  
\begingroup
\renewcommand{\arraystretch}{1.72} 
\begin{tabular}{c|c|c|c}
    Problem & $(\varepsilon,\delta)$-DP Mechanism & Utility (before purification) & Utility (after) %
    \\\hline
    DP-ERM &  Noisy-SGD \citep{bassily2014private} &  $\frac{\sqrt{d\log (1/\delta)} L\|\cC\|_2}{n\varepsilon}$  & $\frac{d L\|\cC\|_2}{n\varepsilon}$ (Thm~\ref{thm:dp-sgd-pure})\\
        DP-ERM (SC)   &  Noisy-SGD \citep{bassily2014private} &  $\frac{d\log (1/\delta) L^2}{\lambda n^2\varepsilon^2 }$  & $\frac{d^2 L^2}{\lambda n^2\varepsilon^2}$  
        (Thm~\ref{thm:dp-sgd-pure})\\
    DP-ERM ($\ell_1$) & DP-Frank-Wolfe \citep{talwar2014private} & $\frac{(\log d)^{2/3}(\log (1/\delta))^{1/3}}{(n\varepsilon)^{2/3}}$ & $\sqrt{\frac{\log d}{n\varepsilon}}$ (Thm~\ref{thm: dp-fw-pure}) \\
    Bounding $\Delta_{\mathrm{local}}$ & PTR-type \citep{node, Decarolis20}& $\frac{d^{1/2}(\Delta_{\mathrm{local}} +  \log(1/\delta)/\varepsilon)}{\varepsilon}$ & $\frac{d^{1/2}\Delta_{\mathrm{local}}}{\varepsilon} + \frac{d^{3/2}}{\varepsilon^2}$ (Thm~\ref{thm: util_loc_release})\\
    Mode Release & Distance to Instability \citep{thakurta2013differentially}& $\frac{\log(1/\delta)}{\varepsilon}$ &$\frac{\log|\mathcal{\cX}|}{\varepsilon}$ (Thm~\ref{thm:mode}) \\
    Linear Regression & AdaSSP \citep{wang2018revisiting} & $\min\{\frac{\sqrt{d\log(1/\delta)}}{n\varepsilon}, \frac{d\log(1/\delta)}{\lambda_{\min}n^2 \varepsilon^2}\}$ & $\min\{\frac{d}{n \varepsilon}, \frac{d^2}{\lambda_{\min}n^2 \varepsilon^2}\}$ (Thm~\ref{thm: util_adassp})\\
    Query Release & MWEM \cite{hardt2012simple}& $\frac{(\log k)^{1/2} ( d \log(1/\delta))^{1/4} }{\sqrt{n\varepsilon}}$& $\frac{(\log k)^{1/3} d^{1/3} }{(n\varepsilon)^{1/3}}$ (Thm~\ref{thm:query_pmw})\\
    \hline
\end{tabular}
\endgroup
}
\label{tab:utility_conversion}
\end{table}

\textbf{Technical summary.}  The main idea of the ``purification'' technique is to use a randomized post-processing approach to ``smooth'' out the $\delta$ part of $(\varepsilon,\delta)$-DP. To accomplish this, we proceed as follows. We leverage an equivalent definition of $(\varepsilon,\delta)$-DP that interprets $\delta$ as the total variation distance from a pair of hypothetical distributions that are $\varepsilon$-indistinguishable. Next, we develop a method to convert the total variation distance to the $\infty$-Wasserstein distance. Finally, we leverage the Approximate Sample Perturbation (ASAP) technique from \citet{lin2024tractable} that achieves pure DP by adding Laplace noise proportional to the $\infty$-Wasserstein distance. %
A challenge arises because the output distribution of a generic $(\varepsilon,\delta)$-DP mechanism is not guaranteed to be supported on the entire output space, which may invalidate a tight TV distance to $W_\infty$ conversion.
We address this by interlacing a very small uniform distribution over a constraint set. Another challenge is that, unlike in \citet{lin2024tractable}, where the $\varepsilon$-indistinguishable distributions correspond to pure DP mechanisms on neighboring datasets, here the hypothetical distribution may depend on both neighboring datasets rather than just one. This prevents the direct application of the standard DP analysis of the Laplace mechanism and the composition theorem. To address this, we formulate our analysis in terms of the indistinguishability of general distributions rather than DP-specific language. We use a Laplace perturbation lemma and the weak triangle inequality, leading to a clean and effective analysis. %
Moreover, we show how the dimension-reduction technique can be used so the purification technique can be applied to discrete outputs and to sparse outputs, which ensures only logarithmic dependence in the output-space cardinality or dimension.%

\subsection{Related Work}

The idea of using randomized post-processing to enhance privacy guarantees has been explored in prior work \citep{feldman2018privacy, mangoubi2022sampling, lin2024tractable}. \citet{feldman2018privacy} focus on R\'enyi differential privacy, while \citet{mangoubi2022sampling, lin2024tractable} study the implementation of the exponential mechanism, specifically perturbing Markov chain Monte Carlo (MCMC) samples to obtain pure DP guarantees. Our work builds on \citet{lin2024tractable}, where we generalize the domain assumption of \citet[Lemma~8]{lin2024tractable} and extend the approach to more general upstream approximate DP mechanisms, such as DP-SGD \citep{abadi2016deep, de2022unlocking}. 
We also discuss the connection to the classical statistics literature on randomized post-processing given by \citet[Theorem~10]{blackwell1951comparison} in Appendix~\ref{appen:post_process}.

Previous works have investigated the purification of approximate differential privacy into pure DP, but limited to settings with a finite output space. A straightforward uniform mixing method can purify approximate DP mechanisms with finite output spaces, as summarized in \citet{huang2022lemmas}, and we further discuss it in Appendix~\ref{sec:discrete}. \citet{bun2023stability} first transform an approximate DP mechanism into a replicable one, and then apply a pure DP selection procedure to obtain a pure DP mechanism\footnote{Presented in \citet[Algorithm~1]{bun2023stability}, the algorithm can achieve a pure DP guarantee by replacing the approximate DP selection step with a pure DP alternative.}, at the cost of increased sample complexity.

\section{Algorithm: Converting Approximate DP to Pure DP}
\label{sec:main_results}

In this section, we propose the purification algorithm (Algorithm~\ref{alg:main}) which converts $(\varepsilon, \delta)$-approximate DP mechanisms with continuous output spaces into $\varepsilon'$-pure DP mechanisms under certain conditions. The algorithm consists of two steps: (1) mixing the approximate DP output with a uniform distribution (Line~\ref{step:unif_1}), and (2) adding Laplace noise calibrated to $\delta^{\frac{1}{d}}$ (Line~\ref{step:laplace}.) Intuitively, the first step is to enforce a bound on the $\infty$-Wasserstein distance (Definition~\ref{def:W_infty}), which can be loosely interpreted as a randomized analogue of $\ell_1$ sensitivity. The second step adds Laplace noise proportional to this Wasserstein bound to guarantee pure DP. This step is based on techniques from \citet{song2017pufferfish,lin2024tractable}, and can be viewed as the generalization of the Laplace mechanism. The privacy and utility guarantee of Algorithm~\ref{alg:main} are provided in Theorem~\ref{thm:pure_main}.

\begin{algorithm}[h!]
\setcounter{AlgoLine}{0}
\caption{$\amac(x_{\apx}, \Theta, \varepsilon^\prime,\delta, \omega)$: Purification of Approximate Differential Privacy
}
\label{alg:main}
\textbf{Input:}  Privacy parameters $\varepsilon, \delta$, additional privacy budget $\varepsilon'$, an output $x_{\apx}$ of an $(\varepsilon,\delta)$-DP algorithm $\cM$ satisfying Assumption~\ref{assump:domain}, an $\ell_q$ ball $\Theta$ as Assumption~\ref{assump:domain}, the mixture level $\omega$\\ 
Set $\Delta \gets 2 d^{1-\frac{1}{q}} R\lrp{\frac{\delta}{2\omega}}^{\frac{1}{d}}$ \label{step:Delta}\Comment{Lemma~\ref{lem:W_infty_TV_q}.}\\
With probability $1-\omega$, set $x \leftarrow x_{\apx}$; otherwise, sample $x \sim \textrm{Unif}(\Theta)$. \label{step:unif_1}\Comment{Uniform Mixing}\\
$x_{\pure}\leftarrow x+\lapd\lrp{2\Delta/\varepsilon'}$ \label{step:laplace}\\
\textbf{Output: $x_{\mathrm{pure}}$}
\end{algorithm}

\textbf{Notations.} %
Let $\cX$ be the space of data points, $\cX^*:=\cup_{n=0}^\infty \cX^n$ be the space of the data set. 
For a vector $v=(v_1, \dots, v_d) \in \R^d$ and $q \ge 1$, we define its $\ell_q$-norm as $\|v\|_q := \left(\sum_{i=1}^d |v_i|^q\right)^{1/q}$.
For a set $S \subseteq \R^d$, we denote its $\ell_q$-norm diameter $\mathrm{Diam}_q(S):= \sup_{x,y\in S}\|x-y\|_q$. 
For a (randomized) function $\cM:\cX\to\R^d$, we denote its range as $\mathrm{Range}(\cM)=\{\cM(D):D\in\cX^*\}$.

\begin{assumption}
    \label{assump:domain}
    The $(\varepsilon,\delta)$-DP algorithm $\cM$ satisfies $\mathrm{Diam}_q\lrp{\mathrm{Range}(\cM)}\leq R$. Specifically, it lies in an $\ell_q$ ball of radius $R$, denoted by $\Theta$.
\end{assumption}

\begin{theorem}
    \label{thm:pure_main}
    Define $x_{\pure}, x_{\apx}$ as in Algorithm~\ref{alg:main} under Assumption~\ref{assump:domain}. 
    The output of Algorithm~\ref{alg:main} satisfies $(\varepsilon+\varepsilon')$-DP
    with utility guarantee 
    \[
    \E\lrn{x_{\pure}-x_{\apx}}_1  \leq \omega R + \frac{4 R d }{\varepsilon'} \lrp{\frac{\delta}{2\omega}}^{\frac{1}{d}}.
    \]
\end{theorem}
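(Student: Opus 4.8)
The plan is to treat the two claims separately: the utility bound is a short triangle-inequality computation, whereas the privacy claim is the substance and proceeds through a three-step chain of distributional comparisons. I would first note that proving $(\varepsilon+\varepsilon')$-\emph{pure} DP is equivalent to showing that, for every neighboring pair $D \simeq D'$, the two output distributions of Algorithm~\ref{alg:main} are $(\varepsilon+\varepsilon')$-indistinguishable (pointwise density ratio in $[e^{-(\varepsilon+\varepsilon')}, e^{\varepsilon+\varepsilon'}]$), with \emph{no} residual additive slack. So the whole game is to convert the $\delta$ of the upstream mechanism into an additive $\varepsilon'$ in the exponent rather than leaving it as a failure probability.

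For the privacy proof, fix $D \simeq D'$ and let $\mu,\nu$ denote the distributions of the mixed variable $x$ (after Line~\ref{step:unif_1}) on $D$ and $D'$; the final outputs are $\mu * \lapd(2\Delta/\varepsilon')$ and $\nu * \lapd(2\Delta/\varepsilon')$. I would then invoke three facts in sequence. (i) The TV-based characterization of approximate DP: since $\cM$ is $(\varepsilon,\delta)$-DP, there exist $\varepsilon$-indistinguishable ``hypothetical'' distributions $\tilde\mu,\tilde\nu$ (depending on the \emph{pair} $(D,D')$) with $\TV(\mu,\tilde\mu)$ and $\TV(\nu,\tilde\nu)$ controlled by $\delta$ and $\omega$. (ii) The uniform mixing in Line~\ref{step:unif_1} forces a density floor of order $\omega/\vol(\Theta)$ on all of $\Theta$, which is precisely the support condition letting Lemma~\ref{lem:W_infty_TV_q} upgrade those TV bounds into $W_\infty$ bounds (Definition~\ref{def:W_infty}), namely $W_\infty(\mu,\tilde\mu)\le\Delta$ and $W_\infty(\nu,\tilde\nu)\le\Delta$ with the $\Delta$ of Line~\ref{step:Delta}. (iii) The Laplace perturbation lemma: convolving two distributions at $W_\infty$-distance $\Delta$ with $\lapd(2\Delta/\varepsilon')$ makes them $\tfrac{\varepsilon'}{2}$-indistinguishable. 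Chaining these, $\mu*\lapd$ is $\tfrac{\varepsilon'}{2}$-indistinguishable from $\tilde\mu*\lapd$, which is $\varepsilon$-indistinguishable from $\tilde\nu*\lapd$ (post-processing preserves the $\varepsilon$-indistinguishability of $\tilde\mu,\tilde\nu$), which is $\tfrac{\varepsilon'}{2}$-indistinguishable from $\nu*\lapd$; the (weak) triangle inequality for indistinguishability, which in the pure regime simply adds the parameters, then gives $(\varepsilon+\varepsilon')$-indistinguishability, i.e.\ pure DP.

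The main obstacle is conceptual rather than computational: the hypotheticals $\tilde\mu,\tilde\nu$ depend on \emph{both} datasets, so the post-Laplace distribution is not the output of any single pure-DP mechanism on one dataset. This blocks a naive appeal to the standard Laplace-mechanism DP analysis or to a composition theorem, and it is why the whole argument must be phrased as indistinguishability \emph{between two fixed distributions} chained by the triangle inequality, rather than in mechanism-level DP language. I would be careful to check two points: first, that after the Laplace smoothing the comparisons are genuinely pure, so the TV gaps are fully absorbed into the $W_\infty$/Laplace step with no leftover $\delta$; and second, that the density floor produced by mixing exactly meets the hypothesis of Lemma~\ref{lem:W_infty_TV_q}. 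The factor $2$ appearing in both $\Delta$ and the noise scale $2\Delta/\varepsilon'$ is what splits the budget as $\tfrac{\varepsilon'}{2}+\tfrac{\varepsilon'}{2}$ across the two perturbation sides.

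For utility, I would bound $\E\lrn{x_{\pure}-x_{\apx}}_1 \le \E\lrn{x - x_{\apx}}_1 + \E\lrn{x_{\pure}-x}_1$. The first term is nonzero only on the mixing event (probability $\omega$), where both $x$ and $x_{\apx}$ lie in $\Theta$, so it is at most $\omega\cdot\mathrm{Diam}_1(\Theta) \le \omega R$. The second term is the mean $\ell_1$-norm of $\lapd(2\Delta/\varepsilon')$, equal to $2d\Delta/\varepsilon'$; substituting the value of $\Delta$ from Line~\ref{step:Delta} yields the second summand $\tfrac{4Rd}{\varepsilon'}\lrp{\tfrac{\delta}{2\omega}}^{1/d}$. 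Adding the two contributions gives the claimed bound.
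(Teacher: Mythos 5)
Your proposal is correct and follows essentially the same route as the paper's own proof: the identical chain Lemma~\ref{lem:appox_DP_eq_def} (pair-dependent hypotheticals) $\to$ uniform mixing for the density floor $\to$ Lemma~\ref{lem:W_infty_TV_q} (TV to $W_\infty$) $\to$ Lemma~\ref{lem:laplace_wasserstein} ($\varepsilon'/2$ per side) $\to$ Lemma~\ref{lem:weak_triangle} (weak triangle inequality), with the same utility split into the $\omega R$ mixing term plus the expected $\ell_1$ norm of the Laplace noise. The only detail you gloss over is that Lemma~\ref{lem:W_infty_TV_q} yields an $\ell_q$-Wasserstein bound while Lemma~\ref{lem:laplace_wasserstein} needs the $\ell_1$ version, which is exactly why the $d^{1-\frac{1}{q}}$ factor appears in $\Delta$ on Line~\ref{step:Delta}; since you invoke that $\Delta$ as defined, this is a presentational rather than substantive omission.
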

The detailed proofs are deferred to Appendix~\ref{sec:pf_tech} and Appendix~\ref{sec:pf_conversion}. For clarity, Theorem~\ref{thm:pure_main} presents the utility guarantee in the $\ell_1$ norm. Extensions to general $\ell_p$ norms follow directly from bounding the expected $\ell_q$ norm of the Laplace noise (see Equation~\eqref{eq:proof_teck}).

\begin{remark}
\label{rmk:delta_d}
    When applying Algorithm~\ref{alg:main} to various settings as shown in Table~\ref{tab:utility_conversion}, the utility bounds either match the known information-theoretic lower bounds for pure DP or the best-known pure-DP mechanisms for the task.
    By the parameter setting given in Line~\ref{step:Delta} of Algorithm~\ref{alg:main}, the $\log(1/\delta) $ factor in the utility bounds can be replaced by $d$, omitting the logarithmic factors. In Section~\ref{sec:DP-FW}, we further show how dimension-reduction techniques can be used when applying purification to settings with sparsity conditions. 
\end{remark}

\begin{remark}
Parameter choices of Algorithm~\ref{alg:main} for different settings are provided in later sections. For example, parameters for the purified DP-SGD are given in Corollary~\ref{lem: appropriate_delta}. Note that in Algorithm~\ref{alg:main}, we only require the range of $\cM$ to be a \emph{subset} of the $\ell_q$-ball $\Theta$, where $\Theta$ can be selected as $\ell_1$ balls ($q=1$), $\ell_2$ balls ($q=2$), or hypercubes ($q=\infty$), which admits simple $\cO(d)$-runtime uniform sampling oracles. %
\end{remark}

\begin{corollary}
    [Parameters of Algorithm~\ref{alg:main} for DP-SGD]\label{lem: appropriate_delta}
    Let $\cM:\cX^*\to\Theta$ be an $(\varepsilon, \delta)$-DP mechanism, where $\Theta\subset \mathbb{R}^d$ is an $\ell_2$ ball with $\ell_2$-diameter $C$. Let $x_\apx$ be the output of $\cM$. Set the mixture level parameter as $\omega=\frac{1}{n^2}$, and set $\delta=\frac{2\omega}{(16Cdn^2)^d}$. %
    Then, $\amac(x_{\apx}, \Theta, \varepsilon^\prime=\varepsilon,\delta, \omega)$ satisfies $2\varepsilon$-DP guarantee with utility bound $\mathbb{E}[\|x_{\pure}-x_\apx\|_2]\leq \frac{1}{n^2\varepsilon} + \frac{C}{n^2}$. %
\end{corollary}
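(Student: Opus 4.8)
The plan is to obtain this corollary as a direct instantiation of Theorem~\ref{thm:pure_main}, with one genuine piece of extra work: converting the utility guarantee from $\ell_1$ to $\ell_2$. First I would check the hypotheses. Since $\cM$ maps into $\Theta$, an $\ell_2$ ball of $\ell_2$-diameter $C$, we have $\mathrm{Diam}_2(\mathrm{Range}(\cM)) \le C$, so Assumption~\ref{assump:domain} holds with $q = 2$ and $R = C$. The privacy claim is then immediate: Theorem~\ref{thm:pure_main} certifies $(\varepsilon + \varepsilon')$-DP, and the choice $\varepsilon' = \varepsilon$ gives $2\varepsilon$-DP, independently of the particular numerical values of $\delta$ and $\omega$.

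For the utility I would \emph{not} route through the $\ell_1$ bound of Theorem~\ref{thm:pure_main}, since translating it to $\ell_2$ via $\|\cdot\|_2 \le \|\cdot\|_1$ would inflate the noise term by a factor of $\sqrt d$. Instead I would recompute directly in $\ell_2$. Writing $x_\pure = x + Z$ with $Z \sim \lapd(2\Delta/\varepsilon')$ and applying the triangle inequality,
\[
\mathbb{E}\|x_\pure - x_\apx\|_2 \;\le\; \mathbb{E}\|x - x_\apx\|_2 + \mathbb{E}\|Z\|_2 .
\]
The first term is the mixing cost: $x = x_\apx$ with probability $1-\omega$, and otherwise both points lie in $\Theta$, so $\mathbb{E}\|x - x_\apx\|_2 \le \omega\,\mathrm{Diam}_2(\Theta) = \omega C = C/n^2$. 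For the second term I would use the second moment of the Laplace noise: with scale $b = 2\Delta/\varepsilon'$ and $\mathbb{E}Z_i^2 = 2b^2$, Jensen's inequality gives $\mathbb{E}\|Z\|_2 \le \lrp{\sum_i \mathbb{E}Z_i^2}^{1/2} = b\sqrt{2d}$.

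It then remains to substitute the parameters and watch the dimension cancel. With $q=2$, $R=C$, $\omega = 1/n^2$, and $\delta = 2\omega/(16Cdn^2)^d$, the key simplification is $(\delta/2\omega)^{1/d} = 1/(16Cdn^2)$, whence $\Delta = 2\sqrt d\,C\,(\delta/2\omega)^{1/d} = 1/(8\sqrt d\,n^2)$ and $b = 2\Delta/\varepsilon = 1/(4\sqrt d\,n^2\varepsilon)$. Plugging into $b\sqrt{2d}$ yields $\mathbb{E}\|Z\|_2 \le \sqrt2/(4 n^2\varepsilon) \le 1/(n^2\varepsilon)$, which combined with the mixing cost gives exactly $\mathbb{E}\|x_\pure - x_\apx\|_2 \le 1/(n^2\varepsilon) + C/n^2$.

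The hard part will be the dimension bookkeeping just described. One must resist reusing the $\ell_1$ headline bound, and instead verify that the $\sqrt d$ from the per-coordinate Laplace second moment (so that $\mathbb{E}\|Z\|_2 \approx b\sqrt d$) is precisely offset by the $\sqrt d$ appearing in the denominator of $b$ through both the $d^{1-1/q} = \sqrt d$ prefactor in $\Delta$ and the $16Cdn^2$ normalization hidden in $\delta$. I would also sanity-check that the prescribed $\delta$ is a legitimate privacy parameter (positive and at most $1$) and that the numerical constant satisfies $\sqrt2/4 \le 1$, so that the clean stated bound holds with the constants as written.
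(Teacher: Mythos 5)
Your proposal is correct and is essentially the paper's own proof: the paper proves this corollary by noting $R = C$ for the $\ell_2$ ball and invoking Theorem~\ref{thm:pure_main} with $q = 2$, whose proof (Eq.~\eqref{eq:proof_teck}) contains exactly your decomposition into the mixing cost $\omega R$ plus the Laplace term, with the same Jensen bound $\mathbb{E}\|Z\|_2 \le \sqrt{2d}\,\cdot 2\Delta/\varepsilon'$ rather than a lossy $\ell_1$-to-$\ell_2$ conversion. Your parameter substitutions ($(\delta/2\omega)^{1/d} = 1/(16Cdn^2)$, $\Delta = 1/(8\sqrt{d}\,n^2)$, final constant $\sqrt{2}/4 \le 1$) match the intended calculation.
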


The purification algorithm can be applied to finite output spaces. The key idea is to embed the elements in the finite output space into a hypercube using the binary representation.
Given a finite output space $\cY=\{1,2,3,\dots,2^d\}\doteq[|2^d|]$, we first map each element to its binary representation in $\{0,1\}^d$. Then, we apply Algorithm~\ref{alg:main} on the cube $[0,1]^d$ in the Euclidean Space $\R^d$. %
The procedure is outlined in Algorithm~\ref{alg:discrete} with DP and utility guarantees provided in Theorem~\ref{thm:index_embed}, and the proof is deferred to Appendix~\ref{sec:pf_conversion}. %

\begin{algorithm}[h!]
\caption{$\abc(\varepsilon, \delta, u_{\apx}, \cY)$: Binary Embedding Purification for Finite Spaces}
\label{alg:discrete}

\setcounter{AlgoLine}{0}
 \textbf{Input:} %
privacy parameters $\varepsilon, \delta$, binary representation mapping $\mathsf{BinMap}:[2^d] \to \{0, 1\}^d$, $\qquad$output $u_{\apx}$ from $(\varepsilon, \delta)$-DP mechanism $\widetilde{M}: \cX^*\to\cY= [2^d]$,    \\ %
 $z_\text{bin} \gets \mathsf{BinMap}(u_\apx)$ \Comment{Binary embedding}\\
 $z_\text{pure} \gets \amac(z_\text{bin}, \Theta=[0,1]^d, \varepsilon'=\varepsilon, \delta,\omega=2^{-d})$\Comment{Purify the embedding by Algorithm~\ref{alg:main}}\\
 $z_\text{round} \gets \mathsf{Round}_{\{0,1\}^d}\lrp{z_\text{pure}}$ \Comment{$\mathsf{Round}_{\{0,1\}^d}(\boldsymbol{x}) = (\mathbf{1}(x_i \geq 0.5))_{i=1}^{d}$}\\
 $u_\text{pure} \gets \mathsf{BinMap}^{-1}(z_\text{round})$ \Comment{Decode index back to decimal integer index} \\
 \textbf{Output:} $u_\text{pure}$
\end{algorithm}

\begin{theorem}
    \label{thm:index_embed}
    If $\delta<\frac{\varepsilon^d}{ (2d)^{3d}}$, then Algorithm~\ref{alg:discrete} satisfies $(2\varepsilon,0)$-pure DP with utility guarantee $\P\lrb{u_\apx=u_\pure}>1-2^{-d}-\frac{d}{2} e^{-d}$.
\end{theorem}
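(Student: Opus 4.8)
The plan is to dispatch the privacy claim by post-processing and concentrate the real work on the utility bound, which reduces to controlling a coordinate-wise rounding event against the two noise sources in Algorithm~\ref{alg:discrete}.

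\textbf{Privacy.} First I would check that Algorithm~\ref{alg:discrete} is a legitimate instance of Algorithm~\ref{alg:main}. The embedding $z_\text{bin}=\mathsf{BinMap}(u_\apx)$ is a deterministic function of the $(\varepsilon,\delta)$-DP output $u_\apx$, hence itself $(\varepsilon,\delta)$-DP, and its range $\{0,1\}^d$ sits inside the hypercube $\Theta=[0,1]^d$, which has $\ell_\infty$-diameter $R=1$ (so $q=\infty$); thus Assumption~\ref{assump:domain} holds. Applying Theorem~\ref{thm:pure_main} with $\varepsilon'=\varepsilon$ shows $z_\text{pure}$ is $(\varepsilon+\varepsilon')=(2\varepsilon,0)$-pure DP. Since $z_\text{round}=\mathsf{Round}_{\{0,1\}^d}(z_\text{pure})$ and $u_\pure=\mathsf{BinMap}^{-1}(z_\text{round})$ are deterministic post-processing, $u_\pure$ remains $(2\varepsilon,0)$-pure DP, settling the first claim.

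\textbf{Utility reduction.} Because $\mathsf{BinMap}$ is a bijection between $[2^d]$ and $\{0,1\}^d$, the event $\{u_\pure=u_\apx\}$ equals $\{z_\text{round}=z_\text{bin}\}$, so I would upper bound $\P\lrb{z_\text{round}\neq z_\text{bin}}$. Let $E$ be the mixing event of Line~\ref{step:unif_1}, with $\P\lrb{E}=\omega=2^{-d}$, which I discard outright. On $E^c$ we have $z_\text{pure}=z_\text{bin}+\eta$ with $\eta\sim\mathrm{Lap}^{\otimes d}(b)$ and $b=2\Delta/\varepsilon'$. Since each $(z_\text{bin})_i\in\{0,1\}$ and $\mathsf{Round}$ thresholds at $1/2$, the $i$-th coordinate is misrounded exactly on $\{\eta_i\ge 1/2\}$ (if $(z_\text{bin})_i=0$) or $\{\eta_i<-1/2\}$ (if $(z_\text{bin})_i=1$), each of probability $\tfrac12 e^{-1/(2b)}$ by the Laplace tail and symmetry. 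A union bound over the $d$ coordinates gives $\P\lrb{z_\text{round}\neq z_\text{bin}\mid E^c}\le \tfrac{d}{2}e^{-1/(2b)}$, whence $\P\lrb{z_\text{round}\neq z_\text{bin}}\le 2^{-d}+\tfrac{d}{2}e^{-1/(2b)}$.

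\textbf{Parameter bookkeeping and the obstacle.} It remains to show the hypothesis $\delta<\varepsilon^d/(2d)^{3d}$ forces $b\le 1/(2d)$, which collapses the tail to $\tfrac12 e^{-d}$ and yields $\P\lrb{u_\pure=u_\apx}>1-2^{-d}-\tfrac{d}{2}e^{-d}$. Substituting $q=\infty$, $R=1$, $\omega=2^{-d}$ into Line~\ref{step:Delta} gives $\Delta=2d\,(\delta 2^{d-1})^{1/d}\le 4d\,\delta^{1/d}$, so $b=2\Delta/\varepsilon\le 8d\,\delta^{1/d}/\varepsilon$; requiring $b\le 1/(2d)$ reduces to $\delta\le(\varepsilon/(16d^2))^d$, and since $(2d)^{3d}=8^d d^{3d}\ge 16^d d^{2d}=(16d^2)^d$ precisely when $d\ge 2$, the stated threshold on $\delta$ implies exactly this (with strict inequality in $\delta$ giving the strict final bound). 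I expect the Laplace tail estimate and the union bound to be routine; the one delicate point — and the main obstacle — is this last constant-tracking step, namely correctly propagating $q=\infty$, $R$, and $\omega=2^{-d}$ through the $\Delta$-formula (Lemma~\ref{lem:W_infty_TV_q}) so that the specific condition $\delta<\varepsilon^d/(2d)^{3d}$ is exactly what is consumed to guarantee $b\le 1/(2d)$ and keep the failure probability strictly below $2^{-d}+\tfrac{d}{2}e^{-d}$.
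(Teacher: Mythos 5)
Your proposal is correct in substance and follows essentially the same route as the paper's own proof: privacy is dispatched by post-processing around Theorem~\ref{thm:pure_main} (with $\varepsilon'=\varepsilon$), and utility is obtained by splitting off the uniform-mixing event of Line~\ref{step:unif_1} (mass $\omega=2^{-d}$), bounding per-coordinate misrounding by the Laplace tail at threshold $1/2$, and taking a union bound, with the hypothesis on $\delta$ spent exactly on forcing the noise scale $b=2\Delta/\varepsilon\le 1/(2d)$. The paper does the same, except it phrases the union bound via Bernoulli's inequality $\lrp{1-\tfrac12 e^{-d}}^d\geq 1-\tfrac d2 e^{-d}$.

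Two deviations are worth noting. First, you instantiate Theorem~\ref{thm:pure_main} with $q=\infty$, $R=1$, so that $[0,1]^d$ is literally an $\ell_\infty$ ball and Line~\ref{step:Delta} applies verbatim; the paper instead takes $q=1$ with $R=d/2$, $r=1/2$, viewing the cube as a convex body sandwiched between $\ell_1$ balls. Your choice is arguably the cleaner fit to Assumption~\ref{assump:domain}, and both yield the required noise scale. Second — the one genuine blemish — your constant bookkeeping closes the argument only for $d\ge 2$: after relaxing $2^{(d-1)/d}$ to $2$ you need $(2d)^{3d}\ge (16d^2)^d$, which, as you note, fails at $d=1$ (where $8<16$), so your proof as written does not cover $d=1$ while the theorem (and the paper's proof) does. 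The repair is purely cosmetic: keeping the exact factor, $\Delta = 2d\,2^{(d-1)/d}\delta^{1/d}$, the condition $b\le 1/(2d)$ is equivalent to $\delta\le \varepsilon^d/\lrp{2^{4d-1}d^{2d}}$, and $(2d)^{3d}\ge 2^{4d-1}d^{2d}$ holds for all $d\ge 1$ (with equality at $d=1$, since it reduces to $d^d\ge 2^{d-1}$), so the stated threshold on $\delta$ suffices in every dimension and the strict inequality is preserved. With that one-line fix, your argument is complete and matches the paper's conclusion.
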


\section{Technical Lemma: from TV distance to $\infty$-Wasserstein distance}

In this section, we present a technical lemma that proofs the uniform mixing step in Algorithm~\ref{alg:main} (Line~\ref{step:unif_1}) can enforce an $\infty$-Wasserstein distance bound, as mentioned in Section~\ref{sec:main_results}. This $\infty$-Wasserstein bound is a key step in our analysis, enabling the subsequent addition of Laplace noise calibrated to this bound to ensure pure DP. See Figure~\ref{fig:proof_sketch} for a summary of the privacy analysis. We define the $\infty$-Wasserstein distance below. Additional discussion can be found in Appendix~\ref{appen:notation}.

\begin{definition} 
\label{def:W_infty}
The $\infty$-Wasserstein distance between distributions $\mu$ and $\nu$ is on a separable Banach space $(\Theta, \|\cdot\|_q)$ is defined as
\begin{align*}
    \wq(\mu, \nu) \dot= \inf_ { \mg \in \Gamma_c(\mu, \nu)}
 \esssup_{(x,y)\sim \mg} \|x - y\|_q = \inf_ { \mg \in \Gamma_c(\mu, \nu)} \{\alpha \mid \Prob_{(x,y)\sim \mg} \lrb{\|x - y\|_q\leq \alpha}=1\},
\end{align*}
where $\Gamma_c(\mu, \nu)$ is the set of all couplings of $\mu$ and $\nu$. The expression $\esssup_{(x,y)\sim \mg}$ denotes the essential supremum with respect to the measure $\mg$.
\end{definition}

By the equivalent characterization of approximate DP (Lemma~\ref{lem:appox_DP_eq_def}), we can derive a TV distance bound between the output distributions of the $(\varepsilon,\delta)$-DP mechanism on neighboring datasets and a pair of distributions that are $\varepsilon$-indistinguishable. Our goal is to translate this TV distance bound into a $W_\infty$ distance bound.
In general, the total variation distance bound does not imply a bound for the $W_\infty$ distance. However, when the domain is bounded, we have the following result.

\begin{lemma}[Converting $\TV$ to $W_\infty$]
\label{lem:W_infty_TV_q} 
Let $q \geq 1$, and let $\Theta \subseteq \R^d$ be a convex set with $\ell_q$-norm diameter $R$ and containing an $\ell_q$-ball of radius $r$. Let $\mu$ and $\nu$ be two probability measures on $\Theta$. Suppose $\nu$ is the sum of two measures, $\nu=\nu_0+\nu_1$, where $\nu_1$ is absolutely continuous with respect to the Lebesgue measure and has density lower-bounded by a constant $p_{\min}$ over $\Theta$, while $\nu_0$ is an arbitrary measure. Define the $W_\infty$ distance with respect to $\ell_q$ (Definition~\ref{def:W_infty}.) The following holds.
\begin{equation}
    \label{eq:tv_bound}
    \text{If }
    \TV(\mu, \nu) < p_{\min} \cdot \vol(\mathbb{B}_{\ell_q}^d(1)) \cdot \left(\tfrac{r}{4R}\right)^d \cdot \Delta^d, \text{ then }\wq(\mu, \nu) \leq \Delta,
\end{equation}
where $\vol(\mathbb{B}_{\ell_q}^d(1)) = \frac{2^d}{\Gamma\left(1 + \frac{d}{q}\right)} \prod_{i=1}^d \frac{\Gamma\left(1 + \frac{1}{q}\right)}{\Gamma\left(1 + \frac{i}{q}\right)}$ is the Lebesgue measure of the $\ell_q$-norm unit ball, with $\Gamma$ being the Gamma function. E.g., $\vol(\mathbb{B}_{\ell_2}^d(1)) = \frac{\pi^{d/2}}{\Gamma\left(\frac{d}{2} + 1\right)}, $ and $
\vol(\mathbb{B}_{\ell_1}^d(1)) = \frac{2^d}{\Gamma(d + 1)}.$
In particular, if the domain $\Theta$ is an $\ell_q$-ball, then the term $(\frac{r}{4R})$ Eq.~\eqref{eq:tv_bound} can be improved to $\frac{1}{4}$.
\end{lemma}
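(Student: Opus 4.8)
The plan is to reduce the claim to Strassen's coupling criterion for $W_\infty$ and then carry out a single geometric estimate. Recall that for two probability measures on the Polish space $(\Theta,\|\cdot\|_q)$ one has $\wq(\mu,\nu)\le\Delta$ if and only if $\mu(A)\le\nu(A^\Delta)$ for every closed $A\subseteq\Theta$, where $A^\Delta:=\{x\in\Theta:\inf_{y\in A}\|x-y\|_q\le\Delta\}$ is the closed $\ell_q$-neighborhood. So it suffices to verify this inequality for an arbitrary closed $A$ under the hypothesis $\TV(\mu,\nu)<\tau$, abbreviating $\tau:=p_{\min}\,\vol(\mathbb{B}_{\ell_q}^d(1))\,(r/4R)^d\,\Delta^d$. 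Writing $\mu(A)\le\nu(A)+\TV(\mu,\nu)$ and $\nu(A^\Delta)=\nu(A)+\nu(A^\Delta\setminus A)$, the inequality follows once $\nu(A^\Delta\setminus A)\ge\TV(\mu,\nu)$. Two cases are immediate: if $A=\emptyset$ then $\mu(A)=0$, and if $A^\Delta\supseteq\Theta$ then $\nu(A^\Delta)=1\ge\mu(A)$. In the remaining case $A\neq\emptyset$ and there is $w\in\Theta$ with $\inf_{y\in A}\|w-y\|_q>\Delta$, and the whole problem reduces to showing that the shell $\Theta\cap(A^\Delta\setminus A)$ carries $\nu$-mass at least $\tau$.

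Since $\nu\ge\nu_1$ and $\nu_1$ has density at least $p_{\min}$ on $\Theta$, it is enough to exhibit an $\ell_q$-ball of radius $\tfrac{r\Delta}{4R}$ inside $\Theta\cap(A^\Delta\setminus A)$, whose $\nu_1$-mass is then at least $p_{\min}\vol(\mathbb{B}_{\ell_q}^d(1))(r\Delta/4R)^d=\tau$. This geometric step is the crux. First I locate a point at the right distance: $t\mapsto\inf_{y\in A}\|(1-t)a+tw-y\|_q$ (for fixed $a\in A$) is continuous, vanishes at $t=0$, and exceeds $\Delta$ at $t=1$, so by the intermediate value theorem there is $z\in\Theta$ with $\inf_{y\in A}\|z-y\|_q=\Delta/2$. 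A single point does not suffice, since a ball about $z$ need not lie in $\Theta$; the key device is to contract $z$ toward the center $c$ of the inscribed ball $\mathbb{B}_{\ell_q}^d(c,r)\subseteq\Theta$. Setting $z_s:=(1-s)z+sc$ with $s=\tfrac{\Delta}{4R}$, convexity gives $\mathbb{B}_{\ell_q}^d(z_s,sr)=(1-s)\{z\}+s\,\mathbb{B}_{\ell_q}^d(c,r)\subseteq\Theta$, while $\|z_s-z\|_q\le sR=\Delta/4$. Using $r\le R/2$ (the inscribed ball has diameter $2r\le R$), a triangle-inequality check shows every $u\in\mathbb{B}_{\ell_q}^d(z_s,\tfrac{r\Delta}{4R})$ satisfies $\Delta/8\le\inf_{y\in A}\|u-y\|_q\le\Delta$, hence lies in $A^\Delta\setminus A$; this is the required ball.

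Combining the two steps yields $\nu(A^\Delta\setminus A)\ge\tau>\TV(\mu,\nu)$, so $\mu(A)\le\nu(A)+\TV(\mu,\nu)<\nu(A)+\tau\le\nu(A^\Delta)$ for every closed $A$, and Strassen's criterion gives $\wq(\mu,\nu)\le\Delta$. The hard part is the geometric shell estimate, and within it the only subtlety is producing a genuine ball (not just a point) inside the shell while remaining in $\Theta$; the contraction-toward-center trick together with $r\le R/2$ resolves this and pins down the constant $\tfrac{r}{4R}$. Finally, for the special case where $\Theta$ is itself an $\ell_q$-ball one has $r=R/2$, and the containment $\mathbb{B}_{\ell_q}^d(z_s,\rho)\subseteq\Theta$ can be checked directly from $\|z-c\|_q\le r$; this lets the contraction be tuned so that a ball of radius $\Delta/4$ fits in the shell, improving the constant from $\tfrac{r}{4R}$ to $\tfrac14$.
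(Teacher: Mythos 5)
Your proposal is correct and takes essentially the same route as the paper's proof: the same reduction via the Strassen/Givens--Shortt characterization to showing $\nu(A^\Delta\setminus A)\ge \TV(\mu,\nu)$, the same construction of a point $z$ with $\dist(z,A)=\Delta/2$ (you use an intermediate-value argument along a segment, the paper a contradiction via connectedness), and the identical ball obtained by contracting toward the center of the inscribed ball with factor $\tfrac{\Delta}{4R}$, whose $\nu$-mass is bounded below by $p_{\min}$ times its volume. The only cosmetic differences are that you verify the criterion on closed sets while the paper uses open sets, and that in the $\ell_q$-ball special case your tuned radius-$\Delta/4$ ball should be taken open (or shrunk infinitesimally), since with closed $A$ its boundary points may satisfy $\dist(u,A)=0$; this changes nothing in the volume bound.
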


This result builds on \citet{lin2024tractable} while generalizing its domain assumption from the $\ell_2$-balls to more general convex sets. In the proof provided in Appendix~\ref{sec:pf_conversion}, instead of relying on \citet[Lemma~24]{lin2024tractable}, which applies only to $\ell_2$-balls, we construct a convex hull that extends to more general convex sets. We provide the proof sketch as follows. %

\begin{proof}[Proof sketch of Lemma~\ref{lem:W_infty_TV_q}]
To prove Lemma~\ref{lem:W_infty_TV_q}, we use an equivalent, non-coupling-based definition of the infinity-Wasserstein distance:

\begin{lemma}[\cite{givens1984class}, Proposition~5]
\label{lem:equiv_W_infty}
Define $\mu,\nu$ and $W_\infty$ as Definition~\ref{def:W_infty}. Then,
\[
    \wq(\mu, \nu) = \inf\{ \alpha > 0 : \mu(U) \leq \nu(U^\alpha), \  \textrm{for all open subsets } \ U \subset \Theta\},
\]
where the $\alpha$-expansion of $U$ is denoted by $U^\alpha := \{x \in \Theta : \|x-U\|_q \leq \alpha\}$. 

\end{lemma}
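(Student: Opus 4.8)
Write $W := \wq(\mu,\nu)$ for the coupling-based quantity of Definition~\ref{def:W_infty}, and let $\alpha^\star := \inf\{\alpha > 0 : \mu(U) \le \nu(U^\alpha)\text{ for all open }U\subset\Theta\}$ denote the right-hand side. The plan is to prove $W=\alpha^\star$ through the two inequalities $\alpha^\star\le W$ and $W\le\alpha^\star$, working in the separable (hence Polish) metric space $(\Theta,\|\cdot\|_q)$. A preliminary observation streamlines both limiting arguments: the admissible set defining $\alpha^\star$ is an up-set, since if $\mu(U)\le\nu(U^{\alpha'})$ holds for all open $U$ and $\alpha\ge\alpha'$, then $U^{\alpha'}\subseteq U^\alpha$ forces $\mu(U)\le\nu(U^\alpha)$. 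Thus this set is $[\alpha^\star,\infty)$ or $(\alpha^\star,\infty)$.

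For the easy inequality $\alpha^\star\le W$, I would fix any $\alpha>W$ and, by definition of the infimum, pick a coupling $\gamma\in\Gamma_c(\mu,\nu)$ with $\Prob_{(x,y)\sim\gamma}[\|x-y\|_q\le\alpha]=1$. For an arbitrary open $U$, on the full-measure event $\|x-y\|_q\le\alpha$ the inclusion $x\in U$ forces $\mathrm{dist}(y,U)\le\|x-y\|_q\le\alpha$, i.e.\ $y\in U^\alpha$. Hence
\[
\mu(U)=\Prob_\gamma[x\in U]=\Prob_\gamma\lrb{x\in U,\ \|x-y\|_q\le\alpha}\le\Prob_\gamma[y\in U^\alpha]=\nu(U^\alpha).
\]
Since $U$ is arbitrary, $\alpha$ lies in the admissible set, so $\alpha^\star\le\alpha$; letting $\alpha\downarrow W$ gives $\alpha^\star\le W$.

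The substance is the reverse inequality $W\le\alpha^\star$, where I must manufacture an actual coupling out of the family of set inequalities. Here I would invoke Strassen's marginal theorem: on a Polish space, for a closed set $B\subseteq\Theta\times\Theta$, a coupling $\gamma\in\Gamma_c(\mu,\nu)$ with $\gamma(B)=1$ exists if and only if $\mu(U)\le\nu(B_U)$ for every open $U$, where $B_U:=\{y:(x,y)\in B\text{ for some }x\in U\}$. I would apply it with $B=R_\alpha:=\{(x,y):\|x-y\|_q\le\alpha\}$, closed since $\|\cdot\|_q$ is continuous. Fixing $\alpha>\alpha^\star$, I would choose an intermediate $\alpha'\in(\alpha^\star,\alpha)$; then for every open $U$ the up-set property gives $\mu(U)\le\nu(U^{\alpha'})$, and $U^{\alpha'}\subseteq\{y:\mathrm{dist}(y,U)<\alpha\}\subseteq(R_\alpha)_U$, so $\mu(U)\le\nu((R_\alpha)_U)$. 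This is exactly Strassen's condition, producing a coupling supported on $R_\alpha$, i.e.\ with $\esssup_{(x,y)\sim\gamma}\|x-y\|_q\le\alpha$, whence $W\le\alpha$; letting $\alpha\downarrow\alpha^\star$ completes the proof.

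The main obstacle is this reverse direction, and in particular the correct invocation of Strassen's theorem: verifying that $(\Theta,\|\cdot\|_q)$ is Polish (which holds for the closed convex $\ell_q$-balls used in the applications), that $R_\alpha$ is closed with measurable sections $(R_\alpha)_U$, and that the \emph{exact}-coupling form of the theorem (with $\gamma(B)=1$ rather than $\gamma(B)\ge1-\epsilon$) is the one being cited. The intermediate-$\alpha'$ device is precisely what lets me bypass the boundary mismatch between the expansion $U^{\alpha'}$ and the Strassen section $(R_\alpha)_U$, which would otherwise require delicate attention to the set $\{y:\mathrm{dist}(y,U)=\alpha\}$.
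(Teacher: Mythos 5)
The paper does not prove Lemma~\ref{lem:equiv_W_infty} at all --- it imports it by citation from \cite{givens1984class} (Proposition~5), and the proof you give is essentially the classical argument behind that cited result: the easy direction by pushing an (almost-)optimal coupling through the definition of $U^\alpha$, and the converse via Strassen's marginal theorem for a closed set $B=R_\alpha$ in a Polish space (there is a coupling with $\gamma(R_\alpha)=1$ iff $\mu(U)\le\nu((R_\alpha)_U)$ for all open $U$), which is exactly how the original source proceeds. Your proof is correct, and the intermediate-$\alpha'$ device is not a cosmetic flourish but necessary: for open $U$ the inclusion $U^{\alpha}\subseteq (R_\alpha)_U$ can genuinely fail when $\inf_{x\in U}\|x-y\|_q=\alpha$ is not attained inside $U$, so passing through $\alpha'<\alpha$ is what makes the verification of Strassen's condition legitimate (and the Polishness you flag does hold for the closed convex $\Theta\subseteq\R^d$ used throughout the paper).
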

This definition provides a geometric interpretation of $\wq(\mu, \nu)$ by comparing the measure of a set $U$ to the measure of its $\alpha$-expansion $U^\alpha$.

We summarize the key idea of the proof. Suppose $TV(\mu,\nu) \le \xi$. By the definition of total variation distance, this implies $\mu(U) \le \nu(U) + \xi$ for any measurable set $U$. To prove that $\wq(\mu, \nu) \le \Delta$ (for the $\Delta$ given in the lemma), it suffices, by Lemma~\ref{lem:equiv_W_infty}, to show that $\mu(U) \le \nu(U^\Delta)$ for any open set $U$.

Given that $\mu(U) \le \nu(U) + \xi$, our goal thus reduces to proving $\nu(U) + \xi \le \nu(U^\Delta)$. Rewriting this inequality, it suffices to prove:
$$\nu(U^\Delta \setminus U) \ge \xi.$$
To establish this, we show that the "expansion band" $U^\Delta \setminus U$ must contain sufficient mass. We use the convexity of $\Theta$ to argue that this band must contain a small $\ell_q$ ball of a specific radius (related to $\Delta$). We then use the minimum density $p_{\min}$ and the Lebesgue measure of this $\ell_q$ ball to lower-bound its $\nu$-measure. The value of $\Delta$ in the lemma statement is chosen precisely so that this lower bound (and thus $\nu(U^\Delta \setminus U)$) is at least $\xi$, which completes the sketch.
\end{proof}

The conversion lemma requires one of the distributions to satisfy a minimum density condition (the ``$p_{\min}$''.) This motivates the uniform mixing step in Algorithm~\ref{alg:main}, which ensures that the mixed distribution meets this requirement. Combining the TV bound implied by $(\varepsilon, \delta)$-DP, the effect of uniform mixing, and the conversion lemma from TV distance to $W_\infty$ distance, we obtain a bound on the $\infty$-Wasserstein distance after the uniform mixing step in Algorithm~\ref{alg:main}.

\begin{example}[Tightness of the Conversion]
\label{ex:tightness}

Let $\tilde{\Delta}\in(0,1)$. Consider the probability distributions
$\mu = \tilde{\Delta}^d \delta^{\mathrm{Dirac}}_{\mathbf{0}} + (1 - \tilde{\Delta}^d) \unif\left(\mathbb{B}_{\ell_q}^d(1) \setminus \mathbb{B}_{\ell_q}^d(\tilde{\Delta})\right), \text{ and }
\nu = \unif\left(\mathbb{B}_{\ell_q}^d(1)\right),$
where $\delta^{\mathrm{Dirac}}_{\mathbf{0}}$ is the Dirac measure at $\mathbf{0}$, and $\unif(S)$ denotes the uniform distribution over the set $S$. Then, by Definition~\ref{def:TV} and Lemma~\ref{lem:equiv_W_infty}, we have
$\TV(\mu, \nu) = \tilde{\Delta}^d,  \text{ and } \wq(\mu, \nu) = \tilde{\Delta}.$

\end{example}

This shows that the bound in Lemma~\ref{lem:W_infty_TV_q} is tight up to a constant. In this case, we have $p_{\min} = (\vol(\mathbb{B}_{\ell_q}^d(1)))^{-1}$ and $R = 2r$, so Lemma~\ref{lem:W_infty_TV_q} gives the bound $\wq \leq 8\tilde{\Delta}$, which matches the exact value $\wq = \tilde{\Delta}$ up to a constant. 

\begin{figure}[H]
    \centering
    \includegraphics[width=0.35\linewidth]{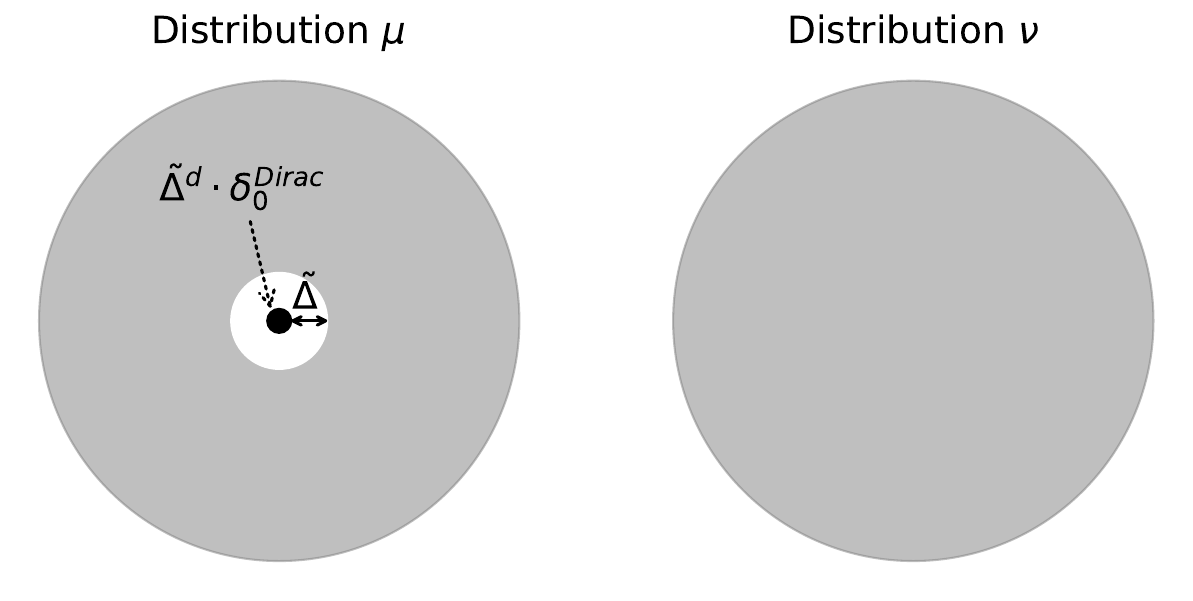}
    \caption{Illustration of Example~\ref{ex:tightness} on the metric space $(\mathbb{R}^2, \|\cdot\|_2)$}
    \label{fig:example}
\end{figure}
\vspace{-1em}

\section{Empirical Risk Minimization with Pure Differential Privacy}
\label{sec:dp-erm}
In this section, we apply our purification technique to develop pure differentially private algorithms for the Differential Private Empirical Risk Minimization (DP-ERM) problem, which has been extensively studied by the differential privacy community \citep{CMS11, bassily2014private, Kasiviswanathan16, NIPS2017_f337d999, feldman20, NEURIPS2021_211c1e0b, NEURIPS2023_fb1d9c3f}.%

We consider the \emph{convex} formulation of the DP-ERM problem, where the objective is to design a differentially private algorithm that minimizes the empirical risk $\cL(\theta; D) = \frac{1}{n} \sum_{i=1}^n f(\theta; x_i)$ given a dataset \( D = \{x_1, \dots, x_n\} \subseteq \cX^n \), a convex feasible set \( \cC \subseteq \mathbb{R}^d \), and a convex loss function \( f: \cC \times \cX \to \mathbb{R} \).  Algorithm performance is measured by the expected excess empirical risk $\mathbb{E}_{\cA}[\cL(\theta)] - \cL^*$, where $\cL^* = \min_{\theta \in \cC} \cL(\theta)$.

\subsection{Purified DP Stochastic Gradient Descent}
The Differential Private Stochastic Gradient Descent (DP-SGD) mechanisms \citep{bassily2014private, abadi2016deep} are the most popular algorithms for DP-ERM. These mechanisms are inherently iterative and heavily rely on (1) advanced privacy accounting/composition techniques \citep{bun2016concentrated, mironov2017renyi, dong2022gaussian} and (2) amplification by subsampling for Gaussian mechanisms \citep{balle2018privacy, bun2018composable, wang2019subsampled, zhu2019poission, koskela2020computing}. Either of these two techniques results in $(\varepsilon,\delta)$-DP guarantee with $\delta>0$. In contrast, directly using the Laplace mechanism to release gradients fails to achieve a competitive utility rate. 

While the exponential mechanism \citep{mcsherry2007mechanism, bassily2014private} achieves optimal utility rates under $(\varepsilon,0)$-pure DP, this comes at the expense of increased computational complexity. Specifically, \citet[Algorithm~2]{bassily2014private} implements the exponential mechanism via a random walk over the grid points of a cube that contains \(\cC\), ensuring convergence in terms of max-divergence. This approach constitutes a zero-order method that does not leverage gradient information. To design a fast, pure DP algorithm with nearly optimal utility, we propose a pure DP-SGD method that transforms the output of an \((\varepsilon, \delta)\)-DP SGD algorithm into a \((\varepsilon, 0)\)-pure DP solution using Algorithm~\ref{alg:main}. Implementation details are provided in Algorithm~\ref{alg:pure_dp_sgd} in Appendix~\ref{appen:dp-erm}. Theoretical guarantees on utility, privacy, and computational efficiency are stated in Theorem~\ref{thm:dp-sgd-pure}.

\begin{theorem}[Utility, privacy, and runtime for purified DP-SGD]
    \label{thm:dp-sgd-pure}
    Let $\cC \subset \R^d$ be a convex set with $\ell_2$ diameter $C$, and suppose that $f(\cdot; x)$ is $L$-Lipschitz for every $x \in \cX$. Set the parameters as specified in Corollary~\ref{lem: appropriate_delta}. Algorithm~\ref{alg:pure_dp_sgd} guarantees $2\varepsilon$-\textit{pure} differential privacy. Furthermore, using $\Tilde{\cO}(n^2 \varepsilon^{\nicefrac{3}{2}} d^{-1})$ incremental gradient calls, the resulting output $\theta_{\mathrm{pure}}$ satisfies:
    \begin{enumerate}
    \item If $f(\cdot;x)$ is convex for every $x\in\cX$, then $\mathbb{E}_{\cA} \left[ \cL(\theta_{\mathrm{pure}}) \right] - \cL^* \leq \Tilde{\cO} \left( \nicefrac{CLd}{n\varepsilon}\right)$. 
        \item If $f(\cdot;x)$ is $\lambda$-strongly convex for every $x\in\cX$, then
        $\mathbb{E}_{\cA} \left[ \cL\left(\theta_{\mathrm{pure}}\right) \right] - \cL^* \leq 
         \Tilde{\cO} \left( \nicefrac{d^2 L^2}{n^2 \lambda \varepsilon^2}\right)$.
    \end{enumerate}
\end{theorem}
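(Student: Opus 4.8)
The plan is to view Algorithm~\ref{alg:pure_dp_sgd} as the composition of an upstream $(\varepsilon,\delta)$-DP NoisySGD mechanism $\cM$ that outputs $\theta_\apx$, followed by the purification map $\amac$ of Algorithm~\ref{alg:main}, and to transfer privacy, utility, and runtime separately. I take $\Theta$ to be the $\ell_2$-ball of diameter $C$ containing $\cC$, so that Assumption~\ref{assump:domain} holds with $q=2$ and $R=C$, and I adopt the parameters $\omega=1/n^2$, $\delta=\frac{2\omega}{(16Cdn^2)^d}$, $\varepsilon'=\varepsilon$ from Corollary~\ref{lem: appropriate_delta}. Privacy is then immediate: since $\cM$ is $(\varepsilon,\delta)$-DP with range in $\Theta$, Corollary~\ref{lem: appropriate_delta} (a specialization of Theorem~\ref{thm:pure_main}) yields that the purified output is $2\varepsilon$-pure DP. The essential point is that purification is a post-processing step whose pure-DP guarantee holds for every $\delta$, so pushing $\delta$ to the stated extreme costs nothing in privacy.

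For utility I would split the excess risk across the purification boundary,
\begin{equation*}
\mathbb{E}_\cA[\cL(\theta_\pure)] - \cL^* = \underbrace{\lrp{\mathbb{E}[\cL(\theta_\apx)]-\cL^*}}_{\text{upstream error}} + \underbrace{\lrp{\mathbb{E}[\cL(\theta_\pure)]-\mathbb{E}[\cL(\theta_\apx)]}}_{\text{perturbation error}}.
\end{equation*}
Because $\cL$ is an average of $L$-Lipschitz losses it is itself $L$-Lipschitz, so the perturbation term is at most $L\,\mathbb{E}\lrn{\theta_\pure-\theta_\apx}_2 \le L\lrp{\frac{1}{n^2\varepsilon}+\frac{C}{n^2}}$ by the utility half of Corollary~\ref{lem: appropriate_delta} --- an $\tilde{\cO}(CL/n^2)$ lower-order contribution. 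For the upstream term I substitute the chosen $\delta$ into the standard NoisySGD rates of \citep{bassily2014private}, namely $\tilde{\cO}\lrp{\sqrt{d\log(1/\delta)}\,CL/(n\varepsilon)}$ (convex) and $\tilde{\cO}\lrp{d\log(1/\delta)L^2/(\lambda n^2\varepsilon^2)}$ (strongly convex). The decisive computation is $\log(1/\delta)=d\log(16Cdn^2)-\log(n^2/2)=\tilde{\cO}(d)$, which turns $\sqrt{d\log(1/\delta)}$ into $\tilde{\cO}(d)$ and $d\log(1/\delta)$ into $\tilde{\cO}(d^2)$; combining with the lower-order perturbation term gives the two claimed bounds.

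The purification step costs only $\cO(d)$ (one uniform draw from the $\ell_2$-ball plus a $d$-dimensional Laplace perturbation), so the gradient budget is governed entirely by the upstream NoisySGD. I would choose the step size and the iteration count $T$ so that the $\cO(CL/\sqrt{T})$ optimization error is dominated by the privacy-noise floor produced by the per-step Gaussian noise, calibrated via subsampled-Gaussian accounting to the target $\delta$. Counting each iteration as an incremental gradient evaluation, the smallest such $T$ produces the stated $\tilde{\cO}(n^2\varepsilon^{3/2}d^{-1})$ bound.

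The conceptual heavy lifting --- converting $\delta>0$ into pure DP through the TV-to-$W_\infty$ reduction and calibrated Laplace noise --- is already carried out in Theorem~\ref{thm:pure_main} and Corollary~\ref{lem: appropriate_delta}, so what remains is assembly and careful accounting. I expect the main obstacle to be the runtime/utility coupling: one must verify that NoisySGD simultaneously attains $(\varepsilon,\delta)$-DP at the extreme $\delta$ with noise controlled by $\log(1/\delta)=\tilde{\cO}(d)$ and reaches the matching noise floor within the claimed number of gradient calls, so that the optimization error is genuinely negligible at that budget. The remaining checks --- that the perturbation term is always lower-order so purification never dominates the rate, and that the $\ell_2$ geometry of $\cC\subseteq\Theta$ is exactly the norm in which the losses are Lipschitz --- are routine.
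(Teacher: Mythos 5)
Your privacy and utility arguments follow essentially the same route as the paper's proof: privacy is exactly Corollary~\ref{lem: appropriate_delta} applied to the $(\varepsilon,\delta)$-DP SGD output; utility uses the same split into upstream error plus perturbation error, with the perturbation term bounded by $L\,\mathbb{E}\lrn{\theta_{\pure}-\theta_{\apx}}_2 \le L\lrp{\tfrac{1}{n^2\varepsilon}+\tfrac{C}{n^2}}$ via Lipschitzness, and the upstream rates instantiated at $\log(1/\delta)=\Tilde{\cO}(d)$. The only cosmetic difference is that the paper re-derives the DP-SGD convergence rates in-house (Lemmas~\ref{lem: dp-sgd-cvx-lip} and~\ref{lem: dp-sgd-scvx-lip-sm}) rather than quoting \citet{bassily2014private}, precisely because it needs privacy, utility, and the gradient-call count to hold simultaneously under one explicit parameter choice at this exotic $\delta$.

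The genuine gap is in your runtime accounting. ``Counting each iteration as an incremental gradient evaluation'' is incorrect: each iteration of Algorithm~\ref{alg:dp_sgd} consumes a minibatch of $n\gamma$ incremental gradients, so the relevant quantity is $\cG = n\gamma T$, not $T$. Under the paper's parameterization, $\gamma=\frac{2\sqrt{d\log(1/\delta)}}{n\sqrt{\varepsilon}}$ and $T=\frac{n^2\varepsilon^2}{d\log(1/\delta)}$, so the iteration count is $\Tilde{\cO}(n^2\varepsilon^2/d^2)$ while $\cG=\frac{2n^2\varepsilon^{3/2}}{\sqrt{d\log(1/\delta)}}=\Tilde{\cO}(n^2\varepsilon^{3/2}d^{-1})$ --- two different quantities, and only the latter is the claimed bound, so equating iterations with gradient calls cannot produce the stated rate. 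Nor can you repair this by shrinking the minibatch to one sample so that iterations and gradient calls coincide: the subsampled-Gaussian amplification bound (Lemma~\ref{lem: RDP_sampled_gau}, used through Corollary~\ref{cor: RDP_DP_SGD}) only holds for R\'enyi orders up to $\frac{\log(1/\gamma)}{4\rho_0}$, and converting to $(\varepsilon,\delta)$-DP at $\log(1/\delta)=\Tilde{\Theta}(d)$ requires an order of roughly $1+4\log(1/\delta)/\varepsilon$; with a single-sample minibatch and per-step noise at the level your utility bound needs, this constraint is violated, and restoring it forces enough extra noise (hence extra iterations) that the total gradient count degrades. In short, the minibatch size $n\gamma=\Theta\lrp{\sqrt{d\log(1/\delta)/\varepsilon}}$ is not an implementation detail but is forced by the privacy accounting, and the claimed $\Tilde{\cO}(n^2\varepsilon^{3/2}d^{-1})$ only emerges from the product $n\gamma T$ under the paper's joint choice of $\gamma$, $\sigma^2$, and $T$.
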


The total runtime of Algorithm~\ref{alg:pure_dp_sgd} is $\Tilde{\cO}(n^2 \varepsilon^{3/2} + d)$, where each incremental gradient computation incurs a cost of $\mathcal{O}(d)$ gradient operations, and the purification (Algorithm~\ref{alg:main}) executes in $\cO(d)$ time. Table~\ref{tab: dp_erm} presents a comparison of utility and computational efficiency. Notably, the purified DP-SGD attains a near-optimal utility rate, matching that of the exponential mechanism \citep{bassily2014private}, while substantially reducing computation complexity by improving the dependence on both $n$ and $d$.
\begin{table}[htb]
    \caption{Comparison of utility and computational complexity in the $\ell_2$ Lipschitz and convex setting under $\varepsilon$-pure differential privacy. For simplicity, we assume the data domain $\cC$ has $\ell_2$ diameter $1$ and the Lipschitz constant $L = 1$. All results are stated up to logarithmic factors.}
    \begin{center}
    \centering
        \renewcommand{\arraystretch}{1.5} 
        \begin{tabular}{ c | c | c | c } 
            Mechanism & Reference & Utility & Runtime  \\
            \hline
            Laplace Noisy GD & Lemma~\ref{lem: laplace_NGD} & $\nicefrac{d^{1/2}}{(n\varepsilon)^{1/2}}$ & $n^2\varepsilon$\\
            Purified DP-SGD (Algo.~\ref{alg:pure_dp_sgd}) & Theorem~\ref{thm:dp-sgd-pure} & $\nicefrac{d}{n\varepsilon}$ & $n^2\varepsilon^{3/2} + d$\\
            Exponential mechanism & \citet[Theorem~3.4]{bassily2014private} & $\nicefrac{d}{n\varepsilon}$ & $d^4n^3 \vee d^3 n^4 \varepsilon$\\ 
            \hline
        \end{tabular}
    \label{tab: dp_erm}
    \end{center}
\end{table}

\subsection{Purified DP Frank-Wolfe Algorithm}\label{sec:DP-FW}

As noted in Section~\ref{sec:main_results}, applying Algorithm~\ref{alg:main} yields the relation \(\log(1/\delta) \sim d \log(\varepsilon/\Delta)\), introducing an extra \(\Tilde{\cO}(d)\) factor due to the Laplace noise scale \(\nicefrac{\Delta}{\varepsilon}\). While optimal in low dimensions, this factor can degrade utility for algorithms with dimension-independent convergence, such as the Frank-Wolfe algorithm and its DP variants~\citep{frank1956algorithm, talwar2014private, HVTK21, bassily2021non}. To address this, we integrate dimension-reduction and sparse recovery techniques (Algorithm~\ref{alg: sparse-rec}) into our purification method (Algorithm~\ref{alg:main}). Applied to the \((\varepsilon, \delta)\)-DP Frank-Wolfe algorithm from~\citet[Algorithm~2]{ktz15}, which uses the exponential mechanism and advanced composition, our approach preserves dimension-independent convergence rates.

To obtain a pure-DP estimator from the approximate DP output \(\theta_{\textrm{FW}}\), we first apply dimension reduction, exploiting the problem’s sparsity to project \(\theta_{\textrm{FW}}\) into a lower-dimensional space. We then apply the purification Algorithm~\ref{alg:main} in this space and recover the estimate in the original ambient space. The full procedure is given in Algorithm~\ref{alg: fw_pure}, with the proof of Theorem~\ref{thm: dp-fw-pure} deferred to Appendix~\ref{apx: pf_dp_fw}. Our method achieves pure differential privacy while matching the best known utility rate \(\Tilde{\cO}(n^{-1/2}\varepsilon^{-1/2})\), as in~\citet[Theorem~6]{HVTK21}.

\def\Gcur{\beta\|\cC\|_1}

\begin{theorem}\label{thm: dp-fw-pure}
Let the domain $\cC$ be an $\ell_1$-ball centered at $\mathbf{0}$. Let $\varepsilon$ be the pure differential privacy parameter. Assume that the function $f(\cdot;\mx)$ is convex, $L_1$-Lipschitz, and $\beta$-smooth with respect to the $\ell_1$ norm for all $\mx \in \cX$. Algorithm~\ref{alg: fw_pure} satisfies $2\varepsilon$-pure differential privacy and achieves the following utility bound:
    \vspace{-0.5em}
\begin{equation*}
            \mathbb{E}_{\cA} \lrb{\cL(\theta_{\mathrm{pure}})} - \cL^* 
            \leq\Tilde{\cO} \left( \frac{L_1^{1/2}\beta^{1/2}\|\cC\|_1^{3/2}}{(n\varepsilon)^{1/2}}\right).
    \end{equation*}
    The runtime is $\mathcal{O}(d n^{3/2})$, plus a single call to a LASSO solver.
\end{theorem}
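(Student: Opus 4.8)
The plan is to decouple the privacy and utility analyses, invoking the purification guarantee of Theorem~\ref{thm:pure_main} as a black box but applying it to a low-dimensional random sketch of $\theta_{\mathrm{FW}}$ rather than to $\theta_{\mathrm{FW}}$ itself. The structural fact that makes this work is that $T$ rounds of Frank--Wolfe over an $\ell_1$-ball yield a $T$-sparse iterate $\theta_{\mathrm{FW}}$, since each round moves toward a single signed vertex $\pm\|\cC\|_1 e_i$. This sparsity is exactly what lets dimension reduction replace the $\tilde{\cO}(d)$ overhead of purification with an $\tilde{\cO}(m)$ overhead for $m=\tilde{\cO}(T)\ll d$.

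For privacy I would argue by post-processing along the pipeline. The upstream algorithm of \citet{ktz15} releases $\theta_{\mathrm{FW}}$ under $(\varepsilon,\delta)$-DP. Fixing a data-independent sketching matrix $\Phi\in\R^{m\times d}$ with $m=\tilde{\cO}(T)$, the sketch $y:=\Phi\theta_{\mathrm{FW}}$ is a post-processing and remains $(\varepsilon,\delta)$-DP, so it is a legitimate input to Algorithm~\ref{alg:main}. Taking the purification domain $\Theta$ to be a ball containing $\Phi\cC$ (so Assumption~\ref{assump:domain} holds in $\R^m$) and setting $\varepsilon'=\varepsilon$, Theorem~\ref{thm:pure_main} yields a $2\varepsilon$-\emph{pure}-DP sketch $y_{\mathrm{pure}}$; the final decoding $\theta_{\mathrm{pure}}=\mathrm{LASSO}(y_{\mathrm{pure}},\Phi)$ is again post-processing. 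Since $\Phi$ is independent of the data, this chain is $2\varepsilon$-pure DP for every realization of $\Phi$, hence also after marginalizing over $\Phi$, and the restricted-isometry event on $\Phi$ is needed only in the utility bound.

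For utility I would bound the excess risk using the $L_1$-Lipschitzness of $\cL$,
\[
\mathbb{E}_{\cA}\!\left[\cL(\theta_{\mathrm{pure}})\right]-\cL^* \;\le\; \Big(\mathbb{E}_{\cA}\!\left[\cL(\theta_{\mathrm{FW}})\right]-\cL^*\Big)\;+\;L_1\,\mathbb{E}\!\left[\|\theta_{\mathrm{pure}}-\theta_{\mathrm{FW}}\|_1\right].
\]
The first term is the smooth DP-FW guarantee of \citet{ktz15,HVTK21}, which trades an $\cO(\beta\|\cC\|_1^2/T)$ optimization error against the accumulated exponential-mechanism noise. The crucial bookkeeping is that the \emph{same} $\delta$ feeds both the upstream mechanism and purification: Algorithm~\ref{alg:main} reduces its Laplace scale to a harmless level only once $(\delta/2\omega)^{1/m}$ is inverse-polynomial, i.e. once $\log(1/\delta)=\tilde{\cO}(m)=\tilde{\cO}(T)$. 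Substituting this into the FW noise term and balancing over $T$ (with $T=\tilde{\cO}((n\varepsilon)^{1/2})$, consistent with the claimed $\cO(dn^{3/2})$ runtime) produces the stated $\tilde{\cO}(L_1^{1/2}\beta^{1/2}\|\cC\|_1^{3/2}(n\varepsilon)^{-1/2})$ rate, with $d$ surviving only logarithmically through the compressed-sensing count $m=\cO(T\log d)$. For the second term I would chain the $\ell_2$-version of Theorem~\ref{thm:pure_main} to bound the in-sketch perturbation $\mathbb{E}[\|y_{\mathrm{pure}}-y\|_2]=\tilde{\cO}(m(\delta/2\omega)^{1/m}/\varepsilon)$, then apply a RIP-based LASSO recovery bound $\|\theta_{\mathrm{pure}}-\theta_{\mathrm{FW}}\|_1\lesssim\sqrt{T}\,\|y_{\mathrm{pure}}-y\|_2$, valid since $\theta_{\mathrm{FW}}$ is $T$-sparse and $\Phi$ is RIP at level $m=\tilde{\cO}(T)$; the same $\delta$ choice makes this term subdominant to the first.

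The hard part will be making the compressed-sensing layer interface cleanly with the purification noise and the $\ell_1$ risk. The LASSO/RIP guarantee is naturally an $\ell_2$ statement, whereas both the purification bound and the Lipschitz excess-risk step live in $\ell_1$; I must route the Laplace perturbation through these norms and verify that the $\sqrt{T}$ blow-up of the $\ell_1$-recovery, multiplied by the $m$-dependent (rather than $d$-dependent) noise scale, stays below the Frank--Wolfe rate. A related delicacy is closing the mild circularity among $m=\tilde{\cO}(T)$, $\log(1/\delta)=\tilde{\cO}(m)$, and the balancing choice of $T$, and checking that $\Phi\cC$ inherits a bounded diameter so that Assumption~\ref{assump:domain} and the parameter calibration of Corollary~\ref{lem: appropriate_delta} transfer to the sketched space.
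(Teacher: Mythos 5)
Your proposal is correct and takes essentially the same route as the paper's proof: privacy by post-processing plus Theorem~\ref{thm:pure_main} applied in the sketched space, and utility via the identical decomposition --- the upstream Frank--Wolfe error with the calibration $\log(1/\delta)=\Tilde{\cO}(k)=\Tilde{\cO}(T)$ balanced at $T=\Tilde{\cO}\lrp{\sqrt{n\varepsilon\beta\|\cC\|_1/L_1}}$, plus $L_1$ times a $\sqrt{T}$-inflated sparse-recovery bound on the purification noise, exploiting the $T$-sparsity of the Frank--Wolfe iterate. The paper's only bookkeeping differences are that it conditions on explicit success events rather than working in expectation, routes the noise tolerance through the $\ell_1$ norm (Lemma~\ref{lem: conc_l1_exp}), resolves your boundedness concern by clipping to $\cB_{\ell_2}^{k}(2\|\cC\|_1)$ before purifying, and uses the restricted-well-conditioned property (Lemma~\ref{lem:RIP_RWC}) rather than plain RIP, since the LASSO error vector is only numerically sparse.
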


The upper bound is tight with respect to $n$ and $\varepsilon$. We provide a lower bound in Lemma~\ref{lem: packing_lb_l1}, which is derived using a packing argument. We defer proof details to Appendix~\ref{appen: pac_lb}.

\vspace{-0.5em}

\vspace{0.6em}
\section{Pure DP Data-dependent Mechanisms}
\label{sec:ptr}
\def\xty{X^\top y}
\def\xtx{X^\top X}
\def\bdx{a}
\def\bdy{b}
\def\x{\cX}
\def\y{\cY}

This section shows that our purification technique offers a systematic approach for designing data-dependent pure DP mechanisms. We present three examples: (1) the Propose-Test-Release (PTR) mechanism and its variant with privately released local sensitivity~\citep{dwork2009differential, node, Decarolis20}; (2) stable value release methods, such as frequent item identification~\citep{thakurta2013differentially, salivad19}; and (3) a linear regression algorithm using adaptive perturbation of sufficient statistics~\citep{wang2018revisiting}.

\subsection{Propose-Test-Release with Pure Differential Privacy}
Given a dataset $D \in \cX^*$ and a query function $q: \cX \rightarrow \mathbb{R}$, the Propose-Test-Release (PTR) framework proceeds in three steps:
(1) Propose an upper bound $b$ on $\Delta_{\mathrm{Local}}^q(D)$, the local sensitivity of $q(D)$ (as defined in Definition~\ref{defn: local_sens});
(2) Privately test whether $D$ is sufficiently distant from any dataset that violates this bound;
(3) If the test succeeds, assume the sensitivity is bounded by $b$, and use a differentially private mechanism, such as the Laplace mechanism with scale parameter $b/\varepsilon$, to release a slightly perturbed query response.
However, due to the failure probability of the testing step, PTR provides only approximate differential privacy. By applying the purification technique outlined in Algorithm~\ref{alg: pure_ptr} in Appendix~\ref{appen:ptr}, PTR can be transformed into a pure DP mechanism.

Next, we examine a variant of the PTR framework, presented in Algorithm~\ref{alg: bound_local}, where the output space of the query function has dimension $d$, and the local sensitivity $\Delta_{\mathrm{Local}}^q(D)$ is assumed to have bounded global sensitivity. In this approach, Algorithm~\ref{alg: bound_local} first privately constructs a high-probability upper bound on the local sensitivity. The query is then released with additive noise proportional to this bound, and the purification technique is applied to ensure a pure DP release. This method enables an adaptive utility upper bound that depends on the local sensitivity, as established in Theorem~\ref{thm: util_loc_release}. The proof is provided in Appendix~\ref{apx: pf_bound_local}.

\begin{theorem}
\label{thm: util_loc_release}
    Algorithm~\ref{alg: bound_local} satisfies $3\varepsilon$-DP. Moreover, the output $q_{\mathrm{pure}}$ from Algorithm~\ref{alg: bound_local} satisfies \begin{equation*}
        \mathbb{E}[\|q_{\mathrm{pure}}-q(D)\|_2] \leq \Tilde{\cO}\left( \frac{d^{1/2} \Delta_{\mathrm{Local}}^q (D)}{\varepsilon} + \frac{d^{3/2}}{\varepsilon^2}\right).
    \end{equation*}
\end{theorem}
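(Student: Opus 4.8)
The plan is to treat Algorithm~\ref{alg: bound_local} as a composition of three pure-DP building blocks and to read off both the privacy and the utility by tracking what each block contributes. For privacy I would split the $3\varepsilon$ budget as (i) a pure $\varepsilon$-DP release of a high-probability upper bound $\hat b$ on the local sensitivity $\Delta_{\mathrm{Local}}^q(D)$; (ii) a noisy query release $q_{\apx} = q(D) + \text{noise}$ whose scale is calibrated to $\hat b$, which is only $(\varepsilon,\delta)$-DP because $\hat b$ can fail to dominate the true neighbor-to-neighbor sensitivity on a $\delta$-probability event; and (iii) the purification step of Algorithm~\ref{alg:main} with $\varepsilon' = \varepsilon$. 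For utility I would bound the pre-purification error, add the (negligible) purification error from Theorem~\ref{thm:pure_main}, and then substitute the parameter regime forced by purification, in which $\delta$ is exponentially small in $d$ so that $\log(1/\delta) \sim d$.

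For the privacy argument, block (i) uses the assumed bounded global sensitivity of the map $D \mapsto \Delta_{\mathrm{Local}}^q(D)$: I would add Laplace noise to $\Delta_{\mathrm{Local}}^q(D)$ together with a deterministic correction of order $\log(1/\delta)/\varepsilon$ times that global sensitivity, so that a Laplace tail bound guarantees $\hat b \ge \Delta_{\mathrm{Local}}^q(D)$ (hence $\hat b$ dominates the true sensitivity between neighbors) except on an event of probability at most $\delta$. This release is pure $\varepsilon$-DP by the standard Laplace-mechanism argument. Conditioned on this good event, block (ii) is $\varepsilon$-DP by calibrating the per-coordinate noise to $\hat b/\varepsilon$; composing adaptively with block (i) and absorbing the failure event shows blocks (i)--(ii) jointly form a $(2\varepsilon, \delta)$-DP mechanism. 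I would then verify Assumption~\ref{assump:domain} for this combined mechanism, using that the query response lives in a bounded $\ell_q$-ball up to clipping, and invoke Theorem~\ref{thm:pure_main} with $\varepsilon' = \varepsilon$, which upgrades the output to $(2\varepsilon + \varepsilon) = 3\varepsilon$-pure DP.

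For utility, I would first bound the pre-purification error: the $\ell_2$-norm of the $d$-dimensional additive noise is $\Tilde{\cO}(d^{1/2}\hat b/\varepsilon)$, and since $\hat b \le \Delta_{\mathrm{Local}}^q(D) + \Tilde{\cO}(\log(1/\delta)/\varepsilon)$ with high probability, this gives $\mathbb{E}\|q_{\apx} - q(D)\|_2 \le \Tilde{\cO}\big(\tfrac{d^{1/2}(\Delta_{\mathrm{Local}}^q(D) + \log(1/\delta)/\varepsilon)}{\varepsilon}\big)$, which matches the pre-purification entry for this row of Table~\ref{tab:utility_conversion}. By Theorem~\ref{thm:pure_main} the purification step adds $\ell_1$-error at most $\omega R + \frac{4Rd}{\varepsilon}(\delta/2\omega)^{1/d}$, which I would render lower-order by choosing $(\omega, \delta)$ as in Corollary~\ref{lem: appropriate_delta} (small $\omega$, exponentially small $\delta$). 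A triangle inequality $\|q_{\pure} - q(D)\|_2 \le \|q_{\pure} - q_{\apx}\|_2 + \|q_{\apx} - q(D)\|_2$ then combines the two. The decisive step is that the purification requirement forces $\delta$ exponentially small in $d$ (as in Remark~\ref{rmk:delta_d}), so $\log(1/\delta) \sim d$, turning $\tfrac{d^{1/2}\log(1/\delta)}{\varepsilon^2}$ into $\tfrac{d^{3/2}}{\varepsilon^2}$ and yielding the claimed $\Tilde{\cO}\big(\tfrac{d^{1/2}\Delta_{\mathrm{Local}}^q(D)}{\varepsilon} + \tfrac{d^{3/2}}{\varepsilon^2}\big)$.

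I expect the main obstacle to be the privacy bookkeeping around the data-dependent noise scale: one must argue carefully that calibrating noise to the privately released $\hat b$ is $\varepsilon$-DP on the good event even though $\hat b$ itself differs across neighboring datasets, and that the $\delta$-failure of the bound's validity is the \emph{only} source of approximate DP, so that the combined mechanism fits the hypothesis of Theorem~\ref{thm:pure_main} cleanly. A secondary technical point is verifying the bounded-range Assumption~\ref{assump:domain} for the noisy (clipped) query output and confirming that, under the chosen $(\omega, \delta)$, the purification error is genuinely lower-order relative to $\tfrac{d^{3/2}}{\varepsilon^2}$.
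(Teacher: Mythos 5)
Your proposal follows essentially the same route as the paper's proof: an $\varepsilon$-DP Laplace release of the sensitivity bound $\hat\beta$ (with the $\log(1/\delta)/\varepsilon$ shift ensuring domination except with probability $\delta$), a noisy query release calibrated to $\hat\beta$ that is $(\varepsilon,\delta)$-DP precisely because of that failure event (the paper formalizes your "good event" argument via probabilistic DP), composition to $(2\varepsilon,\delta)$-DP followed by purification with $\varepsilon'=\varepsilon$, and a utility analysis that bounds $\mathbb{E}\|Z_1\|_2 = \Tilde{\cO}(d^{1/2}\hat\beta/\varepsilon)$, makes the purification error lower-order via a Corollary~\ref{lem: appropriate_delta}-style choice of $(\omega,\delta)$, and uses $\log(1/\delta)\sim d$ to obtain the $d^{3/2}/\varepsilon^2$ term. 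The proposal is correct and matches the paper's argument in both decomposition and all key steps.
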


\subsection{Pure DP Mode Release}
\label{sec:argmax}

We address the problem of privately releasing the most frequent item, commonly referred to as mode release, argmax release, or voting \citep{dwork2014algorithmic}. Our approach follows the distance-to-instability algorithm, $\mathcal{A}_{\text{dist}}$, introduced in \citet{thakurta2013differentially} and further outlined in Section 3.3 of \citet{salivad19}.
Let $\mathcal{X}$ denote a finite data universe, and let $D \in \mathcal{X}^n$ be a dataset. We define the mode function $f: \mathcal{X}^n \rightarrow \mathcal{X}$, where $f(D)$ returns the most frequently occurring element in the dataset $D$.

\begin{theorem}
    \label{thm:mode}
    Let $D\in \cX^n$ be a dataset. Algorithm~\ref{alg: mode} satisfies $2\varepsilon$-DP and runs in time $\cO(n+\log|\cX|)$. Furthermore, if the gap between the frequencies of the two most frequent items exceeds
 $\Omega \lrp{\nicefrac{\log|\cX| \log( \log|\cX|/\varepsilon)}{\varepsilon}}$, Algorithm~\ref{alg: mode} returns the mode of $D$ with probability at least $1 - \cO(\nicefrac{1}{|\cX|})$. 
\end{theorem}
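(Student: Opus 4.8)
The plan is to read Algorithm~\ref{alg: mode} as the distance-to-instability mechanism $\cA_{\mathrm{dist}}$ of \citet{thakurta2013differentially} applied to the mode function $f$, followed by the finite-space purification of Algorithm~\ref{alg:discrete}. Since $f$ has finite range $\cX$, I embed its output into $\{0,1\}^d$ with $d=\lceil\log_2|\cX|\rceil$ and invoke Theorem~\ref{thm:index_embed}. The distance to instability of the mode at $D$ is (up to a constant) half the gap between the two largest frequencies, since that many entries must be altered before the argmax changes; $\cA_{\mathrm{dist}}$ adds $\Lap(1/\varepsilon)$ noise to this quantity and releases $f(D)$ only when the noisy value clears a threshold of order $\ln(1/\delta)/\varepsilon$, which makes it $(\varepsilon,\delta)$-DP.

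\textbf{Privacy.} Choosing $\delta=\varepsilon^d/(2d)^{3d}$ makes the hypothesis $\delta<\varepsilon^d/(2d)^{3d}$ of Theorem~\ref{thm:index_embed} hold, so purifying the $(\varepsilon,\delta)$-DP output of $\cA_{\mathrm{dist}}$ through Algorithm~\ref{alg:discrete} with purification budget $\varepsilon'=\varepsilon$ yields $(2\varepsilon,0)$-pure DP. The two additive $\varepsilon$'s are exactly the upstream testing budget and the purification budget. This step is immediate once $\delta$ is fixed.

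\textbf{Runtime.} Tallying item frequencies and extracting the two largest counts is $\cO(n)$, and the mode's distance to instability is then a closed-form function of those top-two counts. The binary embedding $\mathsf{BinMap}$, the internal call to Algorithm~\ref{alg:main}, and the coordinatewise rounding each run in $\cO(d)=\cO(\log|\cX|)$ time, because uniform sampling over the hypercube and the per-coordinate Laplace perturbation are $\cO(d)$ operations. Summing gives the claimed $\cO(n+\log|\cX|)$.

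\textbf{Utility and the main obstacle.} I bound two failure events and union bound. First, the test returns $\perp$ only if the $\Lap(1/\varepsilon)$ noise pushes the noisy distance below the threshold. Substituting $\delta=\varepsilon^d/(2d)^{3d}$ gives $\ln(1/\delta)=d\ln\!\lrp{(2d)^3/\varepsilon}=\cO\lrp{\log|\cX|\,\log(\log|\cX|/\varepsilon)}$, so the threshold is of order $\log|\cX|\,\log(\log|\cX|/\varepsilon)/\varepsilon$ — precisely the order of the assumed gap lower bound. Taking the constant in $\Omega(\cdot)$ large enough, the distance $\mathrm{gap}/2$ exceeds the threshold by a margin of at least $c\log|\cX|/\varepsilon$, and the Laplace tail bound gives a failure probability of $e^{-\varepsilon\cdot c\log|\cX|/\varepsilon}=\cO(1/|\cX|)$. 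Second, conditioned on the test passing with target $z_{\text{bin}}=\mathsf{BinMap}(f(D))$, Theorem~\ref{thm:index_embed} guarantees the purification-and-rounding recovers the same index with probability at least $1-2^{-d}-\tfrac{d}{2}e^{-d}=1-\cO(1/|\cX|)$, since $2^{-d}=1/|\cX|$ and $\tfrac{d}{2}e^{-d}=o(1/|\cX|)$. A union bound yields the stated $1-\cO(1/|\cX|)$. The delicate part is the bookkeeping in this last step: calibrating $\delta$ so that it is simultaneously small enough to trigger Theorem~\ref{thm:index_embed}, small enough that the gap clears the threshold with an $\cO(1/|\cX|)$-tail margin, and consistent with the advertised gap requirement $\Omega\lrp{\log|\cX|\,\log(\log|\cX|/\varepsilon)/\varepsilon}$. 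Once the constants in $\ln(1/\delta)=d\ln\!\lrp{(2d)^3/\varepsilon}$ are tracked and both failure terms are confirmed to collapse to $\cO(1/|\cX|)$, the result follows; the privacy and runtime claims require no further work beyond Theorem~\ref{thm:index_embed} and the closed-form structure of the mode's distance to instability.
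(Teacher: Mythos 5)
Your proof is correct and takes essentially the same route as the paper's: the distance-to-instability test yields an $(\varepsilon,\delta)$-DP upstream release whose distance is $\lceil(\occ_1-\occ_2)/2\rceil$, Theorem~\ref{thm:index_embed} with the stated exponentially small $\delta$ purifies it to $2\varepsilon$-pure DP, and a union bound over the two failure events (the test outputting $\perp$, and the purification/rounding flipping the embedding) gives the $1-\cO(\nicefrac{1}{|\cX|})$ utility. The only cosmetic difference is that you re-derive the "test passes" probability from the Laplace tail directly (getting an $\cO(\nicefrac{1}{|\cX|})$ failure bound), whereas the paper cites \citet[Proposition~3.4]{salivad19}, which gives failure probability $\delta$; both suffice for the stated guarantee.
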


The proof is provided in Appendix~\ref{appen:mode}. Consider the standard Laplace histogram approach, which releases the element with the highest noisy count. To ensure that the correct mode is returned with high probability, the gap between the largest and second-largest counts must exceed $\mathbf{\Theta}\lrp{\log |\cX|/\varepsilon}$ \citep[Proposition~3.4]{salivad19}. Our result matches this bound up to a $\log \log |\mathcal{X}|$ factor.

\subsection{Pure DP Linear Regression}
We conclude this section by presenting a pure DP algorithm for the linear regression problem. Given a fixed design matrix \( X \in \cX \subset \mathbb{R}^{n \times d} \) and a response variable \( Y \in \cY \subset \mathbb{R}^n \), we assume the existence of \( \theta^* \in \Theta \) such that \( Y = X\theta^* \). The non-private ordinary least squares estimator \( (X^\top X)^{-1} X^\top Y \), requires computing the two sufficient statistics: \( X^\top X \) and \( X^\top Y \). These can be privatized using Sufficient Statistics Perturbation (SSP) \citep{vu2009differential, foulds2016theory} or its adaptive variant, AdaSSP \citep{wang2018revisiting}, to achieve improved utility.

To facilitate the purification procedure, we first localize the output of AdaSSP by deriving a high-probability upper bound. This bound is then used to clip the output of AdaSSP, after which the purification technique is applied. Implementation details are provided in Algorithm~\ref{alg: pure_ada_ssp}, and the corresponding utility guarantee is stated in Theorem~\ref{thm: util_adassp}. The proof is deferred to Appendix~\ref{apx: pf_adassp}.

\begin{theorem}
\label{thm: util_adassp}
Assume $\xtx$ is positive definite and $\|\cY\|_2 \lesssim \|\cX\|_2 \|\theta^*\|_2$. Then, with high probability, the output $\theta_{\mathrm{pure}}$ of Algorithm~\ref{alg: pure_ada_ssp} satisfies:
\begin{equation}
    \begin{aligned}
        \mathrm{MSE}(\theta_{\mathrm{pure}})
        &\leq \Tilde{\cO} \left(\frac{d \|\mathcal{X}\|_2^2 \|\theta^*\|_2^2}{n\varepsilon}  
        \,\wedge\, \frac{d^2 \|\mathcal{X}\|_2^4 \|\theta^*\|_2^2}{\varepsilon^2 n^2 \lambda_{\mathrm{min}}} \right), \\ %
    \end{aligned}
\end{equation}   
Here, \( \lambda_{\mathrm{min}} \) denotes the normalized minimum eigenvalue, defined as \( \lambda_{\mathrm{min}}(X^\top X / n) \), and \( \mathrm{MSE} \) denotes the mean squared error, given by $\mathrm{MSE}(\theta) = \frac{1}{2n} \|Y - X\theta\|_2^2$.
\end{theorem}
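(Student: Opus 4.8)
The plan is to view Algorithm~\ref{alg: pure_ada_ssp} as an upstream $(\varepsilon,\delta)$-DP AdaSSP estimator followed by a localization (clipping) step and the purification routine of Algorithm~\ref{alg:main}, and to control the mean squared error by a single triangle inequality that isolates the statistical error of AdaSSP from the displacement introduced by purification. Because $Y = X\theta^*$, the error functional is exactly a covariance-weighted squared distance to $\theta^*$:
\[
\mathrm{MSE}(\theta) = \tfrac{1}{2n}\|X(\theta-\theta^*)\|_2^2 = \tfrac12 (\theta-\theta^*)^\top \Sigma\,(\theta-\theta^*), \qquad \Sigma := \tfrac1n X^\top X ,
\]
so that $\theta\mapsto\sqrt{\mathrm{MSE}(\theta)}$ is a seminorm. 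Writing $\theta_{\mathrm{clip}}$ for the clipped AdaSSP output, I would first bound
\[
\sqrt{\mathrm{MSE}(\theta_{\mathrm{pure}})} \;\le\; \sqrt{\mathrm{MSE}(\theta_{\mathrm{clip}})} \;+\; \sqrt{\tfrac12\|\Sigma\|_{\mathrm{op}}}\;\|\theta_{\mathrm{pure}}-\theta_{\mathrm{clip}}\|_2 ,
\]
and use $\|\Sigma\|_{\mathrm{op}} \lesssim \|\cX\|_2^2$, reducing the task to (i) the AdaSSP error $\mathrm{MSE}(\theta_{\mathrm{clip}})$ and (ii) the Euclidean purification displacement $\|\theta_{\mathrm{pure}}-\theta_{\mathrm{clip}}\|_2$.

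For (i) I would invoke the AdaSSP guarantee of \citet{wang2018revisiting}, which produces an $(\varepsilon,\delta)$-DP estimate $\theta_{\mathrm{AdaSSP}}$ whose MSE obeys the two-regime bound in the ``before purification'' column of Table~\ref{tab:utility_conversion}; the two arms (output perturbation and sufficient-statistics perturbation, selected by the privately estimated $\lambda_{\min}$) carry $\log(1/\delta)$ as $\sqrt{\log(1/\delta)}$ in the first arm and linearly in the second. The localization step supplies the clipping ball $\Theta$ of radius $R = \Tilde{\cO}(\|\theta^*\|_2)$; its sole purpose is to enforce the bounded-range Assumption~\ref{assump:domain} required by purification. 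Since $\theta_{\mathrm{AdaSSP}}$ concentrates around $\theta^*$ and $R$ exceeds $\|\theta^*\|_2$ plus the AdaSSP error radius, the clipping is inactive with high probability, so $\theta_{\mathrm{clip}}=\theta_{\mathrm{AdaSSP}}$ and $\mathrm{MSE}(\theta_{\mathrm{clip}})$ inherits that two-regime bound verbatim (the negligible-probability clipping event still leaves $\theta_{\mathrm{clip}}\in\Theta$ bounded).

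For (ii), and for the crucial $\log(1/\delta)\to d$ conversion, I would instantiate Corollary~\ref{lem: appropriate_delta} with $\Theta$ the clipping ball of $\ell_2$ diameter $C=2R=\Tilde{\cO}(\|\theta^*\|_2)$, mixture level $\omega=1/n^2$, and $\delta = 2\omega/(16Cdn^2)^d$. This single choice does two things at once: it certifies that the pipeline is pure DP, and it forces $\log(1/\delta)=\Tilde{\cO}(d)$, which turns $\sqrt{d\log(1/\delta)}\mapsto d$ in the first arm and $d\log(1/\delta)\mapsto d^2$ in the second, producing exactly the claimed rate $\frac{d\|\cX\|_2^2\|\theta^*\|_2^2}{n\varepsilon}\wedge\frac{d^2\|\cX\|_2^4\|\theta^*\|_2^2}{\lambda_{\min}n^2\varepsilon^2}$. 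Corollary~\ref{lem: appropriate_delta} further gives $\E\|\theta_{\mathrm{pure}}-\theta_{\mathrm{clip}}\|_2 \le \tfrac{1}{n^2\varepsilon}+\tfrac{C}{n^2}$; upgrading this to a high-probability bound via the light tails of the added noise $\lapd(2\Delta/\varepsilon)$ and substituting into the triangle inequality contributes at most $\Tilde{\cO}\big(\|\cX\|_2^2(\tfrac{1}{n^4\varepsilon^2}+\tfrac{\|\theta^*\|_2^2}{n^4})\big)$ to the MSE, of strictly lower order than both regime terms and hence absorbed into the $\Tilde{\cO}$.

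The main obstacle I anticipate is the localization analysis, not the purification bookkeeping. I must exhibit a clipping radius that is simultaneously valid (containing $\theta^*$, and hence effectively $\theta_{\mathrm{AdaSSP}}$, with high probability), tight (of order $\|\theta^*\|_2$, since the purification displacement—and thus the lower-order term—scales with $C$), and derivable without leaking privacy (so that clipping remains post-processing of already-private quantities). This is exactly where the two hypotheses are used: positive-definiteness of $X^\top X$ makes $\theta^*=(X^\top X)^{-1}X^\top Y$ well defined with finite norm, and the scale compatibility $\|\cY\|_2\lesssim\|\cX\|_2\|\theta^*\|_2$ pins that norm to the public domain quantities $\|\cX\|_2,\|\cY\|_2$, so $R$ can be fixed a priori. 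A secondary point is to check that the single $\delta$ above drives the conversion uniformly across both arms of the adaptive minimum; since $\log(1/\delta)$ enters each arm monotonically, the one choice handles them together.
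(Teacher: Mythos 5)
Your pipeline coincides with the paper's: run AdaSSP at an exponentially small $\delta$ so that the calibration of Algorithm~\ref{alg:main} forces $\log(1/\delta)=\Tilde{\cO}(d)$, clip to a trust region so that Assumption~\ref{assump:domain} holds, purify, and charge the purification displacement against the MSE through $\lambda_{\max}(X^\top X)/n \lesssim \|\cX\|_2^2$. Your seminorm triangle inequality is, if anything, cleaner than the paper's quadratic expansion (which silently drops a cross term your formulation handles), and your substitution of $\log(1/\delta)=\Tilde{\cO}(d)$ into both arms of the AdaSSP guarantee (Lemma~\ref{lem: wang_lemma}), together with the lower-order accounting of the displacement, is correct.

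The gap is exactly in the localization step you flagged as the main obstacle, and it is twofold. First, your claim that clipping at radius $R=\Tilde{\cO}(\|\theta^*\|_2)$ is inactive rests on ``$\theta_{\mathrm{AdaSSP}}$ concentrates around $\theta^*$.'' But the AdaSSP utility guarantee controls only the prediction seminorm $\frac{1}{2n}\|X(\tilde\theta-\theta^*)\|_2^2$; converting this to an $\ell_2$ radius costs a factor $1/\lambda_{\mathrm{min}}$, and the theorem assumes only positive definiteness, so $\lambda_{\mathrm{min}}$ may be arbitrarily small and no $\ell_2$ concentration of the required order follows from the bound you invoke. The paper instead bounds the range directly from the closed form $\tilde\theta=(X^\top X+\lambda I+E_1)^{-1}(X^\top y+E_2)$: AdaSSP's ridge parameter $\lambda\approx\sqrt{d\log(1/\delta)}\,\|\cX\|_2^2/\varepsilon$ lower-bounds the spectrum of the inverted matrix, Gaussian concentration controls $\|E_1\|_2$ and $\|E_2\|_2$, and one gets $\|\tilde\theta\|_2\leq\Tilde{\cO}\left((1+n\varepsilon/d)\,\|\cY\|_2/\|\cX\|_2\right)$ with high probability --- a radius that is much larger than $\|\theta^*\|_2$ when $n\varepsilon\gg d$. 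Second, even granting concentration, a ball of radius $\Tilde{\cO}(\|\theta^*\|_2)$ cannot be ``fixed a priori'': the hypothesis $\|\cY\|_2\lesssim\|\cX\|_2\|\theta^*\|_2$ \emph{lower}-bounds $\|\theta^*\|_2$ by public quantities; it gives no upper bound, so your use of the assumption runs in the wrong direction. The paper's trust region $\Tilde{R}=\Tilde{\cO}\left((1+n\varepsilon/d)\,\|\cY\|_2/\|\cX\|_2\right)$ in Algorithm~\ref{alg: pure_ada_ssp} is stated purely in terms of the public quantities $\|\cX\|_2,\|\cY\|_2,n,\varepsilon,d$; the assumption is used only afterwards to rewrite $\|\cY\|_2/\|\cX\|_2\lesssim\|\theta^*\|_2$ in the final rates. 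To repair your proof, replace the concentration claim by this operator-norm analysis of the AdaSSP closed form and enlarge the clipping radius to the paper's public one; this costs nothing, since the $\delta$ calibration makes the Laplace scale $\Delta/\varepsilon$ independent of the radius, and the enlarged radius enters only the lower-order terms.
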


We note that \citep{Asi_inverse} also proposes an $\varepsilon$-DP mechanism for linear regression using the approximate inverse sensitivity mechanism, which has an excess mean squared error of $\cO\left( \nicefrac{dL^2\mathrm{tr}(X^\top X/n)}{n^2\varepsilon^2} \right)$, where $L$ is the Lipschitz constant of the individual mean squared error loss function. We make two comparisons: (1) compared to the utility bound in \citet[Proposition 4.3]{Asi_inverse}, Theorem\ref{thm: util_adassp} avoids dependence on the Lipschitz constant, which in turn relies on the potentially unbounded quantity $\|\Theta\|_2$; (2) both algorithms exhibit a computational complexity of $\cO(nd^2 + d^3)$. Nonetheless, the inverse sensitivity mechanism entails further computational overhead due to the rejection sampling procedure (\citet[Algorithm~3]{Asi_inverse}).

\section{Pure DP Query Release}
\label{sec:query}

We study the private query release problem, where the data universe is defined as \( \mathcal{X} = \{0,1\}^d \), and the dataset is represented as a histogram \( D \in \mathbb{N}^{|\mathcal{X}|} \), satisfying \( \|D\|_1 = n \). We consider linear query functions \( q: \mathcal{X} \to [0,1] \), where \( q \in Q \), and the workload \( Q \) consists of \( K \) distinct queries. For convenience, we define $Q(D) := \left(\frac{1}{n} \sum_{i=1}^n q_1(d_i), \ldots, \frac{1}{n} \sum_{i=1}^n q_K(d_i)\right)$. Our goal is to release a privatized version of \( Q(D) \) that minimizes the \( \ell_\infty \) error.

Our algorithm, described in Algorithm~\ref{alg: pure_pmw}, proceeds in three steps:  
(1) the multiplicative weights exponential mechanism \citep{hardt2012simple} is used to release a synthetic dataset;  
(2) subsampling is applied to this synthetic dataset to further reduce its size;  
(3) the subsampled dataset is encoded into a binary representation and then passed through the purification procedure (Algorithm~\ref{alg:discrete}).  
The subsampling step is necessary to balance the additional error introduced by purification, which depends on the size of the output space. The privacy and utility guarantees are formalized in Theorem~\ref{thm:query_pmw}, with the proof deferred to Appendix~\ref{apx:query}.

\begin{theorem}\label{thm:query_pmw}
Algorithm~\ref{alg: pure_pmw} satisfies $2\varepsilon$-DP, and the output $D_\mathrm{pure}$ of the following utility guarantee:
    \[
        \mathbb{E}[\|Q(D) - Q(D_{\mathrm{pure}})\|_\infty] \leq \Tilde{\cO} \left( \frac{d^{1/3}}{n^{1/3}\varepsilon^{1/3}} \right).
    \]
    Moreover, the runtime of the algorithm is \( \Tilde{\mathcal{O}}(nK + |\mathcal{X}| + \varepsilon^{2/3} n^{2/3} d^{1/3} |\mathcal{X}| K) \).
\end{theorem}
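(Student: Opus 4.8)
The plan is to view the three stages of Algorithm~\ref{alg: pure_pmw}—the multiplicative-weights exponential mechanism (MWEM), subsampling, and binary-embedding purification—as a chain, and to control the error introduced at each stage. Write $\hat D$ for the synthetic dataset produced by MWEM, $\hat D_{\mathrm{sub}}$ for the size-$m$ subsample drawn from it, and $D_{\mathrm{pure}}$ for the purified output of Algorithm~\ref{alg:discrete}. The triangle inequality gives
\begin{equation*}
\|Q(D)-Q(D_{\mathrm{pure}})\|_\infty \le \|Q(D)-Q(\hat D)\|_\infty + \|Q(\hat D)-Q(\hat D_{\mathrm{sub}})\|_\infty + \|Q(\hat D_{\mathrm{sub}})-Q(D_{\mathrm{pure}})\|_\infty,
\end{equation*}
and I would bound the three summands by the MWEM error, the subsampling error, and the purification recovery error, respectively.

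For the privacy claim, I would first argue that MWEM with $T$ rounds of exponential-mechanism selection and Laplace measurement, analyzed by advanced composition, is $(\varepsilon,\delta)$-DP. The subsample is data-independent post-processing of $\hat D$, so it inherits this guarantee, and the binary-embedding purification then upgrades the $(\varepsilon,\delta)$-DP output into $2\varepsilon$-pure DP by Theorem~\ref{thm:index_embed}. The one hypothesis to verify is $\delta < \varepsilon^{d'}/(2d')^{3d'}$, where $d'=md$ is the number of bits needed to encode $m$ points of $\cX=\{0,1\}^d$; choosing $\delta$ at this threshold forces the crucial relation $\log(1/\delta)\asymp d'\log(d'/\varepsilon)\asymp md$, up to logarithmic factors.

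For utility, I would invoke the MWEM guarantee of \cite{hardt2012simple}, $\mathbb{E}\|Q(D)-Q(\hat D)\|_\infty \lesssim (\log K)^{1/2}\,(d\log(1/\delta))^{1/4}/\sqrt{n\varepsilon}$; bound the subsampling error by Hoeffding and a union bound over the $K$ queries—each $q\in[0,1]$, so $q(\hat D_{\mathrm{sub}})$ is an average of $m$ bounded i.i.d. draws with mean $q(\hat D)$—yielding $\mathbb{E}\|Q(\hat D)-Q(\hat D_{\mathrm{sub}})\|_\infty \lesssim \sqrt{\log K/m}$; and observe that the purification term vanishes on the exact-recovery event of Theorem~\ref{thm:index_embed}, whose complement has probability at most $2^{-d'}+\tfrac{d'}{2}e^{-d'}$ and contributes at most $\cO(1)$ to the expectation (queries are bounded by $1$), hence negligibly. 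Substituting $\log(1/\delta)\asymp md$ turns the MWEM term into $\lesssim (\log K)^{1/2} m^{1/4} d^{1/2}/\sqrt{n\varepsilon}$, so the total is $\tilde{\cO}\big(m^{1/4} d^{1/2}/\sqrt{n\varepsilon}+\sqrt{1/m}\big)$. Equating the two terms to optimize over the subsample size gives $m\asymp (n\varepsilon)^{2/3}/d^{2/3}$, at which both terms equal $\tilde{\cO}(d^{1/3}/(n\varepsilon)^{1/3})$, matching the claimed rate.

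The main obstacle is exactly this coupled optimization. Purification recovers the discrete output exactly only when $\log(1/\delta)$ scales with the embedding dimension $md$, and this same $\log(1/\delta)$ re-enters (to the fourth-root power) the MWEM error, so one cannot enlarge the output space without paying in the upstream $\delta$. Subsampling is introduced precisely to relieve this tension, trading a controlled statistical error $\sqrt{\log K/m}$ for a smaller embedding dimension; the joint balancing of the number of MWEM rounds $T$, the subsample size $m$, and $\delta$ is the delicate part of the argument. The runtime claim then follows by bookkeeping: the one-time evaluation of the $K$ queries on $D$ costs $\tilde{\cO}(nK)$, each MWEM round costs $\cO(|\cX|K)$, and the downstream subsampling together with the $\cO(md)$-time purification are lower-order, reproducing the stated bound.
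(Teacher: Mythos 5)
Your proposal is correct and follows essentially the same route as the paper's proof: the same three-term triangle-inequality decomposition (MWEM error, subsampling error, purification error), the same privacy chain (MWEM is $(\varepsilon,\delta)$-DP, the subsample is post-processing, and Theorem~\ref{thm:index_embed} upgrades to $2\varepsilon$-pure DP), the same forced relation $\log(1/\delta)\asymp md$ coming from the purification threshold, and the same balancing $m \asymp (n\varepsilon/d)^{2/3}$ giving the rate $\tilde{\cO}\bigl(d^{1/3}/(n\varepsilon)^{1/3}\bigr)$ and the stated runtime. The only cosmetic difference is that you invoke advanced composition for MWEM's approximate-DP guarantee where the paper uses zCDP accounting, which yields the same bound.
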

We emphasize that our algorithmic framework (Algorithm~\ref{alg: pure_pmw}) is flexible. The upstream approximate differential privacy query release algorithm---currently instantiated using MWEM~\citep{hardt2012simple}---can be replaced with other \((\varepsilon, \delta)\)-DP algorithms, such as the Private Multiplicative Weights method~\citep{hardt2010multiplicative} or the Gaussian mechanism. Our approach achieves the tightest known utility upper bound for pure DP, as provided by the Small Database mechanism~\citep{blum2013learning}, while offering improved runtime.

\section{Purification as a Tool for Proving Lower Bounds}
\label{sec:lb}
In this section, we demonstrate that the purification technique can be leveraged to establish utility lower bounds for $(\varepsilon, \delta)$-differentially private algorithms. Lower bounds for pure differential privacy mechanisms are often more straightforward to derive, commonly relying on packing arguments \citep{hardt2010geometry, beimel2014bounds}. In contrast, establishing lower bounds for approximate DP mechanisms typically requires more intricate constructions, such as those based on fingerprinting codes \citep{bun2014fingerprinting,dwork2015robust,bun2017make,lee2021power, steinke2016between}.

We begin by providing the intuition behind how the purification technique can be used to prove lower bounds for approximate DP. Suppose there exists an $(\varepsilon, \delta)$-DP mechanism with $\delta$ within appropriate range such that $\log(1/\delta)\sim \tilde{\cO}(d)$ and an error bound of $\cO\lrp{\nicefrac{d^{c}\log^{1-c}(1/\delta)}{n \varepsilon}}$ for some constant $c \in (0,1)$, and we assume the output space is a bounded set in $\R^d$. Then, Theorem~\ref{thm:pure_main} and Remark~\ref{rmk:delta_d} imply that it is possible to construct a $2\varepsilon$-DP mechanism with an error bound of $\Tilde{\cO}\lrp{\nicefrac{d}{n \varepsilon}}$, where $\Tilde{\cO}$ omits logarithmic factors. This provides a new approach to proving lower bounds for $(\varepsilon, \delta)$-DP algorithms via a \emph{contrapositive argument}:

\emph{If no $\cO(\varepsilon)$-pure-DP mechanism exists with an error bound $\Tilde{\cO}\lrp{\frac{d}{n \varepsilon}}$, then no $(\varepsilon, \delta)$-DP mechanism can achieve an error bound of $\cO\lrp{\nicefrac{d^{c}\log^{1-c}(1/\delta)}{n \varepsilon}}$ for all appropriate $\delta$, for any $c \in (0,1)$.}

We apply this approach to derive the lower bound for $(\varepsilon,\delta)$-DP mean estimation task. Specifically, let the data universe be $\cD := \left\{-\nicefrac{1}{\sqrt{d}}, \nicefrac{1}{\sqrt{d}}\right\}^{d}$, and consider a dataset $D=\{x_1,\ldots,x_n\} \subseteq \{-\nicefrac{1}{\sqrt{d}}, \nicefrac{1}{\sqrt{d}}\}^{d}$. The objective is to privately release the column-wise empirical mean $\bar{D}=\frac{1}{n}\sum_{i =1}^n x_i \in [-\nicefrac{1}{\sqrt{d}}, \nicefrac{1}{\sqrt{d}}]^d$. For pure DP we have the following lower bound:

\begin{lemma}[Lemma~5.1 in \cite{bassily2014private}, simplified]\label{lem: bas_l2_lb} 
    For any $\varepsilon$-pure DP mechanism $\cM$, there exists a dataset $D \in \{-\nicefrac{1}{\sqrt{d}}, \nicefrac{1}{\sqrt{d}}\}^{n \times d}$ such that with probability at least $1/2$ over the randomness of $\cM$, we have $\| \cM(D) - \bar{D} \|_2 = \Omega\left( \frac{d}{n\varepsilon} \right)$.
\end{lemma}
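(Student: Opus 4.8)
The plan is to prove this lower bound via a standard \emph{packing argument}, which is the canonical technique for establishing lower bounds under pure differential privacy. The core idea is to construct a large family of datasets whose empirical means are mutually well-separated in $\ell_2$ distance, and then argue that no single $\varepsilon$-pure-DP mechanism can be simultaneously accurate on all of them. Specifically, I would first construct a packing set $\{D_1, \dots, D_M\} \subseteq \{-\nicefrac{1}{\sqrt{d}}, \nicefrac{1}{\sqrt{d}}\}^{n \times d}$ such that (i) any two datasets $D_i, D_j$ in the family differ in at most $\tfrac{n}{2}$ entries (so they are close in Hamming distance), while (ii) their empirical means satisfy $\|\bar{D}_i - \bar{D}_j\|_2 = \Omega\!\lrp{\tfrac{d}{n\varepsilon}}$. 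The natural construction takes a code with good minimum-distance properties: encode each dataset so that roughly $k \approx \tfrac{d}{\varepsilon}$ of the $n$ data points are ``signed'' according to a codeword, producing means that move by $\Theta\!\lrp{\tfrac{k}{n\sqrt{d}} \cdot \sqrt{d}} = \Theta\!\lrp{\tfrac{k}{n}}$ per coordinate direction.

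Next I would set up the indistinguishability mechanics. Let $B_i := \{\theta : \|\theta - \bar{D}_i\|_2 < \tfrac{1}{2}\min_{j\neq i}\|\bar{D}_i - \bar{D}_j\|_2\}$ be disjoint balls around each empirical mean. If the mechanism $\cM$ were accurate on $D_i$ with probability $> 1/2$, then $\P[\cM(D_i) \in B_i] > 1/2$ for every $i$. The group privacy property of $\varepsilon$-pure DP states that if $D_i$ and $D_j$ differ in $t$ entries, then $\P[\cM(D_i) \in S] \leq e^{\varepsilon t}\,\P[\cM(D_j) \in S]$ for all measurable $S$. I would fix a reference dataset, say $D_1$, and apply group privacy across the $t \le n/2$ differing entries to bound $\P[\cM(D_1) \in B_i] \geq e^{-\varepsilon t}\,\P[\cM(D_i) \in B_i] > \tfrac{1}{2}e^{-\varepsilon n/2}$ for each $i$. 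Since the balls $B_i$ are disjoint, summing over $i$ gives $1 \geq \sum_{i} \P[\cM(D_1) \in B_i] > \tfrac{M}{2}e^{-\varepsilon n/2}$, forcing $M \leq 2 e^{\varepsilon n/2}$.

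The crux of the argument is to choose the packing so that the two constraints are in tension: we want $M$ as large as $\exp(\Omega(d))$ (to contradict the upper bound $M \le 2e^{\varepsilon n /2}$ unless the separation is forced to shrink), while keeping the Hamming radius small and the mean-separation large. Calibrating $k = \Theta(\tfrac{d}{\varepsilon})$ differing points per direction and invoking a Gilbert–Varshamov-type bound to guarantee $M = 2^{\Omega(d)}$ codewords with pairwise Hamming distance $\Omega(d)$ in codeword-space yields separation $\|\bar{D}_i - \bar{D}_j\|_2 = \Omega\!\lrp{\tfrac{d}{n\varepsilon}}$; the counting inequality then closes the contradiction.

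The step I expect to be the main obstacle is the joint calibration in the packing construction: one must simultaneously control three interacting quantities — the number of datasets $M$, the pairwise Hamming distance (which governs the group-privacy blow-up $e^{\varepsilon t}$), and the induced $\ell_2$ separation of the empirical means — and verify that the parameter $k \sim d/\varepsilon$ makes the counting bound $M > 2e^{\varepsilon n/2}$ and the required accuracy mutually inconsistent. In particular, care is needed to ensure $k \le n$ (so the construction is feasible in the given regime) and that the $\tfrac{1}{\sqrt{d}}$ scaling of coordinates correctly converts per-coordinate perturbations into the claimed $\Omega(\tfrac{d}{n\varepsilon})$ bound on the $\ell_2$ error; since this is a cited, simplified restatement of \citet[Lemma~5.1]{bassily2014private}, I would lean on that reference for the precise constants rather than re-deriving the optimal packing from scratch.
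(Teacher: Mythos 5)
The paper offers no proof of this lemma at all: it is imported (in simplified form) from Bassily--Smith--Thakurta 2014, and the proof of their Lemma~5.1 is exactly the packing-plus-group-privacy argument you outline --- a $2^{\Omega(d)}$ family of datasets built from a code, pairwise differing in $\Theta(d/\varepsilon)$ entries, disjoint accuracy balls around the separated means, and the counting contradiction $M \le 2e^{\varepsilon t}$. Your construction and calibration (including the caveats you flag about requiring $k \lesssim n$ and about balancing the code rate against the $e^{\varepsilon t}$ blow-up, which is why the constant in $k = c\,d/\varepsilon$ must be taken small relative to the Gilbert--Varshamov rate) match that proof, so the proposal is correct and takes essentially the same route as the cited source.
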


By applying the purification technique with an appropriately chosen $\delta$ and a contrapositive argument based on Lemma~\ref{lem: bas_l2_lb}, we obtain the following $(\varepsilon,\delta)$-differential privacy lower bound, as stated in Theorem~\ref{thm: mean_est_lb}.
We defer proof to Appendix~\ref{appen:lower_bound}.

\begin{theorem}\label{thm: mean_est_lb}
Let $\varepsilon \leq \cO(1)$, and $\delta \in \left( {\exp(-4d\log(d)\log^{2}(n d))},\,\, {4 n^{-d} \log^{-2d}(8d)}  \right)$. For any $(\varepsilon,\delta)$-DP mechanism $\cM$, there exist a dataset $D \in \{-\nicefrac{1}{\sqrt{d}}, \nicefrac{1}{\sqrt{d}}\}^{n\times d}$ such that with probability at least $1/4$ over the randomness of $\cM$, we have
\[
\|\cM(D) - \bar{D}\|_2 = \Tilde{\Omega} \left( \tfrac{\sqrt{d\log(1/\delta)}}{\varepsilon n} \right).
\]
Here, $\Tilde{\Omega}(\cdot)$ hides all polylogarithmic factors, except those with respect to $\delta$.
\end{theorem}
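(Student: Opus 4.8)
The plan is to prove Theorem~\ref{thm: mean_est_lb} by a contrapositive reduction: I will show that any $(\varepsilon,\delta)$-DP mechanism beating the stated lower bound can be purified into a $2\varepsilon$-pure DP mechanism beating the pure-DP lower bound of Lemma~\ref{lem: bas_l2_lb}, a contradiction. The conceptual engine is that throughout the prescribed window for $\delta$ one has $\log(1/\delta) = \tilde{\Theta}(d)$: the lower endpoint $\delta > \exp(-4d\log d\log^2(nd))$ forces $\log(1/\delta) = \tilde{O}(d)$, while the upper endpoint $\delta < 4n^{-d}\log^{-2d}(8d)$ forces $\delta$ to be small enough (indeed $\delta \lesssim n^{-d}$) that the extra noise injected by purification is negligible. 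Consequently the target rate $\sqrt{d\log(1/\delta)}/(\varepsilon n)$ and the pure-DP rate $d/(\varepsilon n)$ agree up to the polylogarithmic factors that $\tilde{\Omega}$ is allowed to hide.

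Concretely, suppose for contradiction that the conclusion fails: there is an $(\varepsilon,\delta)$-DP mechanism $\cM$ such that for every dataset $D$, with probability greater than $3/4$, $\|\cM(D)-\bar{D}\|_2 < \eta$, where $\eta := A^{-1}\sqrt{d\log(1/\delta)}/(\varepsilon n)$ and $A = \mathrm{polylog}(n,d)$ is the factor suppressed by $\tilde{\Omega}$. Since each $x_i$ has unit $\ell_2$ norm, $\bar{D}$ lies in the unit $\ell_2$ ball; post-composing $\cM$ with the $1$-Lipschitz, DP-preserving projection onto that ball only decreases the error, so I may assume $\mathrm{Range}(\cM)$ sits in the unit ball and Assumption~\ref{assump:domain} holds with $q=2$ and $R=2$. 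I then feed the output of $\cM$ into Algorithm~\ref{alg:main} with $\varepsilon'=\varepsilon$ and mixing level $\omega=\Theta(d/(\varepsilon n))$, producing a $2\varepsilon$-pure DP mechanism $\cM_{\mathrm{pure}}$ via Theorem~\ref{thm:pure_main}. (I work in the nontrivial regime $d\lesssim \varepsilon n$, which keeps the target rate below the $O(1)$ diameter and makes the choice $\omega\in(0,1)$ admissible.)

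The core estimate is the triangle inequality $\|\cM_{\mathrm{pure}}(D)-\bar{D}\|_2 \le \|\cM_{\mathrm{pure}}(D)-\cM(D)\|_2 + \|\cM(D)-\bar{D}\|_2$. The second term is below $\eta$ with probability $>3/4$ by assumption, and since the lower $\delta$-endpoint gives $\log(1/\delta)=\tilde{O}(d)$, taking $A$ to be a large enough polylog makes $\eta \le \tfrac{c}{4}\,d/(\varepsilon n)$, where $c$ is the constant of Lemma~\ref{lem: bas_l2_lb}. For the first term, Theorem~\ref{thm:pure_main} (and $\|\cdot\|_2\le\|\cdot\|_1$) gives $\E\|\cM_{\mathrm{pure}}(D)-\cM(D)\|_2 \le 2\omega + \tfrac{8d}{\varepsilon}(\delta/2\omega)^{1/d}$. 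The mixing term $2\omega$ is of order $d/(\varepsilon n)$ by the choice of $\omega$, and the upper endpoint $\delta < 4n^{-d}\log^{-2d}(8d)$ bounds $(\delta/2\omega)^{1/d} \lesssim n^{-1}\log^{-2}(8d)$, making the Laplace term $\tilde{O}(d/(\varepsilon n))$ as well; tuning constants gives $\E\|\cM_{\mathrm{pure}}(D)-\cM(D)\|_2 \le \tfrac{c}{16}\,d/(\varepsilon n)$. Markov's inequality upgrades this to $\|\cM_{\mathrm{pure}}(D)-\cM(D)\|_2 \le \tfrac{c}{4}\,d/(\varepsilon n)$ with probability at least $3/4$. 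A union bound then yields, for every $D$, with probability strictly above $1/2$, $\|\cM_{\mathrm{pure}}(D)-\bar{D}\|_2 \le \tfrac{c}{2}\,d/(\varepsilon n) < c\,d/(\varepsilon n)$. Applied to the hard instance of Lemma~\ref{lem: bas_l2_lb} (with its $\varepsilon$ replaced by $2\varepsilon$, absorbing the factor into $c$), this contradicts the pure-DP lower bound, completing the reduction.

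The main obstacle I anticipate is the simultaneous calibration of $\omega$ and $\delta$: the mixing error $\omega R$ and the Laplace error $\tfrac{4Rd}{\varepsilon}(\delta/2\omega)^{1/d}$ pull $\omega$ in opposite directions, yet both must be driven below $\tilde{O}(d/(\varepsilon n))$ while $\delta$ is held inside the narrow admissible window. This is precisely where the two endpoints of the interval earn their keep — the upper endpoint controls the Laplace term, and the lower endpoint keeps $\log(1/\delta)$ at scale $d$ so that the purified rate still matches the pure-DP lower bound up to the polylogarithmic slack permitted by $\tilde{\Omega}$. The remaining bookkeeping (the Markov step, the union bound, and the projection preprocessing) is routine.
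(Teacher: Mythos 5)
Your skeleton --- the contrapositive reduction, purification via Algorithm~\ref{alg:main}, the triangle-inequality decomposition, and the union bound against Lemma~\ref{lem: bas_l2_lb} --- is the same as the paper's, but your calibration of the mixing level $\omega$ contains a genuine gap that breaks the argument on part of the stated $\delta$-window. You set $\omega=\Theta(d/(\varepsilon n))$ because you control the purification error \emph{in expectation} via Theorem~\ref{thm:pure_main} and then apply Markov, so the $\omega R$ term forces $\omega$ to be tiny. But the Laplace scale in Algorithm~\ref{alg:main} is proportional to $(\delta/2\omega)^{1/d}$, so shrinking $\omega$ inflates the noise: $(\delta/(2\omega))^{1/d}=\delta^{1/d}\cdot\left(\varepsilon n/\Theta(d)\right)^{1/d}$, and the second factor is $n^{\Theta(1/d)}$, which is \emph{not} polylogarithmic whenever $d=o(\log n)$. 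Your claim that the upper endpoint of the window gives $(\delta/2\omega)^{1/d}\lesssim n^{-1}\log^{-2}(8d)$ is therefore false in that regime. Concretely, take $d=2$ and $\delta$ at the top of the window, $\delta\asymp n^{-2}\log^{-4}(16)$: your Laplace term is $\frac{8d}{\varepsilon}(\delta/2\omega)^{1/d}=\Theta\left(1/\sqrt{\varepsilon n}\right)$, polynomially larger than the target $\tilde{\cO}(d/(\varepsilon n))$, so the purified mechanism never contradicts Lemma~\ref{lem: bas_l2_lb}. Since the window is nonempty for $2\le d\ll\log n$ and the theorem is claimed there, your proof does not establish the stated result for all $\delta$ in the window.

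The repair is exactly what the paper does: keep $\omega$ \emph{constant} ($\omega=1/8$) and do not fold the mixing into an expected-error bound at all. Instead treat three events as failures in a union bound --- the mechanism's own inaccuracy (probability $\le 1/4$), the event that Line~\ref{step:unif_1} outputs the uniform sample (probability $\omega=1/8$), and a tail event for the Laplace vector (probability $1/8$) --- and on the success event bound the noise by its tail, $\lrn{Z}_2\le \frac{2\sqrt{d}\,\Delta\log(8d)}{\varepsilon}$. With $\omega$ constant, the window's upper endpoint yields precisely $(\delta/2\omega)^{1/d}=(4\delta)^{1/d}\le 1/(n\log^{2}(8d))$, making the noise $\tilde{\cO}(d/(\varepsilon n))$ uniformly over the window, and the residual success probability strictly above $1/2$ contradicts Lemma~\ref{lem: bas_l2_lb}. (Your other deviations are harmless: the paper works on the hypercube $[-1/\sqrt{d},1/\sqrt{d}]^{d}$ with $q=\infty$ rather than projecting onto the $\ell_2$ ball with $q=2$, and both geometries give $\Delta=\Theta(\sqrt{d})(\delta/2\omega)^{1/d}$; the essential point you are missing is constant $\omega$ plus a conditional tail bound in place of expectation-plus-Markov.)
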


\begin{remark}\label{rem:lb_alt}
    For mean estimation under $(\varepsilon,\delta)$-DP with exponentially small $\delta$, we can also establish a $\Tilde{\Omega}(d)$ lower bound via a packing argument, as detailed in Appendix~\ref{apx:thm9_alt}.
\end{remark}

Additional examples illustrating the use of the purification trick to derive lower bounds is provided in Appendix~\ref{appen:more_lbs}, which includes bounds for one-way marginal release and private selection.

\section{Conclusion and Limitation}
\label{sec:conclusion}
In this paper, we introduced a novel purification framework that systematically converts approximate DP mechanisms into pure DP mechanisms via randomized post-processing. Our approach bridges the flexibility of approximate DP and the stronger privacy guarantees of pure DP. Our purification technique has broad applicability across several fundamental DP problems. In particular, we propose a faster pure DP algorithm for the DP-ERM problem that achieves near-optimal utility rates for pure DP. We also show that our method can be applied to design pure-DP PTR algorithms.
A limitation of our approach is that when applying this continuous purification framework to mechanisms with a finite output space, the utility bound incurs an additional $\log\log|\mathcal{Y}|$ factor compared to prior work, where $|\mathcal{Y}|$ is the output space cardinality. Nonetheless, to our knowledge, this is the first systematic purification result applicable to continuous domains.
Future work includes extending our framework to more adaptive settings and refining its applicability to high-dimensional problems.%

\subsection*{Acknowledgments}
The research is partially supported by NSF Awards \#2048091, CCF-2112665 (TILOS), DARPA AIE program, the U.S. Department of Energy, Office of Science, and CDC-RFA-FT-23-0069 from the CDC’s Center for Forecasting and Outbreak Analytics.
We thank helpful discussion with Abrahdeep Thakurta, who pointed us towards proving the packing lower bound pure-DP DP-ERM under the L1 constraints (Lemma~\ref{lem: packing_lb_l1}). We also thank anonymous reviewers from TPDP'25 for the reference to the \emph{folklore} that mixing a uniform distribution alone suffices for ``purifying'' approximate DP in the discrete output space regime, as well as the anonymous reviewer from Neurips 2025 for a simple purification approach for Euclidean spaces that extends this discrete-set purification folklore (Appendix~\ref{sec:discrete} and \ref{sec:baseline}).
Finally, we thank Xin Lyu for pointing out a simple alternative proof of Theorem~\ref{thm: mean_est_lb} based on a packing argument (Remark~\ref{rem:lb_alt}).

\bibliographystyle{alpha}
\bibliography{neurips_2025}     %

\newpage
\appendix

\begingroup
\hypersetup{linkcolor=black} %
\tableofcontents
\endgroup

\newpage

\section{Preliminaries}
\label{appen:notation}
This section provides definitions of max divergence, total variation distance, and $\infty$-Wasserstein distance, and the lemmas that will be used in the privacy analysis of Theorem~\ref{thm:pure_main}.

\textbf{Notations.} %
Let $\cX$ be the space of data points, $\cX^*:=\cup_{n=0}^\infty \cX^n$ be the space of the data set. For an integer $n$, let $[n]=\{1,\dots,n\}$.
A subgradient of a convex function $f$ at $x$, denoted $\partial f(x)$, is the set of vectors $\mathbf{g}$ such that $f(y)\geq f(x)+\langle \mathbf{g},y-x\rangle$, for all $y$ in the domain. 
For simplicity, we assume the functions are differentiable in this paper and consider the gradient $\nabla f$.
The operators \( \cdot \vee \cdot \) and \( \cdot \wedge \cdot \) denote the maximum and minimum of the two inputs, respectively. We use $\|\cdot\|_q$ to denote $\ell_q$ norm. 
For a set $A$, $\|A\|_q:=\sup_{x\in A}\|A\|_q$ represents the $\ell_q$ radius of set $A$ and $\mathrm{Diam}_q(A):=\sum_{x,y\in A}\|x-y\|_q$ represent the $\ell_q$ diameter of set $A$. For a finite set $S$, we denote its cardinality by $|S|$. Throughout this paper, we use $\mathbb{E}_{\cA}[\cdot]$ to denote taking expectation over the randomness of the algorithm.

\subsection{Definitions on Distributional Discrepancy}

\begin{definition}[\cite{van2014renyi}, Theorem~6; \citet{dwork2014analyze}, Definition~3.6]
\label{def:max_divergence}
The R\'enyi divergence of order $\infty$ (also known as Max Divergence) between two probability measures $\mu$ and $\nu$ on a measurable space $(\Theta, \mathcal{F})$ is defined as:
\[
D_\infty(\mu \| \nu) = \ln \sup_{S\in\cF, \ \mu(S)>0} \left[  \frac{\mu(S)}{\nu(S)} \right].
\]
In this paper, we say that $\mu$ and $\nu$ are $\varepsilon$-indistinguishable if $D_\infty(\mu \| \nu)\leq\varepsilon$, and $D_\infty(\nu \| \mu)\leq \varepsilon$.
\end{definition}

A mechanism $\mathcal{M}$ is $\varepsilon$-differentially private if and only if for every two neighboring datasets $D$ and $D'$, we have $D_\infty(\mathcal{M}(D) \| \mathcal{M}(D')) \leq \varepsilon, \ \text{and} \  D_\infty(\mathcal{M}(D') \| \mathcal{M}(D)) \leq \varepsilon.$

\begin{definition}
    \label{def:TV}
    The total variation (TV) distance between two probability measures $\mu$ and $\nu$ on a measurable space $(\Theta, \mathcal{F})$ is defined as:
\[
\TV(\mu, \nu) = \sup_{S \in \mathcal{F}} |\mu(S) - \nu(S)|.
\]
\end{definition}

The $\infty$-Wasserstein distance between distributions captures the largest discrepancy between samples with probability $1$ under the optimal coupling. Unlike other Wasserstein distances that consider expected transport costs, it offers a worst-case perspective. Though the $\infty$-Wasserstein distance can be defined with a general metric, for clarity, we focus on its $\ell_q$-norm version in this paper.

\begin{definition}[Restatement of Definition~\ref{def:W_infty}] 
\label{def:W_infty_appendix}
The $\infty$-Wasserstein distance between distributions $\mu$ and $\nu$ on a separable Banach space $(\Theta, \|\cdot\|_q)$ is defined as
\begin{align*}
    \wq(\mu, \nu) \dot= \inf_ { \mg \in \Gamma_c(\mu, \nu)}
 \esssup_{(x,y)\sim \mg} \|x - y\|_q = \inf_ { \mg \in \Gamma_c(\mu, \nu)} \{\alpha \mid \Prob_{(x,y)\sim \mg} \lrb{\|x - y\|_q\leq \alpha}=1\},
\end{align*}
where $\Gamma_c(\mu, \nu)$ is the set of all couplings of $\mu$ and $\nu$, i.e., the set of all joint probability distributions $\mg$ with marginals $\mu$ and $\nu$ respectively. The expression $\esssup_{(x,y)\sim \mg}$ denotes the essential supremum with respect to measure $\mg$. By \citet[Proposition~1]{givens1984class}, the infimum in this definition is attainable, i.e., there exists $\mg^* \in \Gamma_c(\mu, \nu)$ such that 
$    \wq(\mu, \nu) = \esssup_{(x,y)\sim \mg^*} \|x - y\|_q.$
\end{definition}

For Laplace perturbation (Lemma~\ref{lem:laplace_wasserstein}), we require a bound on $\w1$, the Wasserstein distance defined by the $\ell_1$-norm, i.e., for $q=1$. A bound for $\w1$ can be derived from $\wq$ using the inequality $\|\cdot\|_1 \leq d^{1-\frac{1}{q}} \|\cdot\|_q$, which gives $\w1(\cdot,\cdot) \leq d^{1-\frac{1}{q}} \wq(\cdot,\cdot)$.

\subsection{Lemmas for DP Analysis of Algorithm~\ref{alg:main}}
We now introduce the three lemmas for the privacy analysis: the equivalence definition of $(\varepsilon,\delta)$-DP (Lemma~\ref{lem:appox_DP_eq_def}), the Laplace perturbation (Lemma~\ref{lem:laplace_wasserstein}), and the weak triangle inequality for $\infty$-R\'enyi divergence (Lemma~\ref{lem:weak_triangle}).

\begin{lemma}[Lemma 3.17 of \citet{dwork2014algorithmic}]
\label{lem:appox_DP_eq_def}
    A randomized mechanism $\cM$ satisfies $(\varepsilon,\delta)$-DP
    if and only if for all neighboring datasets $D\simeq D'$, there exist probability measures $P, P'$ such that $d_\mathrm{TV}(\cM(D), P) \leq \frac{\delta}{e^\varepsilon + 1}$, $d_\mathrm{TV}(\cM(D'), P') \leq \frac{\delta}{e^\varepsilon + 1}$,  $D_\infty(P \parallel P') \leq \varepsilon$, and $D_\infty(P' \parallel P) \leq \varepsilon$.
\end{lemma}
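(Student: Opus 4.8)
The statement is the standard equivalent formulation of $(\varepsilon,\delta)$-DP, so the plan is to prove both implications; write $t := \tfrac{\delta}{e^\varepsilon+1}$, $\mu := \cM(D)$, and $\nu := \cM(D')$. The ``if'' direction is a short chain of inequalities. Suppose $P, P'$ with the stated properties exist. For any measurable $S$,
\[
\mu(S) \le P(S) + \TV(\mu, P) \le e^\varepsilon P'(S) + t \le e^\varepsilon\lrp{\nu(S) + t} + t = e^\varepsilon \nu(S) + \delta,
\]
using $\TV(\mu,P)\le t$, then $D_\infty(P\|P')\le\varepsilon$ (so $P(S)\le e^\varepsilon P'(S)$), then $\TV(\nu,P')\le t$, and finally $(e^\varepsilon+1)t = \delta$. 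Swapping the roles of $\mu,\nu$ and invoking $D_\infty(P'\|P)\le \varepsilon$ gives the reverse inequality, so $\cM$ is $(\varepsilon,\delta)$-DP.

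The substance is the ``only if'' direction: given that $\cM$ is $(\varepsilon,\delta)$-DP, I would construct $P, P'$ explicitly. Fix a common dominating measure (e.g. $\mu+\nu$) with densities $p, p'$, and recast the DP condition as a bound on the hockey-stick divergence: $(\varepsilon,\delta)$-DP is equivalent to $\delta_1 := \int (p - e^\varepsilon p')_+ \le \delta$ and $\delta_2 := \int(p' - e^\varepsilon p)_+ \le \delta$, where the suprema $\sup_S\lrp{\mu(S)-e^\varepsilon\nu(S)}$ are attained on $A=\{p>e^\varepsilon p'\}$ and $B=\{p'>e^\varepsilon p\}$. I would then build $P, P'$ by pulling the two distributions toward the feasible ratio band $[e^{-\varepsilon}, e^\varepsilon]$: on $A$ (resp.\ $B$) the likelihood ratio is too large (resp.\ too small), so I decrease the offending density and raise the other until the ratio meets the cap, while redistributing the displaced mass so that both $P, P'$ stay probability measures and the constraint $e^{-\varepsilon}\le dP/dP' \le e^\varepsilon$ holds everywhere.

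The quantitative heart is choosing this redistribution to hit the sharp constant $t=\tfrac{\delta}{e^\varepsilon+1}$ rather than the naive $\delta$ one gets from fully truncating to $\min(p,e^\varepsilon p')$ and dumping the freed mass elsewhere. The right idea is to balance the correction symmetrically across the two measures: if $m_1 := \TV(\mu, P)$ and $m_2 := \TV(\nu, P')$ denote the masses moved on each side, then matching mass conservation with tightness at the ratio cap yields the linear system
\[
m_1 + e^\varepsilon m_2 = \delta_1, \qquad e^\varepsilon m_1 + m_2 = \delta_2,
\]
whence $m_1 + m_2 = \tfrac{\delta_1+\delta_2}{e^\varepsilon+1}$ and, in the balanced worst case $\delta_1=\delta_2=\delta$, exactly $m_1=m_2=\tfrac{\delta}{e^\varepsilon+1}$ --- this pins down the constant. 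I expect the main obstacle to be making this construction work in full generality: the two-region scheme over-determines mass conservation when $\delta_1\ne\delta_2$ and degenerates when $A$ or $B$ is null or when $\mu,\nu$ are nearly mutually singular. One must therefore route the leftover transfer through the ``good'' region $G=(A\cup B)^c$ --- whose mass is guaranteed positive because $\delta<1$ forbids $\mu,\nu$ from being mutually singular --- and verify feasibility of the resulting flow while keeping both TV budgets at or below $t$. Formalizing this redistribution (equivalently, exhibiting a coupling that realizes the bound) and confirming that the ratio constraint survives the add-back is where the real work lies.
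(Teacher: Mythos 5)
First, a point of comparison: the paper does not prove this lemma at all --- it imports it verbatim as Lemma~3.17 of Dwork and Roth (2014) and uses it as a black box --- so your attempt has to be measured against the textbook construction rather than any argument in this paper. Your ``if'' direction is correct and complete: the chain $\mu(S)\le P(S)+t\le e^\varepsilon P'(S)+t\le e^\varepsilon\nu(S)+(e^\varepsilon+1)t=e^\varepsilon\nu(S)+\delta$, together with the symmetric argument, is all that is needed there.

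The genuine gap is the ``only if'' direction, which is both the substantive half of the lemma and the only direction the paper actually invokes (it is what produces the hypothetical pair $P,P'$ in the privacy analysis of Theorem~\ref{thm:pure_main}). What you give for it is a strategy, not a proof: you say yourself that verifying feasibility of the mass redistribution, and that the ratio constraint survives the add-back, is ``where the real work lies,'' and that work is not done. Moreover, two concrete assertions in your sketch are false. (i) The claim that $\delta<1$ forces the good region $G=(A\cup B)^c$ to have positive mass is wrong: for $\varepsilon=\ln 2$, the two-point distributions $\mu=(0.67,\,0.33)$ and $\nu=(0.33,\,0.67)$ on $\{a,b\}$ are $(\ln 2,\,0.01)$-indistinguishable, yet $A=\{a\}$, $B=\{b\}$, and $G=\emptyset$; this pair can still be repaired by transferring mass between $A$ and $B$, but your stated reason for feasibility of the general construction is invalid. (ii) The exact-balance system $m_1+e^\varepsilon m_2=\delta_1$, $e^\varepsilon m_1+m_2=\delta_2$ is the wrong formulation: when $\delta_2=0<\delta_1$ it has no nonnegative solution, and when, say, $e^\varepsilon=2$, $\delta_1=0.02$, $\delta_2=\delta=0.03$, its solution is $m_1\approx 0.0133>\delta/(e^\varepsilon+1)=0.01$, i.e.\ it certifies a violation of the very TV budget you need. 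The correct statement is a feasibility claim with inequalities --- there exist $m_1,m_2\in[0,\delta/(e^\varepsilon+1)]$ with $m_1+e^\varepsilon m_2\ge\delta_1$ and $e^\varepsilon m_1+m_2\ge\delta_2$ (indeed $m_1=m_2=\delta/(e^\varepsilon+1)$ always works) --- followed by an explicit pointwise definition of $P,P'$ realizing these transfers while keeping $e^{-\varepsilon}\le dP/dP'\le e^\varepsilon$ everywhere. That construction, carried out in Dwork--Roth's proof of their Lemma~3.17, is precisely what is missing from yours.
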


\begin{lemma}[Laplace perturbation, adapted from Theorem~3.2 of \citet{song2017pufferfish}]
\label{lem:laplace_wasserstein}
    Let $\mu$ and $\nu$ be probability distributions on $\R^d$. Let $\lapd(b)$ denote the distribution of $\mathbf{z}\in\R^d$, where $z_i \overset{\text{i.i.d.}}{\sim} \lap(b)$. If $\w1(\mu,\nu)\leq \Delta$, then we have
\begin{align*}
    & D_\infty\lrp{\mu *\lapd\lrp{\tfrac{\Delta}{\varepsilon}} \| \nu *\lapd\lrp{\tfrac{\Delta}{\varepsilon}}}\leq 
    \varepsilon,     \text{ and }  \\
    & D_\infty\lrp{\nu *\lapd\lrp{\tfrac{\Delta}{\varepsilon}} \| \mu *\lapd\lrp{\tfrac{\Delta}{\varepsilon}}}\leq 
    \varepsilon.
\end{align*}
\end{lemma}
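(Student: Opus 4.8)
The plan is to reduce the statement to the elementary pointwise behaviour of the Laplace kernel, routing the comparison through an optimal coupling. The $\infty$-Wasserstein bound $\w1(\mu,\nu)\leq\Delta$ is exactly the device that lets us align the two noise-free distributions so that the two Laplace-convolved mixtures become comparable term by term, and the essential-supremum formulation of $W_\infty$ is what upgrades this to a worst-case (max-divergence) conclusion.

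First I would invoke the attainability of the optimal coupling recorded in Definition~\ref{def:W_infty_appendix} (Proposition~1 of \citet{givens1984class}): there exists $\gamma^*\in\Gamma_c(\mu,\nu)$ such that $\lrn{X-Y}_1\leq\Delta$ holds almost surely for $(X,Y)\sim\gamma^*$. Writing $p_b(z):=\prod_{i=1}^d \frac{1}{2b}e^{-|z_i|/b}$ for the density of $\lapd(b)$ with $b=\Delta/\varepsilon$, and letting $f_\mu,f_\nu$ denote the densities of $\mu*\lapd(b)$ and $\nu*\lapd(b)$, I would express both as expectations over this single coupling, which is legitimate since the marginals of $\gamma^*$ are $\mu$ and $\nu$:
\[
f_\mu(w)=\Ep{(X,Y)\sim\gamma^*}{p_b(w-X)},\qquad
f_\nu(w)=\Ep{(X,Y)\sim\gamma^*}{p_b(w-Y)}.
\]

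Second I would establish the pointwise density-ratio bound for the Laplace kernel. Using the reverse triangle inequality $\bigl||w_i-X_i|-|w_i-Y_i|\bigr|\leq|X_i-Y_i|$ coordinatewise, for every $w$ and every $(X,Y)$ with $\lrn{X-Y}_1\leq\Delta$,
\[
\frac{p_b(w-X)}{p_b(w-Y)}=\exp\!\lrp{\tfrac{1}{b}\sum_{i=1}^d\bigl(|w_i-Y_i|-|w_i-X_i|\bigr)}\leq \exp\!\lrp{\tfrac{\lrn{X-Y}_1}{b}}\leq e^{\Delta/b}=e^\varepsilon.
\]
This inequality holds for almost every $(X,Y)$ under $\gamma^*$, precisely because the essential-supremum characterization of $\w1$ forces the $\ell_1$-gap to be at most $\Delta$ with probability one.

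Finally I would lift the pointwise bound $p_b(w-X)\leq e^\varepsilon\,p_b(w-Y)$ to the mixtures by monotonicity of expectation: applying $\Ep{(X,Y)\sim\gamma^*}{\cdot}$ to both sides yields $f_\mu(w)\leq e^\varepsilon f_\nu(w)$ for every $w$, which is exactly $D_\infty(\mu*\lapd(b)\,\|\,\nu*\lapd(b))\leq\varepsilon$. The reverse divergence follows identically after interchanging the roles of $X$ and $Y$ in the pointwise bound. I expect the only genuinely subtle point to be this coupling-alignment step: without the coupling the two convolutions are integrals against different variables and cannot be compared directly, whereas the coupling lets the pointwise Laplace bound pass through the expectation termwise. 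The use of $W_\infty$ (rather than an expected-cost Wasserstein distance) is essential here, since only the almost-sure $\ell_1$ bound delivers a pointwise — and hence worst-case max-divergence — guarantee.
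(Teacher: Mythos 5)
Your proposal is correct and follows essentially the same route as the paper's proof: both invoke the attainable optimal coupling $\gamma^*$ from \citet[Proposition~1]{givens1984class}, express both convolved densities as integrals against this single coupling, and apply the $\ell_1$ triangle inequality to the Laplace kernel together with the almost-sure bound $\|X-Y\|_1\leq\Delta$ to obtain the factor $e^{\Delta/b}=e^\varepsilon$. The only difference is organizational --- you prove the pointwise kernel-ratio bound and integrate it, whereas the paper bounds the ratio of the two integrals directly by splitting off the $\gamma^*$-null set $\{\|u-v\|_1>\Delta\}$ --- which is the same argument in a slightly cleaner packaging.
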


The proof is provided in Appendix~\ref{sec:pf_conversion} for completeness. The Laplace mechanism is a special case of Lemma~\ref{lem:laplace_wasserstein} by setting $\mu$ and $\nu$ to Dirac distributions at $f(D)$ and $f(D')$, respectively. Lemma~\ref{lem:laplace_wasserstein} can also be derived from the limit case of \citet[Lemma~20]{feldman2018privacy} as the R\'enyi order approaches infinity. %

\begin{lemma}
    [Weak Triangle Inequality, adapted from \citet{langlois2014gghlite}, Lemma~4.1]
    \label{lem:weak_triangle} Let $\mu,\nu,\pi$ be probability measures on a measurable space $(\Theta,\cF)$.
    If $D_\infty(\mu\|\pi)<\infty$ and $D_\infty(\pi\|\nu)<\infty$, then $D_\infty(\mu\|\nu)\leq D_\infty(\mu\|\pi)+D_\infty(\pi\|\nu)$.
\end{lemma}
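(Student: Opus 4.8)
The plan is to unfold the definition of Max Divergence (Definition~\ref{def:max_divergence}) and simply chain the two hypothesized set-wise ratio bounds. Write $a := D_\infty(\mu\|\pi)$ and $b := D_\infty(\pi\|\nu)$, both finite by assumption. By definition $e^a = \sup_{S:\mu(S)>0}\mu(S)/\pi(S)$, so for every measurable $S$ with $\mu(S)>0$ I have the pointwise bound $\mu(S)\le e^a\,\pi(S)$; likewise $\pi(S)\le e^b\,\nu(S)$ for every $S$ with $\pi(S)>0$. The target is to establish $\mu(S)\le e^{a+b}\,\nu(S)$ for all $S$ with $\mu(S)>0$, since taking the supremum over such $S$ and applying $\ln$ then gives exactly $D_\infty(\mu\|\nu)\le a+b$.

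First I would fix an arbitrary $S\in\cF$ with $\mu(S)>0$ and apply the first bound to obtain $\mu(S)\le e^a\,\pi(S)$. The one step needing care is that I then want to invoke the second bound, which requires $\pi(S)>0$. This is where finiteness of $a$ is used: finiteness forces absolute continuity $\mu\ll\pi$, because if $\pi(S)=0$ for some $S$ with $\mu(S)>0$ the ratio $\mu(S)/\pi(S)$ would be infinite and $a=\infty$, contradicting the hypothesis. Hence $\pi(S)\ge e^{-a}\mu(S)>0$, which legitimizes applying $\pi(S)\le e^b\,\nu(S)$.

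Combining the two inequalities yields $\mu(S)\le e^a\,\pi(S)\le e^{a+b}\,\nu(S)$, so $\mu(S)/\nu(S)\le e^{a+b}$. Since $S$ was an arbitrary measurable set with $\mu(S)>0$, taking the supremum and then the logarithm completes the argument. The only genuine obstacle is this degenerate-denominator bookkeeping — verifying that the finiteness hypotheses rule out the $\pi(S)=0$ case so the chaining is valid; everything else is a direct two-step composition of the defining ratio bounds. No coupling, convexity, or continuity machinery is required, as the statement follows purely from the supremum-of-ratios characterization of $D_\infty$ together with the absolute continuity implied by finite divergence.
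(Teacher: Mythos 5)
Your proof is correct and follows essentially the same route as the paper's: both hinge on the observation that finiteness of $D_\infty(\mu\|\pi)$ forces $\pi(S)>0$ whenever $\mu(S)>0$, and then factor the ratio $\mu(S)/\nu(S)$ through $\pi(S)$, bounding the supremum of the product by the product of the suprema. The only cosmetic difference is that you chain the two inequalities pointwise for a fixed $S$ before taking the supremum, whereas the paper manipulates the suprema directly.
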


The proof is deferred to Appendix~\ref{sec:pf_conversion}. This lemma generalizes \citet[Lemma~4.1]{langlois2014gghlite}, which focuses on discrete distributions. Additionally, Lemma~\ref{lem:weak_triangle} corresponds to the infinite R\'enyi order limit of \citet[Proposition~11]{mironov2017renyi} and  \citet[lemma 12]{mansour2012multiple}.

\section{Supplementary Discussion on Randomized Post-Processing}
\label{appen:post_process}
In this section,we clarify the distinction between our "purification" process and the randomized post-processing in \citet[Theorem~10]{blackwell1951comparison}. Define the following function
\[
f_{\varepsilon, \delta}(\alpha) = \max \left\{ 0, 1 - \delta - e^\varepsilon \alpha, e^{-\varepsilon}(1 - \delta - \alpha) \right\}.
\]
By \citet{wasserman2010statistical}, a mechanism $\cM$ is $(\varepsilon,\delta)$-DP if and only if $\cM$ is $f_{\varepsilon}$-DP. 

\citet[Theorem 10]{blackwell1951comparison} \citep[Theorem~2.10]{dong2022gaussian} establish the existence of a (randomized) post-processing method that transforms a pair of distributions into another pair with a dominating trade-off function. Specifically, they show that if $T(P, Q) \leq T(P', Q')$, then there exists a randomized algorithm $\text{Proc}$ such that $\text{Proc}(P) = P'$ and $\text{Proc}(Q) = Q'$, where $T$ denotes the trade-off function \citep[Definition~2.1]{dong2022gaussian}. Their proof constructs a sequence of transformations and takes the limit. In contrast, our ``purification'' process provides a computationally efficient post-processing method for a different problem. Given that $f_{\varepsilon,\delta} \leq f_{(\varepsilon+\varepsilon'),0}$ (see Figure~\ref{fig:f-DP}) and that $T(\cM(D),\cM(D')) \geq f_{\varepsilon,\delta}$ for all neighboring datasets $D \simeq D'$, we seek a randomized post-processing procedure $\amac$ such that $T(\amac \circ \cM(D),\amac \circ\cM(D'))\geq f_{(\varepsilon+\varepsilon'),0}$ while maintaining utility guarantee.

\begin{figure}[H]
    \centering
    \includegraphics[width=0.4\linewidth]{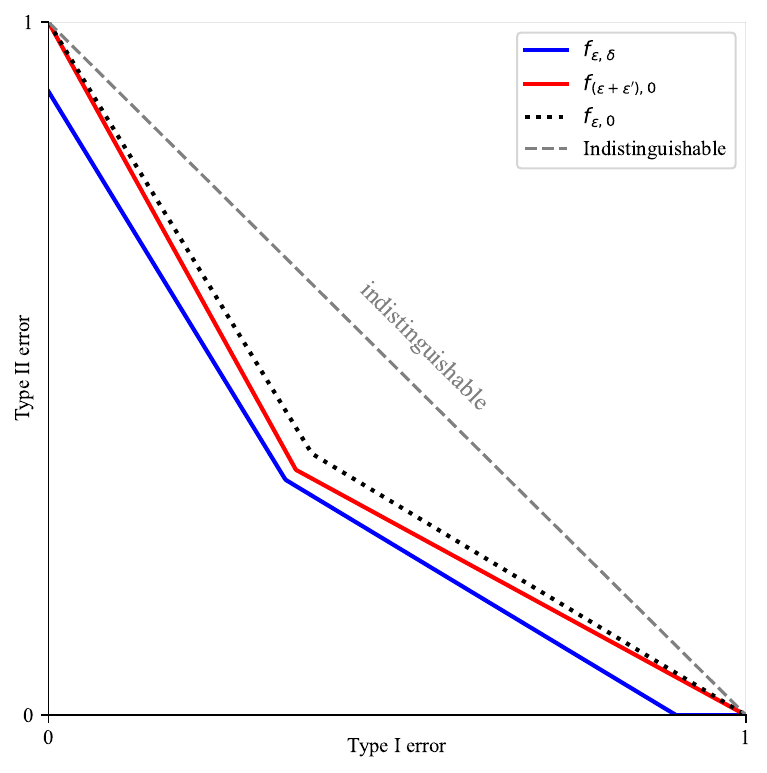}
    \caption{Trade-off functions for $(\varepsilon,\delta)$-DP, $(\varepsilon,0)$-DP, and $(\varepsilon+\varepsilon',0)$-DP. Our method provides a solution to post-process the $(\varepsilon,\delta)$-DP distribution pair (in blue) to the $(\varepsilon+\varepsilon',0)$-DP pair (in red).}
    \label{fig:f-DP}
\end{figure}

\section{Discussion: Purification on Finite Output Spaces}
\label{sec:discrete}
This section discusses a ``folklore'' method for converting approximate DP mechanisms into pure DP when the output space is \textit{finite}, and explains why this method fails in the \textit{continuous} case. 
Given an $(\varepsilon,\delta)$-DP mechanism $\cM: \cX^* \to \cY$ with finite output space $\cY$, one can construct a new mechanism by mixing $\cM$ with uniform distribution over $\cY$: $\cA_{\mathrm{mix}}(\cM) = (1-\omega)\cM + \omega \unif(\cY)$. The resulting mechanism $\cA_{\mathrm{mix}}(\cM)$ satisfies $(\varepsilon + \ln(1+\frac{\delta |\cY|}{\omega} e^{-\varepsilon}))$-pure DP \citep[Lemma~3.2]{huang2022lemmas}. However, this strategy does not yield pure DP when the output space is continuous, as the following counterexample shows.

\begin{example}
    Let $f : D \to [0,1]$ be a statistic computed on a dataset $D \in \cX^*$. Consider the following mechanism: with probability $\delta$, output the true value $f(D)$; with probability $1-\delta$, output a value uniformly from $[0,1]$. That is, $\cM(D)=\delta \cdot \dirac_{f(D)}+(1-\delta)\unif([0,1])$, where $\dirac_{f(D)}$ denotes the Dirac measure at $f(D)$. Then $\cM$ satisfies $(0,\delta)$-DP. Now, consider applying the uniform mixture strategy:
    \begin{align*}
        \cA_{\mathrm{mix}}(\cM(D))&=(1-\omega)\cM(D)+\omega \unif([0,1]) \\
        & = (1-\omega)\delta \cdot \dirac_{f(D)}+(1-\delta-\delta \omega)\unif([0,1]).
    \end{align*}
    This new mechanism satisfies $(0,(1-\omega)\delta)$-DP, but not pure DP unless $\omega=1$, which is the trivial mixture.
\end{example}

\section{Discussion: Extending the Purification on Finite Output Spaces to the Euclidean Space}
\label{sec:baseline}
A natural approach for purification on Euclidean space is to quantize the continuous output space into a finite discrete set, and apply the folklore purification for finite sets. Specifically, to purify the $(\varepsilon,\delta)$-DP mechanism $\cM: \cX^* \to \Theta\subseteq\R^d$, one could first cover the range $\Theta$ by a $\Delta$-net, round the output of $\cM$ to the nearest grid point, and then purify it using the finite range folklore method. We thank an anonymous NeurIPS 2025 reviewer for suggesting this approach. The main challenge of this approach is the explicit construction of the $\Delta$-net and the efficient uniform sampling from it. When constructing the $\Delta$-net by the cubes (as detailed below), this natural method achieves a similar utility bound as Algorithm~\ref{alg:main}, as the bottleneck of both bounds is the term $\lrp{\frac{\delta}{\omega}}^{\frac{1}{d}}$. The bound is different in the \emph{non-dominating} terms: if $\Theta$ is an $\ell_\infty$ ball, this quantization-based method has tighter bound; if $\Theta$ is an $\ell_1$ ball, it yields a worse bound.

\paragraph{Construction of $\Delta$-net.} We first note that a naive approach, sampling from an $\ell_1$ ball (or $\ell_2$ ball) and then rounding to the nearest grid point, does not yield a uniform distribution on the grid points. This non-uniformity arises from boundary effects. To illustrate, consider the $\ell_1$ ball of radius $R$ and take the point $x=(R,0,\dots,0)$ on its surface. That the volume of the intersection: $$\{z:\|z\|_1\leq R, \ \|z-x\|_1\leq r\},$$
is only a $\frac{1}{2^d}$ fraction of the ball $\{z:\|z-x\|_1\leq r\}$.

We now provide the derivation. Writing the intersection in coordinates: $$\sum_{i=1}^d |z_i|\leq R, \ |z_1-R|+\sum_{i=2}^d |z_i|\leq r,$$

we get $R-r\le z_1\le R$, and for such $z_1$, the other coordinates must satisfy

$$\sum_{i=2}^d |z_i|\leq \min\{R-z_1, r-R+z_1\}$$

Therefore, the intersection volume is 
$\int_{R-r\le z_1\le R}\int_{\sum_{i=2}^d |z_i|\leq \min\{R-z_1, r-R+z_1\}}\mathbf{1}d(z_2, \dots, z_d)d z_1$,
$=\int_{R-r\le z_1\le R}\frac{2^{d-1}}{(d-1)!}(\min\{R-z_1, r-R+z_1\})^{d-1}d z_1$
$=\frac{2^{d-1}}{(d-1)!}(\int_{R-r\le z_1\le R-r/2}(r-R+z_1)^{d-1}dz_1+\int_{R-r/2\le z_1\le R}(R-z_1)^{d-1})dz_1$
$=\frac{2^{d-1}}{(d-1)!}\cdot 2 \frac{1}{d}(\frac{r}{2})^d=\frac{1}{d!}r^d$,
which is $\frac{1}{2^d}$ portion of the volume of the ball $\{z: \|z-x\|_1\leq r\}$.

We also note that the above intersection volume can be lower bounded for $\ell_2$ norm, see Lemma 24 in \cite{lin2024tractable}. %

Given these challenges with non-uniformity, we instead consider constructing the $\Delta$-net by tiling a large cube $\Theta' \supseteq \Theta$ with smaller cubes.

\paragraph{Cube-Based $\Delta$-net Construction and Analysis.} Without loss of generality, assume $\Theta'=[0,R]^d$. Take $\Delta = 2 R \lrp{\frac{\delta}{2\omega}}^{\frac{1}{d}}\log d$, and $K=2\lfloor\frac{R}{\Delta}\rfloor$. Construct the $\Delta$-net using cubes of the form $\prod_{k=1}^d \left[\frac{i_k R}{K}, \frac{(i_k+1)R}{K}\right]$. By the folklore purification, this algorithm satisfies $\left(\varepsilon+\ln(1+\frac{\delta K^d}{\omega}e^{-\varepsilon})\right)$-pure DP. We have 
\begin{align*}
    \varepsilon+\ln(1+\frac{\delta K^d}{\omega}e^{-\varepsilon})\leq \varepsilon+\frac{\delta K^d}{\omega}e^{-\varepsilon}\leq \varepsilon+\frac{\delta \left(\frac{R}{ R \lrp{\frac{\delta}{2\omega}}^{\frac{1}{d}}\log d}\right)^d}{\omega}e^{-\varepsilon}\leq 2\varepsilon.
\end{align*}
When $q=\infty$, this method achieves the utility bound of $\E\lrn{x_{\pure}-x_{\apx}}_\infty\leq\omega R + \frac{4 R \log d}{\varepsilon'} \lrp{\frac{\delta}{2\omega}}^{\frac{1}{d}}.$

This bound is tighter than the bound by directly applying Algorithm~\ref{alg:main} in the dependence in $d$. For comparison, directly extending the proof of Theorem~\ref{thm:pure_main} (Appendix~\ref{append:proof_main_thm}) to the $\ell_\infty$ norm yields:  $\E\lrn{x_{\pure}-x_{\apx}}_\infty\leq\omega R + \frac{4 R d (\log d+1)}{\varepsilon'} \lrp{\frac{\delta}{2\omega}}^{\frac{1}{d}}$, which is derived as follows:
\begin{align*}
    \E\lrn{x_{\pure}-x_{\apx}}_\infty&=\P(u>\omega)\E\lrb{\lrn{x_{\pure}-x_{\apx}}_\infty \mid u>\omega}+\P(u\leq\omega)\E\lrb{\lrn{x_{\pure}-x_{\apx}}_\infty \mid u\leq\omega}\\
    &\leq \E\lrb{ \max_i |z_i|}+\omega R \tag{$z_i\sim \lap\left(\frac{2\Delta}{\varepsilon'}\right)$}\\
    &=\left(\sum_{k=1}^d \frac{1}{k}\right)\frac{2\Delta}{\varepsilon'}+\omega R\\
    &\leq (\log d+1)\frac{2\Delta}{\varepsilon'}+\omega R\\
    &\leq \frac{4 R d (\log d+1)}{\varepsilon'} \lrp{\frac{\delta}{2\omega}}^{\frac{1}{d}}+\omega R .
\end{align*}

When $q=1$, applying the standard conversion $\|\cdot\|_1 \le d \|\cdot\|_\infty$, this method achieves the utility bound of $\E\lrn{x_{\pure}-x_{\apx}}_1\leq\omega d R + \frac{4 R d \log d}{\varepsilon'} \lrp{\frac{\delta}{2\omega}}^{\frac{1}{d}}$, which is worse than Theorem~\ref{thm:pure_main} by a factor of $d$.

\section{Privacy Analysis: Proof Sketch of Theorem~\ref{thm:pure_main}}
\label{sec:pf_tech}

The proof sketch of Theorem~\ref{thm:pure_main} is illustrated in Figure~\ref{fig:proof_sketch}. Let $D$ and $D'$ be neighboring datasets, and let $\cM$ be an $(\varepsilon,\delta)$-DP mechanism. We aim to show that $\amac\lrp{\cM(D)}$ and $\amac\lrp{\cM(D')}$—the post-processed outputs of $\cM(D)$ and $\cM(D')$ after applying Algorithm~\ref{alg:main}—are $\varepsilon$-indistinguishable. By sketching the proof, we also provide the intuition of the design of Algorithm~\ref{alg:main}.

First, by the equivalent definition of approximate DP (Lemma~\ref{lem:appox_DP_eq_def}), there exists a hypothetical $\varepsilon$-indistinguishable distribution pair $P$ and $P'$, such that $\cM(D) $ and $ \cM(D')$ are $\cO(\delta)$-close to $P$ and $P'$, respectively, in total variation distance. Note that $P$ and $P'$ can both depend on $D$ and $D'$, i.e., $P = P(D,D')$ and $P' = P'(D,D')$, rather than simply $P = P(D)$ and $P' = P'(D')$. This dependence complicates the direct application of standard DP analysis. To address this, we transition to a distributional perspective. 

To show that $\amac\lrp{\cM(D)}$ and $\amac\lrp{\cM(D')}$ are $\varepsilon$-indistinguishable distributions, by the weak triangle inequality of $\infty$-R\'enyi divergence (Lemma~\ref{lem:weak_triangle}), it suffices to show that $\amac\lrp{\cM(D)}$ and $\amac\lrp{P}$ are $\varepsilon$-indistinguishable (in terms of $\infty$-R\'enyi divergence), and that $\amac\lrp{\cM(D')}$ and $\amac\lrp{P'}$ are $\varepsilon$-indistinguishable as well. Now the problem reduces to: given two distributions with total variation distance bound, how to post-process them (with randomness) to obtain the $\infty$-R\'enyi divergence bound?

A natural way to establish the $\infty$-Renyi bound is via the Laplace mechanism, which perturbs \emph{deterministic} variables with Laplace noise according to the $\ell_1$-sensitivity. To generalize the Laplace mechanism to perturbing \emph{random} variables, we develop the Laplace perturbation (Lemma~\ref{lem:laplace_wasserstein}), which achieves the $\infty$-Rényi bound by convolving Laplace noise according to the $\w1$ distance. The $\w1$ distance can be viewed as a randomized analog of the $\ell_1$-distance. The remaining step is to derive a $\w1$ bound from the TV distance bound, and this motivates Lemma~\ref{lem:W_infty_TV_q}, a $\TV$ to $\wq$ conversion lemma that generalizes \citet[Lemma~8]{lin2024tractable}. A small mixture of a uniform distribution ensures that the conditions of Lemma~\ref{lem:W_infty_TV_q} hold. Therefore, Algorithm~\ref{alg:main} consists of two key steps: mixture with the uniform distribution (Line~\ref{step:unif_1}), and the Laplace perturbation calibrated to $\delta$ (Line~\ref{step:laplace}.) The formal proof of Theorem~\ref{thm:pure_main}, along with proofs of the key lemmas, is deferred to Appendix~\ref{sec:pf_conversion}.

\begin{figure}%
    \centering
    \begin{tikzpicture}[
        node distance=2cm and 3cm,
        every node/.style={draw, minimum width=1.5cm, minimum height=.8cm, align=center},
        rounded corners,
        >=Stealth,
        font=\small
    ]

    \node[draw=black] (M_D) at (0, 0) {$\mathcal{M}(D)$};
    \node[draw=gray, right=2.5cm of M_D] (P) {\textcolor{gray}{$P$}};
    \node[draw=gray, right=1.8cm of P] (P_prime) {\textcolor{gray}{$P'$}};
    \node[draw=black, right=2.5cm of P_prime] (M_D_prime) {$\mathcal{M}(D')$};

    \node[draw=black, below=0.8cm of M_D] (U_MD) {$\mathcal{U}(\mathcal{M}(D))$};
    \node[draw=gray, below=0.8cm of P] (U_P) {\textcolor{gray}{$\mathcal{U}(P)$}};
    \node[draw=gray, below=0.8cm of P_prime] (U_P_prime) {\textcolor{gray}{$\mathcal{U}(P')$}};
    \node[draw=black, below=0.8cm of M_D_prime] (U_MD_prime) {$\mathcal{U}(\mathcal{M}(D'))$};

    \node[draw=black, below=1.2cm of U_MD] (A_MD) {$\mathcal{A}_{\text{pure}}(\mathcal{M}(D))$};
    \node[draw=gray, below=1.2cm of U_P] (A_P) {\textcolor{gray}{$\mathcal{A}_{\text{pure}}(P)$}};
    \node[draw=gray, below=1.2cm of U_P_prime] (A_P_prime) {\textcolor{gray}{$\mathcal{A}_{\text{pure}}(P')$}};
    \node[draw=black, below=1.2cm of U_MD_prime] (A_MD_prime) {$\mathcal{A}_{\text{pure}}(\mathcal{M}(D'))$};

    \draw[dashed  ] (M_D) -- (P)
    node[midway, above, yshift=-4pt, draw=none] {$d_{\text{TV}} < \frac{\delta}{e^{\varepsilon}+1}$}    node[midway, below, yshift=5pt, draw=none] {(Lemma~\ref{lem:appox_DP_eq_def})};
    \draw[gray ] (P) -- (P_prime)
    node[midway, above, yshift=-4pt, draw=none] {$D_\infty \leq \varepsilon$}node[midway, below, yshift=5pt, draw=none] {(Lemma~\ref{lem:appox_DP_eq_def})};
    \draw[dashed  ] (P_prime) -- (M_D_prime)
    node[midway, above, yshift=-4pt, draw=none] {$d_{\text{TV}} < \frac{\delta}{e^{\varepsilon}+1}$}node[midway, below, yshift=5pt, draw=none] {(Lemma~\ref{lem:appox_DP_eq_def})};

    \draw[dotted, thick, blue] (U_MD) -- (U_P)
    node[midway, above, yshift=-5pt, draw=none] {$\w1 \leq \Delta$} node[midway, below, yshift=5pt, draw=none, black] {(Lemma~\ref{lem:W_infty_TV_q})};
    \draw[gray  ] (U_P) -- (U_P_prime)
    node[midway, above, yshift=-5pt, draw=none] {$D_\infty \leq \varepsilon$};
    \draw[dotted, thick, blue] (U_P_prime) -- (U_MD_prime)
    node[midway, above, yshift=-5pt, draw=none] {$\w1 \leq \Delta$} node[midway, below, yshift=5pt, draw=none, black] {(Lemma~\ref{lem:W_infty_TV_q})};

    \draw[->  , black] (M_D) -- (U_MD) node[midway, right, xshift = -6pt,  draw=none] {$\omega\unif$};
    \draw[->  , gray] (P) -- (U_P) node[midway, right, xshift = -6pt, draw=none] {$\omega\unif$};
    \draw[->  , gray] (P_prime) -- (U_P_prime) node[midway, right, xshift = -6pt, draw=none] {$\omega\unif$};
    \draw[->  , black] (M_D_prime) -- (U_MD_prime)node[midway, right, xshift = -6pt,  draw=none] {$\omega\unif$};;

    \draw[->  , black] (U_MD) -- (A_MD) node[midway, right, xshift = -2pt,  draw=none] {$*\text{Lap}(\frac{2\Delta}{  \varepsilon})$};
    \draw[->  , gray] (U_P) -- (A_P) node[midway, right, xshift = -2pt,  draw=none] {$*\text{Lap}(\frac{2\Delta}{  \varepsilon})$};
    \draw[->  , gray] (U_P_prime) -- (A_P_prime) node[midway, right, xshift = -2pt,  draw=none] {$*\text{Lap}(\frac{2\Delta}{  \varepsilon})$};
    \draw[->  , black] (U_MD_prime) -- (A_MD_prime)node[midway, right, xshift = -2pt,  draw=none] {$*\text{Lap}(\frac{2\Delta}{  \varepsilon})$};;

    \draw[-  , blue] (A_MD) -- (A_P)node[midway, above, yshift=-5pt, draw=none] {$D_\infty \leq \frac{\varepsilon}{2}$} node[midway, below, yshift=5pt, draw=none, black] {(Lemma~\ref{lem:laplace_wasserstein})};
    \draw[-  , gray] (A_P) -- (A_P_prime)node[midway, above, yshift=-5pt, draw=none] {$D_\infty \leq \varepsilon$};
    \draw[-  , blue] (A_P_prime) -- (A_MD_prime)node[midway, above, yshift=-5pt, draw=none] {$D_\infty \leq \frac{\varepsilon}{2}$} node[midway, below, yshift=5pt, draw=none, black] {(Lemma~\ref{lem:laplace_wasserstein})};
    \draw[-  , blue, bend right=30] (A_MD) to[out=-22, in=-158] node[midway, above, yshift=-5pt, draw=none] {$D_\infty \leq   2\varepsilon$} node[midway, below, yshift=5pt, draw=none, black] {Weak triangle inequality (Lemma~\ref{lem:weak_triangle})}(A_MD_prime);

    \end{tikzpicture}

\caption{Flowchart illustrating the proof sketch of Theorem~\ref{thm:pure_main} and the intuition behind Algorithm~\ref{alg:main}. The notation $D_\infty\leq \varepsilon$ is an abbreviation for the pair of inequalities $D_\infty(\mu\|\nu)\leq \varepsilon$ and $D_\infty(\nu\|\mu)\leq \varepsilon$, where $\mu$ and $\nu$ correspond to the two end nodes of the respective edges (e.g., $P$ and $P'$). The symbol $\omega \unif$ represents a mixture with the uniform distribution (Algorithm~\ref{alg:main}, Line~\ref{step:unif_1}), where $\cU(\cdot) = (1-\omega)\cdot + \omega\unif(\Theta)$. The notation $*\text{Lap}$ refers to the convolution with the Laplace distribution, as in Algorithm~\ref{alg:main}, Line~\ref{step:laplace}.}

    \label{fig:proof_sketch}
\end{figure}

\section{Deferred Proofs in Section~\ref{sec:main_results} and Appendix~\ref{appen:notation}}
\label{sec:pf_conversion}

\subsection{Proof of Lemma~\ref{lem:W_infty_TV_q}}
\label{subsec:conversion}
\begin{proof}[Proof of Lemma~\ref{lem:W_infty_TV_q}]
Denote \(\dist(x,y)=\| x-y\|_q\). Denote $\B(c,r)$ the ball with center $c$, and $\ell_q$-radius $r$. Assume \(\B(c,r)\in \Theta\). Without loss of generality, assume \(\Delta<R\).

Fix \(\Delta\) and set \(\xi  = p_{\min} \cdot \vol\lrp{\ball_{\ell_q}^d(1)} \cdot \lrp{\frac{r}{4R}}^d \cdot \Delta^d\), so that \(\TV(\mu,\nu)<\xi\).

To prove the result that \(W_\infty\leq \Delta\), we use the equivalent definition of \(W_{\infty}\) in Lemma~\ref{lem:equiv_W_infty}. By this definition, to prove \(W_{\infty} (\mu, \nu) \leq \Delta\), it suffices to show that 
\[
\mu(A) \leq \nu(A^{\Delta}),  \textrm{ for all open set } A \subseteq \Theta,
\]
where we re-define \(A^{\Delta}: = \{ x \in \R^d \ | \ \mathrm{dist}(x,A) \leq \Delta\}\). Note that by this definition, \(A^{\Delta}\) might extend beyond \(\Theta\). However, we still have \(\nu(A^{\Delta})=\nu(A^{\Delta}\cap \Theta)\), since \(\nu\) is supported on \(\Theta\).

Note that if \(\nu (A^\Delta) = 1\), it is obvious that \(\mu(A)\leq \nu (A^\Delta)\). When \(A\) is an empty set, the proof is trivial. So we only consider nonempty open set \(A\subseteq \Theta\) with \(\nu (A^\Delta) < 1\).

Note that for an arbitrary open set \(A \subseteq \Theta\), we have
\[
\mu(A) \leq \nu(A) + d_{TV}(\mu, \nu)< \nu(A) + \xi.
\]
Thus to prove \(\mu(A) \leq \nu(A^{\Delta})\), it suffices to prove \(\nu(A) + \xi \leq \nu(A^{\Delta})\), i.e., \(\nu(A^\Delta \setminus A) \geq \xi\).

To prove \(\nu(A^\Delta \setminus A) \geq \xi\), we construct a ball \(K \subset \mathbb{R}^d\) of radius \(\frac{r\Delta}{4R}\) satisfying the properties:

\begin{enumerate}[label=Property \arabic*, leftmargin=*]
    \item \(K\) is contained in the set \(A^\Delta \setminus A\), i.e., \(K \subseteq A^\Delta \setminus A\), and \label{pro:contain}
    \item \(K\) is contained in \(\Theta\), i.e., \(K \subseteq \Theta\). This guarantees that \(\nu(K)=\nu(K\cap \Theta)\geq p_{\min} \cdot \vol(K) = \xi\). \label{pro:volumn}
\end{enumerate}

To construct the above ball \(K\), we adopt the following strategy:

\begin{enumerate}[label=Step \arabic*, leftmargin=*]
    \item First, select a point \(y\in \Theta\) such that \(\dist(y, A)=\Delta/2\).
    \label{construct_y}
    \item Then, construct the ball \(K\) with center \(c_1=\omega c+(1-\omega)y\), radius \(\omega r\),  where \(\omega=\frac{\Delta}{4R}\)\footnote{We note that the symbol $\omega$ in Section~\ref{subsec:conversion} is distinct from the $\omega$ used in other parts of the paper. This distinction is an intentional abuse of notation for clarity within specific contexts.}, i.e., \(K=\B(c_1, \omega r)\). We will later show that this ball is contained in the convex hull of \(\B(c,r)\cup \{y\}\). This construction is inspired by \cite{mangoubi2022sampling}.
    \label{construct_K}
\end{enumerate}

We first prove such point \(y\) in \ref{construct_y} exists, that is, the set \(\Theta\cap\{x\in \R^d|\dist(x,A)=\Delta/2\}\) is nonempty. (Note that we only consider nonempty open set \(A\subseteq \Theta\) with \(\nu (A^\Delta) < 1\).)

We prove it by contradiction. If instead \(\Theta\cap\{x\in \R^d|\dist(x,A)=\Delta/2\}= \emptyset\), then for all \(x\in \Theta\), \(\dist(x,A)\neq \Delta/2\). Due to the continuity of \(\dist\) and the convexity of \(\Theta\), we know that only one of these two statements holds: 
\begin{itemize}
    \item \(\dist(x,A) < \Delta/2\), for all \(x \in \Theta\).
    \item \(\dist(x,A) > \Delta/2\), for all \(x \in \Theta\).
\end{itemize}
Since \(\emptyset \neq A\subseteq \Theta\), there exist \(x' \in A \in \Theta\), such that \(\dist(x',A)=0\). Therefore, the first statement holds.
Thus \(\Theta \subseteq A^{\Delta/2} \subseteq A^\Delta\), which contradicts \(\nu (A^\Delta) < 1\).

Therefore, there exist \(y \in \Theta\), such that \(\dist(y,A)=\Delta/2\), making \ref{construct_y} valid.

Next, we show that the \(K\) we construct in \ref{construct_K} satisfies \ref{pro:contain} and \ref{pro:volumn}.

To prove \(K \subseteq A^{\Delta} \setminus A\), let \(x \in K=\B(c_1,\omega r)\). We show that \(x \in A^\Delta\) and \(x \notin A\).
We have
\begin{equation}
    \begin{aligned}
            \dist(x,y)&\leq \dist(x,c_1)+\dist(c_1,y) \\
            &= \|x-c_1\|_q + \|\omega c+(1-\omega)y-y\|_q\\
            &= \|x-c_1\|_q + \omega \| c-y\|_q\\
            & \leq \omega r+\omega R \\
            & = \frac{\Delta}{4R}(r+R)\\
            &\leq \Delta/2
    \end{aligned}
    \label{eq:K_dist_x_y}
\end{equation}

\begin{itemize}
        \item (\(x \notin A\)). 
        If \(x\in A\), since \(A\) is an open set and \(\dist(y,A)=\Delta/2\), we have that \(\dist(x,y)>\Delta/2\), which contradicts to (\ref{eq:K_dist_x_y}). Therefore \(x \notin A\).
        \item (\(x \in A^\Delta\)). 
       We have \(\mathrm{dist}(x,A) \leq \dist(x,y)+\dist(y,A) \overset{(\ref{eq:K_dist_x_y})}{\leq} \Delta\), implying that \(x \in A^{\Delta}\).
\end{itemize}

To prove \(K\subseteq \Theta\), take any \(x\in K=\B(c_1, \omega r)\), we show that \(x\in \Theta\). Write \(x=c_1+\omega r \bv\), where \(\|\bv\|_q\leq 1\). We have
\[
x=\omega c+(1-\omega)y+\omega r \bv=\omega (c+r \bv)+(1-\omega)y.
\]

Since \(c+r \bv \in \B(c,r)\subseteq \Theta\), \(y \in \Theta\), and \(0<\omega=\frac{\Delta}{4R}<1\), by the convexity of \(\Theta\), we have \(x \in \Theta\), which completes the proof.

In particular, if $\Theta$ is an $\ell_q$-ball centered at $c$, say $\Theta = c+\ball_{\ell_q}^d(r)$, then the $\omega$ in Step 2 can be chosen as $\omega=\frac{\Delta}{4r}$, which similarly follows, since $\|c-x\|_p\leq r$ for any $x\in \Theta$.
This improves the conversion by: 
\begin{equation}
    \text{If} \quad \TV(\mu, \nu) < p_{\min} \cdot \vol(\mathbb{B}_{\ell_q}^d(1)) \cdot \left(\frac{1}{4}\right)^d \cdot \Delta^d, \quad \text{then} \quad W_{\infty}(\mu, \nu) \leq \Delta.
\end{equation}
\end{proof}

\subsection{Proof of Lemma~\ref{lem:laplace_wasserstein}}
\begin{proof}
    [Proof of Lemma~\ref{lem:laplace_wasserstein}]
    denote $P=\mu *\lapd\lrp{\frac{\Delta}{\varepsilon}}$, $Q=\nu *\lapd\lrp{\frac{\Delta}{\varepsilon}}$.
    Then, $P$ and $Q$ are absolutely continuous with respect to the Lebesgue measure. This is because for any Lebesgue zero measure set $S$, $P(S)=\int_x \P_{\lap}(S-x)\rd \mu(x)$, where $S-x\dot=\{y|x+y\in S\}$, and $\P_{\lap}$ denotes the probability measure of $\lapd\lrp{\frac{\Delta}{\varepsilon}}$. Since $S-x$ is zero measure for all $x\in\R^d$, and $\P_{\lap}$ is absolutely continuous w.r.t. the Lebesgue measure, we have $\P_{\lap}(S-x)$ for all $x\in\R^d$. Therefore, $P(S)=0$. Thus $P$ is absolutely continuous w.r.t. the Lebesgue measure, and $Q$ similarly follows.
    
    Denote $p$ and $q$ the probability density function of $P$ and $Q$ respectively. 
    Since $\w1(\mu,\nu)\leq \Delta$, by \citet[Proposition~1]{givens1984class}, there exists $\gamma^*\in \Gamma_c(\mu,\nu)$, such that $\P_{(u,v)\sim\gamma^*}[\lrn{u-v}_1>\Delta]=\gamma^*\lrbb{\lrn{u-v}_1>\Delta}=0$.
    By the definition of the max divergence (Theorem~6 of \cite{van2014renyi}), we have
    \begin{align*}
e^{D_\infty\lrp{P\|Q}}&=\esssup_{x\sim P} \frac{p(x)}{q(x)}  \tag{\citet[Theorem~6]{van2014renyi}}\\
&\leq \esssup_{x\sim P} \frac{\int_{\R^d} e^{-\frac{\lrn{x-u}_1}{b}} \rd \mu(u)}{\int_{\R^d} e^{-\frac{\lrn{x-v}_1}{b}} \rd \nu(v)} \tag{Convolution theorem, PDF of Laplace}\\
&\leq \esssup_{x\sim P} \frac{\int_{\R^d\times\R^d} e^{-\frac{\lrn{x-u}_1}{b}} \rd \gamma^*(u,v)}{\int_{\R^d\times\R^d} e^{-\frac{\lrn{x-v}_1}{b}} \rd \gamma^*(u,v)} \tag{$\gamma^*\in\Gamma_c(\mu,\nu)$}
\\
&\leq \esssup_{x\sim P} \frac{\int_{\R^d\times\R^d} e^{-\frac{\lrn{x-v}_1}{b}+\frac{\lrn{u-v}_1}{b}} \rd \gamma^*(u,v)}{\int_{\R^d\times\R^d} e^{-\frac{\lrn{x-v}_1}{b}} \rd \gamma^*(u,v)}\tag{triangle's inequality}\\
&\leq \esssup_{x\sim P} \frac{\int_{\lrn{u-v}_1\leq \Delta} e^{-\frac{\lrn{x-v}_1}{b}+\frac{\lrn{u-v}_1}{b}} \rd \gamma^*(u,v)+\int_{\lrn{u-v}_1> \Delta} e^{-\frac{\lrn{x-v}_1}{b}+\frac{\lrn{u-v}_1}{b}} \rd \gamma^*(u,v)}{\int_{\R^d\times\R^d} e^{-\frac{\lrn{x-v}_1}{b}} \rd \gamma^*(u,v)} \\
&\leq \esssup_{x\sim P} \frac{\int_{\lrn{u-v}_1\leq \Delta} e^{-\frac{\lrn{x-v}_1}{b}+\frac{\lrn{u-v}_1}{b}} \rd \gamma^*(u,v)}{\int_{\R^d\times\R^d} e^{-\frac{\lrn{x-v}_1}{b}} \rd \gamma^*(u,v)} \tag{$\gamma^*\lrbb{\lrn{u-v}_1> \Delta}=0$} \\
&\leq \esssup_{x\sim P} \frac{e^{\frac{\Delta}{b}}\int_{\lrn{u-v}_1\leq \Delta} e^{-\frac{\lrn{x-v}_1}{b}} \rd \gamma^*(u,v)}{\int_{\R^d\times\R^d} e^{-\frac{\lrn{x-v}_1}{b}} \rd \gamma^*(u,v)}\\
&\leq \esssup_{x\sim P} \frac{e^{\frac{\Delta}{b}}\int_{\R^d\times\R^d} e^{-\frac{\lrn{x-v}_1}{b}} \rd \gamma^*(u,v)}{\int_{\R^d\times\R^d} e^{-\frac{\lrn{x-v}_1}{b}} \rd \gamma^*(u,v)}\\
&= e^{\frac{\Delta}{b}}
    \end{align*}

\end{proof}

\subsection{Proof of Lemma~\ref{lem:weak_triangle}}
\begin{proof}[Proof of Lemma~\ref{lem:weak_triangle}]
    Since $D_\infty(\mu\|\pi)<\infty$, for any measurable set $S\in\cF$, such that $\mu(S)>0$, we have $\pi(S)>0$. That is, $\{S\in\cF\mid \ \mu(S)>0\}\subseteq \{S\in\cF\mid \ \pi(S)>0\}$.
    By Definition~\ref{def:max_divergence},
    \begin{align*}
        D_{\infty}(\mu,\nu)&=\ln \sup_{S\in\cF, \ \mu(S)>0} \left[  \frac{\mu(S)}{\nu(S)} \right]\\
        &= \ln \sup_{S\in\cF, \ \mu(S)>0, \ \pi(S)>0} \left[  \frac{\mu(S)\pi(S)}{\pi(S)\nu(S)} \right] \\
        &\leq \ln \sup_{S\in\cF, \ \mu(S)>0} \left[  \frac{\mu(S)}{\pi(S)} \right] + \ln \sup_{S\in\cF, \ \pi(S)>0} \left[  \frac{\pi(S)}{\nu(S)} \right]\\
        &=D_\infty(\mu\|\pi)+D_\infty(\pi\|\nu).
    \end{align*}
\end{proof}

\subsection{Proof of Theorem~\ref{thm:pure_main} and Corollary~\ref{lem: appropriate_delta}}
\label{append:proof_main_thm}
We first provide proof of the privacy guarantee by reorganizing the proof sketch in Section~\ref{sec:pf_tech}, as illustrated in Figure~\ref{fig:proof_sketch}.

Let $D$ and $D'$ be neighboring datasets, and let $\cM$ be an $(\varepsilon,\delta)$-DP mechanism as in Section~\ref{sec:pf_tech}. By the equivalence definition of approximate DP (Lemma~\ref{lem:appox_DP_eq_def}), there exists a hypothetical distribution pair $P,P'$ such that 
\[
\TV(\cM(D),P)\leq\frac{\delta}{e^{\varepsilon}+1}, \TV(\cM(D'),P')\leq\frac{\delta}{e^{\varepsilon}+1}, D_{\infty}(P'\Vert P)\leq \varepsilon, \text{ and } D_{\infty}(P\Vert P')\leq \varepsilon
\]
Note that $P$ and $P'$ can both depend on $D$ and $D'$, i.e., $P = P(D,D')$ and $P' = P'(D,D')$, rather than simply $P = P(D)$ and $P' = P'(D')$. 

Denote $\cU(\cdot) = (1-\omega)\cdot + \omega\unif(\Theta)$. After the uniform mixture step (Line~\ref{step:unif_1}), the distributions $\cU(\cM(D)),\cU(\cM(D'))$, 
$\cU(P),\cU(P')$ all satisfy the assumption in Lemma~\ref{lem:W_infty_TV_q} with 
\[
p_{\min}\geq \frac{\omega}{\vol(\Theta)}\geq\frac{\omega}{(R/2)^d\vol\lrp{\ball_{\ell_q}^d(1)}}.
\]
Therefore, by Lemma~\ref{lem:W_infty_TV_q} (with the case of $\Theta$ is an $\ell_q$ ball) and the fact that $\|\cdot\|_1\leq d^{1-\frac{1}{q}}\|\cdot\|_q$, we have 
$$
\w1(\cU(\cM(D)),\cU(P))\leq\Delta, \text{ and } \w1(\cU(\cM(D')),\cU(P'))\leq\Delta,
$$
where $\Delta$ is defined in Line~\ref{step:Delta} in Algorithm~\ref{alg:main}.
Therefore, by Lemma~\ref{lem:laplace_wasserstein}, we have
    $$D_\infty\lrp{\cU(\cM(D)) *\lapd\lrp{\tfrac{2\Delta}{\varepsilon'}} \| \cU(P) *\lapd\lrp{\tfrac{2\Delta}{\varepsilon'}}}\leq 
    \varepsilon'/2,$$
    $$
   D_\infty\lrp{\cU(P) *\lapd\lrp{\tfrac{2\Delta}{\varepsilon'}} \| \cU(\cM(D)) *\lapd\lrp{\tfrac{2\Delta}{\varepsilon'}}}\leq 
    \varepsilon'/2,$$
    $$D_\infty\lrp{\cU(\cM(D')) *\lapd\lrp{\tfrac{2\Delta}{\varepsilon'}} \| \cU(P') *\lapd\lrp{\tfrac{2\Delta}{\varepsilon'}}}\leq 
    \varepsilon'/2,$$
    $$
   D_\infty\lrp{\cU(P') *\lapd\lrp{\tfrac{2\Delta}{\varepsilon'}} \| \cU(\cM(D')) *\lapd\lrp{\tfrac{2\Delta}{\varepsilon'}}}\leq 
    \varepsilon'/2.$$
Finally, with the weak triangle inequality (Lemma~\ref{lem:weak_triangle}), we conclude that the output of Algorithm~\ref{alg:main} satisfies $(\varepsilon+\varepsilon')$-DP.
    
Next, we provide the utility bound. Let $u\sim\unif(0,1)$ in Line~\ref{step:unif_1} of Algorithm~\ref{alg:main}.
\begin{align}
    \E\lrn{x_{\pure}-x_{\apx}}_q&=\P(u>\omega)\E\lrb{\lrn{x_{\pure}-x_{\apx}}_q \mid u>\omega}+\P(u\leq\omega)\E\lrb{\lrn{x_{\pure}-x_{\apx}}_q \mid u\leq\omega} \label{eq:proof_teck}\\
    &\leq \E\lrb{\lrp{\sum_{i=1}^d |z_i|^q}^{\frac{1}{q}}}+\omega R \tag{$z_i\sim \lap\left(\frac{2\Delta}{\varepsilon'}\right)$}\\
    &\leq \lrp{\E\lrb{\sum_{i=1}^d |z_i|^q}}^{\frac{1}{q}}+\omega R \tag{Jensen's inequality}\\
    &=\lrp{d\Gamma(q+1)}^{\frac{1}{q}}\frac{2\Delta}{\varepsilon'}+\omega R\\
    &\leq \frac{4 d q R}{\varepsilon'} \lrp{\frac{\delta}{2\omega}}^{\frac{1}{d}}+\omega R 
\end{align}

\begin{remark}
    We can remove the dependence on $q$ in the above bound by controlling $\|x_{\mathrm{pure}} - x_{\mathrm{apx}}\|_1$ via concentration inequalities for sub-exponential random vectors. 
\end{remark}

\begin{proof}[Proof of Corollary~\ref{lem: appropriate_delta}]
    Since $\Theta$ is an $\ell_q$ ball, we have $R=2r=C$. The proof follows directly from Theorem~\ref{thm:pure_main} by taking $q=2$.
\end{proof}

\subsection{Proof of Theorem~\ref{thm:index_embed}}
\begin{proof}
    [Proof of Theorem~\ref{thm:index_embed}]
    Notice that $\mathsf{BinMap}$ is a data-independent deterministic function, thus by post-processing, $z_{\textrm{bin}}=\mathsf{BinMap}\lrp{u_{\apx}}$ maintains $(\varepsilon,\delta)$-DP.
    
    We consider the $\ell_1$ norm, i.e., $q=1$. Let $a=(\frac{1}{2},\dots,\frac{1}{2})$. For unit cube $[0,1]^d$, we have $a+\ball_{\ell_q}^d(\frac{1}{2})\subseteq[0,1]^d\subseteq a+\ball_{\ell_q}^d(\frac{d}{2})$. That is, $[0,1]^d$ satisfies Assumption~\ref{assump:domain} with $R=\frac{d}{2}$, $r=\frac{1}{2}$, and $q=1$. Therefore, by Theorem~\ref{thm:pure_main}, $z_\pure$ satisfies $(2\varepsilon,0)$-DP. After the post-processing, the output of Algorithm~\ref{alg:discrete} maintains $(2\varepsilon,0)$-DP.
    
    For the utility, by $\delta<\frac{\varepsilon^d}{ (2d)^{3d}}$ and Line~\ref{step:Delta} of Algorithm~\ref{alg:main}, we have $\Delta\leq\frac{\varepsilon}{4d}$.

Let $y^{\lap}_i \overset{\text{i.i.d.}}{\sim} \lap(2\Delta/\varepsilon')$, be the noise added in Line~\ref{step:laplace} in Algorithm~\ref{alg:main}.  Then for any $i$, 
\[ \P(y^{\lap}_i\geq t)= \frac{1}{2}e^{-\frac{t}{2\Delta/\varepsilon}}\leq \frac{1}{2} e^ {-2dt}.
\]
Thus, 
\begin{equation}
    \P\lrp{y^{\lap}_i>0.5, \forall i=1,\dots,d}=(1- \frac{1}{2} e^ {-d})^d\geq1-\frac{d}{2}e^{-d},
    \label{eq:laplace_tail}
\end{equation}
where the last inequality is by Bernoulli's inequality.

Since the rounding function is defined as \(\mathsf{Round}_{\{0,1\}^d}(\mathbf{x}) = (\mathbf{1}(x_i \geq 0.5))_{i=1}^{d}\), we observe that \(z_{\textrm{bin}} = z_\pure\) if and only if the following conditions hold simultaneously:
\begin{enumerate}
    \item $x\sim \unif(\Theta)$ is sampled in Line~\ref{step:unif_1} of Algorithm~\ref{alg:main}.
    \item  For all \(i \in [d]\), if \(z_{\textrm{bin}}^{(i)} = 0\), then \(y^\Lap_i < 0.5\).
    \item  For all \(i \in [d]\), if \(z_{\textrm{bin}}^{(i)} = 1\), then \(y^\Lap_i > -0.5\).
\end{enumerate}

By the symmetry of Laplace noise, applying Eq.~\eqref{eq:laplace_tail}, and using the union bound, we obtain
\[
\P\lrp{z_{\textrm{bin}}=z_\pure} \geq 1 - \omega - \frac{d}{2}e^{-d} = 1 - 2^{-d} - \frac{d}{2}e^{-d}.
\]
    
\end{proof}

\section{Further Discussion of the Optimality of Our Purification Results}
\label{appen:opt}
In this section, we examine the optimality of the utility guarantees achieved by our purified algorithm, as summarized in Table~\ref{tab:utility_conversion}. We compare our results against known information-theoretic lower bounds and the best existing upper bounds, discussing each setting in turn.

For the \emph{DP-ERM} setting (Rows~1 and 2), our bounds (Theorem~\ref{thm:dp-sgd-pure}) match the lower bounds reported in \citet[Table~1, Rows~1 and~3]{bassily2014private}, up to logarithmic factors.

In the \emph{DP-Frank-Wolfe} setting (Row~3), our guarantee matches with the lower bound established in Lemma~\ref{lem: packing_lb_l1}, again up to logarithmic terms.

For the \emph{PTR-type} setting (Row~4), to the best of our knowledge, no pure-DP mechanism has previously been developed in this regime, and thus no direct baseline is available for comparison.

For the \emph{Mode Release} task (Row~5), our result (Theorem~\ref{thm:mode}) matches the lower bound from \citet[Proposition~1]{chaudhuri2014large}, up to logarithmic factors.

For \emph{Regression} (Row~6), assuming bounded data and parameter domains, our result (Theorem~\ref{thm: util_adassp}) agrees with the lower bounds in \citet[Table~1]{bassily2014private} with respect to \(n, d, \varepsilon,\) and \(\lambda\), up to logarithmic factors and constants depending on the data and parameter radii.

Finally, for \emph{Query Release} (Row~7), our utility guarantee (Theorem~\ref{thm:query_pmw}) matches that of the SmallDB and Private Multiplicative Weights algorithms up to logarithmic factors~\citep{dwork2014algorithmic, blum2013learning}, which represent the current state of the art for pure-DP query release. Whether this rate can be further improved remains an open question~\citep{openquestion}.

\section{Deferred Proofs for DP-SGD}
\label{appen:dp-erm}
The study of DP-ERM is extensive; for other notable contributions, see, e.g., \citep{rubinstein2012learning, fukuchi2017differentially, Iyengar2019, song2021evading, mangold2022differentially, ganesh2022differentially, gopi2022private, redberg2023improving, kornowski2024improved}.

\subsection{Algorithms and Notations}

Let $f(\theta;x)$ represent the individual loss function. $\cL(\theta):=\frac{1}{n}\sum_{i=1}^n f(\theta; x_i)$ and $\cL^*=\min\limits_{\theta \in \cC} \cL(\theta)$. We also denote $F(\theta):=\sum_{i=1}^n f(\theta;x_i)$ and $F^*=\min\limits_{\theta \in \cC} F(\theta)$. $L$-Lipschitz, $\beta$-smooth, or $\lambda$-strongly convex are all w.r.t. individual the loss function $f$. The parameter space $\Theta$ in Algorithm~\ref{alg:pure_dp_sgd} is selected as the $\ell_2$-ball with diameter $C$ that contains $\cC$.

\begin{algorithm}[ht]
\caption{Differentially Private SGD \citep{abadi2016deep}}
\label{alg:dp_sgd}
\setcounter{AlgoLine}{0}
\nl \textbf{Input:} Dataset $D=\{x_1, \ldots, x_n\}$, loss function $f: \cC \times D \to \mathbb{R}$, parameters: learning rate $\eta_t$, noise scale $\sigma$, subsampling rate $\gamma$, Lipschitz constant $L$, parameter space $\cC \in \mathbb{R}^d$ \\
\nl Initialize $\theta_0\in\cC$ randomly \\
\nl \For{$t \in [T]$}{
    \nl Sample a subset $S_t$ by selecting a $\gamma$ fraction of the dataset without replacement \\
    \nl \textbf{Compute gradient:} \For{each $i \in S_t$}{
        \nl $g_t(x_i) \leftarrow \nabla_\theta f(\theta_t; x_i)$
    }
    \nl \textbf{Aggregate and add noise:} $\hat{g}_t \leftarrow \frac{1}{\gamma} \left(\sum_{i \in S_t} g_t(x_i) + \sigma \mathcal{N}(0,  \mathbf{I}_d)\right)$ 
    \\
    \nl \textbf{Descent:} $\theta_{t+1} \leftarrow \mathrm{Proj}_{\cC}\left(\theta_t - \eta_t \tilde{g}_t\right)$
}
\nl \textbf{Output:} $\theta_{\mathrm{out}}=\frac{1}{T}\sum_{t=1}^{T}\theta_t$ if $f$ is convex; $\theta_{\mathrm{out}}= \frac{2}{T(T+1)} \sum_{t=1}^{T} t \theta_t$ if $f$ is strongly convex.
\end{algorithm}

\begin{algorithm}[ht]
\caption{Pure DP SGD}
\setcounter{AlgoLine}{0}
\label{alg:pure_dp_sgd}
\nl \textbf{Input:} Output from DP-SGD Algorithm~\ref{alg:dp_sgd} $\theta_{apx}$, parameter space $\Theta$, privacy parameter $\varepsilon^\prime$ and $\delta$, mixture level $\omega$ \\
\nl $\theta_{\pure} \leftarrow \cA_{\mathrm{pure}}(\theta_{apx}, \Theta, \varepsilon^\prime, \delta, \omega)$\Comment{Algorithm~\ref{alg:main}}\\
\nl \textbf{Output:} $\theta_{\pure}$
\end{algorithm}

\subsection{Noisy Gradient Descent Using Laplace Mechanism}
\begin{lemma}[Laplace Noisy Gradient Descent]\label{lem: laplace_NGD}
 Let the loss function $f:\cX \rightarrow \mathbb{R}^d$ be convex and $L$-Lipschitz with respect to $\|\cdot\|_2$ and $\max_{X \simeq X^\prime}\|\nabla f(X) - \nabla f(X^\prime )\|_1 \leq \Delta_1$. Suppose the parameter space $\cC \subset \mathbb{R}^d$ is convex with an $\ell_2$ diameter of at most $C$. Running \textbf{full-batch} noisy gradient descent with learning rate $\eta=\frac{C}{\sqrt{T(n^2 L^2 + 2d\sigma^2)}}$, number of iterations $T=\frac{\varepsilon n L}{\Delta_1 \sqrt{d}}$, and Laplace noise with parameter $\sigma=\frac{\Delta_1 T}{\varepsilon}$, satisfies $\varepsilon$-DP. Moreover,  
    \begin{equation*}
        \begin{aligned}
            \mathbb{E}_{\cA}\left( \cL(\bar{\theta}) - \cL^* \right) &\leq \cO\left(\frac{C \Delta_1^{1/2} L^{1/2}d^{1/4}}{n^{1/2}\varepsilon^{1/2}}\right).
        \end{aligned}
    \end{equation*}
    and the total number of gradient calculation is $\frac{n^2\varepsilon L \sqrt{d}}{\Delta_1}$. Without further assumptions on $\nabla f$, we have:
    \begin{equation*}
        \begin{aligned}
            \mathbb{E}\left( \cL(\bar{\theta}) - \cL^* \right) &\leq\cO \left( \frac{C Ld^{1/2}}{n^{1/2}\varepsilon^{1/2}}\right)
        \end{aligned}
    \end{equation*}   
\end{lemma}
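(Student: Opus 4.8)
The plan is to prove the privacy claim and the two utility bounds separately, obtaining the second (assumption-free) utility bound as a worst-case specialization of the first. First I would handle \textbf{privacy}. The only data-dependent object released at iteration $t$ is the noisy full-batch gradient $\tilde{g}_t = \nabla F(\theta_t) + w_t$, where $\nabla F(\theta_t)=\sum_{i=1}^n \nabla f(\theta_t;x_i)$ and $w_t$ has i.i.d.\ $\lap(\sigma)$ coordinates; the projection step and the running average are post-processing and do not consume budget. By hypothesis a single-point change perturbs $\nabla F(\theta_t)$ by at most $\Delta_1$ in $\ell_1$ norm, so each iteration is an instance of the Laplace mechanism with $\ell_1$-sensitivity $\Delta_1$ and scale $\sigma=\Delta_1 T/\varepsilon$, hence $(\varepsilon/T)$-DP. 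Since $\theta_{t+1}$ is a deterministic (post-processed) function of $\theta_t$ and $\tilde{g}_t$, the iterates form an adaptive sequence, and basic composition for pure DP gives $T\cdot(\varepsilon/T)=\varepsilon$-DP.

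For the \textbf{general utility bound} I would run the standard regret analysis of projected noisy gradient descent. Using nonexpansiveness of $\proj{\cC}{\cdot}$ and expanding $\lrn{\theta_{t+1}-\theta^*}_2^2$ gives the one-step inequality $\lin{\tilde{g}_t,\theta_t-\theta^*}\le \frac{1}{2\eta}\lrp{\lrn{\theta_t-\theta^*}_2^2-\lrn{\theta_{t+1}-\theta^*}_2^2}+\frac{\eta}{2}\lrn{\tilde{g}_t}_2^2$. The crucial step is taking expectations: since $w_t$ is mean-zero and independent of $\theta_t$ (which is measurable with respect to $w_1,\dots,w_{t-1}$), the cross-term satisfies $\mathbb{E}\lin{\tilde{g}_t,\theta_t-\theta^*}=\mathbb{E}\lin{\nabla F(\theta_t),\theta_t-\theta^*}\ge \mathbb{E}\lrb{F(\theta_t)-F^*}$ by convexity of $F$. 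Telescoping over $t$, dividing by $T$, and applying Jensen's inequality to $\bar\theta$ yields $\mathbb{E}\lrb{F(\bar\theta)}-F^*\le \frac{C^2}{2\eta T}+\frac{\eta}{2}\max_t \mathbb{E}\lrn{\tilde{g}_t}_2^2$. I would then bound $\mathbb{E}\lrn{\tilde{g}_t}_2^2\le \lrn{\nabla F(\theta_t)}_2^2+\mathbb{E}\lrn{w_t}_2^2\le n^2L^2+2d\sigma^2$, using $\lrn{\nabla f(\cdot;x_i)}_2\le L$ (so $\lrn{\nabla F}_2\le nL$) and $\mathrm{Var}(\lap(\sigma))=2\sigma^2$.

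Next I would \textbf{substitute the parameters}. With the stated $\eta=\frac{C}{\sqrt{T(n^2L^2+2d\sigma^2)}}$ the two terms balance exactly and collapse to $\frac{C\sqrt{n^2L^2+2d\sigma^2}}{\sqrt{T}}$; dividing by $n$ converts $F$ into $\cL$. Plugging in $\sigma=\Delta_1 T/\varepsilon$ and $T=\frac{\varepsilon n L}{\Delta_1\sqrt{d}}$ gives the clean identity $\sigma=nL/\sqrt{d}$, hence $2d\sigma^2=2n^2L^2$ and $n^2L^2+2d\sigma^2=3n^2L^2$; the bound then reduces to $\frac{\sqrt{3}\,CL}{\sqrt{T}}=\cO\lrp{\frac{C\Delta_1^{1/2}L^{1/2}d^{1/4}}{n^{1/2}\varepsilon^{1/2}}}$, as claimed. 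The gradient count follows since a full-batch step costs $n$ gradients, so $T$ steps cost $nT=\frac{n^2\varepsilon L}{\Delta_1\sqrt{d}}$ (my computation places $\sqrt{d}$ in the denominator, which I would reconcile with the stated expression). Finally, for the assumption-free bound I would note that $L$-Lipschitzness forces a single-point change to move $\nabla F$ by at most $2L$ in $\ell_2$, so $\Delta_1\le 2L\sqrt{d}$ via $\lrn{\cdot}_1\le\sqrt{d}\,\lrn{\cdot}_2$; substituting this into the general bound yields $\cO\lrp{\frac{CLd^{1/2}}{n^{1/2}\varepsilon^{1/2}}}$.

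The hard part is not conceptually deep: everything rests on the textbook SGD regret bound, and the one genuine subtlety is the martingale/filtration argument that kills the noise cross-term in expectation. Formally one conditions on the $\sigma$-algebra generated by $w_1,\dots,w_{t-1}$, uses $\mathbb{E}[w_t\mid\theta_t]=0$, and only then invokes convexity; the rest is parameter bookkeeping and the arithmetic identity $2d\sigma^2=2n^2L^2$ that makes the final rate simplify.
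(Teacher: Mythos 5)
Your proposal is correct and follows essentially the same route as the paper: per-iteration Laplace mechanism with $\ell_1$-sensitivity $\Delta_1$ plus basic composition for privacy, the standard projected noisy-GD regret bound with the same step-size balancing (the paper simply cites \citet[Theorem~9.6]{garrigos2023handbook} rather than rederiving it as you do), and the conversion $\Delta_1 \le \sqrt{d}\,\Delta_2 = \cO(\sqrt{d}\,L)$ for the assumption-free bound. On the gradient count you flagged: your $nT = \frac{n^2\varepsilon L}{\Delta_1\sqrt{d}}$ counts gradient \emph{evaluations}, whereas the stated $\frac{n^2\varepsilon L\sqrt{d}}{\Delta_1}$ equals $ndT$, i.e., it charges $\cO(d)$ scalar operations per gradient (consistent with the runtime entry for Laplace Noisy GD in Table~\ref{tab: dp_erm}), so the two figures agree up to that convention.
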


\begin{proof}
Suppose we run $T$ iterations and the final privacy budget is $\varepsilon$. Then, the privacy budget per iteration is $\varepsilon_0 = \frac{\varepsilon}{T}$, and the parameter of the additive Laplace noise is $\sigma = \Delta_1 / \varepsilon_0 = \Delta_1 T / \varepsilon$. By \citet[Theorem~9.6]{garrigos2023handbook}, we have
\begin{equation*}
    \begin{aligned}
        \mathbb{E}\left( F\left(\frac{1}{T}\sum_{t=1}^T \theta_t\right) - F^* \right) &\leq \frac{C^2}{T\eta} + \eta(n^2 L^2 + 2d\sigma^2)\\
        &\leq \cO\left(\frac{CnL}{\sqrt{T}} + \frac{C\sigma\sqrt{d}}{\sqrt{T}}\right) \\
        &=\cO\left( \frac{CnL}{\sqrt{T}} + \frac{C\Delta_1 \sqrt{dT}}{\varepsilon}\right) \\
        &\leq \cO\left( \frac{C(nL\Delta_1)^{1/2}d^{1/4}}{\varepsilon^{1/2}}\right)
    \end{aligned}
\end{equation*}
where the second inequality is obtained by choosing learning rate $\eta=\frac{C}{\sqrt{T(n^2 L^2 + 2d\sigma^2)}}$ and the fact $\sqrt{a+b}<\sqrt{a}+\sqrt{b}$ for any positive $a$ and $b$. The last inequality is by setting $T=\frac{\varepsilon n L}{\Delta_1 \sqrt{d}}$. Divide both sides by $n$, we have:
\begin{equation}\label{eqn: laplace_gd_bound}
    \begin{aligned}
        \mathbb{E}\left( \cL(\bar{\theta}) - \cL^* \right) &\leq\cO \left( \frac{C\Delta_1^{1/2} L^{1/2}d^{1/4}}{n^{1/2}\varepsilon^{1/2}}\right)
    \end{aligned}
\end{equation}   
Without additional information, the best upper bound for $\Delta_1$ is $\sqrt{d}\Delta_2=\sqrt{d}L$. Plugging this bound to Eq.~\eqref{eqn: laplace_gd_bound} yields:
    \begin{equation*}
        \begin{aligned}
            \mathbb{E}\left( \cL(\bar{\theta}) - \cL^* \right) &\leq\cO \left( \frac{C Ld^{1/2}}{n^{1/2}\varepsilon^{1/2}}\right)
        \end{aligned}
    \end{equation*}  
    
\end{proof}
    
\subsection{Analysis of DP-SGD}
\subsubsection{Privacy Accounting Results}
Our privacy accounting for DP-SGD is based on R\'enyi differential privacy \citep{mironov2017renyi}. Before stating the privacy accounting result (Corollary~\ref{cor: RDP_DP_SGD}), we define R\'enyi Differential privacy and its variant, zero concentrated Differential Privacy \citep{bun2016concentrated}.
\begin{definition}[R\'enyi differential privacy]
A randomized mechanism $\mathcal{M}$ satisfies \((\alpha, \varepsilon(\alpha))\)-R\'enyi Differential Privacy (RDP) if for all neighboring datasets \(D, D'\) and for all $\alpha \geq 1$,
    \[
        D_{\alpha}(\mathcal{M}(D) \| \mathcal{M}(D')) \leq \varepsilon(\alpha),
    \]
where \(D_{\alpha}(P \| Q)\) denotes the $\alpha$-R\'enyi divergence when $\alpha>1$; Kullback-Leibler divergence when $\alpha=1$; max-divergence when $\alpha=\infty$. We refer the readers to \citet[Definition~3]{mironov2017renyi} for a complete description.
\end{definition}
Zero Concentrated Differential Privacy is a special case of R\'enyi differential privacy when R\'enyi divergence grows linearly with $\alpha$, e.g., Gaussian Mechanism.
\begin{definition}[Zero Concentrated Differential Privacy (zCDP)]
A randomized mechanism $\mathcal{M}$ satisfies \(\rho\)-zCDP if for all neighboring datasets \(D, D'\) and for all \(\alpha > 1\),
\[
D_{\alpha}(\mathcal{M}(D) \| \mathcal{M}(D')) \leq \rho \alpha,
\]
where \(D_{\alpha}(P \| Q)\) is the R\'enyi divergence of order \(\alpha\).
\end{definition}

\begin{lemma}[$\rho$-zCDP to $(\varepsilon,\delta)$-DP]\label{lem: zCDP_to_DP}
    If mechanism $\cM$ satisfies $\rho$-zCDP, then for any $\delta \in (0,1)$, $\cM$ satisfies $(\rho + 2\sqrt{\rho \log(1/\delta)},\delta)$-DP.
\end{lemma}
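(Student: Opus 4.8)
The plan is to route through the \emph{privacy loss random variable} and reduce the zCDP hypothesis to a sub-Gaussian tail bound, then invoke the standard fact that a tail bound on the privacy loss implies approximate DP. Fix neighboring datasets $D \simeq D'$, write $P = \cM(D)$ and $Q = \cM(D')$ for their output distributions with densities $p,q$, and define the privacy loss $Z := \log\frac{p(X)}{q(X)}$ for $X \sim P$. The first identity I would record is that the moment generating function of $Z$ is exactly an exponentiated R\'enyi divergence: for any $\alpha > 1$, writing $t = \alpha-1 > 0$,
\[
\mathbb{E}_{X\sim P}\!\left[e^{tZ}\right] = \mathbb{E}_{X\sim P}\!\left[\left(\tfrac{p(X)}{q(X)}\right)^{\alpha-1}\right] = e^{(\alpha-1)D_\alpha(P\|Q)}.
\]
The zCDP assumption $D_\alpha(P\|Q) \le \rho\alpha$ then bounds this MGF by $e^{(\alpha-1)\rho\alpha} = e^{\rho t^2 + \rho t}$.

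Next I would apply a Chernoff bound and optimize over the R\'enyi order. By Markov's inequality, for any $t > 0$,
\[
\mathbb{P}[Z > \varepsilon] \le e^{-t\varepsilon}\,\mathbb{E}_{X\sim P}[e^{tZ}] \le \exp\!\big(\rho t^2 + (\rho-\varepsilon)t\big).
\]
The exponent is quadratic in $t$ and is minimized at $t^\star = \frac{\varepsilon-\rho}{2\rho}$, which is positive precisely because the target value $\varepsilon = \rho + 2\sqrt{\rho\log(1/\delta)}$ exceeds $\rho$. Substituting $t^\star$ (completing the square) yields the sub-Gaussian tail $\mathbb{P}[Z > \varepsilon] \le \exp\!\big(-\tfrac{(\varepsilon-\rho)^2}{4\rho}\big)$, and the chosen $\varepsilon$ makes the right-hand side exactly $\delta$, since $(\varepsilon-\rho)^2/(4\rho) = \log(1/\delta)$.

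Finally I would convert this tail bound into the DP inequality. For any measurable $S$, splitting on the event $\{Z \le \varepsilon\}$ gives
\[
P(S) = \mathbb{P}_{X\sim P}[X\in S,\, Z\le\varepsilon] + \mathbb{P}_{X\sim P}[X\in S,\, Z>\varepsilon] \le e^\varepsilon Q(S) + \delta,
\]
where the first term uses $p(x) \le e^\varepsilon q(x)$ on $\{Z\le\varepsilon\}$ and the second uses the tail bound. Because neighboring is a symmetric relation, applying the identical argument with $P$ and $Q$ interchanged (invoking $D_\alpha(Q\|P)\le\rho\alpha$) yields the reverse inequality, which together establish $(\rho + 2\sqrt{\rho\log(1/\delta)},\delta)$-DP. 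I do not expect a serious obstacle, as this is essentially the Bun--Steinke conversion; the only points needing care are checking that $t^\star > 0$ so the optimizing R\'enyi order is admissible ($\alpha > 1$)—which is exactly what the form of $\varepsilon$ guarantees—and handling the measure-theoretic technicalities (absolute continuity of $P$ with respect to $Q$, so that $Z$ is well defined) in the continuous-output setting.
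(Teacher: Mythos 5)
Your proof is correct, but it takes a genuinely different route from the paper's. The paper's proof is a two-step reduction: it notes that $\rho$-zCDP means $(\alpha,\rho\alpha)$-RDP for every $\alpha>1$, invokes Mironov's RDP-to-DP conversion (\citealp{mironov2017renyi}, Proposition~3) as a black box to get $(\rho\alpha+\tfrac{\log(1/\delta)}{\alpha-1},\delta)$-DP, and then minimizes over $\alpha$, landing at $\alpha^*=1+\sqrt{\log(1/\delta)/\rho}$. You instead reprove the conversion from first principles via the privacy loss random variable: the MGF identity $\E_{X\sim P}[e^{(\alpha-1)Z}]=e^{(\alpha-1)D_\alpha(P\|Q)}$, a Chernoff bound, and the standard tail-bound-to-DP split on $\{Z\le\varepsilon\}$ --- this is essentially the Bun--Steinke argument for their Proposition~1.3. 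The two optimizations coincide exactly (your $t^\star=\sqrt{\log(1/\delta)/\rho}$ is the paper's $\alpha^*-1$), so the quantitative content is identical, but the underlying mechanisms differ: Mironov's cited conversion is proved by H\"older's inequality plus a case analysis on $Q(S)$ and never constructs a tail bound, whereas your route passes through the strictly stronger intermediate property of a pointwise tail bound on the privacy loss (``probabilistic DP''), which then implies $(\varepsilon,\delta)$-DP. What each buys: the paper's version is two lines and appropriate for an appendix lemma that leans on standard citations; yours is self-contained, makes transparent where the formula $\rho+2\sqrt{\rho\log(1/\delta)}$ comes from (completing the square in the Chernoff exponent), and your flagged technicalities are genuinely the right ones --- admissibility $t^\star>0$ follows from $\delta<1$ (and $\rho>0$; the case $\rho=0$ is trivial), and absolute continuity of $P$ with respect to $Q$ is automatic since otherwise $D_\alpha(P\|Q)=\infty$ would contradict the zCDP hypothesis.
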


\begin{proof}
    Since $\cM$ satisfies $\rho$-zCDP, $\cM$ also satisfies $(\alpha, \rho \alpha)$-RDP for any $\alpha>1$. By \citet[Proposition~3]{mironov2017renyi}, for any $\delta \in (0,1)$, $\cM$ satisfies $(\rho\alpha + \frac{\log(1/\delta)}{\alpha-1}, \delta)$-DP. The remaining proof is by minimizing $f(\alpha):=\rho\alpha + \frac{\log(1/\delta)}{\alpha-1}$ for $\alpha>1$. The minimum is $\rho + 2\sqrt{\rho\log(1/\delta)}$, by choosing minimizer $\alpha^*=1+ \sqrt{\frac{\log(1/\delta)}{\rho}}$.
\end{proof}

We now introduce RDP accounting results for DP-SGD. We first demonstrate RDP guarantee for one-step DP-SGD (i.e. sub-sampled Gaussian mechanism, Lemma~\ref{lem: RDP_sampled_gau}) then show the privacy guarantee for multi-step DP-SGD through RDP composition (Corollary~\ref{cor: RDP_DP_SGD}).
\begin{lemma}[RDP guarantee for subsampled Gaussian Mechanism, Theorem 11 in \citet{bun2018composable}]\label{lem: RDP_sampled_gau}
    Let $\cM$ be a $\rho$-zCDP Gaussian mechanism, and $\cS_{\gamma}$ be a subsampling procedure on the dataset with subsampling rate $\gamma$, then the subsampled Gaussian mechanism $\cM \circ \cS_{\gamma}$ satisfies $(\alpha, \varepsilon(\alpha))$-RDP with:
    \begin{equation}
    \varepsilon(\alpha) \geq 13\gamma^2\rho \alpha\text{,}\quad\text{for any } \alpha\leq \frac{\log(1/\gamma)}{4\rho}
  \end{equation}
\end{lemma}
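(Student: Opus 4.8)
The plan is to establish the Rényi amplification bound in the form $D_\alpha\big(\cM\circ\cS_\gamma(D)\,\|\,\cM\circ\cS_\gamma(D')\big)\le 13\gamma^2\rho\alpha$ by reducing the general $d$-dimensional mechanism to a scalar two-component Gaussian mixture and then controlling a binomial-type moment expansion. I would first exploit the mixture structure induced by subsampling: for neighboring datasets $D\simeq D'$, the subsample $\cS_\gamma$ includes the differing record with probability $\gamma$ and omits it with probability $1-\gamma$. Conditioning on this event expresses the outputs as $P=(1-\gamma)\mu_0+\gamma\mu_1$ and $Q=(1-\gamma)\mu_0+\gamma\mu_1'$, where $\mu_0$ is the common component (subsamples avoiding the differing point) and $\mu_1,\mu_1'$ are the shifted Gaussian responses.

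Next I would reduce to one dimension. Using a quasi-convexity property of the Rényi divergence under such shared-component mixtures, together with the translation-invariance and worst-case (maximal-shift) structure of the Gaussian mechanism, the $d$-dimensional divergence is dominated by the scalar comparison of $(1-\gamma)\cN(0,\sigma^2)+\gamma\cN(\Delta,\sigma^2)$ against $\cN(0,\sigma^2)$, where $\rho=\Delta^2/(2\sigma^2)$. This isolates the essential difficulty in a tractable Gaussian model where all likelihood-ratio moments are available in closed form.

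The technical core is then the moment expansion. For integer $\alpha$ I would write $e^{(\alpha-1)D_\alpha}=\E_{p_0}\big[(1+\gamma(p_1/p_0-1))^\alpha\big]$ and expand by the binomial theorem into $\sum_{k=0}^\alpha \binom{\alpha}{k}\gamma^k\,\E_{p_0}[(p_1/p_0-1)^k]$. The $k=0$ term equals $1$, and the $k=1$ term vanishes because $\E_{p_0}[p_1/p_0]=1$, so the leading contribution is the $k=2$ term, which for Gaussians equals $\binom{\alpha}{2}\gamma^2(e^{2\rho}-1)\approx \alpha^2\gamma^2\rho$. Using the closed form $\E_{p_0}[(p_1/p_0)^k]=e^{k(k-1)\rho}$ I would bound the higher binomial moments $\E_{p_0}[(p_1/p_0-1)^k]$, show the terms for $k\ge 3$ are dominated by the $k=2$ term, sum the resulting geometric tail, and take $\frac{1}{\alpha-1}\log(\cdot)$ to obtain $13\gamma^2\rho\alpha$; the constant $13$ absorbs the tail. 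Extension from integer to real $\alpha$ follows from the standard monotonicity and interpolation of Rényi divergences in the order.

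The main obstacle I anticipate is precisely this tail control. Since the Gaussian moments grow like $e^{\Theta(k^2\rho)}$, the expansion could diverge without a constraint on $\alpha$; the delicate step is to verify that under the hypothesis $\alpha\le \log(1/\gamma)/(4\rho)$ each successive term is at most a fixed fraction of its predecessor, so that the contribution beyond $k=2$ is only a bounded multiplicative factor. Pinning down this domination is exactly what forces the explicit range of $\alpha$ and fixes the numerical constant, and it is where the bookkeeping must be done most carefully.
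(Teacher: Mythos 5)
You should note at the outset that the paper never proves this lemma: it is imported wholesale as Theorem~11 of \citet{bun2018composable} and used as a black box in Corollary~\ref{cor: RDP_DP_SGD}, so there is no internal proof to compare against and your sketch must stand on its own. Its skeleton does follow the route taken in the amplification literature: the subsampling-induced mixture decomposition, a reduction to the scalar problem of comparing $(1-\gamma)\cN(0,\sigma^2)+\gamma\cN(\Delta,\sigma^2)$ with $\cN(0,\sigma^2)$ where $\rho=\Delta^2/(2\sigma^2)$, and a likelihood-ratio moment expansion; the closed forms you invoke (the $k=1$ term vanishing, $\E_{p_0}[(p_1/p_0)^k]=e^{k(k-1)\rho}$, the $k=2$ term $\binom{\alpha}{2}\gamma^2(e^{2\rho}-1)$) are all correct.

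There are, however, two genuine gaps. First, the reduction step: applying quasi-convexity of $e^{(\alpha-1)D_\alpha}$ to your shared-component decomposition $P=(1-\gamma)\mu_0+\gamma\mu_1$, $Q=(1-\gamma)\mu_0+\gamma\mu_1'$ bounds $D_\alpha(P\|Q)$ by $\max\{D_\alpha(\mu_0\|\mu_0),D_\alpha(\mu_1\|\mu_1')\}=D_\alpha(\mu_1\|\mu_1')$, which is the \emph{unamplified} Gaussian divergence with no factor of $\gamma$; conditioning on the inclusion event in both arguments destroys amplification. To reach the mixture-versus-pure scalar form you must condition only on randomness common to both datasets (natural under add/remove adjacency, where the $(1-\gamma)/\gamma$ mixture stays inside each component) or invoke an advanced-joint-convexity argument; under the paper's without-replacement sampling with replace-one adjacency this is exactly where the real work lies. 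Second, and fatally for the argument as written, the tail control: the hypothesis $\alpha\le\log(1/\gamma)/(4\rho)$ does \emph{not} make the binomial terms decay geometrically. Writing $T_k=\binom{\alpha}{k}\gamma^kM_k$ with $M_k=\E_{p_0}[(p_1/p_0-1)^k]$, Jensen gives $M_4\ge M_2^2$, hence $T_4/T_2\ge\frac{(\alpha-2)(\alpha-3)}{12}\gamma^2M_2\approx\alpha^2\gamma^2\rho/6$, and the hypothesis permits $\alpha^2\gamma^2\rho\gg1$ (e.g.\ $\gamma=10^{-2}$, $\rho=10^{-8}$, $\alpha=10^{7}$ is allowed). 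In that regime the $k=2$ term does not dominate: the series has size $\exp\bigl(\Theta(\alpha^2\gamma^2\rho)\bigr)$ rather than $1+O(\alpha^2\gamma^2\rho)$. The lemma is still true there --- after taking $\frac{1}{\alpha-1}\log(\cdot)$ one again gets $O(\alpha\gamma^2\rho)$ --- but what must be proved is the multiplicative envelope $\sum_kT_k\le\exp\bigl(13\gamma^2\rho\,\alpha(\alpha-1)\bigr)$, i.e.\ that the whole series exponentiates (say by comparing it term-by-term with the expansion of $\exp(c\,\alpha^2\gamma^2\rho)$, or by the event-splitting moment-generating-function argument of Bun et al.), not that the $k\ge3$ terms are a bounded multiple of the $k=2$ term. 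The constraint on $\alpha$ earns its keep elsewhere: it caps the genuinely high-order terms where $M_k\approx e^{k(k-1)\rho}$ takes over (for instance the last term is of order $\gamma^\alpha e^{\alpha(\alpha-1)\rho}\le e^{-\frac{3}{4}\alpha\log(1/\gamma)}\le1$ under the constraint). A smaller further issue: passing from integer to real $\alpha$ by rounding up both degrades the constant and can breach the boundary condition $\lceil\alpha\rceil\le\log(1/\gamma)/(4\rho)$, so that step also needs care if the constant $13$ and the stated range are to survive.
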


\begin{corollary}[RDP guarantee for DP-SGD]\label{cor: RDP_DP_SGD}
    Let $\cM$ be a Gaussian mechanism satisfying $\rho_0$-zCDP and $\cS_{\gamma}$ be a subsampling procedure on the dataset with subsampling rate $\gamma$, then $T$-fold (adaptive) composition of subsampled Gaussian mechanism, $\cM_T:=\underbrace{(\cM \circ \cS_\gamma)\circ \ldots \circ(\cM \circ \cS_\gamma)}_{\text{T times}}$, satisfies $(\alpha, \varepsilon(\alpha))$-RDP with
    \begin{equation}
    \varepsilon (\alpha) \geq 13\gamma^2\rho_0 T \alpha \text{,} \quad \text{for any } \alpha\leq \frac{\log(1/\gamma)}{4\rho_0}.
  \end{equation}
  Denote $\rho:=13\gamma^2\rho_0 T$ for a shorthand, if further $\rho_0 \leq \frac{\log(1/\gamma)}{4(1+\sqrt{\frac{\log(1/\delta)}{\rho}})}$, the composed mechanism $\cM_T$ satisfies $(\rho + 2\sqrt{\rho\log(1/\delta)},\delta)$-DP for any $\delta \in (0,1)$.
\end{corollary}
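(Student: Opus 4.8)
The plan is to build the RDP guarantee for the $T$-fold composition $\cM_T$ from the single-step subsampled-Gaussian bound, and then convert to $(\varepsilon,\delta)$-DP by optimizing over the R\'enyi order, taking care that the optimizing order lies in the regime where the single-step RDP bound is actually valid.

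First I would invoke Lemma~\ref{lem: RDP_sampled_gau}: each step $\cM \circ \cS_\gamma$ satisfies $(\alpha, 13\gamma^2\rho_0\alpha)$-RDP for every order $\alpha \leq \frac{\log(1/\gamma)}{4\rho_0}$. Next, since RDP composes additively in the privacy parameter at a fixed order (by \citet[Proposition~1]{mironov2017renyi}), the $T$-fold adaptive composition satisfies $(\alpha, 13\gamma^2\rho_0 T\alpha)$-RDP on the \emph{same} range of $\alpha$ (the admissible range does not shrink with $T$, since $\rho_0$ is the per-step parameter). Writing $\rho := 13\gamma^2\rho_0 T$, this is exactly the claimed $(\alpha, \rho\alpha)$-RDP bound, with the important caveat that---unlike genuine $\rho$-zCDP---it holds only for $\alpha \leq \frac{\log(1/\gamma)}{4\rho_0}$ rather than for all $\alpha > 1$.

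For the second part I would mimic the proof of Lemma~\ref{lem: zCDP_to_DP} while tracking the admissible range. By the RDP-to-DP conversion (\citet[Proposition~3]{mironov2017renyi}), for any admissible $\alpha$ the mechanism is $\big(\rho\alpha + \tfrac{\log(1/\delta)}{\alpha-1},\, \delta\big)$-DP. The map $\alpha \mapsto \rho\alpha + \tfrac{\log(1/\delta)}{\alpha-1}$ is minimized at $\alpha^* = 1 + \sqrt{\log(1/\delta)/\rho}$, where its value equals $\rho + 2\sqrt{\rho\log(1/\delta)}$. The crux is to check that $\alpha^*$ is admissible, i.e. $\alpha^* \leq \frac{\log(1/\gamma)}{4\rho_0}$. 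This is precisely where the extra hypothesis enters: rearranging $\rho_0 \leq \frac{\log(1/\gamma)}{4(1+\sqrt{\log(1/\delta)/\rho})}$ gives $1 + \sqrt{\log(1/\delta)/\rho} \leq \frac{\log(1/\gamma)}{4\rho_0}$, which is exactly $\alpha^* \leq \frac{\log(1/\gamma)}{4\rho_0}$. Hence $\alpha^*$ lies in the valid range, and evaluating the conversion at $\alpha^*$ delivers the stated $(\rho + 2\sqrt{\rho\log(1/\delta)},\, \delta)$-DP guarantee.

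The main obstacle I anticipate is exactly this admissibility check. Because the subsampled-Gaussian RDP bound holds only on a truncated range of orders, one cannot simply quote Lemma~\ref{lem: zCDP_to_DP} as a black box---that lemma assumes the linear RDP bound $\rho\alpha$ for \emph{all} $\alpha > 1$, whereas here it is available only up to $\frac{\log(1/\gamma)}{4\rho_0}$. The substance of the argument is recognizing that the unconstrained optimizer $\alpha^*$ still satisfies the range constraint under the stated condition on $\rho_0$, so the constrained and unconstrained minima coincide and the same closed-form $(\varepsilon,\delta)$ is recovered. Everything else---additive RDP composition and the one-variable minimization of the R\'enyi order---is routine.
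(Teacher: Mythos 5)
Your proposal is correct and follows essentially the same route as the paper's proof: single-step RDP from Lemma~\ref{lem: RDP_sampled_gau}, additive composition via \citet[Proposition~1]{mironov2017renyi}, then the RDP-to-DP conversion evaluated at $\alpha^* = 1+\sqrt{\log(1/\delta)/\rho}$, with the hypothesis on $\rho_0$ serving exactly to ensure $\alpha^*$ lies in the admissible range of orders. The only difference is presentational: the paper cites Lemma~\ref{lem: zCDP_to_DP} conditionally on the same admissibility check that you carry out explicitly by unpacking \citet[Proposition~3]{mironov2017renyi}, and your version is arguably the more careful phrasing of the identical argument.
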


\begin{proof}
    By RDP accounting of subsampled gaussian mechanisms Lemma~\ref{lem: RDP_sampled_gau} and the composition theorem for RDP (\citet[Proposition~1]{mironov2017renyi}), we have $\cM_T$ satisfies $(\alpha, 13\gamma^2\rho_0 T \alpha)$-RDP for any $\alpha \in (1, \frac{\log(1/\gamma)}{4\rho_0})$. Denote $\rho:=13\gamma^2\rho_0 T$. By Lemma~\ref{lem: zCDP_to_DP}, if $1 + \sqrt{\frac{\log(1/\delta)}{\rho}} \leq \frac{\log(1/\gamma)}{4\rho_0}$ (i.e., the optimal $\alpha^* \leq \frac{\log(1/\gamma)}{4\rho_0}$), we have $\cM_T$ satisfies $(\rho + 2\sqrt{\rho\log(1/\delta)},\delta)$-DP.
\end{proof}

\subsubsection{Convex and Lipschitz case}
In this section, we analyze the convergence of DP-SGD in convex and Lipschitz settings.
\begin{lemma}[Convergence of DP-SGD in convex and $L$-Lipschitz setting]\label{lem: dp-sgd-cvx-lip}
Assume that the individual loss function $f$ is convex and $L$-Lipschitz. Running DP-SGD with parameters 
$\gamma=\frac{2\sqrt{d\log(1/\delta)}}{n\sqrt{\varepsilon}}$, $\sigma^2 = \frac{416L^2\log(1/\delta)}{\varepsilon}$, $T=\frac{n^2\varepsilon^2}{d\log(1/\delta)}$, $\eta= \sqrt{\frac{C^2}{T \left( n^2 L^2 + \nicefrac{d\sigma^2}{\gamma^2} + \nicefrac{nL^2}{\gamma}\right)}}$ satisfies $(\varepsilon,\delta)$-DP for any $\varepsilon \leq (d \wedge 8)\log(1/\delta)$. Moreover,
        \begin{equation*}
           \mathbb{E} \left[ F\left(\bar{\theta}_T\right) \right] - F^* \leq \cO \left( \frac{CLd^{1/2}\log^{1/2}(1/\delta)}{\varepsilon}\right),
        \end{equation*}
with $\bar{\theta}_T$ being the averaged estimator. In addition, the number of incremental gradient calls is 
\begin{equation*}
        \cG = \frac{2n^2\varepsilon^{3/2}}{\sqrt{d\log(1/\delta)}}.
\end{equation*}

\end{lemma}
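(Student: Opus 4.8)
The plan is to establish the three assertions—privacy, utility, and the gradient-call count—separately, since each relies on different machinery, and the stated parameters are calibrated precisely so that the bounds close.

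For privacy, I would invoke the subsampled-Gaussian RDP accounting of Corollary~\ref{cor: RDP_DP_SGD}. The per-step Gaussian mechanism acts on the gradient sum $\sum_{i\in S_t}\nabla f(\theta_t;x_i)$, whose $\ell_2$-sensitivity under the add/remove relation is at most $L$ by Lipschitzness, so it is $\rho_0$-zCDP with $\rho_0=\frac{L^2}{2\sigma^2}$. Substituting $\sigma^2=\frac{416L^2\log(1/\delta)}{\varepsilon}$ gives $\rho_0=\frac{\varepsilon}{832\log(1/\delta)}$. The key algebraic simplification is that $\gamma^2 T=4\varepsilon$ for the stated $\gamma$ and $T$, so the composed parameter is $\rho=13\gamma^2\rho_0 T=\frac{\varepsilon^2}{16\log(1/\delta)}$, whence $2\sqrt{\rho\log(1/\delta)}=\varepsilon/2$. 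The corollary then yields $(\rho+2\sqrt{\rho\log(1/\delta)},\delta)$-DP $=(\frac{\varepsilon^2}{16\log(1/\delta)}+\frac{\varepsilon}{2},\delta)$-DP, and the constraint $\varepsilon\leq 8\log(1/\delta)$ forces the first term below $\varepsilon/2$, giving $(\varepsilon,\delta)$-DP. I would separately verify the admissibility conditions of the corollary—namely $\gamma\leq 1$ and $\alpha^{\star}=1+\sqrt{\log(1/\delta)/\rho}\leq\frac{\log(1/\gamma)}{4\rho_0}$—which is where the remaining constraint $\varepsilon\leq d\log(1/\delta)$ enters, ensuring $\gamma\leq 1$ (equivalently $n^2\varepsilon\geq 4d\log(1/\delta)$) and $T\geq 1$.

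For utility, I would treat the iteration as projected SGD on $F$ with stochastic gradient $\hat g_t$. First I would check that $\hat g_t$ is unbiased for $\nabla F(\theta_t)$ (subsampling without replacement rescaled by $1/\gamma$ is unbiased for the sum, and the Gaussian term is mean-zero). Then a bias-plus-variance decomposition bounds its second moment by three pieces: $\|\nabla F(\theta_t)\|^2\leq n^2L^2$ from Lipschitzness; the subsampling variance, controlled by the with-replacement surrogate $\frac{n^2}{\gamma n}L^2=\frac{nL^2}{\gamma}$; and the injected-noise variance $\frac{d\sigma^2}{\gamma^2}$. This gives $M^2:=\mathbb{E}\|\hat g_t\|^2\leq n^2L^2+\frac{nL^2}{\gamma}+\frac{d\sigma^2}{\gamma^2}$, exactly the quantity under the square root in the stated $\eta$. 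Applying the standard convex SGD guarantee as in \citet[Theorem~9.6]{garrigos2023handbook}, $\mathbb{E}[F(\bar\theta_T)]-F^*\leq \frac{C^2}{\eta T}+\eta M^2$, and the stated $\eta=C/(M\sqrt{T})$ balances the two terms to $\cO(CM/\sqrt{T})$.

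It then remains to simplify and count gradients. Plugging in the parameters, the noise term dominates, $\frac{d\sigma^2}{\gamma^2}=104\,n^2L^2$, while $\frac{nL^2}{\gamma}=\frac{n^2L^2\sqrt{\varepsilon}}{2\sqrt{d\log(1/\delta)}}\leq\frac{n^2L^2}{2}$ using $\varepsilon\leq d\log(1/\delta)$; hence $M^2=\cO(n^2L^2)$ and $M=\cO(nL)$. Since $\sqrt{T}=n\varepsilon/\sqrt{d\log(1/\delta)}$, this gives $\mathbb{E}[F(\bar\theta_T)]-F^*\leq\cO(CnL/\sqrt{T})=\cO(CL\sqrt{d\log(1/\delta)}/\varepsilon)$, the claimed rate. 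Finally, the incremental gradient count is $\cG=T\cdot(\gamma n)=\frac{n^2\varepsilon^2}{d\log(1/\delta)}\cdot\frac{2\sqrt{d\log(1/\delta)}}{\sqrt{\varepsilon}}=\frac{2n^2\varepsilon^{3/2}}{\sqrt{d\log(1/\delta)}}$. The main obstacle I anticipate is the privacy bookkeeping rather than the utility: one must track constants precisely through the subsampled-Gaussian RDP bound and confirm that the moment-order condition $\alpha^{\star}\leq\log(1/\gamma)/(4\rho_0)$ of Corollary~\ref{cor: RDP_DP_SGD} genuinely holds in the stated regime, since that RDP bound is valid only for moderate Rényi orders.
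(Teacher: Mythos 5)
Your proposal is correct and follows essentially the same route as the paper's proof: privacy via the subsampled-Gaussian zCDP/RDP accounting of Corollary~\ref{cor: RDP_DP_SGD} (with the same computation $\gamma^2 T = 4\varepsilon$, $\rho = \varepsilon^2/(16\log(1/\delta))$, and $\varepsilon \le 8\log(1/\delta)$ closing the conversion), utility via the standard convex SGD bound of \citet[Theorem~9.6]{garrigos2023handbook} with the identical second-moment decomposition $n^2L^2 + nL^2/\gamma + d\sigma^2/\gamma^2$ and the same step-size balancing, and the same gradient-count arithmetic. If anything, you are slightly more careful than the paper in flagging that the Rényi-order admissibility condition $\alpha^\star \le \log(1/\gamma)/(4\rho_0)$ of the corollary must be checked, a point the paper's proof passes over silently.
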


\begin{proof}
    We first state the privacy guarantee. Since each gaussian mechanism satisfies $\nicefrac{L^2}{2\sigma^2}$-zCDP, by Corollary~\ref{cor: RDP_DP_SGD}, the composed mechanism satisfies $\rho$-zCDP with $\rho=\frac{13L^2\gamma^2T}{2\sigma^2}=\frac{\varepsilon^2}{16\log(1/\delta)}$, and thus $(\varepsilon, \delta)$-approximate DP.

    Let $\hat{g}_t$ be the output from noisy gradient oracle with variance $\sigma^2$ and subsampling rate $\gamma$ (line 7 of Algorithm~\ref{alg:dp_sgd}). The variance of the gradient estimator can be upper bounded by:
    $$\mathbb{E}[\| \hat{g}_t - \mathbb{E}(\hat{g}_t) \|_2^2]\leq \frac{d\sigma^2}{\gamma^2} + \frac{nL^2}{\gamma}.$$
    By \citet[Theorem~9.6]{garrigos2023handbook}, we have:
   \begin{equation*}
    \begin{aligned}
    \mathbb{E} \left[ \frac{1}{T} \sum_{t} \left( F(\theta_t) - F^* \right) \right] 
    &\leq \frac{C^2}{T \eta} + \eta \mathbb{E}\left[ \frac{1}{T}\sum_{t=1}^T\|\nabla F(\theta_t)\|_2^2  \right] + \eta \left( \frac{d\sigma^2}{\gamma^2} + \frac{nL^2}{\gamma}\right) \\
    &\leq \frac{C^2}{T \eta} + \eta \left( n^2 L^2 + \frac{d\sigma^2}{\gamma^2} + \frac{nL^2}{\gamma}\right) \\
    &\leq \sqrt{\frac{C^2}{T}\left( n^2 L^2 + \frac{d\sigma^2}{\gamma^2} + \frac{nL^2}{\gamma} \right)} \\
    &= \cO \left(\sqrt{C^2 L^2 \left( \frac{n^2}{T} + \frac{d }{\rho} + \frac{n}{T\gamma} \right)} \right),\\
   \end{aligned}    
   \end{equation*}
where the third inequality is by choosing $\eta = \sqrt{\frac{C^2}{T \left( n^2 L^2 + \frac{d\sigma^2}{\gamma^2} + \frac{nL^2}{\gamma}\right)}}$, and the forth line is by $\rho=\nicefrac{13T\gamma^2L^2}{2\sigma^2}$. By the choice of $T$ and $\gamma$, we have $\frac{d}{\rho} \geq \max\{\frac{n^2}{T}, \frac{n}{T\gamma}\}$. This implies: 
 \begin{equation*}
     \begin{aligned}
        \mathbb{E} \left[ \frac{1}{T} \sum_{t} \left( F(\theta_t) - F^* \right) \right] \leq \cO \left(\sqrt{C^2 L^2 \left( \frac{n^2}{T} + \frac{d }{\rho} + \frac{n}{T\gamma} \right)} \right) \leq \cO \left( \frac{CLd^{1/2}}{\rho^{1/2}}\right). \\
     \end{aligned}
 \end{equation*}
Since the target approximate DP privacy budget $\varepsilon=4\sqrt{\rho\log(1/\delta)}$, we have $\sqrt{\rho}=\frac{\varepsilon}{4\sqrt{\log(1/\delta)}}$. Plugging this into the bound above, we have:
 \begin{equation*}
     \begin{aligned}
             \mathbb{E} \left[ F(\bar{\theta}_T) \right] - F^*\leq \mathbb{E} \left[ \frac{1}{T} \sum_{t} \left( F(\theta_t) - F^* \right) \right] 
             &\leq \cO \left( \frac{CLd^{1/2}\log^{1/2}(1/\delta)}{\varepsilon}\right). \\
     \end{aligned}
 \end{equation*}

For the number of incremental gradient calls (denoted as $\cG$), we have
\begin{equation*}
    \begin{aligned}
        \cG = n\gamma T &= \frac{2n^2\varepsilon^{3/2}}{\sqrt{d\log(1/\delta)}}.
    \end{aligned}
\end{equation*}

\end{proof}

\subsubsection{Strongly Convex and Lipschitz case}
\begin{lemma}[Convergence of DP-SGD in strongly convex and $L$-Lipschitz setting]\label{lem: dp-sgd-scvx-lip-sm}
Assume individual loss function $f$ is $\lambda$-strongly convex and $L$-Lipschitz.  Running DP-SGD with parameters 
$\gamma=\frac{2\sqrt{d\log(1/\delta)}}{n\sqrt{\varepsilon}}$, $\sigma^2 = \frac{416L^2\log(1/\delta)}{\varepsilon}$, $T=\frac{n^2\varepsilon^2}{d\log(1/\delta)}$, $\eta_t = \frac{2}{n\lambda(t+1)}$ satisfies $(\varepsilon,\delta)$-DP for any $\varepsilon \leq (d \wedge 8)\log(1/\delta)$. Moreover,
    \begin{equation*}
        \mathbb{E} \left[ F\left(\frac{2}{T(T+1)} \sum_{t=1}^{T} t \theta_t \right) \right] - F^* \leq 
         \cO \left( \frac{dL^2\log(1/\delta)}{n\lambda \varepsilon^2}\right), \\
    \end{equation*}
\end{lemma}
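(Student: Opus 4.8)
The plan is to follow the blueprint of the convex case (Lemma~\ref{lem: dp-sgd-cvx-lip}) almost verbatim, reusing the privacy accounting and the gradient-noise control unchanged, and substituting only the convex SGD convergence ingredient with its strongly convex counterpart. Since the parameter choices $\gamma,\sigma^2,T$ are identical to those in Lemma~\ref{lem: dp-sgd-cvx-lip}, the privacy argument transfers directly: each Gaussian step is $\tfrac{L^2}{2\sigma^2}$-zCDP, so by Corollary~\ref{cor: RDP_DP_SGD} the $T$-fold subsampled composition is $\rho$-zCDP with $\rho=\tfrac{13L^2\gamma^2T}{2\sigma^2}=\tfrac{\varepsilon^2}{16\log(1/\delta)}$, and Lemma~\ref{lem: zCDP_to_DP} converts this to $(\varepsilon,\delta)$-DP. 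The admissibility condition $\varepsilon\le (d\wedge 8)\log(1/\delta)$ is precisely what keeps the optimal RDP order $\alpha^*=1+\sqrt{\log(1/\delta)/\rho}$ inside the range $\alpha\le \log(1/\gamma)/(4\rho_0)$ required by Lemma~\ref{lem: RDP_sampled_gau}.

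For the utility part, I would first record that the noisy subsampled gradient $\hat g_t$ in Line~7 of Algorithm~\ref{alg:dp_sgd} is unbiased for $\nabla F(\theta_t)$ and has second moment controlled by the same variance estimate used in Lemma~\ref{lem: dp-sgd-cvx-lip}, namely $\mathbb{E}\|\hat g_t\|_2^2\le \|\nabla F(\theta_t)\|_2^2+\tfrac{d\sigma^2}{\gamma^2}+\tfrac{nL^2}{\gamma}\le n^2L^2+\tfrac{d\sigma^2}{\gamma^2}+\tfrac{nL^2}{\gamma}=:G^2$, where I used $\|\nabla F\|_2\le nL$ because $F=\sum_{i=1}^n f(\cdot;x_i)$ is a sum of $n$ individually $L$-Lipschitz losses. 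The essential new ingredient is the strongly convex SGD rate: since $F$ is a sum of $n$ individually $\lambda$-strongly convex functions, it is $n\lambda$-strongly convex, and the schedule $\eta_t=\tfrac{2}{n\lambda(t+1)}$ is exactly the canonical $\tfrac{2}{\mu(t+1)}$ step size for a $\mu$-strongly convex objective with $\mu=n\lambda$. Invoking the strongly convex analogue of the Theorem~9.6 of \citet{garrigos2023handbook} used in the convex case, the $t$-weighted average $\bar\theta_T=\tfrac{2}{T(T+1)}\sum_{t=1}^T t\,\theta_t$ then satisfies $\mathbb{E}[F(\bar\theta_T)]-F^*\le \tfrac{2G^2}{n\lambda(T+1)}$.

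It remains to plug in the parameters. Dividing $G^2$ by $T$ gives $\tfrac{G^2}{T}=L^2\big(\tfrac{n^2}{T}+\tfrac{13d}{2\rho}+\tfrac{n}{\gamma T}\big)$, where I use $\tfrac{\sigma^2}{\gamma^2}=\tfrac{13TL^2}{2\rho}$ to cancel the factor of $T$ in the noise term, exactly as in the convex proof. The same regime inequality $\tfrac{d}{\rho}\ge \max\{\tfrac{n^2}{T},\tfrac{n}{T\gamma}\}$ established there shows the $\tfrac{d}{\rho}$ term dominates, so $\mathbb{E}[F(\bar\theta_T)]-F^*\le \cO\big(\tfrac{L^2 d}{n\lambda\rho}\big)$; substituting $\rho=\tfrac{\varepsilon^2}{16\log(1/\delta)}$ yields the claimed $\cO\big(\tfrac{dL^2\log(1/\delta)}{n\lambda\varepsilon^2}\big)$.

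The genuinely new and only delicate step is the strongly convex convergence bound with the time-varying step size and the $t$-weighted averaging scheme; everything else reduces to the convex analysis. The main thing I would double-check is the correct identification of the strong-convexity modulus of the summed objective as $n\lambda$ rather than $\lambda$, since this is what aligns the prescribed $\eta_t=\tfrac{2}{n\lambda(t+1)}$ with the $\tfrac{2}{\mu(t+1)}$ requirement of the strongly convex SGD lemma; a mismatch here would inject a spurious factor of $n$ into the final rate.
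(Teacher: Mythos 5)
Your proposal is correct and follows essentially the same route as the paper's proof: identical privacy accounting via Corollary~\ref{cor: RDP_DP_SGD} and Lemma~\ref{lem: zCDP_to_DP}, the same second-moment bound $n^2L^2+\tfrac{d\sigma^2}{\gamma^2}+\tfrac{nL^2}{\gamma}$, the $t$-weighted-average $\cO(1/T)$ strongly convex SGD rate with modulus $\Lambda=n\lambda$ and $\eta_t=\tfrac{2}{n\lambda(t+1)}$ (the paper cites \citet{lacoste2012simpler} for this, which is the same result you invoke as the strongly convex analogue of Theorem~9.6 of \citet{garrigos2023handbook}), the same dominance argument $\tfrac{d}{\rho}\ge\max\{\tfrac{n^2}{T},\tfrac{n}{T\gamma}\}$, and the same substitution $\rho=\tfrac{\varepsilon^2}{16\log(1/\delta)}$. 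Your worry about the strong-convexity modulus is well placed and resolved exactly as the paper does it, and incidentally your expression $\rho=\tfrac{13L^2\gamma^2 T}{2\sigma^2}$ corrects a typo in the paper's proof (which writes $T^2$ there but uses $T$ thereafter).
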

and the number of incremental gradient calls is 
\begin{equation*}
    \begin{aligned}
        \cG = \frac{2n^2 \varepsilon^{3/2}}{\sqrt{d\log(1/\delta)}}.%
    \end{aligned}
\end{equation*}

\begin{proof}
The derivation of privacy guarantee follows the same procedure as Lemma~\ref{lem: dp-sgd-cvx-lip}. By Corollary~\ref{cor: RDP_DP_SGD}, the total zCDP guarantee $\rho=\frac{13 L^2 \gamma^2 T^2}{2\sigma^2}$. Choosing learning rate $\eta_t = \frac{2}{\Lambda(t +1)}$ and apply convergence result from \citet{lacoste2012simpler}, where $\Lambda=n\lambda$ be strong convex parameter of $F$, we have:
   \begin{equation*}
    \begin{aligned}
        \mathbb{E} \left[ F\left(\frac{2}{T(T+1)} \sum_{t=1}^{T} t x_t \right) \right] - F^* 
        &\leq \frac{2\mathbb{E}[\|\hat{g}_t\|_2^2]}{\Lambda (T+1)} \\
        &\leq \cO \left( \frac{n^2 L^2}{\Lambda T} + \frac{d \sigma^2}{\Lambda \gamma^2 T} + \frac{nL^2}{\Lambda \gamma T} \right)\\
        &= \cO \left(\frac{L^2}{\Lambda} \left(\frac{n^2 }{T} + \frac{d}{ \rho} + \frac{n}{ \gamma T} \right)\right).
   \end{aligned}    
   \end{equation*}
where in the last line we use the fact that $\rho = \frac{13T\gamma^2 L^2}{2\sigma^2}$. By the choice of $T$ and $\gamma$, we have $\frac{d}{\rho} \geq \max\{\frac{n^2}{T}, \frac{n}{T\gamma}\}$. This implies: 
   \begin{equation*}
    \begin{aligned}
        \mathbb{E} \left[ F\left(\frac{2}{T(T+1)} \sum_{t=1}^{T} t x_t \right) \right] - F^* 
        \leq \cO \left( \frac{dL^2}{\Lambda \rho}\right) = \cO \left( \frac{dL^2\log(1/\delta)}{n\lambda \varepsilon^2}\right), \\
   \end{aligned}    
   \end{equation*}
where the last equality is using the fact that the target privacy budget $\varepsilon=4\sqrt{\rho\log(1/\delta)}$.

For the number of incremental gradient calls (denote as $\cG$), we have the same result as in Lemma~\ref{lem: dp-sgd-cvx-lip}:
\begin{equation*}
    \begin{aligned}
            \cG = \frac{2n^2 \varepsilon^{3/2}}{\sqrt{d\log(1/\delta)}}.%
    \end{aligned}
\end{equation*}
\end{proof}

\subsection{Analysis of Purified DP-SGD}
\begin{lemma}[Error from Laplace perturbation]\label{lem: laplace_perturb}
    Suppose $x \in \mathbb{R}^d$ and $\Tilde{x}=x + Lap^{\otimes d}(b)$, then 
    $$\mathbb{E}[\| x - \Tilde{x} \|_2]\leq\sqrt{2d}b.$$
\end{lemma}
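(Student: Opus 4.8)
The plan is to reduce the statement to a second-moment computation for the Laplace distribution combined with Jensen's inequality. First I would observe that the difference $\tilde{x} - x$ is exactly the injected noise vector $z = (z_1, \dots, z_d)$ whose coordinates satisfy $z_i \overset{\text{i.i.d.}}{\sim} \mathrm{Lap}(b)$, so that $\|x - \tilde{x}\|_2 = \|z\|_2 = \big(\sum_{i=1}^d z_i^2\big)^{1/2}$. The quantity to control is therefore $\mathbb{E}\big[\big(\sum_{i=1}^d z_i^2\big)^{1/2}\big]$, and the key point is that the randomness factorizes across coordinates.

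Next I would apply Jensen's inequality to the concave map $t \mapsto \sqrt{t}$ together with linearity of expectation, obtaining
\[
\mathbb{E}\Big[\big(\sum_{i=1}^d z_i^2\big)^{1/2}\Big] \le \Big(\mathbb{E}\big[\sum_{i=1}^d z_i^2\big]\Big)^{1/2} = \Big(\sum_{i=1}^d \mathbb{E}[z_i^2]\Big)^{1/2}.
\]
The remaining ingredient is the second moment of a mean-zero Laplace random variable of scale $b$: since $\mathrm{Lap}(b)$ has density $\tfrac{1}{2b}e^{-|t|/b}$ and variance $2b^2$, we have $\mathbb{E}[z_i^2] = \mathrm{Var}(z_i) = 2b^2$ for every $i$. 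Substituting yields $\big(\sum_{i=1}^d 2b^2\big)^{1/2} = \sqrt{2d}\,b$, which is precisely the claimed bound.

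There is no genuine obstacle in this argument; it is a routine moment estimate. The only subtlety worth flagging is to correctly recall that the scale-$b$ Laplace distribution has variance $2b^2$ (not $b^2$), which is what produces the constant $\sqrt{2d}$ rather than $\sqrt{d}$. I would also note that the bound is essentially tight up to the (small) Jensen gap, so no sharper constant is needed for the downstream utility analysis in which this lemma is invoked.
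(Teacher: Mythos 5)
Your proof is correct and matches the paper's argument exactly: both apply Jensen's inequality to the concave square root to pass from $\mathbb{E}[\|x-\tilde{x}\|_2]$ to $\sqrt{\mathbb{E}[\|x-\tilde{x}\|_2^2]}$, then use that each coordinate of the $\mathrm{Lap}(b)$ noise has second moment $2b^2$, yielding $\sqrt{2d}\,b$.
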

\begin{proof}
    \begin{equation*}
        \begin{aligned}
            \mathbb{E}[\|x-\Tilde{x}\|_2] &=\mathbb{E}\left[\sqrt{\|x-\Tilde{x}\|_2^2}\right] \leq \sqrt{\mathbb{E}[\|x-\Tilde{x}\|_2^2]} =\sqrt{2db^2}
        \end{aligned}
    \end{equation*}
\end{proof}

\subsubsection{Proof of Theorem~\ref{thm:dp-sgd-pure}}
\begin{theorem}[Restatement of Theorem~\ref{thm:dp-sgd-pure}]\label{thm: re_dp_sgd}
    Let the domain $\cC\subset \R^d$ be a convex set with 
$\ell_2$ diameter $C$, and let $f(\cdot;\mx)$ be $L$-Lipschitz for all $\mx\in\cX$. Algorithm~\ref{alg:pure_dp_sgd} satisfies $2\varepsilon$-\textit{pure} DP and with $\Tilde{\cO}(n^2 \varepsilon^{3/2}d^{-1})$ incremental gradient calls, the output $\theta_{\mathrm{out}}$ satisfies:
    \begin{enumerate}
    \item If $f(\cdot;\mx)$ is convex for all $\mx\in\cX$, then $\mathbb{E} \left[ \cL(\theta_{\mathrm{out}}) \right] - \cL^* \leq \Tilde{\cO} \left( \nicefrac{CLd}{n\varepsilon}\right)$. 
        \item If $f(\cdot;\mx)$ is $\lambda$-strongly convex for all $\mx\in\cX$, then
        $\mathbb{E} \left[ \cL\left(\theta_{\mathrm{out}}\right) \right] - \cL(x^*) \leq 
         \Tilde{\cO} \left( \nicefrac{d^2 L^2}{n^2 \lambda \varepsilon^2}\right)$.
    \end{enumerate}
\end{theorem}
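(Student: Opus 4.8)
The plan is to instantiate the two-stage pipeline of Algorithm~\ref{alg:pure_dp_sgd}: first run DP-SGD (Algorithm~\ref{alg:dp_sgd}) to obtain an $(\varepsilon,\delta)$-DP iterate $\theta_{\apx}$, then feed it through the purification routine $\amac$ with $\varepsilon'=\varepsilon$ and the particular $\delta$ prescribed in Corollary~\ref{lem: appropriate_delta}. The $2\varepsilon$-pure DP guarantee is then immediate from Corollary~\ref{lem: appropriate_delta} (equivalently Theorem~\ref{thm:pure_main} with $\varepsilon'=\varepsilon$), once I confirm that the upstream DP-SGD actually achieves $(\varepsilon,\delta)$-DP at that $\delta$. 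For the latter I would cite the convergence lemmas---Lemma~\ref{lem: dp-sgd-cvx-lip} for the convex case and Lemma~\ref{lem: dp-sgd-scvx-lip-sm} for the strongly convex case---whose parameter settings deliver $(\varepsilon,\delta)$-DP provided $\varepsilon \le (d\wedge 8)\log(1/\delta)$. The prescribed $\delta = \tfrac{2/n^2}{(16Cdn^2)^d}$ gives
\[
\log(1/\delta) = d\log(16Cdn^2) + \log(n^2/2) = \Tilde{\Theta}(d),
\]
so this applicability condition holds comfortably in the standard privacy regime $\varepsilon=\cO(1)$.

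The crux of the utility analysis is precisely this identity $\log(1/\delta)=\Tilde{\Theta}(d)$, which turns the approximate-DP rates into the claimed pure-DP rates. I would split
\[
\mathbb{E}[\cL(\theta_{\pure})]-\cL^* = \mathbb{E}\big[\cL(\theta_{\pure})-\cL(\theta_{\apx})\big] + \big(\mathbb{E}[\cL(\theta_{\apx})]-\cL^*\big).
\]
Since $\cL=\tfrac1n\sum_{i=1}^n f(\cdot;x_i)$ is an average of $L$-Lipschitz functions and hence itself $L$-Lipschitz, the first (purification) term is at most $L\cdot\mathbb{E}\|\theta_{\pure}-\theta_{\apx}\|_2 \le L(\tfrac{1}{n^2\varepsilon}+\tfrac{C}{n^2})$ by the $\ell_2$ utility bound of Corollary~\ref{lem: appropriate_delta} (projecting $\theta_{\pure}$ onto $\cC$ if needed, which preserves pure DP by post-processing and only shrinks the gap since $\theta_{\apx}\in\cC$); this is $\cO(1/n^2)$ and will be dominated. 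For the second (optimization) term I convert the $F$-bounds of the convergence lemmas to $\cL$-bounds via $\cL=F/n$: the convex lemma yields $\cO\big(\tfrac{CLd^{1/2}\log^{1/2}(1/\delta)}{n\varepsilon}\big)=\Tilde{\cO}\big(\tfrac{CLd}{n\varepsilon}\big)$, and the strongly convex lemma (where $F$ is $n\lambda$-strongly convex) yields $\cO\big(\tfrac{dL^2\log(1/\delta)}{n^2\lambda\varepsilon^2}\big)=\Tilde{\cO}\big(\tfrac{d^2L^2}{n^2\lambda\varepsilon^2}\big)$, after substituting $\log(1/\delta)=\Tilde{\Theta}(d)$. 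Adding the lower-order purification term gives the two stated bounds.

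For the gradient complexity I would take the count $\cG = 2n^2\varepsilon^{3/2}/\sqrt{d\log(1/\delta)}$ reported by both convergence lemmas; the same substitution gives $\cG=\Tilde{\cO}(n^2\varepsilon^{3/2}/d)$, while the purification step contributes only $\cO(d)$ arithmetic and no additional gradient evaluations. The one place demanding genuine care is the bookkeeping around the prescribed $\delta$: I must verify both that it lies within the applicability window of the DP-SGD lemmas and that it is exactly the value for which Corollary~\ref{lem: appropriate_delta}'s purification error collapses to $\cO(1/n^2)$, so that the purification cost is provably lower order than the optimization error in each regime. Once this is pinned down, the remainder is a routine substitution of $\log(1/\delta)=\Tilde{\Theta}(d)$ into the cited rates.
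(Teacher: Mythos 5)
Your proposal is correct and follows essentially the same route as the paper's proof: run DP-SGD, purify via Corollary~\ref{lem: appropriate_delta} with $\varepsilon'=\varepsilon$, decompose the excess risk into the optimization error of Lemma~\ref{lem: dp-sgd-cvx-lip}/Lemma~\ref{lem: dp-sgd-scvx-lip-sm} (divided by $n$) plus an $L$-Lipschitz purification term of order $L(\tfrac{1}{n^2\varepsilon}+\tfrac{C}{n^2})$, and substitute $\log(1/\delta)=\Tilde{\Theta}(d)$ throughout, including in the gradient count. Your two additional bookkeeping points---explicitly checking the condition $\varepsilon \le (d\wedge 8)\log(1/\delta)$ and projecting $\theta_{\pure}$ back onto $\cC$ before evaluating $\cL$---are careful touches the paper leaves implicit, but they do not change the argument.
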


\begin{proof}
Setting $\omega=\frac{1}{n^2}$, $\delta=\frac{2\omega}{2^{4d}d^dn^{2d}C^d}=2^{1-4d} d^{-d} n^{-2d-2} C^{-d}$, we have
\begin{equation}
    \log(1/\delta)=\cO\left(d\log(2) + d\log(n) + d\log(d) + d\log(C) \right)=\Tilde{\cO}(d)
\end{equation}
Applying Corollary~\ref{lem: appropriate_delta} with $\omega, \delta$ defined above and choose $\varepsilon^\prime=\varepsilon$, the additional error from purification can be upper bounded by $\frac{1}{n^2\varepsilon} + \frac{C}{n^2}$.

\textit{(When $f$ is Convex and $L$-Lipschitz):} By Lemma~\ref{lem: dp-sgd-cvx-lip} and dividing both sides by $n$:
 \begin{equation*}
     \begin{aligned}
             \mathbb{E} \left[ \cL(\theta_{\mathrm{out}})  \right] - \cL^*  
             &\leq \frac{L}{n^2\varepsilon} + \frac{CL}{n^2} +  \cO \left( \frac{CLd^{1/2}\log^{1/2}(1/\delta)}{n\varepsilon}\right) \\
             &=\Tilde{\cO}\left(\frac{L}{n^2\varepsilon} + \frac{CL}{n^2} + \frac{CLd}{n\varepsilon}\right)
     \end{aligned}
 \end{equation*}

\textit{(When $f$ is $\lambda$-strongly Convex and $L$-Lipschitz):} By Lemma~\ref{lem: dp-sgd-scvx-lip-sm} and dividing both sides by $n$:
\begin{equation*}
    \begin{aligned}
        \mathbb{E} \left[\cL(\theta_{\mathrm{out}})  \right] - \cL^* &\leq \frac{L}{n^2\varepsilon} + \frac{CL}{n^2} +
         \cO \left( \frac{dL^2\log(1/\delta)}{n^2 \lambda \varepsilon^2}\right) \\  
         &= \Tilde{\cO} \left(\frac{L}{n^2\varepsilon} + \frac{CL}{n^2} + \frac{d^{2}L^2}{n^2 \lambda \varepsilon^2}\right) \\ 
    \end{aligned} 
\end{equation*}

The number of incremental gradient calls for both cases:
\begin{equation*}
        \cG = \frac{2n^2 \varepsilon^{3/2}}{\sqrt{d\log(1/\delta)}} = \Tilde{\cO} \left( n^2 \varepsilon^{3/2} d^{-1} \right)
\end{equation*}
\end{proof}

\section{Deferred Proofs for DP-Frank-Wolfe}

\label{appendix:fw}
\subsection{Approximate DP Frank-Wolfe Algorithm}
Given a dataset $D = {x_1, \ldots, x_n}$ and a parameter space $\cC$, we denote the individual loss function by $f: \cC \times \cX \rightarrow \mathbb{R}$. We define the average empirical loss as follows:

\begin{equation}\label{eqn: avg_emp_loss}
    \cL(\theta):=\frac{1}{n}\sum_{i=1}^n f(\theta; x_i)
\end{equation}

For completeness, we state the Frank-Wolfe algorithm \citep{frank1956algorithm} as follows.
\begin{algorithm}[htb]
\caption{Frank-Wolfe algorithm (Non-Private)}
\label{alg: frank_wolfe}
\nl \textbf{Input:} $\mathcal{C} \subseteq \mathbb{R}^d$, loss function $\mathcal{L}: \mathcal{C} \to \mathbb{R}$, number of iterations $T$, stepsizes $\eta_t$. \\
\nl Choose an arbitrary $\theta_1$ from $\mathcal{C}$\\
\nl \For{$t = 1$ to $\tilde{T} - 1$}{
    \nl Compute $\tilde{\theta}_t = \arg\min_{\theta \in \mathcal{C}} \langle \nabla \mathcal{L}(\theta_t), \theta - \theta_t \rangle$ \\
    \nl Set $\theta_{t+1} = \theta_t + \eta_t (\tilde{\theta}_t - \theta_t)$}
\nl Output $\theta_T$
\end{algorithm}

The differential private version of Algorithm~\ref{alg: frank_wolfe} is modified by using the exponential mechanism to select coordinates in each update. We follow the setting in \citet{ktz15} with initialization at point zero.

\begin{algorithm}[ht]
\caption{Approximate DP Frank-Wolfe Algorithm \citep{ktz15}}
\label{alg: ktz_fw}
\nl \textbf{Input:} Dataset \({D} = \{d_1, \dots, d_n\}\), 
loss function $f$ defined in Eq.~\eqref{eqn: avg_emp_loss} with \(\ell_1\)-Lipschitz constant \(L_1\),
privacy parameters \((\varepsilon, \delta)\), convex set \(\mathcal{C} = \text{conv}(S)\), \(\|\mathcal{C}\|_1 = \max_{s \in S} \|s\|_1\). \\
\nl Initialize $\theta_0 \gets \mathbf{0}\in \cC$. \\
\nl \For{$t = 0$ to $T-1$}{
    \nl \(\forall s \in S, \alpha_s \gets \langle s, \nabla \mathcal{L}(\theta_t; \mathcal{D}) \rangle + \text{Lap}\left(\frac{L_1 \|\mathcal{C}\|_1 \sqrt{8T \log(1/\delta)}}{n\varepsilon}\right)\), where \(\text{Lap}(\lambda) \sim \frac{1}{2\lambda}e^{-|x|/\lambda}\) \\
    \nl \(\tilde{\theta}_t \gets \arg\min_{s \in S} \alpha_s\) \label{step:fw_direction}\\
    \nl \(\theta_{t+1} \gets (1 - \eta_t)\theta_t + \eta_t \tilde{\theta}_t\), where \(\eta_t = \frac{2}{t+2}\)} \label{step:fw_update}
\nl Output \(\theta_{\text{priv}} = \theta_T\)
\end{algorithm}

\begin{lemma}[Equation 21 of \citep{talwar2014private}]\label{lem: util_ktz_fw}
Running Algorithm~\ref{alg: ktz_fw} for $T$ iterations yields:
\begin{equation*}
    \mathbb{E} \left[ \mathcal{L}(\theta_T;D) - \min_{\theta \in \mathcal{C}} \mathcal{L}(\theta;D) \right] = O\left( 
    \frac{\Gamma_{\cL}}{T} + 
    \frac{L_1 \|\mathcal{C}\|_1 \sqrt{8T \log(1/\delta)} \log(TL_1 \|\mathcal{C}\|_1 \cdot |S|)}{n \varepsilon}
    \right),    
\end{equation*}
where $\Gamma_{\cL}$ is the curvature parameter \citep[Definition~2.1]{ktz15}, which can be upper bounded by $\beta \|\cC\|_1^2$ if the loss function $f$ is $\beta$-smooth \citep{talwar2014private}.
\end{lemma}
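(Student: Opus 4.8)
The plan is to decompose the analysis of Algorithm~\ref{alg: ktz_fw} into the standard Frank--Wolfe progress guarantee and the extra suboptimality introduced by selecting the update direction through report-noisy-min. Throughout, I write $h_t := \cL(\theta_t; D) - \min_{\theta\in\cC}\cL(\theta;D)$ for the excess empirical risk at iteration $t$, let $\theta^\star$ be a minimizer over $\cC$, and let $s^\star_t := \argmin_{s\in S}\langle s, \nabla\cL(\theta_t;D)\rangle$ denote the exact linear-minimization vertex. First I would invoke the definition of the curvature constant $\Gamma_{\cL}$ \citep[Definition~2.1]{ktz15}, which controls the second-order behavior of $\cL$ along Frank--Wolfe chords and directly yields the one-step descent inequality
\[
\cL(\theta_{t+1}) \le \cL(\theta_t) + \eta_t\langle \tilde\theta_t - \theta_t, \nabla\cL(\theta_t)\rangle + \tfrac{1}{2}\eta_t^2\,\Gamma_{\cL}.
\]
The heart of the argument is to control the linear term. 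Writing $\gamma_t := \langle \tilde\theta_t, \nabla\cL(\theta_t)\rangle - \langle s^\star_t, \nabla\cL(\theta_t)\rangle \ge 0$ for the suboptimality of the chosen vertex, and using that $s^\star_t$ minimizes the linear form over $\cC = \mathrm{conv}(S)$ (hence $\langle s^\star_t,\nabla\cL(\theta_t)\rangle \le \langle\theta^\star,\nabla\cL(\theta_t)\rangle$) together with convexity of $\cL$, I obtain $\langle \tilde\theta_t - \theta_t, \nabla\cL(\theta_t)\rangle \le \langle \theta^\star - \theta_t, \nabla\cL(\theta_t)\rangle + \gamma_t \le -h_t + \gamma_t$. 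Substituting gives the recursion $h_{t+1} \le (1-\eta_t)h_t + \eta_t\gamma_t + \tfrac12\eta_t^2\Gamma_{\cL}$.

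Next I would bound $\gamma_t$ in terms of the additive Laplace noise. Since $\tilde\theta_t = \argmin_{s\in S}\big(\langle s,\nabla\cL(\theta_t)\rangle + Z_s\big)$ with $Z_s \overset{\text{i.i.d.}}{\sim}\mathrm{Lap}(\lambda)$ and $\lambda = \frac{L_1\|\cC\|_1\sqrt{8T\log(1/\delta)}}{n\varepsilon}$, comparing the noisy score of $\tilde\theta_t$ against that of $s^\star_t$ gives $\gamma_t \le Z_{s^\star_t} - Z_{\tilde\theta_t} \le 2\max_{s\in S}|Z_s|$. A union bound over the $T$ iterations and the $|S|$ vertices, combined with the Laplace tail, controls $\max_{t,s}|Z_s^{(t)}|$ by $O(\lambda\log(T|S|/\zeta))$ on an event of probability $1-\zeta$; tracking the scale of the quantities involved is what produces the precise logarithmic argument $\log(TL_1\|\cC\|_1\cdot|S|)$ in the statement.

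Finally I would solve the recursion with $\eta_t = 2/(t+2)$ by the standard Frank--Wolfe induction: the curvature term telescopes to $O(\Gamma_{\cL}/T)$, while the weights multiplying $\eta_t\gamma_t$ form a convex combination and are therefore dominated by $\max_t\gamma_t$. Taking expectations (or working on the high-probability event), substituting the bound on $\gamma_t$ and the value of $\lambda$, yields
\[
\mathbb{E}\big[\cL(\theta_T;D) - \min_{\theta\in\cC}\cL(\theta;D)\big] = O\!\left(\frac{\Gamma_{\cL}}{T} + \frac{L_1\|\cC\|_1\sqrt{8T\log(1/\delta)}\,\log(TL_1\|\cC\|_1\cdot|S|)}{n\varepsilon}\right),
\]
and the curvature bound $\Gamma_{\cL}\le\beta\|\cC\|_1^2$ under $\beta$-smoothness gives the stated form. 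I expect the second step to be the main obstacle: the crude per-step expectation $\mathbb{E}[\max_s|Z_s|]=O(\lambda\log|S|)$ gives only $\log|S|$, so recovering the exact factor requires a careful high-probability bound on the maxima across \emph{all} rounds and vertices and an explanation of why $T$, $L_1$, and $\|\cC\|_1$ enter inside the logarithm, which stems from calibrating the deviation bound to the curvature-scaled progress that the inexact Frank--Wolfe iteration can tolerate.
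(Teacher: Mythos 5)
This lemma is not proved in the paper at all: it is imported verbatim as Equation~21 of \citet{talwar2014private}, together with the curvature bound $\Gamma_{\cL}\le\beta\|\cC\|_1^2$, so there is no in-paper argument to compare against. Your reconstruction is essentially the original Talwar--Thakurta--Zhang analysis, and it is sound: the curvature-based one-step inequality, the decomposition of the linear term into the exact linear-minimization gap plus the report-noisy-min suboptimality $\gamma_t\le 2\max_{s}|Z_s|$, the recursion $h_{t+1}\le(1-\eta_t)h_t+\eta_t\gamma_t+\tfrac12\eta_t^2\Gamma_{\cL}$, and the unrolling with $\eta_t=2/(t+2)$ (where the weights on the $\gamma_t$'s sum to at most one and the curvature terms telescope to $O(\Gamma_{\cL}/T)$) are all correct. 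On the point you flag as the main obstacle: you do not actually need a delicate calibration. Since each $|Z_s^{(t)}|$ is $\mathrm{Exponential}(\lambda)$, the expectation of the maximum over all $T|S|$ draws is at most $\lambda(\ln(T|S|)+1)$, which already yields the bound in expectation with $\log(T|S|)$ in place of $\log(TL_1\|\cC\|_1\cdot|S|)$; this matches (indeed slightly improves on) the stated bound whenever $L_1\|\cC\|_1\ge 1$, and the discrepancy is immaterial to how the lemma is consumed downstream, since Theorem~\ref{thm: dp-fw-pure} absorbs all such logarithms into $\Tilde{\cO}$. The extra $L_1\|\cC\|_1$ inside the logarithm in the cited form arises from a different bookkeeping choice in the original paper: one proves a high-probability bound with failure probability $\zeta$, charges failure events the trivial Lipschitz bound $O(L_1\|\cC\|_1)$ on the excess risk, and then tunes $\zeta$, which places $L_1\|\cC\|_1$ inside $\log(1/\zeta)$. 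Either route establishes the lemma; yours is the more direct one.
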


The sparsity of $\theta_T$ is given in the following lemma.

\begin{lemma}[Sparsity of DP Frank-Wolfe]\label{lem: sparse-fw}
    Suppose $S\subset\{x \in \mathbb{R}^d \mid \mathrm{nnz}(x) \leq s\}$. After running Algorithm~\ref{alg: ktz_fw} for $T$ iterates, the output $\theta_T$ is $Ts \wedge d$-sparse.
\end{lemma}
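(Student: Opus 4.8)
The plan is to prove the claim by a direct induction on the iteration counter, tracking the support set $\mathrm{supp}(\theta_t) := \{\, j : (\theta_t)_j \neq 0 \,\}$ of the iterates. The key structural observation is that each direction $\tilde{\theta}_t$ selected in Line~\ref{step:fw_direction} of Algorithm~\ref{alg: ktz_fw} is, by construction, an element of the vertex set $S$ (it is the $\arg\min_{s \in S} \alpha_s$). By the hypothesis $S \subset \{\, x \in \mathbb{R}^d : \mathrm{nnz}(x) \leq s \,\}$, this immediately gives $|\mathrm{supp}(\tilde{\theta}_t)| \leq s$ for every $t$, regardless of the noisy selection rule.

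First I would record the elementary fact that the support of any linear combination is contained in the union of the supports: for scalars $a,b$ and vectors $u,v$, one has $\mathrm{supp}(au + bv) \subseteq \mathrm{supp}(u) \cup \mathrm{supp}(v)$. Then, using the initialization $\theta_0 = \mathbf{0}$ (so $\mathrm{supp}(\theta_0) = \emptyset$) together with the update rule $\theta_{t+1} = (1-\eta_t)\theta_t + \eta_t \tilde{\theta}_t$ from Line~\ref{step:fw_update}, I would establish by induction the inclusion
\[
\mathrm{supp}(\theta_t) \subseteq \bigcup_{i=0}^{t-1} \mathrm{supp}(\tilde{\theta}_i).
\]
The base case $t=0$ is the empty union, which matches $\mathrm{supp}(\theta_0) = \emptyset$; the inductive step follows since $\mathrm{supp}(\theta_{t+1}) \subseteq \mathrm{supp}(\theta_t) \cup \mathrm{supp}(\tilde{\theta}_t)$, and the right-hand side is absorbed into the union by the inductive hypothesis.

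Finally, applying the inclusion at $t = T$ and a union bound on cardinalities yields
\[
\mathrm{nnz}(\theta_T) = |\mathrm{supp}(\theta_T)| \leq \sum_{i=0}^{T-1} |\mathrm{supp}(\tilde{\theta}_i)| \leq Ts.
\]
Since trivially $\mathrm{nnz}(\theta_T) \leq d$ for any vector in $\mathbb{R}^d$, combining the two bounds gives $\mathrm{nnz}(\theta_T) \leq Ts \wedge d$, as claimed.

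I do not expect a serious obstacle here: the argument is purely combinatorial and does not interact with the privacy noise or the step sizes $\eta_t$ in any delicate way. The only point requiring a word of care is that potential cancellation in the convex combination can only \emph{shrink} the support, so the union-of-supports containment remains valid as an \emph{upper} bound on the number of nonzeros; the Laplace noise added in Line~\ref{step:fw_direction} influences only which vertex $\tilde{\theta}_t$ is chosen, not the sparsity budget $s$ of that vertex.
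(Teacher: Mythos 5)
Your proposal is correct and follows essentially the same argument as the paper: the paper's proof is exactly the induction you spell out, stated compactly as $\mathrm{nnz}(\theta_{t+1}) \leq \mathrm{nnz}(\theta_t) + s$ (since the selected vertex $\tilde{\theta}_t \in S$ has at most $s$ nonzeros and the convex combination cannot enlarge the union of supports), combined with $\mathrm{nnz}(\theta_0)=0$ and the trivial bound $\mathrm{nnz}(\theta_T)\leq d$. Your version merely makes the support-containment bookkeeping explicit; there is no substantive difference.
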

\begin{proof}
    From Line~\ref{step:fw_direction} and \ref{step:fw_update} of Algorithm~\ref{alg: ktz_fw}, we know $\mathrm{nnz}\lrp{\theta_{t+1}}\leq \mathrm{nnz}\lrp{\theta_{t}}+s$. Since $\mathrm{nnz}(\theta_0)=0$, we have $\mathrm{nnz}(\theta_T)\leq Ts$. Since $\theta_T\in\R^d$, we have $\mathrm{nnz}(\theta_T)\leq d$. Therefore, $\mathrm{nnz}(\theta_T)\leq Ts\wedge d$.
\end{proof}

\subsection{Pure DP Frank-Wolfe Algorithm}
\begin{algorithm}
\setcounter{AlgoLine}{0}
\caption{Pure DP Frank-Wolfe Algorithm} %
\label{alg: fw_pure}
\nl \textbf{Input:} Dataset $D$, loss function $\cL:$ defined in Eq.~\eqref{eqn: avg_emp_loss}, DP parameter $\varepsilon$, a convex polytope $\cC=\mathrm{conv}(S)$, where $S$ is the vertices set, number of iterations $T$, a Gaussian random matrix $\Phi \in \mathbb{R}^{k \times d}$ constructed by Lemma~\ref{lem:construct_RIP} satisfying $(1/4, 4T)$-RWC with high probability.
\\
\nl Set parameters $T = \tilde{\mathbf{\Theta}}\lrp{\sqrt{\frac{\Gcur n\varepsilon}{L_1 }}}$, $k=\mathbf{\Theta}\lrp{T\log\lrp{\frac{d}{T}}+\log n}$, $\omega=\frac{1}{n}$, $\delta=\frac{2\omega}{(nk)^k}$.\\
\nl $\theta_{\textrm{FW}}\gets$ Algorithm~\ref{alg: ktz_fw} \Comment{$(\varepsilon,\delta)$-DP FW}\\
\nl $\theta_{\apx\text{-}k} \gets \Phi \theta_{\textrm{FW}} \in \R^k $ \Comment{Dimension reduction}\\
\nl $\theta_{\pure\text{-}k} \gets \amac \lrp{x_\apx =\mathsf{Clip}_{2\|\cC\|_1}^{\ell_2}(\theta_{\apx\text{-}k}),\Theta= \cB_{\ell_2}^{k}(2\|\cC\|_1),\varepsilon'=\varepsilon, \delta,\omega}$\Comment{Algorithm~\ref{alg:main}}\label{stp:fw_pure}\\
\nl $\theta_{\pure\text{-}d}\gets\cM_{\mathrm{rec}}(\theta_{\pure\text{-}k}, \Phi, \xi)$ \Comment{Algorithm~\ref{alg: sparse-rec}, with $\xi$ defined in Eq.~\eqref{eqn: xi}}\\
\nl $\theta_{\mathrm{out}}=\mathsf{Clip}_{\|\cC\|_1}^{\ell_1}(\theta_{\pure\text{-}d})$ \label{stp:clip_FW_last}\\
\nl \textbf{Output:} $\theta_{\mathrm{out}}$ 
\end{algorithm}

\begin{algorithm}
\setcounter{AlgoLine}{0}
\caption{Sparse Vector Recovery $\cM_{\mathrm{rec}}(b, \Phi, \xi)$}
\label{alg: sparse-rec}
\nl \textbf{Input:} Noisy measurement $b$, design matrix $\Phi$, noise tolerant magnitude $\xi$ \\
\nl Define the feasible set  $\cU:=\{ \theta \in \mathbb{R}^d \mid \| \Phi\theta - b \|_1 \leq \xi\}$ \\
\nl Solve $\hat{\theta} = \argmin_{\theta \in \cU} \|\theta\|_1 $\\
\nl \textbf{Output:} $\hat{\theta}$  
\end{algorithm}

\subsection{Proof of Theorem~\ref{thm: dp-fw-pure}}\label{apx: pf_dp_fw}

\begin{proof}[Proof of Theorem~\ref{thm: dp-fw-pure}]
    The privacy analysis follows by Theorem~\ref{thm:pure_main} and the post-possessing property of DP. The utility analysis follows by bounding the following $(a)$ and $(b)$:
    \begin{equation*}
        \underbrace{\mathbb{E} [\cL(\theta_{\mathrm{out}};D) - \cL(\theta_{FW};D)]}_{(a)} +  \underbrace{\mathbb{E}[\cL(\theta_{FW};D) - \cL(\theta^*;D)]}_{(b)}
    \end{equation*}

    \textbf{For (a)}:
    Since $\cC$ is an $\ell_1$-ball with center $\mathbf{0}$ and $\ell_1$ radius $\|\cC\|_1$, the vertices set is $S=\{x \mid \|x\|_1=\|\cC\|_1, \mathrm{nnz}(x)=1\}$. We note that $\theta_{FW}$, the output of Algorithm~\ref{alg: ktz_fw} is $T$-sparse by Lemma~\ref{lem: sparse-fw}. 
    
    Denote the ``failure'' events as follows: \( F_1 := \{u \leq \omega \text{ in Line~\ref{step:unif_1} of Algorithm~\ref{alg:main}}\} \), where the uniform sample is accepted in Algorithm~\ref{alg:main}; \( F_2 := \{\Phi \text{ is not } (e, 4T)\text{-RWC}\} \), where the randomly sampled \(\Phi\) is not \((e, 4T)\)-restricted well-conditioned (RWC); and \( F_3 := \{\|\mathbf{z}\|_1 > \xi\} \), where the Laplace noise added in Line~\ref{step:laplace} of Algorithm~\ref{alg:main} exceeds the tolerance threshold. We first bound the failure probability $\P(F_1 \cup F_2 \cup F_3)$, and then analyze the utility under the ``success'' event \( F_1^c \cap F_2^c \cap F_3^c \), followed by an expected overall utility bound. 

    Since $\omega=\frac{1}{n}$, we have $\P(F_1)=\frac{1}{n}$.

Set the distortion rate as $e=\frac{1}{4}$, and $k=\mathbf{\Theta}\lrp{T\log\lrp{\frac{d}{T}}+\log n}$. Constructing $\Phi$ following  Lemma~\ref{lem:construct_RIP}, and by Lemma~\ref{lem:RIP_RWC}, $\Phi \in \mathbb{R}^{k \times d}$ is $(4T, e)$-RWC with probability at least $1-\frac{1}{n}$, i.e., $\P(F_2)\leq \frac{1}{n}$.

Set
    \begin{equation}\label{eqn: xi}
        \xi=\frac{4\Delta}{\varepsilon}(k+\log n), 
    \end{equation}
    where $\Delta=8\sqrt{k}\|\cC\|_1\lrp{\frac{\delta}{2\omega}}^{1/k}=\frac{8\|\cC\|_1}{n\sqrt{k}}$. Then by Lemma~\ref{lem: conc_l1_exp} with taking $b=\frac{2\Delta}{\varepsilon}$ and $t=b (k+2\log n)$, we have $\P(F_3)\leq \frac{1}{n}$.
    
    Under the ``success'' event \( F_1^c \cap F_2^c \cap F_3^c \), consider the variables in Algorithm~\ref{alg: fw_pure}, we have 
    \begin{align*}
        \theta_{\pure\text{-}k}
        &=\amac \lrp{x_\apx =\mathsf{Clip}_{2\|\cC\|_1}^{\ell_2}(\theta_{\apx\text{-}k}),\Theta= \cB_{\ell_2}^{k}(2\|\cC\|_1),\varepsilon'=\varepsilon, \delta,\omega} \tag{Line~\ref{stp:fw_pure} of Algorithm~\ref{alg: fw_pure}}\\
        &=\amac \lrp{\theta_{\apx\text{-}k},\Theta= \cB_{\ell_2}^{k}(2\|\cC\|_1),\varepsilon'=\varepsilon, \delta,\omega} \tag{Under $F_2^c$, $\|\theta_{\apx\text{-}k}\|_2\leq (1+e)\|\cC\|_1$}\\        
        &=\theta_{\apx\text{-}k}+\lapd\lrp{\frac{2\Delta}{\varepsilon}} \tag{Under $F_1^c$}\\
        &=\Phi \theta_{\mathrm{FW}}+\tilde{w}, \text{ where } \|\tilde{w}\|_1\leq \xi. \tag{Under $F_3^c$}
    \end{align*}
    
    By Lemma~\ref{lem: l1_rec} and $\theta_{\pure\text{-}d}=\cM_{\mathrm{rec}}(\theta_{\pure\text{-}k}, \Phi, \xi)$, since $\theta_{\mathrm{FW}}$ is $T$-sparse, we have
    \[
    \lrn{\theta_{\pure\text{-}d}-\theta_{\mathrm{FW}}}_1\leq\frac{4\sqrt{T}}{\sqrt{1-e}}\cdot\xi.
    \]

    Since $\|\theta_{\pure\text{-}k}-\Phi \theta_{\mathrm{FW}}\|_1=\|\tilde{w}\|_1\leq \xi$, i.e., $\theta_{\mathrm{FW}}$ is in the feasible set, we have $\|\theta_{\pure\text{-}d}\|_1\leq \|\theta_{\mathrm{FW}}\|_1\leq\|\cC\|_1$, therefore, by Line~\ref{stp:clip_FW_last} in Algorithm~\ref{alg: fw_pure}, $\theta_{\mathrm{out}}=\mathsf{Clip}_{\|\cC\|_1}^{\ell_1}(\theta_{\pure\text{-}d})=\theta_{\pure\text{-}d}$ (under event $F_3^c$.)
    
    Since $\cL$ is $L_1$-Lipschitz with respect to $\ell_1$ norm, we get an upper bound on $(a)$:
    \begin{equation}\label{eqn: fw_b1}
        \begin{aligned}
            \mathbb{E}[\cL(\theta_{\mathrm{out}};D) - \cL(\theta_{FW};D)\mid F_1^c \cap F_2^c \cap F_3^c] &\leq L_1 \|\theta_{\mathrm{out}} - \theta_{FW}\|_1 \\
            &\leq L_1 \frac{4\sqrt{T}}{\sqrt{1-e}}\cdot\xi \\
            &\leq \frac{64}{\sqrt{3}}\frac{L_1 \|\cC\|_1 (k+\log n)}{n\varepsilon}
        \end{aligned}
    \end{equation}

    \textbf{For (b)}: by Lemma~\ref{lem: util_ktz_fw}, we have:
\begin{align*}
            \mathbb{E} \left[ \mathcal{L}(\theta_{\mathrm{FW}};D) - \min_{\theta \in \mathcal{C}} \mathcal{L}(\theta;D) \mid F_1^c \cap F_2^c \cap F_3^c \right] &= \cO\left( 
    \frac{\Gamma_{\cL}}{T} + 
    \frac{L_1 \|\mathcal{C}\|_1 \sqrt{8T \log(1/\delta)} \log(TL_1 \|\mathcal{C}\|_1 \cdot |S|)}{n \varepsilon}
    \right)\\
    &\leq \cO\lrp{\frac{\beta \|\cC\|_1^2}{T} + 
    \frac{L_1 \|\mathcal{C}\|_1 \sqrt{8T \log(1/\delta)} \log(TL_1 \|\mathcal{C}\|_1 \cdot |S|)}{n \varepsilon}}.
\end{align*}

Therefore,
\begin{align*}
    &\mathbb{E} \left[ \mathcal{L}(\theta_{\mathrm{out}};D) - \min_{\theta \in \mathcal{C}} \mathcal{L}(\theta;D) \mid F_1^c \cap F_2^c \cap F_3^c \right] 
    \\&\leq \cO\lrp{\frac{L_1 \|\cC\|_1 (k+\log n)}{n\varepsilon}+\frac{\beta \|\cC\|_1^2}{T} + 
    \frac{L_1 \|\mathcal{C}\|_1 \sqrt{8T \log(1/\delta)} \log(TL_1 \|\mathcal{C}\|_1 \cdot |S|)}{n \varepsilon}}\\
    &=\tilde{\cO}\lrp{\frac{L_1 \|\cC\|_1 T}{n\varepsilon}+\frac{\beta \|\cC\|_1^2}{T} + 
    \frac{L_1 \|\mathcal{C}\|_1 T }{n \varepsilon}} \tag{$\delta=\frac{2\omega}{(nk)^k}$} \\
    &=\tilde{\cO}\lrp{\frac{\beta \|\cC\|_1^2}{T} + 
    \frac{L_1 \|\mathcal{C}\|_1 T }{n \varepsilon}} .
\end{align*}
    
By setting $T = \tilde{\mathbf{\Theta}}(\sqrt{\frac{n\varepsilon \beta \|\cC\|_1}{L_1 }})$, we have
\begin{equation}\label{eqn:fw_b2}
    \mathbb{E} \left[ \mathcal{L}(\theta_{\mathrm{out}};D) - \min_{\theta \in \mathcal{C}} \mathcal{L}(\theta;D) \mid F_1^c \cap F_2^c \cap F_3^c \right]  \leq \Tilde{\cO}\left(\frac{(\beta L_1\|\cC\|_1^3)^{1/2}}{(n\varepsilon)^{1/2}}\right).
\end{equation}   

    By Line~\ref{stp:clip_FW_last} in Algorithm~\ref{alg: fw_pure}, we have $\|\theta_{\mathrm{out}}\|_1\leq \|\cC\|_1$. Therefore, 
    $$\mathbb{E} \left[\mathcal{L}(\theta_{\mathrm{out}};D) - \min_{\theta \in \mathcal{C}} \mathcal{L}(\theta;D)\mid F_1\cup F_2 \cup F_3 \right]\leq L_1\|\theta_{\mathrm{out}}-\theta^*\|_1\leq 2 L_1 \|\cC\|_1.$$

    \begin{align*}
\mathbb{E} \left[\mathcal{L}(\theta_{\mathrm{out}};D) - \min_{\theta \in \mathcal{C}} \mathcal{L}(\theta;D) \right]&\leq \Tilde{\cO}\left(\frac{(\beta L_1\|\cC\|_1^3)^{1/2}}{(n\varepsilon)^{1/2}}+\frac{3}{n}2 L_1 \|\cC\|_1\right)\\
&\leq\Tilde{\cO}\left(\frac{(\beta L_1\|\cC\|_1^3)^{1/2}}{(n\varepsilon)^{1/2}}\right)
    \end{align*}      

    For the computation cost, in each iteration, the full-batch gradient is calculated; therefore, the cost for calculating $\theta_{\mathrm{FW}}$ is $Tnd=\tilde{\cO}(n^{3/2} d)$. The computation cost for Algorithm~\ref{alg:main} is $\cO(d)$. Therefore, the computation cost is $\tilde{\cO}(n^{3/2} d)$, plus one call of the LASSO solver.

\end{proof}

\subsection{Proof of Lemma~\ref{lem: packing_lb_l1}}\label{appen: pac_lb}

\begin{lemma}\label{lem: packing_lb_l1}
Let $\cA$ be any $\varepsilon$-DP ERM algorithm. For every parameter $n,d,\varepsilon$. There is a DP-ERM problem with a convex, $1$-Lipschitz, $1$-smooth loss function, a constrained parameter space $\Theta = \{ \theta\in \R^d |  \|\theta\|_1\leq 1\}$ and a dataset $\mathrm{Data} :=\{x_1,...,x_n\}\in \cX^n$ that gives rise to the empirical risk $\cL(\theta) = \frac{1}{n}\sum_{i=1}^n \ell(\theta; x_i)$, such that with probability at least $0.5$, the excess empirical risk
$$
\cL(\cA(\mathrm{Data})) - \min_{\theta \in \Theta}\cL(\theta)  \geq \sqrt{\frac{\log(d+1)}{n\varepsilon + \log(4)}} \wedge 1.
$$ 
\end{lemma}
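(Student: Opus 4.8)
The plan is to establish the bound through a packing (group-privacy) argument tailored to the $\ell_1$ geometry, which is the standard route for pure-DP lower bounds and avoids fingerprinting. The only pure-DP–specific ingredient is group privacy: if datasets $D$ and $D'$ differ in $m$ entries, then for any $\varepsilon$-DP algorithm $\cA$ and any event $S$ one has $\Pr[\cA(D)\in S]\le e^{m\varepsilon}\,\Pr[\cA(D')\in S]$. The skeleton is then: (i) exhibit a family $\{D_0,D_1,\dots,D_M\}$, each $D_i$ differing from the reference $D_0$ in at most $m$ entries; (ii) for a target accuracy $\alpha$ define the near-optimal sets $G_i=\{\theta\in\cC:\cL_{D_i}(\theta)-\cL_{D_i}^*<\alpha\}$ and arrange that they are pairwise disjoint; (iii) disjointness gives $\sum_i\Pr[\cA(D_0)\in G_i]\le 1$, so some index $i^\star$ satisfies $\Pr[\cA(D_0)\in G_{i^\star}]\le 1/M$; (iv) transferring to $D_{i^\star}$ via group privacy yields $\Pr[\cA(D_{i^\star})\in G_{i^\star}]\le e^{m\varepsilon}/M$, and choosing $m$ so that $e^{m\varepsilon}/M\le 1/2$ forces $\cA(D_{i^\star})$ to incur excess risk at least $\alpha$ with probability at least $1/2$.

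For the instance I would use a convex, $1$-smooth, $1$-Lipschitz surrogate on the $\ell_1$ ball whose empirical minimizers, across the family, are driven toward distinct (scaled) vertices $\pm e_j$, so that the constraint $\|\theta\|_1\le 1$ prevents any single $\theta$ from being simultaneously near-optimal for two different coordinates; this makes the $G_i$ disjoint and gives $M\asymp d$ (hence the $\log(d+1)$). Encoding instance $j$ by flipping $m$ data entries so that the averaged loss acquires a bias of magnitude $\Theta(m/n)$ toward coordinate $j$ lets me simultaneously control the edit distance (and thus the group-privacy factor $e^{m\varepsilon}$) and the induced excess-risk gap. Solving $e^{m\varepsilon}\le (d+1)/2$ gives $m\asymp \log(d+1)/\varepsilon$; the additive $+\log 4$ in the statement is exactly the slack one collects from this inequality when $n\varepsilon$ is small, and the $\wedge 1$ covers the degenerate regime $\log(d+1)\gtrsim n\varepsilon$, where the family cannot be distinguished at all and the trivial $\Theta(1)$ excess risk already dominates. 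I would then verify directly that the chosen loss is convex, $1$-Lipschitz and $1$-smooth on $\Theta$ and that the flipped-entry encoding realizes the stated bias with the stated edit distance.

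The delicate point — and the step I expect to be the main obstacle — is reconciling the target $\sqrt{\log(d+1)/(n\varepsilon)}$ scaling with the $1$-Lipschitz and $1$-smoothness constraints. A naive vertex packing, in which each minimizer is pinned to a vertex and the excess-risk gap is \emph{linear} in the displacement, certifies only the weaker rate $\log(d+1)/(n\varepsilon)$: for a $1$-Lipschitz loss on the diameter-$2$ domain, changing $m$ entries perturbs the (centered) excess-risk landscape by at most $O(m/n)$ in sup-norm, which caps what any single disjoint-ball packing can prove. Matching the square-root rate — which is the correct rate here, since it agrees with the Frank--Wolfe upper bound of Theorem~\ref{thm: dp-fw-pure} and genuinely exploits smoothness — therefore requires choosing the smooth instance and the separation scale so that $1$-smoothness couples the attainable \emph{value} gap to the \emph{squared} parameter displacement; it is this quadratic coupling, balanced against the edit-distance/privacy trade-off $m\asymp\log(d+1)/\varepsilon$, that converts the linear $m/n$ bookkeeping into the $\sqrt{m/n}\asymp\sqrt{\log(d+1)/(n\varepsilon)}$ rate. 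Getting this coupling to hold while keeping the $G_i$ genuinely disjoint and the family within edit distance $m$ of a common reference is the crux of the argument.
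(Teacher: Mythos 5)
Your proposal correctly sets up the packing/group-privacy skeleton, and you correctly identify the obstruction: if every dataset in the family lies within edit distance $m$ of a common reference, then for any bounded-increment (in particular, $1$-Lipschitz) loss the \emph{centered} excess-risk landscapes of any two family members differ by $O(m/n)$ in sup norm, so the minimizer of one dataset is $O(m/n)$-near-optimal for the other, and disjointness of the near-optimal sets forces $\alpha = O(m/n) = O(\log(d)/(n\varepsilon))$. The gap is in your proposed escape: smoothness cannot beat this obstruction, because the sup-norm bound holds regardless of smoothness. Worse, the ``quadratic coupling'' runs in the wrong direction. With a $1$-smooth loss, flipping an $m/n$-fraction of bounded data points moves the empirical minimizer by $\rho = O(m/n)$, and the value gap at distance $\rho$ from the minimizer is $O(\rho^2)$; so pinning disjointness to minimizer separation certifies only $\alpha \lesssim (m/n)^2 \asymp (\log(d)/(n\varepsilon))^2$, which is \emph{weaker} than the linear rate, not stronger. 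With family size $M\asymp d$ and edit distance $m \asymp \log(d)/\varepsilon$, no choice of convex, $1$-Lipschitz, $1$-smooth loss can certify $\sqrt{\log(d+1)/(n\varepsilon)}$; in the regime $n\varepsilon \gg \log d$ your scheme is capped strictly below the target.

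The paper resolves the tension by abandoning the small-family, small-edit-distance structure rather than the linear coupling. Its proof takes an $\alpha$-packing $\{\theta_i\}_{i\in[K]}$ of the $\ell_1$ ball \emph{in the $\ell_2$ metric}, whose cardinality is $K \asymp \exp\bigl(\alpha^{-2}\log(\alpha^2 d)\bigr)$ (realized by $k$-sparse sign patterns with $k \asymp \alpha^{-2}$, not by the $2d$ vertices); the datasets are $X_i = \{\theta_i,\dots,\theta_i\}$ ($n$ identical copies, so any two datasets are at edit distance $n$ and the group-privacy factor is $e^{n\varepsilon}$); and the loss is the distance loss $\ell(\theta;x)=\|\theta - x\|_2$, whose excess risk on $X_i$ is exactly $\|\theta-\theta_i\|_2$, so the $\alpha$-near-optimal sets are the disjoint packing balls. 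The packing inequality then reads $\log K \le n\varepsilon + \log 4$, and the square-root rate falls out of the metric entropy identity $\log K \asymp \alpha^{-2}\log(\alpha^2 d)$, i.e., from the geometry of $\cB_1^d$ under $\ell_2$, not from smoothness of the loss. If you want to salvage your write-up, replace the vertex family and the quadratic-coupling step with this entropy-based packing (family size $e^{\Theta(n\varepsilon)}$, full edit distance $n$); the rest of your skeleton then goes through essentially as you wrote it.
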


\begin{definition}[$(\alpha,\beta)$-accurate ERM algorithm]\label{defn: a_b_acc}
    Given parameter space $\Theta$, dataspace $\cX$, and risk function $R$, we say an ERM algorithm $\cM:\cX^n \rightarrow \Theta$ is $(\alpha,\beta)$-accurate if for any dataset $D\in \cX^n$, with probability at least $1-\beta$ over the randomness of algorithm:
    \begin{equation*}
        R(\cM(D);D) - \min_{\theta \in \Theta} R(\theta;D) \leq \alpha
    \end{equation*}
\end{definition}

\begin{lemma}[Restatement of Lemma~\ref{lem: packing_lb_l1}]
    There exists a hard instance with $n$ samples over $\cB_1^d$ and a $1$-Lipschitz loss function $\cL$ such that any $\varepsilon$-pure differential private $(\alpha,1/2)$-accurate ERM algorithm $\cM$ must have:
    \begin{equation*}
        \alpha \geq \sqrt{\frac{\log(d+1)}{n\varepsilon + \log(4)}} \wedge 1
    \end{equation*}
\end{lemma}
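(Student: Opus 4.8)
The plan is to prove the lower bound by a packing argument driven by group privacy for pure DP, in the style of \citet{hardt2010geometry, beimel2014bounds}. It is convenient to work with the equivalent $(\alpha,1/2)$-accuracy reformulation (Definition~\ref{defn: a_b_acc}): it suffices to exhibit a single convex, $1$-Lipschitz, $1$-smooth loss on $\Theta=\{\theta\in\R^d:\|\theta\|_1\le 1\}$ together with a family of $n$-point datasets on which no $\varepsilon$-DP mechanism can be $(\alpha,1/2)$-accurate once $\alpha$ drops below the claimed threshold. The skeleton is: (1) build a packing of $d+1$ datasets whose accurate sets are pairwise disjoint; (2) use group privacy to turn accuracy into an impossible probability budget; (3) calibrate the construction so the resulting bound is $\sqrt{\log(d+1)/(n\varepsilon)}$.

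First I would construct the hard instance. Take $M=d+1$ datasets $D_0,\dots,D_d$, each of size $n$, sharing a common design and differing only in a small set of \emph{signal} entries, using a smooth quadratic-type per-example loss $\ell(\theta;(x,s))=\tfrac12(\langle x,\theta\rangle-s)^2$ rescaled so that it is $1$-Lipschitz and $1$-smooth with respect to $\|\cdot\|_1$ (this holds whenever $\|x\|_\infty\le 1$ and the relevant inner products stay bounded on $\Theta$). I would bias $D_j$ so that the empirical minimizer $\theta_j^\star$ of $\cL_j$ is pulled a controlled amount $\rho$ along the $j$-th coordinate direction, with $D_0$ centered near the origin. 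For each $j$ set the accurate region $B_j=\{\theta\in\Theta:\cL_j(\theta)-\cL_j^\star\le\alpha\}$; these are convex sublevel sets. The two structural properties I need are: (i) the $B_j$ are pairwise disjoint whenever $\alpha$ is below a threshold $\alpha^\star$ determined by $\rho$ and the curvature of the loss in the signal direction; and (ii) any two datasets differ in at most $k$ entries, where $k$ is the number of signal entries that must be altered to move the minimizer between directions.

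Next comes the packing/counting step, which is the robust part of the argument. Fix $D_0$ as the reference dataset. Pure $\varepsilon$-DP and group privacy give, for every $j$,
\[
\Prob[\cA(D_0)\in B_j]\;\ge\; e^{-k\varepsilon}\,\Prob[\cA(D_j)\in B_j]\;\ge\;\tfrac12\,e^{-k\varepsilon},
\]
where the last inequality is $(\alpha,1/2)$-accuracy applied to $D_j$. When $\alpha<\alpha^\star$ the sets $B_j$ are disjoint, so summing over the $d+1$ datasets and using $\sum_j \Prob[\cA(D_0)\in B_j]\le 1$ forces $\tfrac{d+1}{2}\,e^{-k\varepsilon}\le 1$, i.e.\ $k\varepsilon\ge \log\tfrac{d+1}{2}$. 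Taking the contrapositive: if the construction is arranged so that $k\varepsilon<\log\tfrac{d+1}{2}$, then no $\varepsilon$-DP mechanism can be $(\alpha,1/2)$-accurate with $\alpha<\alpha^\star$, and hence every such mechanism must incur $\alpha\ge\alpha^\star$.

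The crux is to calibrate the signal so that the forced value $\alpha^\star$ is as large as $\sqrt{\log(d+1)/(n\varepsilon)}$, which amounts to making the Hamming budget scale as $k\approx n\alpha^2$. The $1$-smoothness makes the excess risk grow quadratically with displacement of the minimizer, so separating the level-$\alpha$ sublevel sets requires only a displacement of order $\sqrt{\alpha}$, and the point of the $\ell_1$ geometry is that producing such a displacement costs on the order of $n\alpha^2$ altered signal entries rather than $n\alpha$. Substituting $k\approx n\alpha^2$ into $k\varepsilon\gtrsim\log(d+1)$ gives $\alpha^2\gtrsim \log(d+1)/(n\varepsilon)$; tracking the $1/2$-accuracy constant and the integer rounding of $k$ is what produces the exact denominator $n\varepsilon+\log 4$, and the $\wedge\,1$ cap covers the regime $n\varepsilon\lesssim \log(d+1)$, where the construction is taken at maximal signal (minimizers at the vertices $e_j$, $k=n$) so that the $d+1$ instances are information-theoretically indistinguishable and $\alpha=\Omega(1)$. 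I expect the main obstacle to be precisely this calibration: verifying that one fixed $1$-Lipschitz, $1$-smooth loss simultaneously realizes the quadratic relation $k\approx n\alpha^2$, keeps the $d+1$ accurate regions disjoint, and keeps all minimizers inside $\Theta$. Because smoothness is measured in $\|\cdot\|_1$ while the separation of the sublevel sets lives in the curvature-weighted norm, the delicate point is choosing the design so that the per-entry effect on the minimizer and the curvature in the signal direction combine to make separation cheap in Hamming distance without collapsing the sublevel sets into one another.
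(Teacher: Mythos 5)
Your packing/counting skeleton (group privacy factor $e^{-k\varepsilon}$, disjoint accurate regions, summing probabilities against a budget of $1$) is the same engine the paper uses, but the step you yourself flag as ``the crux'' --- the calibration $k \approx n\alpha^2$ --- is not merely delicate; it is impossible, and this breaks the whole proposal in the main regime $n\varepsilon \gg \log(d+1)$. For \emph{any} loss that is $1$-Lipschitz in $\theta$ over the diameter-$2$ domain $\cB_1^d$, if two datasets $D_i, D_j$ differ in at most $k$ of their $n$ entries, then $g := \cL_i - \cL_j$ is a sum of at most $k$ terms, each $\tfrac{2}{n}$-Lipschitz, so $g$ is $\tfrac{2k}{n}$-Lipschitz and its oscillation over $\Theta$ is at most $\tfrac{4k}{n}$. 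Consequently the minimizer $\theta_j^\star$ of $\cL_j$ satisfies $\cL_i(\theta_j^\star) - \cL_i(\theta_i^\star) \le g(\theta_j^\star) - g(\theta_i^\star) \le \tfrac{4k}{n}$: changing $k$ points can create an excess-risk separation of at most $O(k/n)$, regardless of smoothness or of how cleverly the ``signal'' entries are designed. Hence disjointness of the $\alpha$-sublevel sets \emph{forces} $k \ge n\alpha/4$, and with your proposed $k \approx n\alpha^2 \ll n\alpha$ the point $\theta_j^\star$ lies inside $B_i$, so $B_i \cap B_j \neq \emptyset$ and the counting step collapses. (Smoothness only makes sublevel sets fatter; it cannot make separation cheaper in Hamming distance.) Plugging the necessary $k \gtrsim n\alpha$ back into $k\varepsilon \ge \log\tfrac{M}{2}$ with $M = d+1$ shows your construction can never certify more than $\alpha \gtrsim \log(d+1)/(n\varepsilon)$, the \emph{linear} rate --- strictly weaker than the claimed $\sqrt{\log(d+1)/(n\varepsilon)}$ whenever $n\varepsilon \gg \log(d+1)$.

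The paper obtains the square-root rate from a completely different resource: not a small Hamming radius, but a \emph{huge} packing. It takes $K = M(\alpha, \cB_1^d, \|\cdot\|_2)$ points $\{\theta_i\}$ forming an $\alpha$-packing of the $\ell_1$ ball in the $\ell_2$ metric, lets $X_i$ be $n$ identical copies of $\theta_i$ (so any two datasets differ in all $n$ entries, and group privacy contributes the fixed factor $e^{-n\varepsilon}$), and uses the loss $\cL(\theta; X_i) = \tfrac{1}{n}\sum_j \|\theta - X_i(j)\|_2$, for which $\alpha$-accuracy means landing in the disjoint balls $E_i$. The counting step then yields $n\varepsilon + \log 4 \ge \log K$, and the rate comes entirely from the metric entropy of $\cB_1^d$ under $\ell_2$: $\log K \asymp \alpha^{-2}\log(\alpha^2 d)$ for $\alpha \gtrsim 1/\sqrt{d}$. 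In other words, the $1/\alpha^2$ lives in the \emph{cardinality} of the dataset family (of size $\exp(\Omega(\alpha^{-2}\log(\alpha^2 d)))$, far larger than $d+1$), not in the Hamming distance. Any repair of your argument must do the same: once $k$ is pinned at $\Omega(n\alpha)$ by Lipschitzness, getting $\alpha \approx \sqrt{\log d/(n\varepsilon)}$ requires $\log M \gtrsim n\varepsilon\alpha \approx \sqrt{n\varepsilon\log d} \gg \log(d+1)$, so a family indexed by the $d$ coordinate directions is categorically too small.
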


\begin{proof}[Proof of Lemma~\ref{lem: packing_lb_l1}]
    We proof by the standard packing argument. Consider $\cB_1^d$ and an $\alpha$-packing over it: $\{\theta_i \}_{i \in [K]}$, with $K$ being packing number $M(\alpha, \cB_1^d, \| \cdot \|_2)$. For any $i \in [K]$, let $E_i = \{\theta \in \Theta \mid \| \theta - \theta_i \|_2 \leq \alpha \}$ and $X_i = \underbrace{\{ \theta_i, \ldots, \theta_i \}}_{\text{n copies}}$. 
    
    We define the error function is $\cL(\theta;X_i)=\frac{1}{n}\sum_{j=1}^n \| \theta - X_{i}(j)\|_2$. Notice that:
    \begin{equation}
        \begin{aligned}
            1 \geq \mathbb{P} (\cM(X_i) \notin E_i) &\geq \sum_{j \in [K] \backslash i}\mathbb{P} (\cM(X_i ) \in E_j) \\
            &\geq \sum_{j \in [K] \backslash i} \exp(-n\varepsilon) \times \mathbb{P} (\cM(X_j ) \in E_j) \\
            &\geq \frac{ K \exp(-n\varepsilon)}{4}
        \end{aligned}
    \end{equation}
    Taking log of both sides, we have 
    \begin{equation}
        n\varepsilon + \log(4) \geq \log(K)
    \end{equation}
    It remains to calculate the packing number $K$. Notice that $M(\alpha, \cB_1^d, \| \cdot \|_2) \asymp N(\alpha, \cB_1^d, \| \cdot \|_2) \asymp \exp\left(\frac{1}{\alpha^2}\log(\alpha^2d)\right)$\footnote{$N$ denotes the covering number.}, where we assume $\alpha \gtrsim \frac{1}{\sqrt{d}}$ \citet[Equation~15.4]{yihongInfo}. This implies:
    \begin{equation}
        \alpha \geq \Tilde{\Omega} \left(\frac{1}{\sqrt{n\varepsilon + \log(4)}} \vee \frac{1}{\sqrt{d}} \right)
    \end{equation}
    where $\Tilde{\Omega}$ hides universal constant and logarithmic terms w.r.t. $d$. We conclude the proof by observing that $\mathcal{L}(\theta; X_i) \leq 1$, which implies that $\alpha \leq 1$.
\end{proof}

\section{Deferred Proofs for Data-dependent DP mechanism Design}
\def\dloc{\Delta_{\mathrm{local}}}
\def\diam{\mathsf{Diam}}
\newcommand{\laplace}[1]{\mathrm{Lap}\left(#1\right)}
\newcommand{\Exp}[1]{\mathbb{E}\left[#1\right]}
\def\ASAP{\mathsf{ASAP}}

\subsection{Pure DP Propose Test Release}
\begin{definition}[Local Sensitivity]\label{defn: local_sens}
    The local sensitivity of a query function \( q \) on a dataset \( X \) is defined as 
    \[
    \Delta_{\mathrm{Local}}^{q}(X) := \max_{X^\prime \simeq X} \|q(X) - q(X^{\prime})\|_2,
    \]
    where \( X^\prime \simeq X \) denotes that \( X^\prime \) is a neighboring dataset of \( X \).
\end{definition}

\label{appen:ptr}
We first present the original version of PTR in Algorithm~\ref{alg: ptr}. The pure DP version, obtained via the purification trick, is given in Algorithm~\ref{alg: pure_ptr}. Their privacy guarantees are stated as follows:
\begin{lemma}
    Algorithm~\ref{alg: ptr} satisfies $(2\varepsilon, \delta)$-DP and its purified version, Algorithm~\ref{alg: pure_ptr} satisfies $(2\varepsilon + \varepsilon^\prime, 0)$-DP
\end{lemma}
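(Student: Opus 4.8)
The plan is to treat the two claims separately, deriving the purified guarantee as a corollary of the approximate-DP guarantee plus Theorem~\ref{thm:pure_main}. For the first claim, that Algorithm~\ref{alg: ptr} is $(2\varepsilon,\delta)$-DP, I would follow the classical two-stage analysis of Propose-Test-Release \citep{dwork2009differential}. The algorithm decomposes into a \emph{testing} stage that privately estimates the distance $\gamma(D)$ from $D$ to the nearest dataset whose local sensitivity exceeds the proposed bound $b$, perturbs it with $\Lap(1/\varepsilon)$ noise, and compares the result against a threshold $\tau \asymp \log(1/\delta)/\varepsilon$; and a \emph{release} stage that, conditioned on the test passing, outputs $q(D)+\Lap(b/\varepsilon)$ coordinatewise. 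First I would observe that $\gamma$ has global sensitivity $1$, so by the Laplace mechanism the testing stage is $\varepsilon$-DP. Then, for any fixed pass/fail outcome, the release is $\varepsilon$-DP \emph{provided} both neighbors satisfy $\Delta_{\mathrm{Local}}^q(D),\Delta_{\mathrm{Local}}^q(D')\le b$, since in that case $q$ has sensitivity at most $b$ and the Laplace mechanism applies directly.

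The only route by which privacy can break is the event that the test \emph{passes} on an input that is in fact unstable; the threshold $\tau$ is chosen precisely so that, by the Laplace tail bound, this event has probability at most $\delta$. I would then combine the $\varepsilon$-DP test with the conditionally-$\varepsilon$-DP release through the basic composition theorem, absorbing the unstable-pass event into the additive $\delta$ slack, to conclude $(2\varepsilon,\delta)$-DP for Algorithm~\ref{alg: ptr}.

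For the second claim, that Algorithm~\ref{alg: pure_ptr} is $(2\varepsilon+\varepsilon',0)$-DP, I would invoke Theorem~\ref{thm:pure_main} as a black box. Having established that the PTR output is $(2\varepsilon,\delta)$-DP and that, after the clipping built into Algorithm~\ref{alg: pure_ptr}, it lies in a bounded domain satisfying Assumption~\ref{assump:domain}, applying the purification map $\amac$ with additional budget $\varepsilon'$ converts the $(2\varepsilon,\delta)$-DP guarantee into $(2\varepsilon+\varepsilon')$-pure DP. Because purification is a data-independent randomized post-processing, no additional privacy is spent beyond $\varepsilon'$, and the claim follows by setting the input privacy parameter of Theorem~\ref{thm:pure_main} to $2\varepsilon$.

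I expect the main obstacle to lie in the first claim, specifically in making the conditional ``stable versus unstable'' decomposition fully rigorous: the release is only $\varepsilon$-DP on the event that \emph{both} neighbors are stable, so one must carefully track how the unstable-pass event interacts with the $e^{\varepsilon}$ multiplicative factor from the test while keeping the total additive error at $\delta$. A secondary subtlety is ensuring the $\perp$ (abort) symbol is handled consistently, treated as a distinguished output point so that both the composition argument and the subsequent post-processing by $\amac$ remain valid. For the second claim the only verification needed is that the boundedness Assumption~\ref{assump:domain} genuinely holds for the clipped PTR output, so that the $W_\infty$-calibrated Laplace step underlying Theorem~\ref{thm:pure_main} is applicable.
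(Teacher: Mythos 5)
Your proposal is correct and takes essentially the same route as the paper: the paper disposes of the first claim by citing \citet[Proposition~7.3.2]{salivad19}, whose content is exactly the test/release decomposition, sensitivity-$1$ distance statistic, and Laplace-tail argument you re-derive, and it proves the second claim, as you do, by combining the $(2\varepsilon,\delta)$-DP guarantee of PTR with the post-processing property of DP and the privacy guarantee of Algorithm~\ref{alg:main} (Theorem~\ref{thm:pure_main}). One small correction: Algorithm~\ref{alg: pure_ptr} contains no clipping step; the abort symbol $\perp$ is handled by replacing it with a sample from $\mathrm{Unif}(\Theta)$ (a randomized post-processing that preserves $(2\varepsilon,\delta)$-DP), and this replacement, rather than clipping, is what places the input of $\amac$ in the domain $\Theta$ required by Assumption~\ref{assump:domain}.
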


\begin{proof}
The privacy guarantee for Algorithm~\ref{alg: ptr} is based on \citet[Proposition 7.3.2]{salivad19}. For Algorithm~\ref{alg: pure_ptr}, the privacy guarantee follows from the post-processing property of differential privacy and the privacy guarantee of Algorithm~\ref{alg:main}.
\end{proof}

\begin{algorithm}[h!]
\caption{$\cM_{\mathrm{PTR}}(X, q, \varepsilon, \delta, \beta)$: Propose-Test-Release \citep{dwork2009differential}}
\label{alg: ptr}
\setcounter{AlgoLine}{0}
\nl \textbf{Input:} Dataset $X$; privacy parameters $\varepsilon, \delta$; proposed bound $\beta$; query function $q : \mathcal{X} \to \Theta$\\
\nl \textbf{Compute:} $\cD_\beta^q (X)=\min\limits_{X^\prime} \{d_{\mathrm{Hamming}}(X, X^\prime) \mid \Delta_{\mathrm{Local}}^q (X^\prime ) > \beta \}$\\
\nl \uIf{$\mathcal{D}^{q}_{\beta}(X) + \mathrm{Lap} \left( \frac{1}{\varepsilon} \right) \leq \frac{\log(1/\delta)}{\varepsilon}$}{
 \nl   Output $\perp$
}
\nl \Else{
    \nl Release $f(X) + \text{Lap} \left( \frac{\beta}{\varepsilon} \right)$
}
\end{algorithm}

\begin{algorithm}[h!]
\caption{Pure DP Propose-Test-Release}
\label{alg: pure_ptr}
\setcounter{AlgoLine}{0}
\nl \textbf{Input:} Dataset $X$; privacy parameters $\varepsilon,  \varepsilon^\prime,\delta$; proposed bound $\beta$; query function $q : \mathcal{X} \to \Theta$, level of uniform smoothing $\omega$ \\
\nl $\theta \leftarrow \cM_{\mathrm{PTR}}(X, q, \varepsilon, \delta, \beta)$ \\
\nl \If{$\theta==\perp$}{
 \nl $u \sim \mathrm{Unif}(\Theta)$\\
 \nl $\theta \leftarrow u$
}
\nl $\theta_{\mathrm{out}} \leftarrow \amac(\theta, \Theta, \varepsilon^\prime, \varepsilon, \delta, \omega)$ \Comment{Algorithm~\ref{alg:main}} \\
\nl \textbf{Output:} $\theta_{\mathrm{out}}$
\end{algorithm}

\subsection{ Privately Bounding Local Sensitivity}\label{apx: pf_bound_local}
We assume query function $q: \cX^* \rightarrow \Theta$ with $\Theta \subset \mathbb{R}^d$ being a convex set and $\diam_2(\Theta)=R$. Assume the global sensitivity of local sensitivity is upper bounded by $1$: $\max\limits_{X \simeq X^\prime}\|\Delta_{\mathrm{Local}}^{q}(X) - \Delta_{\mathrm{Local}}^{q}(X^\prime )\|_2 \leq 1$. The purified version of PTR based on privately releasing local sensitivity is stated in Algorithm~\ref{alg: bound_local} and its utility guarantee is included in Theorem~\ref{thm: re_util_loc}.

\begin{algorithm}[htb]
\setcounter{AlgoLine}{0}
\caption{}
\label{alg: bound_local}
\nl \textbf{Input:} Dataset $D$; privacy parameters $\varepsilon,\varepsilon^\prime,\delta$; proposed bound $\beta$; query function $q : \mathcal{X}^* \to \Theta \subset \mathbb{R}^d$ with $\mathsf{Diam}_2(\Theta)=R$, level of uniform smoothing $\omega$ \\
\nl $\hat{\beta}=\Delta_{\mathrm{Local}}^{q}(D) + \laplace{1/\varepsilon} + \log(2/\delta)/\varepsilon$\\
\nl $q_{\apx}\leftarrow \mathrm{Proj}_{\Theta} (q(D) + \mathrm{Lap}^{\otimes d}(\hat{\beta}/\varepsilon))$\\
\nl $q_{\pure} \leftarrow \amac(\Theta, \varepsilon^\prime, \omega, R)$ \Comment{Algorithm~\ref{alg:main}}\\
\nl \textbf{Output:} $q_{\pure}$
\end{algorithm}
\begin{theorem}[Restatement of theorem~\ref{thm: util_loc_release}]\label{thm: re_util_loc}
    Algorithm~\ref{alg: bound_local} satisfies $(3\varepsilon, 0)$-DP . Moreover, $$\mathbb{E}\left[\|q_{\mathrm{out}}(D)-q(D)\|_2 \right] \leq \Tilde{\cO}\left( 
    \frac{d^{1/2} \Delta_{\mathrm{Local}}^q (D)}{\varepsilon} + \frac{d^{3/2}}{\varepsilon^2}\right).$$
\end{theorem}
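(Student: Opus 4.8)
The plan is to split the statement into its privacy half and its utility half, and in each to peel off the three stages of Algorithm~\ref{alg: bound_local}: the private sensitivity bound $\hat\beta$, the noisy query release $q_{\apx}$, and the purification call producing $q_{\pure}$. For privacy I would first show that the map $D\mapsto q_{\apx}$ is $(2\varepsilon,\delta)$-DP, and then invoke the purification guarantee (Theorem~\ref{thm:pure_main}) with $\varepsilon'=\varepsilon$ to upgrade this to $(2\varepsilon+\varepsilon,0)=(3\varepsilon,0)$-pure DP. The projection $\mathrm{Proj}_\Theta$ forces $q_{\apx}\in\Theta$, so Assumption~\ref{assump:domain} holds with diameter $R$ and Theorem~\ref{thm:pure_main} applies verbatim; since purification is merely a post-processing layer on the $(2\varepsilon,\delta)$-DP output, the $3\varepsilon$ accounting is immediate once the $(2\varepsilon,\delta)$ claim is in hand.

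For the $(2\varepsilon,\delta)$-DP of $q_{\apx}$ I would analyze the joint mechanism $D\mapsto(\hat\beta,q_{\apx})$ and use that $q_{\apx}$ is a marginal, hence a post-processing, of it. The scalar release $\hat\beta$ is $\varepsilon$-DP: by the standing assumption the global sensitivity of $\Delta_{\mathrm{Local}}^q(\cdot)$ is $1$, so $\mathrm{Lap}(1/\varepsilon)$ suffices, and the deterministic shift $\log(2/\delta)/\varepsilon$ does not affect privacy. The crux is the second stage, where the Laplace scale $\hat\beta/\varepsilon$ is itself a private function of the data. Here I would run the Propose-Test-Release argument: define the good event $G=\{\hat\beta\ge\Delta_{\mathrm{Local}}^q(D)\}$, noting the shift $\log(2/\delta)/\varepsilon$ is calibrated exactly so the Laplace tail gives $\mathbb{P}[G^c]\le\delta/2$. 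Conditioned on $G$, the injected noise dominates the local sensitivity of $q$ at $D$, so $q(D)+\mathrm{Lap}^{\otimes d}(\hat\beta/\varepsilon)$ contributes a further $\varepsilon$ of privacy loss against any neighbor (using $|\Delta_{\mathrm{Local}}^q(D)-\Delta_{\mathrm{Local}}^q(D')|\le 1$ to compare the two scales), while the failure event $G^c$ is charged to the additive $\delta$. Composing the two stages yields $(2\varepsilon,\delta)$-DP.

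For utility I would bound $\mathbb{E}\|q_{\pure}-q(D)\|_2$ by the triangle inequality through $q_{\apx}$, i.e. $\mathbb{E}\|q_{\pure}-q(D)\|_2\le\mathbb{E}\|q_{\apx}-q(D)\|_2+\mathbb{E}\|q_{\pure}-q_{\apx}\|_2$. For the first term, since $\mathrm{Proj}_\Theta$ is non-expansive and $q(D)\in\Theta$ is its own projection, $\|q_{\apx}-q(D)\|_2\le\|\mathrm{Lap}^{\otimes d}(\hat\beta/\varepsilon)\|_2$; applying Lemma~\ref{lem: laplace_perturb} conditionally on $\hat\beta$ and then taking expectations gives $\sqrt{2d}\,\mathbb{E}[\hat\beta]/\varepsilon$, and since $\mathbb{E}[\hat\beta]=\Delta_{\mathrm{Local}}^q(D)+\log(2/\delta)/\varepsilon$ this equals $\tfrac{\sqrt{2d}\,\Delta_{\mathrm{Local}}^q(D)}{\varepsilon}+\tfrac{\sqrt{2d}\log(2/\delta)}{\varepsilon^2}$. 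For the second term I would invoke the $\ell_2$ purification bound obtained from Equation~\eqref{eq:proof_teck} with $q=2$, namely $\mathcal{O}\big(\tfrac{dR}{\varepsilon}(\tfrac{\delta}{2\omega})^{1/d}+\omega R\big)$. Finally, using the parameter regime of Remark~\ref{rmk:delta_d} where $\log(1/\delta)\sim d$ (so $(\delta/2\omega)^{1/d}$ and $\omega$ are both controlled), the factor $\log(2/\delta)$ becomes $\tilde{\mathcal{O}}(d)$ and the purification term is dominated, collapsing the total to $\tilde{\mathcal{O}}\big(\tfrac{d^{1/2}\Delta_{\mathrm{Local}}^q(D)}{\varepsilon}+\tfrac{d^{3/2}}{\varepsilon^2}\big)$.

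The step I expect to be the main obstacle is the privacy of the second stage: because the noise magnitude $\hat\beta/\varepsilon$ is a private function of the data, this is not a vanilla Laplace mechanism, and one must argue carefully that conditioning on the high-probability event $G$ (together with the neighbor bound $|\Delta_{\mathrm{Local}}^q(D)-\Delta_{\mathrm{Local}}^q(D')|\le1$) controls the privacy loss while the low-probability failure is absorbed into $\delta$. A secondary bookkeeping point is the consistent tracking of the $\ell_1$-versus-$\ell_2$ calibration of the additive Laplace noise relative to the $\ell_2$-defined local sensitivity, which is what pins the noise scale at $\hat\beta/\varepsilon$ and hence yields the $\sqrt{d}$ (rather than $d$) dependence in the leading utility term.
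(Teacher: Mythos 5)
Your proposal matches the paper's own proof in both structure and substance: the paper likewise shows $\hat\beta$ is $\varepsilon$-DP, argues that $q(D)+\mathrm{Lap}^{\otimes d}(\hat\beta/\varepsilon)$ is $(\varepsilon,\delta)$-DP via exactly your high-probability-event argument (phrased there as probabilistic DP, citing \citet[Definition~2.2]{dwork2010boosting}), composes to $(2\varepsilon,\delta)$-DP, invokes Theorem~\ref{thm:pure_main} for $(3\varepsilon,0)$-DP, and proves utility by the same triangle inequality through $q_{\apx}$ with projection non-expansiveness, a $\sqrt{2d}\,\hat\beta/\varepsilon$ bound on the first Laplace term, and a parameter choice ($\omega$, $\delta$ with $\log(1/\delta)=\Tilde{\cO}(d)$ via Corollary~\ref{lem: appropriate_delta}) that makes the purification term $\cO(1/\varepsilon^2)$. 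The only differences are cosmetic (you compute $\mathbb{E}[\hat\beta]$ where the paper bounds $\mathbb{E}[\|Z_1\|_2^2]$, and you unpack the conditioning argument the paper compresses into a citation), so your proof is correct and essentially the paper's.
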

\begin{proof}
    First notice that $\hat{\beta}$ satisfies $\varepsilon$-DP by the privacy guarantee from Laplace mechanism and the assumption that global sensitivity of $\Delta_{\mathrm{Local}}^q(D)$ is upper bounded by $1$. Second, we notice that $\mathbb{P}(\hat{\beta}>\Delta_{\mathrm{Local}}^q(D))=\mathbb{P}(\mathrm{Lap}(1/\varepsilon) \geq \log(\delta/2)/\varepsilon)=1-\delta$. This implies $q(D) + \mathrm{Lap}^{\otimes d}(\hat{\beta}/\varepsilon)$ satisfies $(\varepsilon,\delta)$-probabilistic DP (\citet[Definition~2.2]{dwork2010boosting}), thus satisfies $(\varepsilon, \delta)$-DP. By post-processing and simple composition, $q_{apx}$ satisfies $(2\varepsilon, \delta)$-DP. Finally, using $\cA_{\mathrm{pure}}$ under appropriate choice of $\delta$, we have $q_{pure}$ satisfies $(3\varepsilon, 0)$-DP by Theorem~\ref{thm:pure_main}.
    
    We now prove the utility guarantee. For notational convenience, we denote $z_0 \sim \laplace{1/\varepsilon}$, $Z_1 \sim \mathrm{Lap}^{\otimes d}(\hat{\beta}/\varepsilon)$ and $Z_2 \sim \mathrm{Lap}^{\otimes d}(\Delta/\varepsilon)$. By definition of $q_{\mathrm{pure}}$, we have: %
    \begin{equation}
        \begin{aligned}
            \mathbb{E}\left[\|q_{\mathrm{pure}} - q(D)\|_2\right] &= \mathbb{E}\left[\|q_{apx}-q(D) + Z_2 \|_2 \right] + \omega C\\ 
            &=\mathbb{E}\left[\|\mathrm{Proj}_{\Theta} \left(q(D) + Z_1\right)-q(D) + Z_2\|_2 \right] + \omega C \\
            &\leq \mathbb{E}\left[\|Z_1\|_2 \right] + \mathbb{E}\left[\|Z_2\|_2 \right] + \omega C
        \end{aligned}
    \end{equation}
    Notice that:
    \begin{equation*}
        \begin{aligned}
            \mathbb{E}\left[\|Z_1\|_2\right] &\leq \sqrt{ \mathbb{E}\left[  \| Z_1 \|_2^2 \right] } \\
            &= \sqrt{ d \mathbb{E}[Z_{11}^2]} \qquad \text{($Z_{11}$ denotes first element of $Z_1$)}  \\
            &\leq \frac{\Delta_{\mathrm{Local}}^q(D) + \log(2/\delta)/\varepsilon}{\varepsilon} + \frac{1}{\varepsilon}\mathbb{E}\left[ |z_0| \right]  \\
            &= \cO\left( \frac{\sqrt{d}(\Delta_{\mathrm{Local}}^q (D) + \nicefrac{1+\log(2/\delta)}{\varepsilon})}{\varepsilon}\right) \\
        \end{aligned}
    \end{equation*}

    Now, set $\omega=\frac{1}{100}\wedge \frac{1}{C\varepsilon^2}$ and $\delta=\frac{2\omega}{(16dC\varepsilon)^d}$. By Corollary~\ref{lem: appropriate_delta}, we have :
    \begin{equation*}
        \begin{aligned}
            \mathbb{E}[\|Z_2\|_2] + \omega C\leq \frac{2}{\varepsilon^2}.
        \end{aligned}
    \end{equation*}
    Also notice that $\log(2/\delta) = \Tilde{\cO}(d)$. Thus, 
    \begin{equation*}
           \mathbb{E}\left[\|q_{\mathrm{pure}} - q(D)\|_2\right] \leq \Tilde{\cO}\left( \frac{\sqrt{d}\Delta_{\mathrm{Local}}^q (D)}{\varepsilon} + \frac{d^{3/2}}{\varepsilon^{2}}\right)    
    \end{equation*}
\end{proof}

\subsection{Private Mode Release}
\label{appen:mode}
The mode release algorithm discussed in Section~\ref{sec:argmax} is provided in Algorithm~\ref{alg: mode}.

\begin{algorithm}[h!] 
\setcounter{AlgoLine}{0}
\caption{Pure DP Mode Release} \label{alg: mode} 
\nl \textbf{Input:} Dataset $D$, pure DP parameter $\varepsilon$\\
\nl \textbf{Set:} $\log\lrp{1/\delta}= d \log\lrp{2 d^3/\varepsilon}$, where $d = \log_2 |\cX|$.\\
\nl Compute the mode $f(D)$ and its frequency $\occ_1$, as well as the frequency $\occ_2$ of the second most frequent item.\\
\nl Compute the gap: $\mathcal{D}^{f}_{0}(D) \gets \left\lceil \frac{\occ_1 - \occ_2}{2} \right\rceil$. \\ 
\nl \uIf{$\mathcal{D}^{f}_{0}(D) -1 + \Lap \left( \frac{1}{\varepsilon} \right) \leq \frac{\log(1/\delta)}{\varepsilon}$} { \nl $u_\apx \gets \perp$  } \nl \Else{ \nl $u_\apx \gets f(D)$ } 
\nl $u_\pure \gets$ Algorithm~\ref{alg:discrete} with inputs $ \varepsilon, \delta,$ $u_\apx, \cY = \cX,$ and $\mathsf{Index}=\mathsf{id}$, the identity map. \\
\nl \textbf{Output:} $u_\pure$ \end{algorithm}

\begin{proof}
    [Proof of Theorem~\ref{thm:mode}]
    By \citet[Proposition~3.3]{salivad19} and that $dist(D,\{D':f(D')\neq f(D)\})=\left\lceil \frac{\occ_1 - \occ_2}{2} \right\rceil$, we know $u_\apx$ satisfies $(\varepsilon,\delta)$-DP. By \citet[Proposition~3.4]{salivad19}, when $\occ_1 - \occ_2\geq 4\lceil \ln(1/\delta)/\varepsilon \rceil$, $u_\apx$ is the exact mode, i.e., $u_\apx=f(D)$, with probability at least $1-\delta$. Choosing $\delta<\frac{\varepsilon^d}{ (2d)^{3d}}$, by Theorem~\ref{thm:index_embed} and the union bound, we have $\P\lrp{u_\pure=f(D)}\geq 1-\delta-2^{-d}-\frac{d}{2} e^{-d}\geq 1-\frac{3}{|\cX|}$.
\end{proof}

\def\xty{X^\top y}
\def\xtx{X^\top X}
\def\bdx{a}
\def\bdy{b}
\def\x{\cX}
\def\y{\cY}

\subsection{Private Linear Regression Through Adaptive Sufficient Statistics Perturbation}\label{apx: pf_adassp}
We investigate the problem of differentially private linear regression. Specifically, we consider a fixed design matrix \( X \in \mathbb{R}^{n \times d} \) and a response variable \( Y \in \mathbb{R}^n \), which represent a collection of data points \( (x_i, y_i)_{i=1}^n \), where \( x_i \in \mathbb{R}^d \) and \( y_i \in \mathbb{R} \). Assuming that there exists \( \theta^* \in \Theta \) such that \( Y = X\theta^* \), our goal is to find a differentially private estimator \( \theta \) that minimizes the mean squared error:
\[
\mathrm{MSE}(\theta) = \frac{1}{2n} \|Y - X\theta\|_2^2
\]
We assume prior knowledge of the magnitude of the dataspace: \( \|\mathcal{X}\| := \sup_{x \in \mathcal{X}} \|x\|_2 \) and \( \|\mathcal{Y}\| := \sup_{y \in \mathcal{Y}} \|y\|_2 \). Our algorithm operates under the same parameter settings as \citet[Algorithm~2]{wang2018revisiting}. To enable our purification procedure, we first derive a high-probability upper bound on \( \|\Tilde{\theta}\|_2 \), where \( \Tilde{\theta} \) is the output of AdaSSP. Subsequently, we clip the output of AdaSSP and apply the purification procedure. The implementation details are provided in Algorithm~\ref{alg: pure_ada_ssp}.

\begin{algorithm}[htb]
\setcounter{AlgoLine}{0}
\caption{$\cM_{\mathrm{pure-AdaSSP}}$}
\label{alg: pure_ada_ssp}
\nl \textbf{Input:} Data $X,y$; privacy parameters $\epsilon,\varepsilon^\prime,\delta$, Bounds: $\|\cX\|$, $\|\cY\|$, level of smoothing $\omega$ \\
\nl $\theta_{apx} \leftarrow \mathrm{AdaSSP}(X,y,\varepsilon,\delta, \|\cX\|, \|\cY\|)$ \Comment{\citet[Algorithm~2]{wang2018revisiting}}\\
\nl Propose a high probability upper bound $\Tilde{R}=\Tilde{\cO}\left( \nicefrac{(1 + n\varepsilon/d)\|\cY\|}{\|\cX\|} \right)$\\
\nl Construct trust region $\Theta:=\cB_{\ell_2}^d(\Tilde{R})$\\
\nl Norm clipping $\theta_{apx} \leftarrow \mathrm{Proj}_{\Theta}(\theta_{apx})$ \\ 
\nl $\theta_{\pure} \leftarrow \amac(\theta_{apx}, \Theta, \varepsilon, \varepsilon^\prime, \delta)$\\
\nl \textbf{Output:} $\theta_{\mathrm{pure}}$
\end{algorithm}

\begin{theorem}[Restatement of theorem~\ref{thm: util_adassp}]
Assume $\xtx$ is positive definite and $\|\mathcal{Y}\| \lesssim \|\mathcal{X}\|\|\theta^*\|$. With probability $1-\zeta-1/n^2$, the output $\theta_{pure}$ from Algorithm~\ref{alg: pure_ada_ssp} satisfies:
\begin{equation}
    \begin{aligned}
        \mathrm{MSE}(\theta_{pure}) - \mathrm{MSE}(\theta^*) 
        &\leq \Tilde{\cO} \left(\frac{\|\cX\|^2}{\varepsilon^2 n^4} + \frac{d \|\mathcal{X}\|^2 \|\theta^*\|^2}{n\epsilon}  \wedge \frac{d^2 \|\mathcal{X}\|^4 \|\theta^*\|^2}{\epsilon^2 n^2 \lambda_{\mathrm{min}}} \right) \\ 
    \end{aligned}
\end{equation}    
where $\lambda_{\mathrm{min}}:=\lambda_{\mathrm{min}}(\xtx/n)$.
\end{theorem}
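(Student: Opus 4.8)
The plan is to handle privacy by post-processing and then bound the excess risk by splitting it into the error already present in the AdaSSP output and the extra error introduced by purification. Privacy is the easy direction: $\mathrm{AdaSSP}$ is $(\varepsilon,\delta)$-DP, the proposal of $\Tilde{R}$ and the clipping $\mathrm{Proj}_{\Theta}$ are data-independent post-processings of its output, so the clipped $\theta_{apx}$ remains $(\varepsilon,\delta)$-DP; invoking $\amac$ with $\varepsilon'=\varepsilon$ and the $\delta$ of Corollary~\ref{lem: appropriate_delta} (chosen in the admissible range so that $\log(1/\delta)=\Tilde{\cO}(d)$) then gives $(2\varepsilon,0)$-pure DP via Theorem~\ref{thm:pure_main} and the post-processing property.

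For utility, since $Y=X\theta^*$ we have $\mathrm{MSE}(\theta^*)=0$, so it suffices to bound $\mathrm{MSE}(\theta_{pure})$. I would first verify that the trust region is valid: with probability at least $1-\zeta$ the raw AdaSSP estimate satisfies $\|\theta_{apx}\|_2\le \Tilde{R}$, so the projection acts as the identity on this event. This uses the closed form $\theta_{apx}=(X^\top X+\hat\lambda I)^{-1}(X^\top Y+Z)$, the operator-norm bound $\|(X^\top X+\hat\lambda I)^{-1}\|_{\mathrm{op}}\le 1/\hat\lambda$, the deterministic estimate $\|X^\top Y\|_2\le n\|\cX\|\|\cY\|$, a Gaussian tail bound on the injected noise $Z$, and the AdaSSP ridge scale $\hat\lambda\sim \sqrt{d\log(1/\delta)}\,\|\cX\|^2/\varepsilon$; together these yield exactly $\Tilde{R}=\Tilde{\cO}((1+n\varepsilon/d)\|\cY\|/\|\cX\|)$.

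The core step is a quadratic-form decomposition. Writing $\theta_{pure}-\theta^*=(\theta_{apx}-\theta^*)+(\theta_{pure}-\theta_{apx})$ and using $\lambda_{\max}(X^\top X)\le n\|\cX\|^2$, I obtain $\mathrm{MSE}(\theta_{pure})\le 2\,\mathrm{MSE}(\theta_{apx})+\|\cX\|^2\|\theta_{pure}-\theta_{apx}\|_2^2$. On the good event the first term is controlled by the AdaSSP prediction-error guarantee of \citet{wang2018revisiting}, which after the substitution $\log(1/\delta)=\Tilde{\cO}(d)$ collapses to the $\min\{\cdot,\cdot\}$ expression in the statement. For the second term I condition on the uniform-mixing step not firing (probability $1-1/n^2$, since $\omega=1/n^2$) and on concentration of the Laplace noise, so that Corollary~\ref{lem: appropriate_delta} with $C=2\Tilde{R}$ gives $\|\theta_{pure}-\theta_{apx}\|_2\lesssim \frac{1}{n^2\varepsilon}+\frac{\Tilde{R}}{n^2}$. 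Squaring and multiplying by $\|\cX\|^2$ produces the leading purification term $\frac{\|\cX\|^2}{\varepsilon^2n^4}$ together with a subleading term of order $\frac{\varepsilon^2\|\cX\|^2\|\theta^*\|^2}{n^2d^2}$ (using $\|\cY\|\lesssim\|\cX\|\|\theta^*\|$), and the latter is dominated by the AdaSSP rate. A union bound over the $\zeta$- and $1/n^2$-events gives the claimed confidence.

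I expect the main obstacle to be the trust-region step: choosing $\Tilde{R}$ so that it is simultaneously a genuine high-probability upper bound on $\|\theta_{apx}\|_2$ and tight enough that the induced purification error $\|\cX\|^2\Tilde{R}^2/n^4$ is absorbed by the AdaSSP rate. This forces one to track precisely how the adaptive ridge $\hat\lambda$ balances against the noise added to $X^\top Y$, which is the only point where the internal mechanics of AdaSSP enter the argument; everything else is a routine application of the purification error bound and the smoothness of the quadratic loss.
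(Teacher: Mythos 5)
Your proposal is correct and follows essentially the same route as the paper's proof: localize the AdaSSP output by a high-probability radius $\Tilde{R}$ obtained from the closed form of the estimator, run Algorithm~\ref{alg:main} with $\omega=1/n^2$ and $\delta$ chosen so that $\log(1/\delta)=\Tilde{\cO}(d)$, and decompose the excess MSE into the AdaSSP error (Lemma~\ref{lem: wang_lemma}) plus a purification term controlled by $\lambda_{\max}(\xtx)\le n\|\cX\|^2$ and concentration of the Laplace noise (Lemma~\ref{lem: conc_l1_exp}), followed by a union bound.

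Two remarks. First, your factor-two decomposition $\mathrm{MSE}(\theta_{\mathrm{pure}})\le 2\,\mathrm{MSE}(\theta_{apx})+\|\cX\|^2\|\theta_{\mathrm{pure}}-\theta_{apx}\|_2^2$ is actually cleaner than the paper's computation, which writes $\mathrm{MSE}(\theta_{\mathrm{pure}})-\mathrm{MSE}(\tilde\theta)=\frac{1}{n}Z_2^\top \xtx Z_2$ and thereby drops the cross term $-\frac{1}{n}(Y-X\tilde\theta)^\top X Z_2$; since $\mathrm{MSE}(\theta^*)=0$ under the exact-fit assumption, your extra factor of $2$ is harmless. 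Second, there is one imprecision you should patch: AdaSSP perturbs \emph{both} sufficient statistics, so the closed form is $\theta_{apx}=(\xtx+E_1+\hat\lambda I)^{-1}(X^\top Y+E_2)$, where $E_1$ is a symmetric Gaussian matrix whose operator norm is of the same order as the adaptive ridge $\hat\lambda\sim\sqrt{d\log(1/\delta)}\,\|\cX\|^2/\varepsilon$. Consequently the bound $\|(\xtx+\hat\lambda I)^{-1}\|_{\mathrm{op}}\le 1/\hat\lambda$ that you invoke does not apply verbatim, because the negative eigenvalues of $E_1$ can eat into $\hat\lambda$. The paper handles this by showing that, with high probability, $\xtx+\hat\lambda I+E_1\succ\frac{1}{2}(\xtx+\hat\lambda I)$ (AdaSSP's $\hat\lambda$ is chosen precisely so that it dominates $\|E_1\|_2$), which costs only a factor of $2$ and yields the same $\Tilde{R}=\Tilde{\cO}((1+n\varepsilon/d)\|\cY\|/\|\cX\|)$. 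With that correction, your argument goes through exactly as the paper's does.
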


\begin{proof}
    
First, we introduce a utility Lemma from \citet{wang2018revisiting}

\begin{lemma}[Theorem 3 from \citep{wang2018revisiting}]\label{lem: wang_lemma}    
Under the setting of \citet[Algorithm~2]{wang2018revisiting}, AdaSSP satisfies $(\varepsilon, \delta)$-DP. Assume $\|\mathcal{Y}\| \lesssim \|\mathcal{X}\|\|\theta^*\|$, then with probability $1 - \zeta$,
\begin{equation}
    MSE(\tilde{\theta}) - MSE(\theta^*) \leq \cO \left(  \frac{\sqrt{d \log \left( \frac{d^2}{\zeta} \right)} \|\mathcal{X}\|^2 \|\theta^*\|^2}{n\epsilon / \sqrt{\log \left( \frac{6}{\delta} \right)}}  \wedge \frac{\|\mathcal{X}\|^4 \|\theta^*\|^2 \operatorname{tr}[(X^T X)^{-1}]}{n\epsilon^2 / [\log \left( \frac{6}{\delta} \right) \log \left( \frac{d^2}{\zeta} \right)]} \right)
\end{equation}
\end{lemma}

Now, we prove the utility guarantee for our results. In order to operate the purification technique, we need to estimate the range $\|\Tilde{\theta}\|_2$ in order to apply the uniform smoothing technique. 
Notice that
\begin{equation}
    \begin{aligned}
        \|\Tilde{\theta}\|_2 \leq\underbrace{ \|(\xtx + \lambda I_d + E_1)^{-1} \|_2}_{(a)} \, \underbrace{\|\xty + E_2 \|_2}_{(b)}
    \end{aligned}
\end{equation}
with $E_2 \sim \frac{\sqrt{\log(6/\delta)}\|\cX\|\|\cY\|}{\varepsilon/3}\cN(0, I_d)$ and $E_1 \sim \frac{\sqrt{\log(6/\delta)}\|\cX\|^2}{\varepsilon/3}Z$, where $Z\in \mathbb{R}^{d\times d}$ is a symmetric matrix, and each entry in its upper-triangular part (including the diagonal) is independently sampled from $\mathcal{N}(0, 1)$. \textit{Under the high probability event in Lemma~\ref{lem: wang_lemma}}, we upper bound $(a)$ and $(b)$ separately:

\textbf{For (a):}
\begin{equation*}
    \begin{aligned}
        \|(\xtx + \lambda I_{d} + E_1)^{-1}\|_2 &= \frac{1}{\lambda_{\mathrm{min}} (\xtx + \lambda I_{d} + E_1)} \\
    \end{aligned}
\end{equation*}
By the choice of $\lambda$ and concentration of $\|E_1\|_2$, we have $(\xtx + \lambda I_{d} + E_1) \succ \frac{1}{2}(\xtx + \lambda I_{d})$, which allows a lower bound on $\lambda_{\mathrm{min}}(\xtx + \lambda I_{d} + E_1)$:
\begin{equation*}
    \begin{aligned}
        2\lambda_{\mathrm{min}}(\xtx + \lambda I_{d} + E_1) &\geq \lambda_{\mathrm{min}}(\xtx + \lambda I_{d}) \\
        &\geq \lambda_{\mathrm{min}}(\xtx) + \lambda \\
        &\geq \lambda_{\mathrm{min}}(\xtx) +  \frac{\sqrt{d\log(6/\delta)\log(2d^2/\zeta)}\|\cX\|^2}{\varepsilon/3} -\lambda_{\mathrm{min}}^*\\
        &\geq  \frac{\sqrt{d\log(6/\delta)\log(2d^2/\zeta)}\|\cX\|^2}{\varepsilon/3}
    \end{aligned}
\end{equation*}
where the third inequality is by setting $\lambda=\max\left\{ 0, \frac{\sqrt{d\log(6/\delta)\log(2d^2/\zeta)}\|\cX\|^2}{\varepsilon/3} -\lambda^*_{\mathrm{min}}  \right\}$, where $\lambda_{\mathrm{min}}^*$ is a high probability lower bound on $\lambda_{\mathrm{min}}(\xtx)$. Thus,
\begin{equation*}
    \begin{aligned}
        \|\xtx + \lambda I_{d} + E_1\|_2 &= \frac{1}{\lambda_{\mathrm{min}} (\xtx + \lambda I_{d} + E_1)} \\
        &\leq \frac{2}{\lambda_{\mathrm{min}} (\xtx + \lambda I_{d})} \\
        &\leq \frac{2\varepsilon}{3\sqrt{d\log(6/\delta)\log(2d^2/\zeta)}\|\cX\|^2}\\
        &=\Tilde{\cO}\left( \frac{\varepsilon}{\sqrt{d\log(6/\delta)}\|\cX\|^2} \right)
    \end{aligned}
\end{equation*}

For (b), by triangle inequality
\begin{equation*}
    \begin{aligned}
        (b) = \|\xty + E_2\|_2 \leq \|\xty\|_2 + \|E_2\|_2 \leq n\|\cX\| \|\cY\| + \|E_2\|_2
    \end{aligned}
\end{equation*}

Apply \citet[Lemma~6]{wang2018revisiting}, we have w.p. at least $1-\beta$:
\begin{equation*}
    \|E_2\|_2 = \cO\left(\frac{\sqrt{d}\|\cX\|\|\cY\|\sqrt{\log(6/\delta)\log(d/\beta)}}{\varepsilon}\right)
\end{equation*}
Thus, w.p. at least $1-\beta$ over the randomness of $E_2$,
\begin{equation*}
    \begin{aligned}
        (b) &\leq \cO\left(n\|\cX\|\|\cY\| + \frac{\sqrt{d}\|\cX\|\|\cY\|\sqrt{\log(6/\delta)\log(d/\beta)}}{\varepsilon}\right) \\
        &= \Tilde{\cO}\left(n\|\cX\|\|\cY\| + \frac{\sqrt{d}\|\cX\|\|\cY\|\sqrt{\log(6/\delta)}}{\varepsilon} \right)
    \end{aligned}
\end{equation*} 
Putting everything together under the high probability event:
\begin{equation*}
    \begin{aligned}
        \|\Tilde{\theta}\|_2 &\leq \Tilde{\cO} \left( \frac{\|\cY\|}{\|\cX\|}\left( 1 + \frac{n\varepsilon}{\sqrt{d\log(6/\delta)}} \right) \right) \\
        &\leq \Tilde{\cO} \left( \frac{\|\cY\|}{\|\cX\|}\left( 1 + \frac{n\varepsilon}{d} \right) \right) := \Tilde{r}
    \end{aligned}
\end{equation*}

Now, we apply purification Algorithm~\ref{alg:main} with $\Theta=\cB_{\ell_2}^d(\Tilde{r})$, $\omega=\frac{1}{n^2}$, $\delta=\frac{2\omega}{(16d^{3/2}\Tilde{r}n^2)^d}$. This parameter configuration implies $\log(1/\delta)=\Tilde{\cO}(d)$ and Wasserstein-$\infty$ distance $\Delta=\frac{1}{4n^2d}$ (Line 2 of Algorithm~\ref{alg:main})

Finally, it remains to bound the estimation error for $\theta_{pure}$. Let's denote the additive Laplace noise from purification by $Z_2 \sim \mathrm{Lap}^{\otimes d}(\Delta/\varepsilon)$, and the purified estimator $\theta_{pure}:= \Tilde{\theta} + Z_2$. Under the event that the purification algorithm doesn't output uniform noise, which happens w.p. at least $1-\omega$:
\begin{equation*}
    \begin{aligned}
        \mathrm{MSE}(\theta_{pure}) - \mathrm{MSE}(\Tilde{\theta}) &= \frac{1}{2n}\|y - X\theta_{pure}\|_2^2 - \frac{1}{2n}\|y - X\Tilde{\theta}\|_2^2\\
        &= \frac{1}{n}Z_2^\top \xtx Z_2 \\
        &\leq \frac{1}{n}\lambda_{\mathrm{max}}(\xtx) \, \|Z_2\|_2^2 \\
        &\leq \|\cX\|^2 \|Z_2\|_1^2 \\
        &\leq \frac{\|\cX\|^2}{\varepsilon^2 n^4}
    \end{aligned}
\end{equation*}
where the last inequality holds w.p. $1-1/n^2$ by Lemma~\ref{lem: conc_l1_exp}, as we derived below:
\begin{equation*}
    \begin{aligned}
        \|Z_2\|_1 &\leq  \frac{2d\Delta}{\varepsilon} + \frac{2\Delta}{\varepsilon}\log(n^2)\\
        &\leq \Tilde{\cO} \left( \frac{1}{\varepsilon n^2} \right)
    \end{aligned}
\end{equation*}
Under the high probability event of Lemma~\ref{lem: wang_lemma} and Algorithm~\ref{alg:main}, which happens with probability at least $1-\zeta-1/n^2$:
\begin{equation}
    \begin{aligned}
        \mathrm{MSE}(\theta_{pure}) - \mathrm{MSE}(\theta^*) &= \mathrm{MSE}(\theta_{pure}) - \mathrm{MSE}(\Tilde{\theta}) +  \mathrm{MSE}(\Tilde{\theta}) - \mathrm{MSE}(\theta^*) \\
        &\leq \Tilde{\cO} \left(\frac{\|\cX\|^2}{\varepsilon^2 n^4} + \frac{d \|\mathcal{X}\|^2 \|\theta^*\|^2}{n\epsilon}  \wedge \frac{d\|\mathcal{X}\|^4 \|\theta^*\|^2 \operatorname{tr}[(X^T X)^{-1}]}{n\epsilon^2} \right) \\
        &\leq \Tilde{\cO} \left(\frac{\|\cX\|^2}{\varepsilon^2 n^4} + \frac{d \|\mathcal{X}\|^2 \|\theta^*\|^2}{n\epsilon}  \wedge \frac{d^2 \|\mathcal{X}\|^4 \|\theta^*\|^2}{n\epsilon^2 \lambda_{\mathrm{min}}(X^\top X)} \right) \\ 
    \end{aligned}
\end{equation}
By denoting $\lambda_{\mathrm{min}}=\lambda_{\mathrm{min}}\left(\frac{\xtx}{n}\right)$, we have:
\begin{equation}
    \begin{aligned}
        \mathrm{MSE}(\theta_{pure}) - \mathrm{MSE}(\theta^*) 
        &\leq \Tilde{\cO} \left(\frac{\|\cX\|^2}{\varepsilon^2 n^4} + \frac{d \|\mathcal{X}\|^2 \|\theta^*\|^2}{n\epsilon}  \wedge \frac{d^2 \|\mathcal{X}\|^4 \|\theta^*\|^2}{\epsilon^2 n^2 \lambda_{\mathrm{min}}} \right) \\ 
    \end{aligned}
\end{equation}

\end{proof}

\section{Deferred Proofs for Private Query Release}
\newcommand{\braces}[1]{\left\{#1\right\}}  %
\newcommand{\brackets}[1]{\left[#1\right]} %
\newcommand{\angles}[1]{\left<#1\right>}   %
\newcommand{\abs}[1]{\left|#1\right|}      %

\subsection{Problem Setting}
Let data universe $\cX = \{0,1\}^d$ and denote $N:=|\cX|$. The dataset, $D\in\cX^n$ is represented as a histogram $D \in \mathbb{N}^{|\cX|}$ with $\|D\|_1 = n$. We consider bounded linear query function $q: \cX \rightarrow [0,1]$ and workload $Q$ with size $K$. For a shorthand, we denote:
$$Q(D):=(q_1(D),\ldots,q_k(D))^{\top}:=\left(\frac{1}{n}\sum_{i\in [n]} q_1 (d_i),\ldots,\frac{1}{n}\sum_{i\in [n]} q_k (d_i)\right)^\top$$

\subsection{Private Multiplicative Weight Exponential Mechanism}\label{apx:query}
We first introduce private multiplicative weight exponential algorithm (MWEM): 
\begin{algorithm}
\setcounter{AlgoLine}{0}
\caption{Multipliative Weight Exponential Mechanism $\texttt{MWEM}(D, Q, T, \rho)$ \citep{hardt2012simple}}
\label{alg:MWEM}
\nl \textbf{Input:}  Data set $D \in \mathbb{N}^{|\cX|}$, set $Q$ of linear queries; Number of iterations $T \in \mathbb{N}$; zCDP Privacy parameter $\rho > 0$.\\
\nl \textbf{Set:} number of data points $n\leftarrow \|D\|_1$, initial distribution $p_0 \leftarrow \frac{\mathbf{1}_{|\cX|}}{|\cX|}$, privacy budget for each mechanism $\varepsilon_0 \leftarrow \sqrt{\rho/T}$\\
\nl \textbf{Define:} $\mathrm{Score}(\,\cdot\,;\hat{p},p)=|\langle \cdot, \hat{p} \rangle - \langle \cdot, p \rangle|$\\
\nl \For{$t = 1$ to $T$}{
    \nl $q_t \leftarrow \texttt{ExpoMech}\left(Q, \varepsilon_0 ,\mathrm{Score}(\,\cdot\,; p_{t-1}, p) \right)$ \\
    \nl $m_t \leftarrow \langle q_t, X \rangle + \mathrm{Laplace}(\nicefrac{1}{n\varepsilon_0 })$\\
    \nl \textit{Multiplicative weights update:} let $p_{t+1}$ be the distribution over $\cX$ with entries satisfy:
    \[
        q_{t}(x) \propto q_{t-1}(x) \cdot \exp(q_t(x) \cdot (m_t - q_t(p_{t-1})) / 2),\,\, \forall x \in \cX
    \]
}
\nl \textbf{Output:} $D_{\mathrm{out}} \gets \frac{n}{T} \sum_{t=0}^{T-1} p_t$.
\end{algorithm}

\begin{algorithm}[htb]
\setcounter{AlgoLine}{0}
\caption{$\texttt{ProportionalSampling}(\cX, p,m)$}
\label{alg: prop_sampling}
\nl \textbf{Input:} Dataspace $\cX$, Probability vector $p \in \mathbb{R}^{|\cX|}$, sample size $m$\\
\nl \For{$i = 1$ to $m$}{
    \nl $s_i \leftarrow $ \texttt{UnSortedProportionalSampling}($p$, $\cX$) \Comment{e.g., Alias method \citep{walker1974new}}
}
\nl \textbf{Output:} $\{s_1,...s_m\}$
\end{algorithm}

\begin{algorithm}[htb]
\setcounter{AlgoLine}{0}
\caption{Pure DP Multiplicative Weight Exponential Mechanism}
\label{alg: pure_pmw}
\nl \textbf{Input:} Dataset $D \in \mathbb{N}^{|\cX|}$ with size $\|D\|_1=n$, Query set $Q$, privacy parameters $\epsilon,\delta$, accuracy parameter $\alpha$\\
\nl \textbf{Set:} Number of iterations $T=\Tilde{\cO}(\varepsilon^{2/3}n^{2/3}d^{1/3})$, size of subsampled dataset $m=n^{2/3}\varepsilon^{2/3}d^{-2/3}$, zCDP budget $\rho=\nicefrac{\varepsilon^2}{16\log(1/\delta)}$, $\mathrm{Score}(q;\hat{p},p)=|\langle q, \hat{p} \rangle - \langle q, p \rangle|,\,\,\forall q \in Q$\\
\nl \textbf{Initilize:} $p_1 \leftarrow \frac{\mathbf{1}_{|\cX|}}{|\cX|}$, $\,\,p \leftarrow \frac{D}{\|D\|_1}$\\
\nl $\hat{D} \leftarrow \texttt{MWEM}(D, Q, T, \rho)$ \Comment{Algorithm~\ref{alg:MWEM}}\\
\nl $Y \leftarrow$ \texttt{ProportionalSampling}($\cX, \hat{D}/n, m$) \Comment{Algorithm~\ref{alg: pure_pmw}}\\
\nl $\hat{Y} \leftarrow \abc(\varepsilon, \delta, Y, \cX^m)$\\
\nl \textbf{Output:} $\hat{Y}$
\end{algorithm}
\newpage
\begin{lemma}[Privacy and Utility of MWEM \citep{hardt2012simple}]\label{lem: pri_util_mwem}
    Algorithm~\ref{alg:MWEM} instantiated as 
    
    $\texttt{MWEM}(D, Q, T, \nicefrac{\varepsilon^2}{16\log(1/\delta)})$ satisfies $(\varepsilon, \delta)$-DP. With probability $1-2\beta T$, PMW and the output $\hat{D}$ such that:
    \begin{equation}\label{eqn: mwem_bound}
        \|Q(\hat{D}) - Q(D)\|_{\infty} \leq \cO\left( \sqrt{\frac{d}{T}} + \frac{\sqrt{T\log(1/\delta)}\log(K/\beta)}{n\varepsilon}   \right)
    \end{equation}
\end{lemma}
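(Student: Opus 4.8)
The plan is to follow the classical multiplicative-weights analysis of \citet{hardt2012simple}, separating the privacy and utility claims. For privacy, I would argue via zero-concentrated DP and then convert. Each of the $T$ iterations invokes two primitives: the exponential mechanism selecting $q_t$ according to $\mathrm{Score}(\cdot; p_{t-1}, p)$, and a Laplace release of $m_t$. Because $p = D/\lrn{D}_1$ is the normalized histogram, changing a single record of $D$ perturbs $p$ by order $1/n$ in $\ell_1$, and since each query satisfies $q \in [0,1]$, the score $|\lin{q, p_{t-1}} - \lin{q, p}|$ has global sensitivity of order $1/n$; the quantity $\lin{q_t, p}$ released by Laplace likewise has sensitivity of order $1/n$. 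Calibrating each primitive to $\varepsilon_0 = \sqrt{\rho/T}$ (as in Line~2 of Algorithm~\ref{alg:MWEM}), the exponential mechanism and the Laplace mechanism each satisfy an $\cO(\varepsilon_0^2)$-zCDP bound. Composing the $2T$ primitives via the (linear in $\rho$) composition of zCDP yields $\rho$-zCDP for the whole mechanism, where the constant in $\rho = \varepsilon^2/(16\log(1/\delta))$ is chosen precisely so that Lemma~\ref{lem: zCDP_to_DP} --- which gives $(\rho + 2\sqrt{\rho\log(1/\delta)}, \delta)$-DP --- collapses to $(\varepsilon, \delta)$-DP.

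For utility, I would track the relative-entropy potential $\Psi_t := \KL{p}{p_t}$ between the true normalized distribution and the $t$-th multiplicative-weights iterate. Since $p_0$ is uniform over $\cX$ and $|\cX| = 2^d$, the potential starts at $\Psi_0 \le \log|\cX| = \cO(d)$ and is always nonnegative. The heart of the argument is the standard multiplicative-weights regret step: when the update uses query $q_t$ with noisy measured discrepancy $m_t - \lin{q_t, p_{t-1}}$, one shows a telescoping decrease of the form $\Psi_{t-1} - \Psi_t \gtrsim \lrp{\lin{q_t, p_{t-1}} - \lin{q_t, p}}^2$ up to a noise correction, so that a large true error on the selected query forces a proportional drop in potential. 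Summing over $t$ and using $\Psi_0 \le \cO(d)$, the average squared error of the selected queries is $\cO(d/T)$, which after a square-root yields the first term $\sqrt{d/T}$.

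Two accuracy facts close the gap between the selected-query error and the worst-case workload error. First, with probability $1 - \beta T$, the exponential mechanism selects at each step a query whose true score is within $\cO\lrp{\log(K/\beta)/(n\varepsilon_0)}$ of the maximum score over the workload $Q$ of size $K$ (the standard exponential-mechanism utility guarantee, union-bounded across the $T$ rounds). Second, with the same high probability the Laplace tail bound gives $|m_t - \lin{q_t, p}| \le \cO\lrp{\log(1/\beta)/(n\varepsilon_0)}$. Substituting $\varepsilon_0 = \sqrt{\rho/T} = \varepsilon/\lrp{4\sqrt{T\log(1/\delta)}}$ turns the per-round additive error into the second term $\tfrac{\sqrt{T\log(1/\delta)}\,\log(K/\beta)}{n\varepsilon}$, and a union bound over the $2\beta T$ failure events yields the stated success probability $1 - 2\beta T$.

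The main obstacle is the bookkeeping in the utility step: one must simultaneously (i) convert the \emph{approximate} argmax guaranteed by the exponential mechanism into a genuine lower bound on the potential decrease, and (ii) absorb the Laplace perturbation $m_t$ into the multiplicative-weights potential inequality without destroying the telescoping structure. Balancing these two error sources against the $\sqrt{d/T}$ regret term --- and verifying that the constants in the zCDP-to-DP conversion are consistent with the choice $\rho = \varepsilon^2/(16\log(1/\delta))$ --- is where the technical effort concentrates; the remainder is a direct invocation of the analysis in \citet{hardt2012simple}.
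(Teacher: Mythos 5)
Your proposal is correct and follows essentially the same route as the paper: your privacy argument (per-iteration zCDP accounting with $\varepsilon_0=\sqrt{\rho/T}$, linear composition to $\rho$-zCDP, and conversion to $(\varepsilon,\delta)$-DP via Lemma~\ref{lem: zCDP_to_DP} with $\rho = \varepsilon^2/(16\log(1/\delta))$) is exactly the paper's. For utility, the paper simply invokes \citet[Theorem~2.2]{hardt2012simple} with the choice $\mathrm{adderr}=2\sqrt{T/\rho}\log(K/\beta)$, whereas you unpack the internals of that cited theorem (the KL-potential regret argument together with the exponential-mechanism and Laplace tail bounds and a union bound over the $2T$ failure events), which amounts to the same analysis rather than a genuinely different one.
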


\begin{proof}
    We first state the privacy guarantee. Since each iteration satisfies zCDP guarantee $\rho/T$-zCDP, the total zCDP guarantee for $T$ iterations is $\rho$. By Lemma~\ref{lem: zCDP_to_DP}, the whole algorithm satisfies $(4\sqrt{\rho\log(1/\delta)}, \delta)$-DP. Plugging in the choice of $\rho = \nicefrac{\varepsilon^2}{16\log(1/\delta)}$, we have $\texttt{MWEM}(D, Q, T, \nicefrac{\varepsilon^2}{16\log(1/\delta)})$ satisfies $(\varepsilon, \delta)$-DP. The utility guarantee follows \citet[Theorem~2.2]{hardt2012simple}. Specifically, we choose $\mathrm{adderr}=2\sqrt{T/\rho}\log(K/\beta)$, this yields the utility guarantee stated in Theorem with probability $1-2\beta T$.
\end{proof}

\begin{lemma}[Sampling bound, Lemma 4.3 in \citep{dwork2014algorithmic}]\label{lem: sampling_bound}
    Let data $X=(a_1,...,a_N)$ with $\sum_{i=1}^N a_i = 1$ and $a_i \geq 0$. $Y\sim\mathrm{Multinomial}(m, X)$. Then we have:
    \begin{equation*}
        \mathbb{P}[\|Q(Y)-Q(X)\|_{\infty} \geq \alpha] \leq 2|Q|\exp(-2m\alpha^2) 
    \end{equation*}
\end{lemma}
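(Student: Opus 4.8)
The plan is to reduce the claim to a standard concentration-plus-union-bound argument. First I would unpack the notation: sampling $Y \sim \mathrm{Multinomial}(m, X)$ is equivalent to drawing $m$ i.i.d.\ points $s_1, \dots, s_m$ from the categorical distribution on $\cX$ with probabilities $(a_1, \dots, a_N)$ and then reading off, for each query $q_j \in Q$, the empirical average $Q(Y)_j = \frac{1}{m}\sum_{\ell=1}^m q_j(s_\ell)$. Since each query $q_j : \cX \to [0,1]$ is bounded, for a fixed $j$ the variables $Z_\ell := q_j(s_\ell)$ are i.i.d., take values in $[0,1]$, and have common mean $\mathbb{E}[Z_\ell] = \sum_{i=1}^N a_i\, q_j(x_i) = \langle q_j, X\rangle = Q(X)_j$. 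Thus each coordinate $Q(Y)_j$ is genuinely the mean of $m$ independent bounded random variables centered at $Q(X)_j$.

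Next I would apply Hoeffding's inequality to the empirical average $Q(Y)_j$. For $[0,1]$-valued summands this gives the two-sided tail bound
\[
\mathbb{P}\bigl[\,|Q(Y)_j - Q(X)_j| \geq \alpha\,\bigr] \leq 2\exp(-2m\alpha^2),
\]
where the constant $2$ in the exponent is exactly the range-width factor $(b-a)^2 = 1$ from Hoeffding. Finally I would take a union bound over the $|Q|$ queries: writing the event $\{\|Q(Y)-Q(X)\|_\infty \geq \alpha\}$ as $\bigcup_{j} \{|Q(Y)_j - Q(X)_j| \geq \alpha\}$ and summing the per-query tail bounds yields $\mathbb{P}[\|Q(Y)-Q(X)\|_\infty \geq \alpha] \leq 2|Q|\exp(-2m\alpha^2)$, which is the claimed inequality.

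There is essentially no hard step here; the only point requiring a moment of care is the reduction from the multinomial description to i.i.d.\ samples, so that each coordinate of $Q(Y)$ is correctly seen as an average of \emph{independent} bounded variables. The across-category correlation inherent in the multinomial count vector is irrelevant, because each $q_j$ aggregates over all categories through the per-sample values $q_j(s_\ell)$, and these per-sample values are independent across $\ell$. Once this observation is in place, Hoeffding and the union bound close the argument immediately.
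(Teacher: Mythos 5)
Your proposal is correct and follows essentially the same route as the paper's proof: both reduce the multinomial draw to $m$ i.i.d.\ categorical samples, observe that each $Q(Y)_j$ is an average of i.i.d.\ $[0,1]$-valued variables with mean $Q(X)_j$, apply the Chernoff--Hoeffding tail bound $2\exp(-2m\alpha^2)$ per query, and finish with a union bound over $|Q|$ queries. The paper cites this as a ``Chernoff bound'' while you name it Hoeffding's inequality, but the argument is identical.
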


\begin{proof}
    The proof follows the proof of \citet[Lemma~4.3]{dwork2014algorithmic}.
    Since we have $Y=(Y_1,...,Y_m)$ with $Y_i \overset{\mathrm{iid}}{\sim} \mathrm{Multinomial}(1, X)$, for any $q \in Q$, we have 
    $q(Y) = \frac{1}{m}\sum_{i=1}^m q(Y_i)$ and $\mathbb{E}[q(Y)]=q(X)$. By the Chernoff bound and a union bound over all queries in $Q$, we have $\mathbb{P}[\|Q(Y)-Q(X)\|_{\infty} \geq \alpha] \leq 2|Q|\exp(-2m\alpha^2)$.
\end{proof}

\begin{theorem}[Restatement of Theorem~\ref{thm:query_pmw}]
Algorithm~\ref{alg: pure_pmw} satisfies $2\varepsilon$-DP. Moreover, the output $\hat{Y}$ satisfies
\[
    \|Q(D) - Q(\hat{Y})\|_\infty \leq \Tilde{\cO} \left( \frac{d^{1/3}}{n^{1/3} \varepsilon^{1/3}} \right),
\]
and the runtime is $\Tilde{\cO}(nK + \varepsilon^{2/3}n^{2/3}d^{1/3}NK + N)$.
\end{theorem}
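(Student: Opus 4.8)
The plan is to prove the three assertions—privacy, utility, and runtime—separately, viewing Algorithm~\ref{alg: pure_pmw} as the composition of three stages: MWEM (producing $\hat D$), proportional subsampling (producing $Y$), and binary-embedding purification (producing $\hat Y$).

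For privacy I would argue along the data-processing chain. By Lemma~\ref{lem: pri_util_mwem}, the setting $\rho = \varepsilon^2/(16\log(1/\delta))$ makes $\hat D = \texttt{MWEM}(D,Q,T,\rho)$ satisfy $(\varepsilon,\delta)$-DP. Proportional sampling reads only $\hat D$, so by the post-processing property $Y$ remains $(\varepsilon,\delta)$-DP; thus the pre-purification map $\widetilde M:\cX^\ast\to\cX^m$ is $(\varepsilon,\delta)$-DP. The final stage is $\abc(\varepsilon,\delta,Y,\cX^m)$, i.e.\ Algorithm~\ref{alg:discrete} on the finite output space $\cX^m$. Since $|\cX^m| = 2^{md}$, the effective binary dimension is $d' = md$, and Theorem~\ref{thm:index_embed} upgrades the guarantee to $(2\varepsilon,0)$-pure DP provided $\delta < \varepsilon^{d'}/(2d')^{3d'}$. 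The key bookkeeping is to choose $\delta$ so that $\log(1/\delta) = \Tilde{\Theta}(d') = \Tilde{\Theta}(md)$, which both satisfies this feasibility constraint and keeps $\log(1/\delta)$ small enough for the utility analysis.

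For utility I would split $\|Q(D)-Q(\hat Y)\|_\infty$ by the triangle inequality into the MWEM error $\|Q(D)-Q(\hat D)\|_\infty$, the subsampling error $\|Q(\hat D)-Q(Y)\|_\infty$, and the purification error $\|Q(Y)-Q(\hat Y)\|_\infty$. The MWEM term is controlled by Eq.~\eqref{eqn: mwem_bound}; plugging in $T = \Tilde{\Theta}(\varepsilon^{2/3}n^{2/3}d^{1/3})$ together with $\log(1/\delta)=\Tilde{\Theta}(md)=\Tilde{\Theta}(\varepsilon^{2/3}n^{2/3}d^{1/3})$ collapses both of its summands to $\Tilde{\cO}(d^{1/3}/(n^{1/3}\varepsilon^{1/3}))$. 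The subsampling term follows from Lemma~\ref{lem: sampling_bound}: with $m = n^{2/3}\varepsilon^{2/3}d^{-2/3}$, the concentration bound yields $\Tilde{\cO}(\sqrt{\log K/m}) = \Tilde{\cO}(d^{1/3}/(n^{1/3}\varepsilon^{1/3}))$. The purification term vanishes on a high-probability event, since Theorem~\ref{thm:index_embed} guarantees $\hat Y = Y$ with probability at least $1-2^{-d'}-\tfrac{d'}{2}e^{-d'}$, whence $Q(\hat Y)=Q(Y)$. Finally, because every query lands in $[0,1]$ the $\ell_\infty$ error never exceeds $1$, so I would convert the three high-probability bounds into the stated expectation bound by paying the polynomially small failure probabilities against the trivial bound $1$. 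For runtime I would account stage by stage: evaluating the $K$ queries on the data costs $\Tilde{\cO}(nK)$; each of the $T$ MWEM iterations runs the exponential mechanism over $Q$ and a multiplicative-weights sweep over $\cX$ at cost $\Tilde{\cO}(NK)$, for $\Tilde{\cO}(TNK)=\Tilde{\cO}(\varepsilon^{2/3}n^{2/3}d^{1/3}NK)$ total; the alias-method sampling costs $\Tilde{\cO}(N+m)$; and the purification runs in $\cO(d')=\cO(md)$ time, dominated by MWEM. Summing gives the claimed $\Tilde{\cO}(nK + \varepsilon^{2/3}n^{2/3}d^{1/3}NK + N)$.

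The main obstacle is the circular coupling of the parameters. The purification overhead scales with the binary dimension $md$ of the output space $\cX^m$, so $\log(1/\delta)$ is forced to be $\Tilde{\Theta}(md)$; yet $\log(1/\delta)$ also enters both the MWEM error bound and the zCDP-to-DP conversion (Lemma~\ref{lem: zCDP_to_DP}), while $m$ simultaneously governs the subsampling error. The crux is that subsampling is precisely the device that shrinks the output space so that the purification cost does not inflate $\log(1/\delta)$ uncontrollably—and the three parameters $T$, $m$, and $\delta$ must be tuned jointly so that all three error sources equalize at $\Tilde{\cO}(d^{1/3}/(n^{1/3}\varepsilon^{1/3}))$ while respecting the feasibility condition $\delta < \varepsilon^{md}/(2md)^{3md}$. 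Exhibiting a single consistent choice of $(T,m,\delta)$ that achieves this simultaneous balance is the heart of the argument.
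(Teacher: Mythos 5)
Your proposal is correct and follows essentially the same route as the paper's proof: the identical privacy chain (MWEM's $(\varepsilon,\delta)$-DP guarantee, post-processing through subsampling, then Theorem~\ref{thm:index_embed} with binary dimension $md$), the same three-term triangle-inequality decomposition with the same parameter choices $T=\Tilde{\Theta}(\varepsilon^{2/3}n^{2/3}d^{1/3})$, $m=(n\varepsilon/d)^{2/3}$, and $\log(1/\delta)=\Tilde{\Theta}(md)$, the same conversion from high-probability bounds to expectation via the trivial bound of $1$, and the same stage-by-stage runtime accounting. Your closing observation about the joint tuning of $(T,m,\delta)$ under the feasibility constraint $\delta<\varepsilon^{md}/(2md)^{3md}$ is precisely the bookkeeping the paper carries out.
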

    
\begin{proof}
We first state the privacy guarantee. By Lemma~\ref{lem: pri_util_mwem}, the output from multiplicative weight exponential mechanism, $\hat{D}$, satisfies $(\varepsilon,\delta)$-DP. By post-processing, $Y$ is also $(\varepsilon, \delta)$-DP. Thus, apply Theorem~\ref{thm:index_embed}, the purified $\hat{Y}$ satisfies $2\varepsilon$-DP.

The utility guarantee is via bounding the following terms:
    \begin{equation*}
        \|Q(D) - Q(\hat{Y})\|_{\infty} \leq \underbrace{\|Q(D) - Q(\hat{D})\|_{\infty}}_{(a)} + \underbrace{\|Q(\hat{D}) - Q(Y)\|_{\infty}}_{(b)} + \underbrace{\|Q(Y) - Q(\hat{Y})\|_{\infty}}_{(c)}    
    \end{equation*}
    
    For $(c)$: Since the output space $\cY=\cX^m$, if use binary encoding, the length of code is $\log_2 (|\cY|)=md:=\Tilde{d}$. Thus, by Theorem~\ref{thm:index_embed}, we choose $\delta = \frac{\varepsilon^{\Tilde{d}}}{(2\Tilde{d})^{3\Tilde{d}}}$, this implies $\log(1/\delta) = \cO(md\log(2md/\varepsilon))$ and failure probability $\beta_0 =\frac{1}{2^{md}} + \frac{md}{\exp(md)}$.

    For $(a)$: In order to minimize upper bound in Equation~\ref{eqn: mwem_bound}, we choose $T = \frac{d^{1/2}n\varepsilon}{\log^{1/2}(1/\delta)\log(K/\beta)}$, this implies $\|Q(D) - Q(\hat{D})\|_{\infty} \leq \frac{d^{1/4}\log^{1/4}(1/\delta)\log^{1/2}(K/\beta)}{(n\varepsilon)^{1/2}} = \cO\left(\frac{d^{1/2}m^{1/4}\log^{1/4}(2md/\varepsilon)\log^{1/2}(K/\beta)}{(n\varepsilon)^{1/2}}\right)$ 

    For $(b)$: using Sampling bound (Lemma~\ref{lem: sampling_bound}) and setting failure probability $\beta_1 = 2K\exp(-2m\alpha^2)$, we have $\|Q(\hat{D}) - Q(Y)\|_\infty \leq \cO\left( \frac{\log^{1/2}(2K/\beta_1)}{m^{1/2}} \right)$

    Finally, we choose $m=(n\varepsilon/d)^{2/3}$ to balance the error between $(a)$ and $(b)$. This implies: 
    $$\|Q(\hat{D}) - Q(Y)\|_{\infty} \leq \cO\left( \frac{d^{1/3}}{(n\varepsilon)^{1/3}}\left( \log^{1/2}(2K/\beta_1) + \log^{1/2}(K/\beta)\log^{1/4}(2d^{1/3}n^{2/3}\varepsilon^{-1/3}) \right) \right)$$
    
    Set $\beta=\frac{1}{2Tn}$ and $\beta_1 = \frac{1}{n}$, and by 
    $\beta_0 \leq \frac{2(n\varepsilon)^{2/3} d^{1/3}}{2^{(n\varepsilon)^{2/3} d^{1/3}}}=o(\frac{1}{n})$, we have
    \begin{equation*}
        \begin{aligned}
            \mathbb{E}[\|Q(\hat{D}) - Q(Y)\|_{\infty}] &\leq \cO\left( \frac{d^{1/3}}{(n\varepsilon)^{1/3}}\left( \log^{1/2}(2nK) + \log^{1/2}(Kd^{1/3}n^{5/3}\varepsilon^{1/3})\log^{1/4}(2d^{1/3}n^{2/3}\varepsilon^{-1/3}) \right) \right) \\
            &+ ( T\beta + \beta_1 + \beta_0) \\
            &=\Tilde{\cO}\left( \frac{d^{1/3}}{(n\varepsilon)^{1/3}}  +\frac{1}{n}\right)
        \end{aligned}
    \end{equation*}
  
The computational guarantee follows: (1) Since $T = \frac{d^{1/2}n\varepsilon}{\log^{1/2}(1/\delta)\log(K/\beta)}=\Tilde{\cO} \left( \varepsilon^{2/3}n^{2/3}d^{1/3} \right)$. The runtime for MWEM is $\Tilde{\cO}(nK + \varepsilon^{2/3}n^{2/3}d^{1/3}NK)$ \citep{hardt2012simple}; (2) For subsampling, by runtim analysis of Alias method \citep{walker1974new}, the preprocessing time is $\cO(N)$ and the query time is $\cO(m)$ for generating $m$ samples. Thus, total runtime is $\cO(N + (n\varepsilon)^{2/3}d^{-2/3})$; (3) Finally, note that the query time for Algorithm~\ref{alg:discrete} is $\cO(\Tilde{d})=\cO(d^{1/3}(n\varepsilon)^{2/3})$. We conclude that the runtime for Algorithm~\ref{alg: pure_pmw} is $\Tilde{\cO}(nK + \varepsilon^{2/3}n^{2/3}d^{1/3}NK + N)$.
\end{proof}

\section{Deferred Proofs for Lower Bounds}  %
\label{appen:lower_bound}
\subsection{Proof of Theorem~\ref{thm: mean_est_lb}}
\begin{lemma}[Lemma~5.1 in \cite{bassily2014private}]\label{lem: bas_51}
Let $n, d \in \mathbb{N}$ and $\epsilon > 0$. There is a number $M = \Omega\left(\min(n, d / \epsilon)\right)$ such that for every $\epsilon$-differentially private algorithm $\mathcal{A}$, there is a dataset $D = \{x_1, \ldots, x_n\} \subset \left\{-\nicefrac{1}{\sqrt{d}}, \nicefrac{1}{\sqrt{d}}\right\}^d$ with $\| \sum_{i=1}^n x_i \|_2 \in [M-1, M+1]$ such that, with probability at least $1/2$ (taken over the algorithm random coins), we have
\[
\left\| \mathcal{A}(D) - \bar{D} \right\|_2 = \Omega\left( \min\left(1, \frac{d}{\epsilon n} \right) \right)
\]
where $\bar{D} = \frac{1}{n} \sum_{i=1}^n x_i$.
\end{lemma}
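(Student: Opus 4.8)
The plan is to prove this pure-DP lower bound by a packing argument combined with the group-privacy blow-up that is special to $\varepsilon$-DP (i.e.\ $\delta=0$). The target error scale is $\gamma := \tfrac{1}{c_1}\min\bigl(1, \tfrac{d}{\varepsilon n}\bigr)$ for a universal constant $c_1$, and I would argue by contradiction: suppose $\mathcal{A}$ were accurate on every dataset, meaning $\mathbb{P}[\|\mathcal{A}(D) - \bar D\|_2 < \gamma] \ge 1/2$ for all $D$, and derive a violation of the pure-DP constraint.

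First I would build the hard instances. By the probabilistic method, there is a code $\mathcal{C} \subseteq \{-1/\sqrt d, 1/\sqrt d\}^d$ with $|\mathcal{C}| = T \ge 2^{c_2 d}$ whose codewords are unit vectors pairwise $\ell_2$-separated by at least $1$: two points of $\{\pm 1/\sqrt d\}^d$ differing in $h$ coordinates are at $\ell_2$-distance $2\sqrt{h}/\sqrt d$, and a Chernoff/union bound shows a random code of this size has all pairwise Hamming distances $\ge d/4$. Set $k := \min\bigl(n, \lfloor c_3 d/\varepsilon\rfloor\bigr)$ with $c_3 < c_2$. For each $v \in \mathcal{C}$ define the dataset $D_v$ whose first $k$ points all equal $v$ and whose remaining $n-k$ points form a fixed ``canceling'' block ($\tfrac{n-k}{2}$ copies of $(1/\sqrt d,\dots,1/\sqrt d)$ and $\tfrac{n-k}{2}$ copies of its negation), so that $\sum_i x_i = k v$, hence $\|\sum_i x_i\|_2 = k =: M = \Omega(\min(n, d/\varepsilon))$ and $\bar D_v = \tfrac{k}{n} v$. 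The inactive points are identical across all $v$, so $D_v$ and $D_{v'}$ differ in at most $k$ entries, and $\|\bar D_v - \bar D_{v'}\|_2 = \tfrac{k}{n}\|v - v'\|_2 \ge \tfrac{k}{n}$.

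Next I would run the packing argument. Take $\gamma := \tfrac{k}{4n}$, so the balls $B_v := \{x : \|x - \bar D_v\|_2 < \gamma\}$ are pairwise disjoint (since $2\gamma = k/(2n)$ is below the separation $k/n$), and $\gamma = \tfrac14 \min\bigl(1, \lfloor c_3 d/\varepsilon\rfloor/n\bigr) = \Omega\bigl(\min(1, d/(\varepsilon n))\bigr)$. Fixing a reference $D_0 := D_{v_1}$ and iterating the pure-DP inequality $k$ times (group privacy), $\mathbb{P}[\mathcal{A}(D_v) \in B_v] \le e^{k\varepsilon}\,\mathbb{P}[\mathcal{A}(D_0) \in B_v]$. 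Summing over $v \in \mathcal{C}$ and using disjointness of the $B_v$ gives $\sum_v \mathbb{P}[\mathcal{A}(D_v)\in B_v] \le e^{k\varepsilon}\sum_v \mathbb{P}[\mathcal{A}(D_0)\in B_v] \le e^{k\varepsilon}$. If $\mathcal{A}$ were accurate, each left-hand term is $\ge 1/2$, forcing $T/2 \le e^{k\varepsilon}$, i.e.\ $k\varepsilon \ge \ln(T/2) \ge c_2 d - \ln 2$. But the choice $k\varepsilon \le c_3 d < c_2 d$ (checking both branches of the $\min$ defining $k$) contradicts this for the appropriate constants. Hence accuracy must fail on some $D_v$, giving $\mathbb{P}[\|\mathcal{A}(D_v) - \bar D_v\|_2 \ge \gamma] > 1/2$, which is exactly the claim with error $\Omega(\min(1, d/(\varepsilon n)))$ and $M = k$.

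The main obstacle I anticipate is the code-construction step: one must simultaneously guarantee (a) exponential size $T = 2^{\Omega(d)}$ and (b) constant-order pairwise $\ell_2$-separation, while staying inside the discrete set $\{\pm 1/\sqrt d\}^d$, and then track the constants $c_2, c_3$ so that both regimes of $\min(n, d/\varepsilon)$ yield $k\varepsilon < \ln(T/2)$. The group-privacy and disjoint-ball steps are then routine. The only remaining bookkeeping is the parity of the canceling block: if $n-k$ is odd, one leftover inactive point shifts $\|\sum_i x_i\|_2$ by at most $1$, which is precisely why the statement places it in $[M-1, M+1]$ rather than at $M$ exactly.
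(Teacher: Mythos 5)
Your proof is correct, and it is essentially the canonical argument: the paper itself gives no proof of this lemma (it is imported verbatim by citation), and your packing-plus-group-privacy construction --- a $2^{\Omega(d)}$ code in $\{\pm 1/\sqrt{d}\}^d$ with pairwise Hamming distance $\geq d/4$, datasets with $k=\min(n,\lfloor c_3 d/\varepsilon\rfloor)$ active entries plus a fixed canceling block (which shifts all means by the same vector, so ball disjointness at radius $k/(4n)$ is unaffected and only $\|\sum_i x_i\|_2\in[M-1,M+1]$ is needed), and the $e^{k\varepsilon}$ group-privacy blow-up --- matches the original proof of Lemma~5.1 in \cite{bassily2014private}. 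The one bookkeeping nit is the logarithm base: $T\geq 2^{c_2 d}$ gives $\ln(T/2)\geq (c_2 d-1)\ln 2$, so the contradiction $k\varepsilon<\ln(T/2)$ requires $c_3 < c_2\ln 2$ rather than merely $c_3<c_2$, a constant adjustment that does not affect the stated $\Omega(\cdot)$ bounds.
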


\begin{theorem}[Restatement of Theorem~\ref{thm: mean_est_lb}]\label{thm: re_mean_est_lb}
Denote $\cD:= \{-\nicefrac{1}{\sqrt{d}}, \nicefrac{1}{\sqrt{d}}\}^d$. Let $\varepsilon \leq \cO(1)$, and $\delta \in \left( \frac{1}{\exp(4d\log(d)\log^{2}(n d))},\,\, \frac{1}{4 n^d \log^{2d}(8d)}   \right)$. For any $(\varepsilon,\delta)$-DP mechanism $\cM$, there exist a dataset $D \in \cD^n$ such that w.p. at least $1/4$ over the randomness of $\cM$:
\[
    \|\cM(D) - \bar{D}\|_2 \geq \Tilde{\Omega} \left( \frac{\sqrt{d\log(1/\delta)}}{\varepsilon n} \right)
\]
Here, $\Tilde{\Omega}(\cdot)$ hides all polylogarithmic factors, except those with respect to $\delta$.
\end{theorem}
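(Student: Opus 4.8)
The plan is to argue by contradiction, reducing the desired $(\varepsilon,\delta)$-DP lower bound to the pure-DP lower bound of Lemma~\ref{lem: bas_l2_lb} via the purification construction of Theorem~\ref{thm:pure_main}. First I would reduce to range-bounded mechanisms: since every empirical mean lies in the $\ell_2$-ball $\Theta=\mathbb{B}_{\ell_2}^d(1)$ (so $R=2$), replacing any $\cM$ by $\Pi_\Theta\circ\cM$ preserves $(\varepsilon,\delta)$-DP by post-processing and, because $\bar D\in\Theta$ and projection onto a convex set is non-expansive, can only decrease $\lrn{\cM(D)-\bar D}_2$; hence a lower bound for the projected mechanism transfers back, and I may assume Assumption~\ref{assump:domain} with $q=2$, $R=2$.

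Given such a mechanism $\cM$, I would run Algorithm~\ref{alg:main} with $\varepsilon'=\varepsilon$, mixture level $\omega=1/n^2$, and the prescribed $\delta$, producing a $2\varepsilon$-pure-DP mechanism $\cM'=\amac\circ\cM$ (privacy by Theorem~\ref{thm:pure_main}). Applying Lemma~\ref{lem: bas_l2_lb} to $\cM'$ furnishes a dataset $D\in\{-\nicefrac{1}{\sqrt d},\nicefrac{1}{\sqrt d}\}^{n\times d}$ with $\P[\lrn{\cM'(D)-\bar D}_2\geq c_0\, d/(n\varepsilon)]\geq \tfrac12$ for a universal constant $c_0$. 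Writing $\cM'(D)=\cM(D)+(\text{purification perturbation})$, the triangle inequality gives $\lrn{\cM(D)-\bar D}_2\geq \lrn{\cM'(D)-\bar D}_2-\lrn{\cM'(D)-\cM(D)}_2$, so the task reduces to showing the perturbation is at most $\tfrac12 c_0\, d/(n\varepsilon)$ with high probability.

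To bound the perturbation I would condition on the uniform-mixing coin: with probability $1-\omega$ it equals a Laplace vector $z\sim\lapd(2\Delta/\varepsilon)$, and otherwise it is at most $R+\lrn{z}_2$ — an input-independent description that holds for the specific $D$ above. By Line~\ref{step:Delta}, $\Delta=4\sqrt d\,(\delta/(2\omega))^{1/d}$, and the upper constraint $\delta<4n^{-d}\log^{-2d}(8d)$ forces $(\delta/(2\omega))^{1/d}=\Tilde{\cO}(1/(n\log^2(8d)))$; hence by Lemma~\ref{lem: laplace_perturb}, $\mathbb{E}\lrn{z}_2\leq\sqrt{2d}\cdot 2\Delta/\varepsilon=\Tilde{\cO}(d/(n\varepsilon\log^2(8d)))=o(d/(n\varepsilon))$. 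Markov's inequality (or a sharper sub-exponential tail bound for $\lrn{z}_2$) then yields $\P[\lrn{z}_2>\tfrac12 c_0\, d/(n\varepsilon)]\leq\tfrac18$, and together with $\omega\leq\tfrac18$ the perturbation is $\leq\tfrac12 c_0\, d/(n\varepsilon)$ on an event of probability $\geq\tfrac34$. A union bound with the event from Lemma~\ref{lem: bas_l2_lb} gives $\lrn{\cM(D)-\bar D}_2\geq\tfrac12 c_0\, d/(n\varepsilon)$ with probability $\geq 1-\tfrac12-\tfrac14=\tfrac14$.

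Finally I would convert the $d/(n\varepsilon)$ rate into the claimed form using the lower constraint on $\delta$: since $\delta>\exp(-4d\log d\,\log^2(nd))$ gives $\log(1/\delta)\leq 4d\log d\,\log^2(nd)$, we have $\sqrt d/\sqrt{\log(1/\delta)}\geq 1/(2\sqrt{\log d}\,\log(nd))$, so $d/(n\varepsilon)=\bigl(\sqrt d/\sqrt{\log(1/\delta)}\bigr)\cdot\sqrt{d\log(1/\delta)}/(n\varepsilon)\geq\Tilde{\Omega}\bigl(\sqrt{d\log(1/\delta)}/(n\varepsilon)\bigr)$, with $\Tilde{\Omega}$ absorbing the polylogarithmic factors in $n,d$ but not in $\delta$. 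I expect the main obstacle to be certifying that the purification perturbation is a genuinely lower-order term relative to the pure-DP lower bound: this is exactly where the two-sided constraint on $\delta$ is indispensable — the upper bound keeps the Laplace scale $\Delta/\varepsilon$ small while the lower bound keeps $\log(1/\delta)=\Tilde{\Theta}(d)$ so the two rates match up to polylogs — and it requires handling the joint randomness of $\cM$ and the purification noise carefully in the union bound.
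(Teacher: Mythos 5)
Your proposal follows the same route as the paper's own proof: purify $\cM$ with Algorithm~\ref{alg:main}, apply the pure-DP lower bound (Lemma~\ref{lem: bas_l2_lb}) to the resulting $2\varepsilon$-DP mechanism, show the purification perturbation is lower order, and convert $d/(n\varepsilon)$ into $\sqrt{d\log(1/\delta)}/(n\varepsilon)$ via the lower end of the $\delta$-interval. The cosmetic differences (you argue directly rather than by contradiction, work over the unit $\ell_2$ ball with $q=2$ rather than the cube $[-\nicefrac{1}{\sqrt d},\nicefrac{1}{\sqrt d}]^d$ with $q=\infty$, and add an explicit projection step, which is a nice piece of rigor the paper leaves implicit) are all harmless. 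But there is a genuine quantitative gap at the step you yourself identify as the crux.

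The gap is your mixture level $\omega=1/n^2$, which is incompatible with the theorem's $\delta$-range. With this choice, $\delta/(2\omega)=\delta n^2/2$, and the upper constraint $\delta\le \tfrac14 n^{-d}\log^{-2d}(8d)$ only yields
\[
\lrp{\frac{\delta}{2\omega}}^{1/d}\le \frac{n^{2/d}}{8^{1/d}\,n\log^{2}(8d)}.
\]
The factor $n^{2/d}=e^{2\ln n/d}$ is not polylogarithmic in $n$ unless $d\gtrsim \log n/\log\log n$; for constant $d$ it is polynomial in $n$ (for $d=2$ it equals $n$). Consequently your claim that $(\delta/(2\omega))^{1/d}=\Tilde{\cO}(1/(n\log^2(8d)))$, and with it $\E\lrn{z}_2=o(d/(n\varepsilon))$, is false in the regime $d=o(\log n/\log\log n)$: at $d=2$, for $\delta$ near the top of the allowed range, the expected Laplace perturbation is of order $1/(\varepsilon\log^2(16))$, which exceeds the signal $d/(n\varepsilon)$ from Lemma~\ref{lem: bas_l2_lb} by a factor of order $n$, so the triangle-inequality/union-bound step yields nothing. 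This regime is not vacuous — the theorem's $\delta$-interval is nonempty there, so the statement must be proved there too. The fix is exactly the paper's parameter choice: take the mixture level to be a constant (the paper uses $\omega=1/8$), so that $(\delta/(2\omega))^{1/d}=(4\delta)^{1/d}\le 1/(n\log^2(8d))$ follows directly from the upper bound on $\delta$ with no stray $n^{2/d}$ factor; with that single change, the rest of your argument (the union bound giving probability $1/4$, and the final conversion via $\log(1/\delta)\le 4d\log(d)\log^2(nd)$) goes through.
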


\begin{proof}
Suppose there exists an $(\varepsilon, \delta)$-differentially private mechanism $\cM$ such that with probability at least $3/4$ over the randomness of $\cM$, for any $D \in \cD$, 
\[
    \|\cM(D) - \bar{D}\|_{2} \leq \frac{\sqrt{d\log(1/\delta)}}{n\varepsilon a} 
\]
where \( a \) is a term involving \( n \) and \( d \), to be specified later.
Let $\frac{\sqrt{d\log(1/\delta)}}{n\varepsilon a} \leq \frac{d}{n\varepsilon\log^{1/2}{d}}$ implies:
\begin{equation}\label{eqn: delta_constraint_0}
    \delta>\exp(-a^2 d/ \log(d))    
\end{equation}

We execute Algorithm~\ref{alg:main} to purify $\cM$ directly over the output space $[-\nicefrac{1}{\sqrt{d}}, \nicefrac{1}{\sqrt{d}}]^d$. Let $Y$ denote the output of Algorithm~\ref{alg:main} and $U \sim \mathrm{Unif}([-\nicefrac{1}{\sqrt{d}}, \nicefrac{1}{\sqrt{d}}]^d)$. The remainder of the proof involves bounding the additional errors introduced during the purification process. %
By triangle inequality we have
\[
    \| \bar{D} - Y \|_{2} \leq
            \underbrace{\| \bar{D} - \cM(D) \|_{2}}_{\text{(a)}} + 
    \underbrace{\| \cM(D) - Y \|_{2}}_{\text{(b)}}.  
\]

Notice that under the event that Line~\ref{step:unif_1} of Algorithm~\ref{alg:main} doesn't return the uniform random variable, which happens with probability $1-\omega$, we have $Y = \cM(X) + \mathrm{Laplace}^{\otimes d}(2\Delta/\varepsilon)$, so term (b) equals the $2$-norm of the Laplace perturbation.

For the remaining proofs, we choose the mixing level $\omega=1/8$ in Algorithm~\ref{alg:main}. We now justify the choice of $\delta$:

    Observe that since $Y = \cM(X) + \mathrm{Laplace}^{\otimes d}(2\Delta/\varepsilon)$, term (b), which accounts for the error introduced by Laplace noise. With probability at least $7/8$ by the concentration of the $L_2$ norm of Laplace vector:
    \[
        \text{(b)} \leq \frac{2\sqrt{d}\Delta \log(8d)}{\varepsilon}
    \]

    Thus, without loss of generality, we require $\frac{2\sqrt{d}\Delta\log(8d)}{\varepsilon} \leq \frac{16d}{n\varepsilon\log(8d)}$, this implies 
    \[
        \Delta \leq \frac{8\sqrt{d}}{n\log^{2}(8d)}
    \]
    Notice that:
    \[
        \Delta = d^{1 - \frac{1}{q}} \cdot \frac{2R^2}{r} \left( \frac{\delta}{2\omega} \right)^{1/d}
    \]
    Choosing $q = \infty$ (corresponding to the use of $\ell_\infty$ norm in $W$-$\infty$ distance), and noticing $R = \nicefrac{2}{\sqrt{d}}$ and $r = \nicefrac{1}{\sqrt{d}}$, we obtain the condition:
    \[
        \Delta = 8\sqrt{d} \left( \frac{\delta}{2\omega} \right)^{1/d} \leq \frac{8\sqrt{d}}{n\log^{2}(8d)}
    \]

    which further implies:
    \begin{equation}\label{eqn: delta_constraint_2}
        \delta \leq \frac{1}{4 n^d \log^{2d}(8d)}    
    \end{equation}

    \item By Eq.~\eqref{eqn: delta_constraint_0} and Eq.~\eqref{eqn: delta_constraint_2}, we have:

    \begin{equation}
        \delta \in \left( \exp(-a^2 d/\log(d)),\quad \frac{1}{4 n^d \log^{2d}(8d)}   \right)
    \end{equation}
    The constrained above yields a lower bound on $a^2$, after some relaxation for simplicity (and assume $d\geq \log(8d)$ and $d\log(n)>\log(4)$):
    \begin{equation}\label{eqn: constraint_on_a}
        a^2 \geq 2\log(d) \log(nd)
    \end{equation}
Thus, we set $a = 2\log(d)\log(nd)$ to satisfy the constraint stated in Eq.~\eqref{eqn: constraint_on_a}, and now 
\begin{equation}\label{eqn: delta_range}
    \delta \in \left( \frac{1}{\exp(4d\log(d)\log^{2}(n d))},\quad \frac{1}{4 n^d \log^{2d}(8d)}   \right)
\end{equation}

When $\delta$ is within the range above, we have:
\[
    \log(1/\delta) < 4d\log(d)\log^{2}(nd)
\]
This implies that, w.p. at least $1/2$ over the randomness of $\cM$ and purification algorithm:
\[
    \|\bar{D} - Y\|_2 \leq \frac{d}{n\varepsilon\log^{1/2}(d)} + \frac{16d}{n
    \varepsilon \log(8d)},
\]
which violates the lower bound stated in Lemma~\ref{lem: bas_51}. Thus, for any $(\varepsilon, \delta)$-DP  mechanism $\cM$ with $\delta$ being in the range of Eq.~\eqref{eqn: delta_range}, there exists a dataset $D \in \cD$, such that with probability greater than $1/4$ over the randomness of $\cM$:
$$\| \cM(D) - \bar{D} \|_2 \geq  \left( \frac{\sqrt{d\log(1/\delta)}}{2n\varepsilon\log(d)\log(nd)} \right) = \Tilde{\Omega} \left( \frac{\sqrt{d\log(1/\delta)}}{n\varepsilon} \right) $$ 
Here, $\Tilde{\Omega}(\cdot)$ hides all polylogarithmic factors, except those with respect to $\delta$.
\end{proof}

\subsection{More Examples of Lower Bounds via the Purification Trick}\label{appen:more_lbs}

In this section, we present an extended result for Theorem~\ref{thm: mean_est_lb}. Additionally, we establish a lower bound for the discrete setting, as stated in Theorem~\ref{thm: lb_selection}, thereby demonstrating that the purifying recipe for proving lower bound remains applicable in the discrete case.

\subsubsection{One-Way Marginal Release}

We establish a stronger version of Theorem~\ref{thm: mean_est_lb}, as stated in Theorem~\ref{thm: re_mean_est_lb}, which strengthens Theorem~\ref{thm: mean_est_lb} by establishing that the lower bound holds for any \( c \in (0,1) \), rather than being restricted to \( c = 1/2 \). %

\begin{theorem}[Restatement of Theorem~\ref{thm: mean_est_lb}]\label{thm: re_mean_est_lb}
Denote $\cD:= \{-\nicefrac{1}{\sqrt{d}}, \nicefrac{1}{\sqrt{d}}\}^d$. Let $\varepsilon \leq \cO(1)$, for any $c \in (0,1)$, and $\delta \in \left( \frac{1}{n^{2d}d^{2d}},\,\, \frac{1}{4 n^d \log^{2d}(8d)}   \right)$. For any $(\varepsilon,\delta)$-DP mechanism $\cM$, there exist a dataset $D \in \cD^n$ such that with probability at least $1/4$ over the randomness of $\cM$:
\[
    \|\cM(D) - \bar{D}\|_2 \geq \Tilde{\Omega} \left( \max\limits_{c \in (0,1)}\frac{d^{c}\log^{1-c}(1/\delta)}{\varepsilon n} \right).
\]
Here, $\Tilde{\Omega}(\cdot)$ hides all polylogarithmic factors, except those with respect to $\delta$.
\end{theorem}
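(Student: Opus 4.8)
The plan is to prove this lower bound by the same purification-based contrapositive strategy used for the $c=1/2$ case of Theorem~\ref{thm: mean_est_lb}, but carrying an arbitrary exponent $c \in (0,1)$ through the estimates. I would fix $c \in (0,1)$ and assume, toward a contradiction, that there is an $(\varepsilon,\delta)$-DP mechanism $\cM$ with $\|\cM(D)-\bar D\|_2 \le \frac{d^c \log^{1-c}(1/\delta)}{n\varepsilon a}$ with probability at least $3/4$ for all $D \in \cD^n$, where $a$ is a polylogarithmic slack factor to be pinned down. Running Algorithm~\ref{alg:main} directly over the output box $[-1/\sqrt d,\,1/\sqrt d]^d$ with mixing level $\omega=1/8$ produces a $2\varepsilon$-pure-DP estimator $Y$ by Theorem~\ref{thm:pure_main}, and I would split the error through the triangle inequality $\|\bar D - Y\|_2 \le \|\bar D - \cM(D)\|_2 + \|\cM(D)-Y\|_2$ into the assumed-accuracy term (a) and the Laplace-perturbation term (b).

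The two terms are governed by opposite constraints on $\delta$. For term (b), conditioned on the uniform mixture not firing (probability $1-\omega$) one has $Y=\cM(D)+\lapd\lrp{2\Delta/\varepsilon}$ with $\Delta = 8\sqrt d\,(\delta/2\omega)^{1/d}$ (taking $q=\infty$, $R=2/\sqrt d$, $r=1/\sqrt d$); concentration of the $\ell_2$-norm of a Laplace vector then bounds (b) by $\Tilde{\cO}(\sqrt d\,\Delta/\varepsilon)$ with probability $7/8$, and demanding this be $\Tilde{\cO}(d/(n\varepsilon))$ forces the upper endpoint $\delta \le \frac{1}{4n^d\log^{2d}(8d)}$ exactly as before, independently of $c$. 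For term (a), the essential observation—and the real content of the generalization—is that throughout the stated window the lower endpoint $\delta > \frac{1}{n^{2d}d^{2d}}$ gives $\log(1/\delta) \le 2d\log(nd)$, hence $d^c\log^{1-c}(1/\delta) \le d\cdot(2\log(nd))^{1-c}$, which is of order $d$ up to logarithmic factors for \emph{every} $c\in(0,1)$. Thus (a) $\le \Tilde{\cO}(d/(n\varepsilon))$ once $a$ absorbs the factor $(2\log(nd))^{1-c}\le 2\log(nd)$.

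Combining the two bounds by a union bound over the failure events (uniform mixing, Laplace concentration) yields $\|\bar D - Y\|_2 = \Tilde{\cO}(d/(n\varepsilon))$ with probability strictly exceeding $1/2$, which contradicts the pure-DP packing lower bound $\Omega(\min(1,\,d/(\varepsilon n)))$ of Lemma~\ref{lem: bas_51} applied to the $2\varepsilon$-pure-DP mechanism $Y$. This refutes the accuracy hypothesis for the fixed $c$, giving $\|\cM(D)-\bar D\|_2 = \Tilde{\Omega}(d^c\log^{1-c}(1/\delta)/(n\varepsilon))$ with probability at least $1/4$; since the argument is valid for every $c\in(0,1)$ using the single common $\delta$-interval, I would conclude by taking the maximum over $c$.

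The main obstacle I anticipate is the uniform bookkeeping across all $c$ simultaneously. The upper-endpoint constraint from term (b) is $c$-independent, but I must verify that the single lower endpoint $\frac{1}{n^{2d}d^{2d}}$ keeps $d^c\log^{1-c}(1/\delta)$ of order $d$ for every $c$, so the contradiction with Lemma~\ref{lem: bas_51} is triggered uniformly and the $\delta$-interval stays non-empty. The delicate point is that the exponent $1/(1-c)$ that would naively appear upon inverting the accuracy constraint degenerates harmlessly precisely because $\log(1/\delta)$ is already pinned to order $d$ within the window; making this rigorous while tracking which logarithmic factors may or may not be hidden inside $\Tilde{\Omega}$—recall the convention excludes factors in $\log(1/\delta)$—is where the care is needed.
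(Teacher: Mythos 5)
Your proposal follows essentially the same route as the paper's proof: purify $\cM$ with Algorithm~\ref{alg:main} over $[-\nicefrac{1}{\sqrt{d}},\nicefrac{1}{\sqrt{d}}]^d$ with $\omega=1/8$ and $q=\infty$ (so $\Delta=8\sqrt{d}(\delta/2\omega)^{1/d}$), split the error by the triangle inequality, use the window's upper endpoint to control the Laplace term (b), and contradict the pure-DP packing bound of Lemma~\ref{lem: bas_51}. Your one organizational difference --- bounding term (a) directly from the window's lower endpoint via $\log(1/\delta)\leq 2d\log(nd)$, hence $d^{c}\log^{1-c}(1/\delta)\leq d\,(2\log(nd))^{1-c}$, instead of inverting the accuracy hypothesis and recovering the window from the constraint on $a^{1/(1-c)}$ --- is mathematically equivalent to what the paper does: the paper's choice $a=(2\log(d)\log(nd))^{1-c}$ is made precisely so that the inverted constraint coincides with $\delta>n^{-2d}d^{-2d}$. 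So the ``degeneration'' of the $1/(1-c)$ exponent that you anticipate is real, and your presentation of it is legitimate and arguably cleaner.

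There is, however, one step that fails as written: the final contradiction. You let $a$ absorb only the factor $(2\log(nd))^{1-c}$, which makes term (a) at most $d/(n\varepsilon)$, and you then assert that $\|\bar{D}-Y\|_2=\Tilde{\cO}(d/(n\varepsilon))$ contradicts the $\Omega(\min(1,d/(n\varepsilon)))$ bound of Lemma~\ref{lem: bas_51}. That is not a contradiction: an upper bound matching the lower bound up to constants (let alone up to hidden polylogarithmic factors) refutes nothing. The argument needs both error terms to fall asymptotically \emph{below} the packing bound, which is exactly why the paper demands term (a) $\leq d/(n\varepsilon\log^{1-c}(d))$ and term (b) $\leq 16d/(n\varepsilon\log(8d))$. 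Concretely, your slack factor must carry an extra diverging factor --- e.g.\ the paper's $a=(2\log(d)\log(nd))^{1-c}$, whose additional $\log^{1-c}(d)$ makes (a) $=o(d/(n\varepsilon))$ --- after which the union bound over the three failure events (accuracy $3/4$, mixing $1/8$, Laplace concentration $1/8$), the application of Lemma~\ref{lem: bas_51} to the $2\varepsilon$-pure-DP mechanism $Y$, and the maximization over $c$ (valid because the $\delta$-window is $c$-independent) all go through as in the paper, with the hidden factors in the final $\Tilde{\Omega}$ polylogarithmic in $n$ and $d$ only, as the convention requires.
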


\begin{proof}
Suppose for some $c \in (0,1)$, there exists an $(\varepsilon, \delta)$-differentially private mechanism $\cM$ such that with probability at least $3/4$ over the randomness of $\cM$, for any $D \in \cD$, 
\[
    \|\cM(D) - \bar{D}\|_{2} \leq \frac{d^{c}\log^{1-c}(1/\delta)}{n\varepsilon a} 
\]
where \( a \) is a term involving \( n \) and \( d \), to be specified later. For the purpose of causing contradiction, we let:
\[
    \frac{d^{c}\log^{1-c}(1/\delta)}{n\varepsilon a} \leq \frac{d}{n\varepsilon\log^{1-c}{(d)}}
\]
This implies:
\begin{equation}\label{eqn: delta_constraint_0}
    \delta>\exp(-a^{\frac{1}{1-c}} d/ \log(d))    
\end{equation}

We execute Algorithm~\ref{alg:main} to purify $\cM$ directly over the output space $[-\nicefrac{1}{\sqrt{d}}, \nicefrac{1}{\sqrt{d}}]^d$. Let $Y$ denote the output of Algorithm~\ref{alg:main} and $U \sim \mathrm{Unif}([-\nicefrac{1}{\sqrt{d}}, \nicefrac{1}{\sqrt{d}}]^d)$. The remainder of the proof involves bounding the additional errors introduced during the purification process. 
By triangle inequality we have
\[
    \| \bar{D} - Y \|_{2} \leq
            \underbrace{\| \bar{D} - \cM(D) \|_{2}}_{\text{(a)}} + 
    \underbrace{\| \cM(D) - Y \|_{2}}_{\text{(b)}}.  
\]

Notice that under the event that Line~\ref{step:unif_1} of Algorithm~\ref{alg:main} doesn't return the uniform random variable, which happens with probability $1-\omega$, we have $Y = \cM(X) + \mathrm{Laplace}^{\otimes d}(2\Delta/\varepsilon)$, so term (b) equals the $2$-norm of the Laplace perturbation.

For the remaining proofs, we choose the mixing level $\omega=1/8$ in Algorithm~\ref{alg:main}. We now justify the choice of $\delta$:

    Observe that since $Y = \cM(X) + \mathrm{Laplace}^{\otimes d}(2\Delta/\varepsilon)$, term (b), which accounts for the error introduced by Laplace noise. With probability at least $7/8$ by the concentration of the $L_2$ norm of Laplace vector:
    \[
        \text{(b)} \leq \frac{2\sqrt{d}\Delta \log(8d)}{\varepsilon}
    \]

    Thus, without loss of generality, we will require 
    \[
        \frac{2\sqrt{d}\Delta\log(8d)}{\varepsilon} \leq \frac{16d}{n\varepsilon\log(8d)}
    \]
    This implies:
    \[
        \Delta \leq \frac{8\sqrt{d}}{n\log^{2}(8d)}
    \]
    Notice that:
    \[
        \Delta = d^{1 - \frac{1}{q}} \cdot \frac{2R^2}{r} \left( \frac{\delta}{2\omega} \right)^{1/d}
    \]
    Choosing $q = \infty$ (corresponding to the use of $\ell_\infty$ norm in the Wassertain-$\infty$ distance), and noticing $R = \nicefrac{2}{\sqrt{d}}$ and $r = \nicefrac{1}{\sqrt{d}}$, we obtain the condition:
    \[
        \Delta = 8\sqrt{d} \left( \frac{\delta}{2\omega} \right)^{1/d} \leq \frac{8\sqrt{d}}{n\log^{2}(8d)}
    \]

    which further implies:
    \begin{equation}\label{eqn: delta_constraint_2}
        \delta \leq \frac{1}{4 n^d \log^{2d}(8d)}    
    \end{equation}

    \item By Eq.~\eqref{eqn: delta_constraint_0} and Eq.~\eqref{eqn: delta_constraint_2}, we have:

    \begin{equation}
        \delta \in \left( \exp\left(-a^{\frac{1}{1-c}} d/ \log(d)\right),\quad \frac{1}{4 n^d \log^{2d}(8d)}   \right)
    \end{equation}
    The constrained above yields a lower bound on $a$, after some relaxation for simplicity (and assume $d\geq \log(8d)$ and $d\log(n)>\log(4)$):
    \begin{equation}\label{eqn: constraint_on_a}
        a^{\frac{1}{1-c}} \geq 2\log(d) \log(nd)
    \end{equation}

Thus, we set $a = \left(2\log(d)\log(nd)\right)^{1-c}$ to satisfy the constraint stated in Eq.~\eqref{eqn: constraint_on_a}, and now 
\begin{equation}\label{eqn: delta_range}
    \delta \in \left( \frac{1}{\exp(2d\log(n d))},\quad \frac{1}{4 n^d \log^{2d}(8d)}   \right) = \left( \frac{1}{n^{2d}d^{2d}},\quad \frac{1}{4 n^d \log^{2d}(8d)}   \right)
\end{equation}

When $\delta$ is within the range above, we have:
\[
    \log(1/\delta) < 2d\log(nd)
\]
This implies that, w.p. at least $1/2$ over the randomness of $\cM$ and purification algorithm:
\[
    \|\bar{D} - Y\|_2 \leq \frac{d}{n\varepsilon\log^{1-c}(d)} + \frac{16d}{n
    \varepsilon \log(8d)},
\]
which violates the lower bound stated in Lemma~\ref{lem: bas_51}. Thus, for any $(\varepsilon, \delta)$-DP  mechanism $\cM$ with $\delta$ being in the range of Eq.~\eqref{eqn: delta_range}, there exists a dataset $D \in \cD$, such that with probability greater than $1/4$ over the randomness of $\cM$:
$$\| \cM(D) - \bar{D} \|_2 \geq  \left( \frac{d^{c}\log^{1-c}(1/\delta)}{n\varepsilon \left(2\log(d)\log(nd)\right)^{1-c}} \right) = \Tilde{\Omega} \left( \frac{d^{c}\log^{1-c}(1/\delta)}{n\varepsilon} \right) $$ 
Here, $\Tilde{\Omega}(\cdot)$ hides all polylogarithmic factors, except those with respect to $\delta$. Since the above derivation holds for arbitraty $c \in (0,1)$, this implies with probability at least $1/4$ over the randomness of the algorithm $\cM$, we have:
\[
    \| \cM(D) - \bar{D}\|_2 \geq \Tilde{\Omega} \left( \max\limits_{c \in (0,1)}  \frac{d^{c} \log^{1-c}(1/\delta)}{n\varepsilon} \right)
\]
\end{proof}

\subsubsection{Private Selection}

We begin by stating a lower bound for pure differential privacy in the selection setting, as established in \cite{chaudhuri2014large}.

\begin{lemma}[Proposition~1 in \citep{chaudhuri2014large}]\label{lem: kamalika_selection}
    Let $\varepsilon \in (0,1)$, $n \geq 2$ and denote item set to be $\cU$. For any $\varepsilon$-DP mechanism $\cA$, there exist a domain $\cX$ and a function $f(i,\cdot)$ which is $(1/n)$-Lipschitz for all item $i \in \cU$ such that the following holds with probability at least $1/2$ over the randomness of the algorithm:
    \[
        \max_{i \in \cU} f(i;D) - f(\cA(D);D) \geq \Omega\left(\frac{\log (K)}{\varepsilon n}\right).
    \]
\end{lemma}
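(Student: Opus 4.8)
The plan is to establish this via a standard \emph{packing} argument, entirely parallel to the one used for Lemma~\ref{lem: packing_lb_l1}, but exploiting \emph{group privacy} at the optimal Hamming radius rather than the crude radius $n$. Since the statement only asks for the \emph{existence} of a hard domain and score function, I have complete freedom in the construction, which keeps the setup clean.

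First I would fix the instance. Take the domain to be $\cX = \cU \cup \{\star\}$ for a neutral symbol $\star$, and define the additive counting score
\[
 f(i; D) = \frac{1}{n}\sum_{k=1}^n \mathbbm{1}\{x_k = i\}, \qquad i \in \cU.
\]
Changing a single data point alters each count by at most one, so $f(i;\cdot)$ is $(1/n)$-Lipschitz for every $i$, as required. Let $m$ be a parameter to be fixed, set $\alpha = m/n$, let $D_0 = (\star,\dots,\star)$ be the all-neutral reference dataset, and for each $i \in \cU$ let $D_i$ agree with $D_0$ except that its first $m$ coordinates are set to $i$. Then $f(i;D_i) = \alpha$ while $f(i';D_i) = 0$ for every $i' \neq i$, and $d_{\mathrm{Hamming}}(D_0, D_i) = m$. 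Consequently, on $D_i$ the unique error-minimizing output is $i$, and the event $\{\max_j f(j;D_i) - f(\cA(D_i);D_i) < \alpha\}$ coincides with $\{\cA(D_i) = i\}$.

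Next I would run the counting argument. Suppose, for contradiction, that some $\varepsilon$-DP mechanism $\cA$ achieves excess error strictly below $\alpha$ with probability strictly above $1/2$ on \emph{every} dataset; then $\Pr[\cA(D_i) = i] > 1/2$ for each $i$. Group privacy for pure DP gives $\Pr[\cA(D_0) = i] \geq e^{-m\varepsilon}\,\Pr[\cA(D_i) = i]$, and since the output events $\{\cA(D_0) = i\}_{i \in \cU}$ are pairwise disjoint,
\[
 1 \;\geq\; \sum_{i \in \cU} \Pr[\cA(D_0) = i] \;>\; \frac{K}{2}\,e^{-m\varepsilon}.
\]
This forces $m\varepsilon > \log(K/2)$, i.e.\ $\alpha = m/n > \log(K/2)/(\varepsilon n)$. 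Reading this contrapositively, choosing $m = \lceil \log(K/2)/\varepsilon \rceil$ (so $\alpha = \Theta(\log K/(\varepsilon n))$) rules out any such mechanism, which means for this $\alpha$ there must exist some $D_i$ on which the excess error is at least $\alpha$ with probability at least $1/2$ --- exactly the claimed $\Omega(\log K/(\varepsilon n))$ bound.

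The main thing to check --- and the only real obstacle --- is \emph{realizability of the packing within $n$ samples}: the construction needs $m \leq n$, equivalently $K \lesssim e^{\varepsilon n}$, so that the target gap $\alpha = m/n \leq 1$ is attainable. In the complementary regime the gap simply saturates at its trivial maximum $\alpha = 1$, and the bound is either vacuous or absorbed into this cap; I would therefore state the result under the natural assumption $\log K \lesssim \varepsilon n$, mirroring the $\alpha \le 1$ caveat in Lemma~\ref{lem: packing_lb_l1}. The remaining bookkeeping --- verifying the group-privacy inequality, the disjointness of the output events, and the constant factors in the $1/2$-versus-above-$1/2$ probabilities --- is routine.
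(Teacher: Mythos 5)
The paper never proves this statement: Lemma~\ref{lem: kamalika_selection} is imported by citation from \citet[Proposition~1]{chaudhuri2014large}, so there is no in-paper proof to compare against. Your packing-plus-group-privacy argument is, in substance, the standard proof of that proposition, and it matches the style of argument the paper does use elsewhere (the packing bound of Lemma~\ref{lem: packing_lb_l1}, and the alternative proof of Theorem~\ref{thm: mean_est_lb} in Appendix~\ref{apx:thm9_alt}, which applies group privacy at Hamming radius $n$). Your one genuine refinement over those in-paper arguments --- applying group privacy at radius $m \approx \log(K)/\varepsilon$ rather than at radius $n$ --- is exactly what turns a sample-complexity-type conclusion into the accuracy bound $\Omega(\log K/(\varepsilon n))$, and the construction itself (counting scores, hub dataset $D_0$, spokes $D_i$ at distance $m$, disjointness of the success events $\{\cA(D_0)=i\}$) is sound, with the Lipschitz and quantifier requirements of the lemma verified correctly.

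Two points need repair, both local. First, your rounding direction for $m$ is wrong: the counting step shows that a mechanism succeeding on every $D_i$ forces $m\varepsilon > \log(K/2)$, so the contrapositive only fires when $m\varepsilon \le \log(K/2)$, i.e. you must take $m = \lfloor \log(K/2)/\varepsilon \rfloor$, not $\lceil \log(K/2)/\varepsilon \rceil$; with the ceiling, $m\varepsilon \ge \log(K/2)$ and no contradiction is produced. After switching to the floor you also need $m \ge 1$ to avoid a vacuous instance, hence a mild condition such as $K \ge 8$ (or constants absorbed into the $\Omega(\cdot)$), before concluding $\alpha = m/n = \Omega\left(\log K/(\varepsilon n)\right)$. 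Second, the realizability constraint $m \le n$, equivalently $\log K \lesssim \varepsilon n$, which you do flag, is genuinely necessary for the bound as stated, since the score family you build has maximal error $1$; the cleanest fix is to state the conclusion as $\Omega\left(\min\left\{1, \log K/(\varepsilon n)\right\}\right)$, exactly as the paper's quoted Lemma~\ref{lem: bas_51} does for mean estimation. Neither issue is conceptual; both are one-line fixes.
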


\begin{theorem}[Lower bound for private selection]\label{thm: lb_selection}
    Let $\varepsilon \in (0,1)$, $\delta \in \left( \frac{\varepsilon^{3d}}{(2d)^{3d}}, \frac{\varepsilon^d}{(2d)^{3d}} \right)$, $n \geq 2$, and $K:=|\cU| \geq 7$ where $\cU$ is the item set. For any $(\varepsilon,\delta)$-DP mechanism $\cA$, there exist a domain $\cX$ and a function $f(i,\cdot)$ which is $(1/n)$-Lipschitz for all item $i \in \cU$ such that the following holds with probability at least $1/2$ over the randomness of the algorithm:
    \[
        \max_{i \in \cU} f(i;D) - f(\cA(D);D) \geq \Omega\left( \max\limits_{c \in (0,1)} \frac{\log^{c} K \log^{1-c}(1/\delta)}{\varepsilon n}\right).
    \]
\end{theorem}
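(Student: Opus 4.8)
The plan is to reuse the contrapositive ``purify-then-contradict'' recipe of Theorem~\ref{thm: mean_est_lb}, substituting the finite-output purification (Algorithm~\ref{alg:discrete}) for the continuous one and the pure-DP selection lower bound (Lemma~\ref{lem: kamalika_selection}) for the mean-estimation bound. Write $d = \lceil \log_2 K\rceil$ so that the item set $\cU$ embeds into $\{0,1\}^d$ through $\mathsf{BinMap}$ (padding $\cU$ up to $2^d$ dummy items of minimal value when $K$ is not a power of two), and observe that $K \ge 7$ forces $d \ge 3$. Fix an arbitrary $c \in (0,1)$ and suppose, toward a contradiction, that some $(\varepsilon,\delta)$-DP selection mechanism $\cA$ attains, on every instance $(\cX,f)$ and with probability exceeding a threshold $p_0$ to be fixed, a suboptimality gap
\[
\max_{i\in\cU} f(i;D) - f(\cA(D);D) \;\le\; \frac{\log^{c} K\,\log^{1-c}(1/\delta)}{a\,\varepsilon n},
\]
where $a$ is polylogarithmic in $d$ and $1/\varepsilon$ and will be pinned down below.

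First I would purify $\cA$ via Algorithm~\ref{alg:discrete}, letting $\cA_{\mathrm{pure}}$ denote the composition of $\cA$ with the discrete purification $\abc$. The upper end of the admissible range, $\delta < \varepsilon^{d}/(2d)^{3d}$, is exactly the hypothesis of Theorem~\ref{thm:index_embed}, which guarantees that $\cA_{\mathrm{pure}}$ is $2\varepsilon$-pure DP and decodes to the same item as $\cA$ with probability at least $1 - 2^{-d} - \tfrac{d}{2}e^{-d}$. This is where the discrete setting is strictly easier than the continuous one: on the event $\{u_\apx = u_\pure\}$ the returned items coincide, so $f(\cA_{\mathrm{pure}}(D);D) = f(\cA(D);D)$ and the suboptimality gap is preserved \emph{exactly}, with no Laplace-concentration term to control. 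I would then rescale the assumed bound to the pure-DP scale by two-sided control of $\log(1/\delta)$ from the admissible range: the lower end $\delta > \varepsilon^{3d}/(2d)^{3d}$ yields $\log(1/\delta) < 3d\log(2d/\varepsilon)$, whereas $\log K = d\log 2$, so $\log(1/\delta) \le g\,\log K$ with $g := \tfrac{3}{\log 2}\log(2d/\varepsilon)$ a polylogarithmic factor. Hence $\log^{1-c}(1/\delta) \le g^{1-c}\log^{1-c}K$, and the assumed gap is at most $\tfrac{g^{1-c}}{a}\cdot\tfrac{\log K}{\varepsilon n}$.

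Finally I would close the loop against Lemma~\ref{lem: kamalika_selection}. Applying that lemma to the $2\varepsilon$-pure DP mechanism $\cA_{\mathrm{pure}}$ supplies a hard instance $(\cX,f)$ on which the gap is $\Omega(\log K/(\varepsilon n))$; choosing $a = \Theta(g^{1-c})$ forces the assumed gap strictly below this floor, so on that instance the ``small-gap'' event of $\cA_{\mathrm{pure}}$ and the ``large-gap'' event of the lemma are disjoint. A union bound over the assumed success event of $\cA$ (probability $> p_0$) and the purification-success event (probability $\ge 1 - 2^{-d} - \tfrac{d}{2}e^{-d}$) shows $\cA_{\mathrm{pure}}$ realizes the small gap with probability exceeding $p_0 - (2^{-d} + \tfrac{d}{2}e^{-d})$, and for $d \ge 3$ the subtracted purification failure probability is below $0.2$. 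Since the two events are disjoint and cannot jointly carry probability above $1$, this forces a contradiction once the two probabilities sum past $1$; unwinding the contrapositive gives the lower bound for the fixed $c$, and taking the supremum over $c \in (0,1)$ produces the $\max_c$ form, with $\Omega$ absorbing the polylogarithmic factor $g^{1-c}$.

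I expect the probability bookkeeping to be the main obstacle. The finite-output purification succeeds only with probability $1 - 2^{-d} - \tfrac{d}{2}e^{-d}$, which is non-negligible at the smallest admissible $K$, so the bare $1/2$-probability version of the pure-DP floor does not by itself close the gap: one needs the selection lower bound to hold with probability comfortably above $\tfrac12 + (2^{-d}+\tfrac{d}{2}e^{-d})$, which the underlying $K$-item packing behind Lemma~\ref{lem: kamalika_selection} can supply precisely when $K \ge 7$ (equivalently $d \ge 3$). This is exactly why the constant $K\ge7$ and the target success-probability $\tfrac12$ in the statement must be calibrated together; everything else—the embedding into $\{0,1\}^d$, the range bounds on $\log(1/\delta)$, and the exact preservation of the selected item under discrete purification—is routine once this calibration is fixed.
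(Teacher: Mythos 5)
Your proposal follows essentially the same route as the paper's proof: fix $c\in(0,1)$, purify the hypothetical $(\varepsilon,\delta)$-DP selector with the discrete purification (Theorem~\ref{thm:index_embed}, whose hypothesis is exactly the upper end of the stated $\delta$ range), use the lower end of the range to bound $\log(1/\delta) \leq 3d\log(2d/\varepsilon)$ and hence relate it to $\log K$ up to a polylog factor, contradict Lemma~\ref{lem: kamalika_selection} applied to the resulting $2\varepsilon$-pure-DP mechanism, and take the supremum over $c$; your $p_0$-threshold bookkeeping is precisely the paper's device of inflating the contradiction hypothesis to success probability $\tfrac12 + 2^{-d} + \tfrac{d}{2}e^{-d}$. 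One correction to your closing discussion: in the paper, $K \geq 7$ is used only to guarantee $d \geq 3 \geq e$, hence $\log d \geq 1$, so that the purified error $\cO\bigl(d/(\varepsilon n \log^{1-c} d)\bigr)$ sits below the lemma's $\Omega\bigl(d/(\varepsilon n)\bigr)$ floor — it is not used to boost the success probability of Lemma~\ref{lem: kamalika_selection}, which is invoked with its stated $1/2$; your proposed strengthening of that lemma via its underlying packing argument is both unverified and unnecessary, since the paper tolerates exactly the same additive $2^{-d}+\tfrac{d}{2}e^{-d}$ probability slack that your union bound produces.
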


\begin{proof}
    Without loss of generality, we set $d=\lceil \log_2 K \rceil$, we have that $\log K = \Theta(d)$. For any $c\in(0,1)$, assume there exists an $(\varepsilon,\delta)$-DP algorithm such that with probability at least $\frac{1}{2} + 2^{-d} + \frac{d}{2\exp(d)}$, for any $D \in \cX^n$, we have $\max\limits_{i \in \cU} f(i,D)  - f(\cA(D),D) = \Omega \left( \frac{d^{c}\log^{1-c}(1/\delta)}{\varepsilon n a} \right)$, with $a$ being some term involved with $n,d$ which will be specified later.

    First, to ensure the quality of purification, we need to set $\delta \leq \varepsilon^{d}(2d)^{-3d}$, this ensures with probability at least $1-2^{-d}-\frac{d}{2}\exp(-d)$ over the randomness of purification algorithm, we have $\cA^{\textrm{purified}}(D)=\cA(D)$. 

    Further, in order to fulfill contrast argument, without loss of generality, we require 
    \[
        \frac{d^{c}\log^{1-c}(1/\delta)}{\varepsilon n a} \leq \frac{d}{\varepsilon n \log^{1-c}(d)}
    \]
    which implies:
    \[
        \delta>\exp(-a^{\frac{1}{1-c}} d/ \log(d))            
    \]
    Thus, to ensure the lower bound of $\delta$ doesn't exceed the upper bound of $\delta$, we require:
    \[
        a^{\frac{1}{1-c}} \geq \log(d)\log\left( \frac{8d^3}{\varepsilon} \right)
    \]
    So, we set $a^{\frac{1}{1-c}} = 3\log(d)\log\left( \frac{2d}{\varepsilon} \right)$, this implies 

    \[
        \delta \in \left(  \frac{1}{\exp(3d\log(2d/\varepsilon))},\quad \frac{\varepsilon^d}{(2d)^{3d}} \right)
    \]

    This implies with probability at least $1/2$, 
    \[
        \max\limits_{i \in \cU} f(i;D) -f(\cA^{\mathrm{purified}}(D);D) \leq \cO\left( \frac{d}{\varepsilon n \log^{1-c}(d)}  \right)
    \]
    Observe that, under the assumption of $K:=|\cU| \geq 7$ which implies \( d \geq \exp(1) \), the inequality above contradicts Lemma~\ref{lem: kamalika_selection}. Since \( c \in (0,1) \) was chosen arbitrarily, this completes the proof of the stated theorem.
\end{proof}

\subsection{An alternative proof for Theorem~\ref{thm: mean_est_lb}}\label{apx:thm9_alt}
A $\Tilde{\Omega}(d)$ lower bound for mean estimation under $(\varepsilon,\delta)$-DP can also be proved using a packing argument when $\delta$ is exponentially small, as detailed in the theorem below.

\begin{theorem}[Packing lower bound for $(\varepsilon,\delta)$-DP mean estimation]
Fix constants $\varepsilon>0$ and $\alpha\in(0,1]$. Let the data domain be $\mathcal X=[-1,1]^d$, and let $\mu(D)=\frac1n\sum_{i=1}^n x_i$ for $D\in\mathcal X^n$.
Suppose a mechanism $\mathcal M:\mathcal X^n\!\to\mathbb R^d$ is $(\varepsilon,\delta)$-DP under the replace-one neighboring relation and, for every dataset $D$, satisfies
\[
\mathbb{P}\!\big[\|\mathcal M(D)-\mu(D)\|_2\le \alpha\big]\ \ge\ 2/3.
\]
If $\delta \le \tfrac16\,e^{-n\varepsilon}$, then
\[
n\ \ge\ \frac{\log 2}{\varepsilon}\,(d-1).
\]
\end{theorem}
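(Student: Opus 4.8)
The plan is to prove the statement by a packing argument run against group privacy, using the exponentially small $\delta$ to keep the accumulated failure mass under control. First I would build the packing from the $2^d$ vertices of the cube: for each $v\in\{-1,1\}^d$ set $c_v:=v\in[-1,1]^d$ and let $D_v$ be the dataset all of whose $n$ points equal $c_v$, so that $\mu(D_v)=c_v$. Distinct vertices satisfy $\|c_v-c_{v'}\|_2\ge 2$, while $D_v$ and $D_{v'}$ differ in every one of their $n$ entries, i.e. they are at Hamming distance exactly $n$. I would then take $S_v$ to be the Voronoi cell of $c_v$ (the points at least as close to $c_v$ as to any other center). Since the nearest competing center sits at distance $\ge 2$, the closed ball $\{y:\|y-c_v\|_2\le\alpha\}$ is contained in $S_v$ for every $\alpha\le 1$, and the cells $\{S_v\}_v$ partition $\mathbb{R}^d$ up to a set of Lebesgue measure zero. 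This Voronoi formulation is what lets me treat the boundary case $\alpha=1$ without fuss, and it makes the key normalization $\sum_v \mathbb{P}[\mathcal{M}(D_{v_0})\in S_v]=1$ an exact identity rather than an inequality.

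Next I would combine accuracy with group privacy. Accuracy gives $\mathbb{P}[\mathcal{M}(D_v)\in S_v]\ge \mathbb{P}[\|\mathcal{M}(D_v)-\mu(D_v)\|_2\le\alpha]\ge 2/3$ for every $v$. Fixing a reference vertex $v_0$ and applying $(\varepsilon,\delta)$ group privacy along a length-$n$ chain of replace-one neighbors connecting $D_v$ to $D_{v_0}$, I obtain $\mathbb{P}[\mathcal{M}(D_v)\in S_v]\le e^{n\varepsilon}\,\mathbb{P}[\mathcal{M}(D_{v_0})\in S_v]+\delta_n$, where $\delta_n=\delta\sum_{j=0}^{n-1}e^{j\varepsilon}=\delta\,\frac{e^{n\varepsilon}-1}{e^{\varepsilon}-1}$. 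Summing over all $2^d$ vertices and using the partition identity yields $\tfrac23\,2^d\le e^{n\varepsilon}+2^d\delta_n$.

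Finally I would discharge the additive term and take logarithms. The hypothesis $\delta\le\tfrac16 e^{-n\varepsilon}$ is calibrated so that $\delta_n\le \tfrac16$, i.e. the exponentially small $\delta$ cancels the $e^{n\varepsilon}$ inflation introduced by group privacy. Substituting gives $2^{d-1}=2^d\bigl(\tfrac23-\tfrac16\bigr)\le e^{n\varepsilon}$, and taking logarithms produces $n\varepsilon\ge (d-1)\log 2$, i.e. $n\ge \frac{\log 2}{\varepsilon}(d-1)$, as claimed.

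The hard part is controlling the additive slack $\delta_n$ incurred by group privacy across the full Hamming distance $n$. Group privacy simultaneously inflates the multiplicative constant to $e^{n\varepsilon}$ and the additive budget to a geometric sum of order $e^{n\varepsilon}\delta$, and the packing multiplies this slack by $2^d$, so the entire scheme collapses unless $2^d\delta_n$ stays strictly below the useful signal $e^{n\varepsilon}$. This is precisely the role of the exponentially small $\delta$: the main care needed is to pin down the group-privacy additive bound tightly enough (and to identify the admissible $\delta$-regime, where $e^{\varepsilon}-1$ is treated as a constant) so that $\delta_n\le\tfrac16$ genuinely holds. Everything else — the cube packing, the Voronoi partition, and the summation — is standard packing bookkeeping.
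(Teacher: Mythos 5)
Your proposal is correct and takes essentially the same route as the paper's own proof: the identical $2^d$-vertex packing with constant datasets, the identical group-privacy step with additive slack $\delta_n=\delta\sum_{j=0}^{n-1}e^{j\varepsilon}$, and the identical arithmetic ending in $2^{d-1}\le e^{n\varepsilon}$; swapping the paper's disjoint $\alpha$-balls for Voronoi cells (and applying group privacy in the algebraically equivalent reverse direction) is a cosmetic variation. One shared caveat: deducing $\delta_n\le\tfrac16$ from $\delta\le\tfrac16 e^{-n\varepsilon}$ (the paper's claim $\delta_n\le\delta e^{n\varepsilon}$) in fact requires $e^{\varepsilon}-1\ge 1-e^{-n\varepsilon}$, i.e.\ roughly $\varepsilon\ge\log 2$ --- a restriction you explicitly flag but the paper silently elides.
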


\begin{proof}
For each $v\in\{\pm1\}^d$, set $\mu_v:=v\in[-1,1]^d$ and define $D^{(v)}=(\mu_v,\ldots,\mu_v)\in\mathcal D^n$.
Then $\mu(D^{(v)})=\mu_v$. For $u\neq v$, $\|\mu_u-\mu_v\|_2=2\sqrt{|\{j:u_j\neq v_j\}|}\ge 2$, so the $\alpha$-balls
\[
A_v:=\{y\in\mathbb R^d:\ \|y-\mu_v\|_2\le \alpha\}
\]
are pairwise disjoint for any $\alpha\le 1$. By the accuracy assumption, for all $v$, we have
\begin{equation}
\mathbb{P}\!\big[\mathcal M(D^{(v)}) \in A_v\big]\ \ge\ 2/3.
\label{eq:succ}
\end{equation}

If two datasets differ in at most $h$ positions, then for any measurable $S$, by the group privacy, we have
\[
\mathbb{P}[\mathcal M(D)\in S] \ \ge\ e^{-h\varepsilon}\Big(\mathbb{P}[\mathcal M(D')\in S]-\delta_h\Big),
\qquad
\delta_h\ \le\ \delta\sum_{i=0}^{h-1}e^{i\varepsilon}\ \le\ \delta\,\frac{e^{h\varepsilon}-1}{e^\varepsilon-1}\ \le\ \delta\,e^{h\varepsilon}.
\]
Notice that in our setting, $\operatorname{dist}(D^{(u)},D^{(v)})=n$ (when every coordinate is substituted), hence for $S=A_v$,
\begin{equation}
\mathbb{P}\!\big[\mathcal M(D^{(u)}) \in A_v\big]
\ \ge\ e^{-n\varepsilon}\Big(\mathbb{P}\!\big[\mathcal M(D^{(v)}) \in A_v\big]-\delta_n\Big)
\ \ge\ e^{-n\varepsilon}\cdot\frac{2}{3}\ -\ \delta,
\label{eq:xuAv}
\end{equation}
where we used $\delta_n\le \delta e^{n\varepsilon}$ so $e^{-n\varepsilon}\delta_n\le \delta$.

For fixed $u$, by the disjointness of the $A_v$’s, we have:
\[
1\ \ge\ \sum_{v}\mathbb{P}\!\big[\mathcal M(D^{(u)}) \in A_v\big]
\ =\ \underbrace{\mathbb{P}\!\big[\mathcal M(D^{(u)})\in A_u\big]}_{\ge\,2/3\text{ by }\eqref{eq:succ}}
\ +\ \sum_{v\neq u}\underbrace{\mathbb{P}\!\big[\mathcal M(D^{(u)}) \in A_v\big]}_{\ge\, e^{-n\varepsilon}\frac{2}{3}-\delta\text{ by }\eqref{eq:xuAv}}.
\]
which implies:
\[
1\ \ge\ \frac{2}{3} + (2^d-1)\Big(\tfrac{2}{3}e^{-n\varepsilon}-\delta\Big)
\quad\Rightarrow\quad
2^d \ \le\ 1 + \frac{1/3}{\tfrac{2}{3}e^{-n\varepsilon}-\delta}.
\]
If $\delta\le \tfrac{1}{6}e^{-n\varepsilon}$ then the denominator is at least $\tfrac{1}{2}e^{-n\varepsilon}$, and so
\[
2^d \ \le\ 1 + \frac{1/3}{\tfrac12 e^{-n\varepsilon}} \ \le\ 2e^{n\varepsilon}.
\]
Taking logrithm on both sides yields $d\log 2 \le n\varepsilon + \log 2$, i.e., $n\ge \frac{\log 2}{\varepsilon}(d-1)$.
\end{proof}

\section{Technical Lemmas}
\subsection{Supporting Results on Sparse Recovery}
For completeness, we introduce the results from sparse recovery \citep{kevinSR} that is used in Section~\ref{sec:DP-FW} and Appendix~\ref{appendix:fw}.
\begin{definition}[Numerical sparsity]\label{def: eff_sparse}
    A vector $x$ is $s$-numerically sparse if $\frac{\|x\|_1^2}{\|x\|_2^2} \leq s $.
\end{definition}

Numerical sparsity extends the traditional notion of sparsity. By definition, an $s$-sparse vector is also $s$-numerically sparse. A notable property of numerical sparsity is that the difference between a sparse vector and a numerically sparse vector remains numerically sparse, as stated in the following lemma.

\begin{lemma}[Difference of numerically sparse vectors]\label{lem: diff_eff_sparse}
    Let $x \in \mathbb{R}^d$ be an $s$-sparse vector. For any vector $x^\prime \in \mathbb{R}^d$ satisfying $\|x^\prime\|_1 \leq \|x\|_1$, the difference $x^\prime - x$ is $4s$-numerically sparse.
\end{lemma}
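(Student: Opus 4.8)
The plan is to use the standard ``cone condition'' argument from compressed sensing. Write $h := x' - x$ and let $S \subseteq \{1,\dots,d\}$ denote the support of the $s$-sparse vector $x$, so that $|S| \le s$. The goal is to bound the ratio $\|h\|_1^2 / \|h\|_2^2$ by $4s$, which by Definition~\ref{def: eff_sparse} establishes that $h$ is $4s$-numerically sparse.

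First I would split $h$ across the support $S$ and its complement $S^c$. Since $x$ vanishes off $S$, one has $h_{S^c} = x'_{S^c}$, while $h_S = x'_S - x_S$. The key step is to exploit the $\ell_1$-norm budget constraint $\|x'\|_1 \le \|x\|_1$: expanding $\|x'\|_1 = \|x'_S\|_1 + \|x'_{S^c}\|_1$ and using $\|x\|_1 = \|x_S\|_1$, I obtain $\|h_{S^c}\|_1 = \|x'_{S^c}\|_1 \le \|x_S\|_1 - \|x'_S\|_1$. Applying the reverse triangle inequality gives $\|x_S\|_1 - \|x'_S\|_1 \le \|x_S - x'_S\|_1 = \|h_S\|_1$, so the off-support $\ell_1$-mass is controlled by the on-support mass:
\[
\|h_{S^c}\|_1 \le \|h_S\|_1.
\]
This is precisely the cone condition, and it is the only place where the hypothesis $\|x'\|_1 \le \|x\|_1$ enters.

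From here the remaining computation is routine. Summing the two pieces gives $\|h\|_1 = \|h_S\|_1 + \|h_{S^c}\|_1 \le 2\|h_S\|_1$. Since $h_S$ is supported on a set of cardinality at most $s$, Cauchy--Schwarz yields $\|h_S\|_1 \le \sqrt{s}\,\|h_S\|_2 \le \sqrt{s}\,\|h\|_2$. Combining the two estimates, $\|h\|_1 \le 2\sqrt{s}\,\|h\|_2$, and squaring produces $\|h\|_1^2 / \|h\|_2^2 \le 4s$, as required.

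I do not expect a genuine obstacle here: the entire content is the observation that the $\ell_1$ budget together with the reverse triangle inequality forces the perturbation's mass to concentrate on the support of $x$, after which the factor $4 = 2^2$ emerges from the doubling in the $\ell_1$ bound followed by the dimension-$s$ Cauchy--Schwarz step. The only point demanding a word of care is the degenerate case $h = 0$, where the ratio is undefined; there the conclusion holds vacuously, so one simply assumes $h \ne 0$ throughout.
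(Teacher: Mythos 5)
Your proof is correct and follows essentially the same route as the paper's: both derive the cone condition $\|h_{S^c}\|_1 \le \|h_S\|_1$ from the $\ell_1$-budget hypothesis via the reverse triangle inequality, then conclude with $\|h\|_1 \le 2\|h_S\|_1 \le 2\sqrt{s}\,\|h\|_2$ and squaring. Your remark about the degenerate case $h=0$ is a minor refinement the paper omits (its final inequality $\|v\|_1^2 \le 4s\|v\|_2^2$ holds trivially there), but otherwise the two arguments are identical.
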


\begin{proof}
    Let $S:=\{i \in [d] \mid x[i] \neq 0\}$ and denote $v:=x^\prime - x$. We have
    \[
        \|x'\|_1 = \|x + v_S + v_{S^c}\|_1 = \|x + v_S\|_1 + \|v_{S^c}\|_1
        \geq \|x\|_1 - \|v_S\|_1 + \|v_{S^c}\|_1 \geq \|x'\|_1 - \|v_S\|_1 + \|v_{S^c}\|_1,
    \]
which implies  $\|v_S\|_1 \geq \|v_{S^c}\|_1$. Therefore,
    \begin{equation*}
        \begin{aligned}
            \|v\|_1 &= \|v_S\|_1 +  \|v_{S^c}\|_1 \\
            &\leq 2\|v_S\|_1 \\
            &\leq 2\sqrt{s}\|v_S\|_2 \\
            &\leq 2\sqrt{s}\|v\|_2
        \end{aligned}
    \end{equation*}
    which implies $\|v\|_1^2 \leq 4s \|v\|_2^2$. Thus, by Definition~\ref{def: eff_sparse}, $v$ satisfies $4s$-numerically sparse.
\end{proof}

If the vector \( x \) is \( s \)-sparse, we can reduce its dimension while preserving the \( \ell_2 \) norm using matrices that satisfy the Restricted Isometry Property.
\begin{definition}[$(e,s)$-Restricted isometry property (RIP)]
\label{def:rip}
A matrix $\Phi \in \mathbb{R}^{k \times d}$ satisfies the $(e,s)$-Restricted Isometry Property (RIP) if, for any $s$-sparse vector $x \in \mathbb{R}^d$ and some $e \in (0,1)$, the following holds:
\begin{equation*}
    (1 - e) \|x\|_2^2 \leq \|\Phi x\|_2^2 \leq (1 + e) \|x\|_2^2.
\end{equation*}
\end{definition}

For numerically sparse vectors, we can reduce the dimension while preserving utility by matrices satisfying a related condition -- the Restricted well-conditioned (RWC).
\begin{definition}[$(e,s)$-Restricted well-conditioned (RWC) (\citet{kevinSR}, Definition 4)]
    A matrix $\Phi \in \mathbb{R}^{k \times d}$ is $(e,s)$-Restricted well-conditioned (RWC) if, for any $s$-numerically sparse vector $x \in \mathbb{R}^d$ and some $e\in(0,1)$, we have 
    \begin{equation*}
        (1-e)\|x\|_2^2 \leq \|\Phi x\|_2^2 \leq (1+e)\|x\|_2^2 .
    \end{equation*}
\end{definition}

\begin{lemma}[\citet{kevinSR}, Lemma~2; \citet{candes2005decoding}, Theorem~1.4]
\label{lem:construct_RIP}
Let $\Phi \in \mathbb{R}^{k \times d}$ whose entries are independent and identically distributed Gaussian with mean zero and variance $\mathcal{N}(0, \frac{1}{k})$. For $e, \zeta \in (0,1)$, if
\[
k \geq C \cdot \frac{s \log \left( \frac{d}{s} \right) + \log \left( \frac{1}{\zeta} \right)}{e^2},
\]
for an appropriate constant $C$, then $\Phi$ satisfies $(e,s)$-RIP with probability $\geq 1 - \zeta$.
\end{lemma}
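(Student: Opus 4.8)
The plan is to follow the now-standard concentration-plus-net argument for random matrices (Baraniuk–Davenport–DeVore–Wakin; Candès–Tao). The goal is to show that $\Phi$ acts as a near-isometry simultaneously on every $s$-sparse vector. Since the set of $s$-sparse vectors is a union of $\binom{d}{s}$ coordinate subspaces, it suffices to control $\Phi$ on each such subspace and then take a union bound over supports. The scaling $k \gtrsim (s\log(d/s)+\log(1/\zeta))/e^2$ will emerge from balancing a Gaussian concentration tail against the number of subspaces and the size of a net on each.

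First I would establish pointwise concentration. Fix a unit vector $x \in \R^d$. Each coordinate of $\Phi x$ is mean-zero Gaussian with variance $1/k$, so $k\|\Phi x\|_2^2$ is a $\chi^2$ variable with $k$ degrees of freedom and $\Exs\!\left[\|\Phi x\|_2^2\right]=1$. The Laurent–Massart tail bound (a sub-gamma/Bernstein estimate) gives a universal constant $c_0>0$ with
\[
\Prob\!\left[\,\big|\,\|\Phi x\|_2^2 - 1\,\big| > t\,\right] \le 2\exp(-c_0 k t^2), \qquad t \in (0,1).
\]
Second, I would discretize each subspace. Fix a support $S \subseteq [d]$ with $|S|=s$ and let $\sphere_S$ denote the unit sphere of $\R^S$. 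For a net radius $\rho\in(0,1)$ there is a $\rho$-net $\cN_S$ of $\sphere_S$ in $\ell_2$ of cardinality at most $(1+2/\rho)^s$. Applying the tail bound with $t=\delta_0$ (a suitable constant fraction of $e$) to each point of $\cN_S$ and union bounding, with failure probability at most $(1+2/\rho)^s\cdot 2\exp(-c_0 k \delta_0^2)$ we get $\big|\,\|\Phi y\|_2^2-1\,\big|\le\delta_0$ for all $y\in\cN_S$.

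Third, I would pass from the net to the whole sphere. Writing $A:=\sup_{x\in\sphere_S}\|\Phi x\|_2$, the triangle inequality applied to the nearest net point $y$ of any $x\in\sphere_S$ gives $\|\Phi x\|_2\le\|\Phi y\|_2+\|\Phi(x-y)\|_2\le\sqrt{1+\delta_0}+\rho A$; taking the supremum yields $A\le\sqrt{1+\delta_0}/(1-\rho)$. Choosing $\rho$ and $\delta_0$ as appropriate constant multiples of $e$ makes this $\|\Phi x\|_2^2\le 1+e$ uniformly, and a matching lower bound gives $\|\Phi x\|_2^2\ge 1-e$, so $\Phi$ satisfies Definition~\ref{def:rip} restricted to $\R^S$. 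Finally, union bounding over supports and using $\binom{d}{s}\le(ed/s)^s$, the total failure probability is at most $(ed/s)^s(1+2/\rho)^s\cdot 2\exp(-c_0 k \delta_0^2)$, which is below $\zeta$ once $c_0 k \delta_0^2 \ge s\log(ed/s)+s\log(1+2/\rho)+\log(2/\zeta)$. Rearranging, this holds provided
\[
k \ge C\cdot\frac{s\log(d/s)+\log(1/\zeta)}{e^2}
\]
for an absolute constant $C$, as claimed.

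The main obstacle is the net-to-sphere extension: one must choose the net radius $\rho$ and the per-point distortion target $\delta_0$ jointly as constant fractions of $e$ so that distortion $\delta_0$ on a $\rho$-net upgrades cleanly to distortion $e$ on the full sphere, while keeping all constants folded into $C$. The $\chi^2$ tail is routine, and the support-count union bound is a direct application of $\binom{d}{s}\le(ed/s)^s$; the delicacy lies entirely in tracking constants through the recursion $A\le\sqrt{1+\delta_0}/(1-\rho)$ so that the final threshold has the stated form.
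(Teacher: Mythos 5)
The paper never proves this lemma: it is imported by citation (\citet{kevinSR}, Lemma~2; \citet{candes2005decoding}, Theorem~1.4) and used as a black box in the Frank--Wolfe dimension-reduction argument, so there is no internal proof to compare against. Your proposal is the standard Baraniuk--Davenport--DeVore--Wakin argument and it is correct: the $\chi^2_k$ concentration for a fixed unit vector, the $(1+2/\rho)^s$-point net on the unit sphere of each coordinate subspace, the bootstrap $A\le\sqrt{1+\delta_0}/(1-\rho)$ upgrading net distortion to uniform distortion, and the union bound over $\binom{d}{s}\le(ed/s)^s$ supports are exactly the ingredients behind the cited result, with the $1/e^2$ scaling coming from the Gaussian tail as you say. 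One minor caveat on the final arithmetic: your own derivation shows the true requirement is $c_0 k\delta_0^2 \ge s\log(ed/s)+s\log(1+2/\rho)+\log(2/\zeta)$, which is of order $s\log(ed/s)$ rather than $s\log(d/s)$; when $d/s=\mathcal{O}(1)$ the stated threshold $k\ge C\bigl(s\log(d/s)+\log(1/\zeta)\bigr)/e^2$ cannot absorb the extra additive $s/e^2$ term into the constant $C$ (take $d=s$, where $s\log(d/s)=0$). This imprecision is inherited from the lemma as stated (and as commonly stated in the literature), not introduced by you, and it is immaterial to the paper's application, where $s=4T\ll d$; it disappears if one reads $\log(d/s)$ as $\log(ed/s)$ or assumes $s\le d/2$.
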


There is a connection between RIP and RWC matrices:
\begin{lemma}[\cite{kevinSR}, Lemma~5]
\label{lem:RIP_RWC}
    For $\Phi\in\R^{m \times n}$ and $e\in (0,1)$, if $\Phi$ is $(\frac{e}{5}, \frac{25s}{e^2})$-RIP, then $\Phi$ is also $(e,s)$-RWC.
\end{lemma}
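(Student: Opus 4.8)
The plan is to deduce the restricted well-conditioned (RWC) bound for numerically sparse vectors from the restricted isometry property (RIP) for exactly sparse vectors, via a dyadic ``shelling'' decomposition that exploits the deliberate slack between the parameters $(\tfrac{e}{5},\tfrac{25s}{e^2})$ and $(e,s)$. Fix any $s$-numerically sparse $x\in\R^d$, so by Definition~\ref{def: eff_sparse} we have $\|x\|_1\le\sqrt{s}\,\|x\|_2$; by homogeneity I may assume $\|x\|_2=1$. Set $s':=25s/e^2$, sort the coordinates of $x$ in non-increasing order of magnitude, and partition the index set into consecutive blocks $T_0,T_1,T_2,\dots$ of size $s'$ (the last possibly smaller). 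Each restriction $x_{T_j}$ is then $s'$-sparse, so $\Phi$ obeys the $(\tfrac{e}{5},s')$-RIP bounds on every block.

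First I would establish the standard shelling estimate. Since every coordinate in $T_{j+1}$ has magnitude at most the average magnitude over $T_j$, we get $\|x_{T_{j+1}}\|_\infty\le \|x_{T_j}\|_1/s'$ and hence $\|x_{T_{j+1}}\|_2\le \|x_{T_j}\|_1/\sqrt{s'}$. Summing over $j\ge 0$ and telescoping the $\ell_1$ masses gives
\[
\sum_{j\ge1}\|x_{T_j}\|_2\ \le\ \frac{\|x\|_1}{\sqrt{s'}}\ \le\ \sqrt{\tfrac{s}{s'}}\,\|x\|_2\ =\ \tfrac{e}{5}.
\]
Consequently the tail energy is small, $\|x_{T_0^c}\|_2^2=\sum_{j\ge1}\|x_{T_j}\|_2^2\le(\sum_{j\ge1}\|x_{T_j}\|_2)^2\le e^2/25$, so the head satisfies $\|x_{T_0}\|_2^2\ge 1-e^2/25$.

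Next I would combine these estimates with the blockwise RIP bounds. Writing $\Phi x=\sum_j\Phi x_{T_j}$, the triangle inequality yields the upper bound $\|\Phi x\|_2\le\sqrt{1+\tfrac{e}{5}}\,\big(\|x_{T_0}\|_2+\sum_{j\ge1}\|x_{T_j}\|_2\big)\le\sqrt{1+\tfrac{e}{5}}\,(1+\tfrac{e}{5})$, and the matching lower bound
\[
\|\Phi x\|_2\ \ge\ \sqrt{1-\tfrac{e}{5}}\,\|x_{T_0}\|_2-\sqrt{1+\tfrac{e}{5}}\sum_{j\ge1}\|x_{T_j}\|_2\ \ge\ \sqrt{1-\tfrac{e}{5}}\,\sqrt{1-\tfrac{e^2}{25}}-\sqrt{1+\tfrac{e}{5}}\cdot\tfrac{e}{5}.
\]
It then remains to check that, for every $e\in(0,1)$, both scalar expressions lie in $[\sqrt{1-e},\sqrt{1+e}]$, whence squaring gives exactly the $(e,s)$-RWC inequalities. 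For the upper bound this reduces, with $t:=e/5\in(0,\tfrac15)$, to $(1+t)^3\le 1+5t$, i.e. $3t+t^2\le2$, which is immediate.

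I expect the main obstacle to be this final constant-chasing step, and specifically the lower bound. One must confirm that the subtractive term $\sqrt{1+e/5}\cdot(e/5)$ never overwhelms the product $\sqrt{1-e/5}\,\sqrt{1-e^2/25}$, so that the resulting quantity stays above $\sqrt{1-e}$ uniformly across the whole interval $e\in(0,1)$; the factor-of-$5$ slack in the RIP constant and the $25/e^2$ inflation of the sparsity level are precisely calibrated to make these inequalities hold. Verifying this uniformly (rather than just in the small-$e$ regime, where a first-order expansion already shows $1-\tfrac{3e}{10}\ge\sqrt{1-e}$) is the one place where more than a one-line estimate is needed.
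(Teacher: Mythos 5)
The paper itself does not prove this lemma; it is imported verbatim by citation (Lemma~5 of the cited sparse-recovery reference), so there is no in-paper argument to compare yours against. Your blind proof is correct, and it is the standard Cand\`es--Tao style shelling argument that such results are proved with: dyadic blocks of size $s'=25s/e^2$ in decreasing magnitude order, the estimate $\sum_{j\ge1}\|x_{T_j}\|_2\le\|x\|_1/\sqrt{s'}\le e/5$, and blockwise RIP combined via the triangle inequality. The only substantive thing you left open --- the uniform verification of the lower bound --- in fact closes in one line rather than requiring careful constant-chasing: with $t:=e/5$, note that
\[
\sqrt{1-t}\,\sqrt{1-t^2}\;=\;(1-t)\sqrt{1+t},
\]
so your lower bound equals $(1-2t)\sqrt{1+t}-\,$nothing extra, i.e.\ $(1-2t)\sqrt{1+t}$, which is nonnegative for $t<1/5$; squaring, the claim $(1-2t)\sqrt{1+t}\ge\sqrt{1-5t}$ is equivalent to
\[
(1-2t)^2(1+t)\;=\;1-3t+4t^3\;\ge\;1-5t
\quad\Longleftrightarrow\quad 2t+4t^3\ge0,
\]
which holds trivially for all $t\in(0,1/5)$, i.e.\ all $e\in(0,1)$. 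Together with your upper-bound check $(1+t)^3\le1+5t$, this completes the $(e,s)$-RWC bounds exactly. Two cosmetic points: you should restrict attention to blocks of integer size (take $\lfloor 25s/e^2\rfloor$ or assume integrality, as is standard), and note that all blocks have size at most $s'$ so the $(\tfrac{e}{5},s')$-RIP hypothesis applies to every $x_{T_j}$, including the final partial block.
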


Finally, we provide the following guarantee for Algorithm~\ref{alg: sparse-rec}.
\begin{lemma}[$\ell_1$ error guarantee from sparse recovery]\label{lem: l1_rec}
    Let $\Phi \in \mathbb{R}^{m \times n},$ and $ \theta_* \in \mathbb{R}^{n}$. Given noisy observation $b=\Phi\theta_* + \Tilde{w}$ with bounded $\ell_1$ norm of noise, i.e. $\|\Tilde{w}\|_1 \leq \xi$, consider the following noisy sparse recovery problem:
    \begin{equation*}
        \begin{aligned}
            & \hat{\theta} = \argmin_{\theta} \|\theta\|_1 \\
            & \text{s.t. }  \|\Phi\theta-b\|_1 \leq \xi
        \end{aligned}
    \end{equation*}
    where $\xi >0$ is the constraint of the noise magnitude. Suppose that $\theta$ is an $s$-sparse vector in $\mathbb{R}^n$ and that $\Phi$ is a $(4s,e)$-RWC matrix. Then, the following $\ell_1$ estimation error bound holds: 
    \begin{equation*}
        \|\hat{\theta} - \theta_*\|_1 \leq \frac{4\sqrt{s}}{\sqrt{1-e}}\cdot\xi
    \end{equation*}
    Moreover, the problem can be solved in $\mathcal{O} ((3m+4n+1)^{1.5}(2n+m)\texttt{prec})$ arithmetic operations in the worst case, with each operation being performed to a precision of $\cO(\texttt{prec})$ bits.
\end{lemma}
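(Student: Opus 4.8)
The plan is to combine the minimality of $\hat\theta$ with the restricted well-conditioning of $\Phi$, routed through the numerical sparsity of the error vector. First I would establish feasibility of the ground truth: since $b = \Phi\theta_* + \tilde w$ with $\|\tilde w\|_1 \le \xi$, we have $\|\Phi\theta_* - b\|_1 = \|\tilde w\|_1 \le \xi$, so $\theta_*$ lies in the feasible region $\{\theta : \|\Phi\theta - b\|_1 \le \xi\}$. Because $\hat\theta$ minimizes $\|\cdot\|_1$ over that region, this immediately gives the norm domination $\|\hat\theta\|_1 \le \|\theta_*\|_1$.

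The key structural step is then to control the error $v := \hat\theta - \theta_*$. Since $\theta_*$ is $s$-sparse and $\|\hat\theta\|_1 \le \|\theta_*\|_1$, Lemma~\ref{lem: diff_eff_sparse} applies verbatim and yields that $v$ is $4s$-numerically sparse, i.e.\ $\|v\|_1^2 \le 4s\,\|v\|_2^2$. This is precisely the regime in which the $(4s,e)$-RWC hypothesis bites: the lower RWC inequality gives $(1-e)\|v\|_2^2 \le \|\Phi v\|_2^2$.

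It remains to upper-bound $\|\Phi v\|_2$ using the $\ell_1$ noise budget. By the triangle inequality, $\|\Phi v\|_1 \le \|\Phi\hat\theta - b\|_1 + \|b - \Phi\theta_*\|_1 \le \xi + \xi = 2\xi$, where the first term is controlled by feasibility of $\hat\theta$ and the second equals $\|\tilde w\|_1$. Since $\|\cdot\|_2 \le \|\cdot\|_1$ on $\mathbb{R}^m$, we obtain $\|\Phi v\|_2 \le 2\xi$. Chaining the RWC lower bound with this upper bound, and then the numerical-sparsity inequality $\|v\|_1 \le 2\sqrt{s}\,\|v\|_2$, gives
\[
\|v\|_2 \le \frac{2\xi}{\sqrt{1-e}}, \qquad \|v\|_1 \le 2\sqrt{s}\,\|v\|_2 \le \frac{4\sqrt{s}}{\sqrt{1-e}}\,\xi,
\]
which is the claimed bound. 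For the runtime, I would observe that the program is a linear program: introducing auxiliary variables for $|\theta_i|$ (with constraints $-t_i \le \theta_i \le t_i$, objective $\sum_i t_i$) and for $|(\Phi\theta - b)_j|$ (with constraints $-u_j \le (\Phi\theta-b)_j \le u_j$ and $\sum_j u_j \le \xi$) turns it into an LP in $\mathcal{O}(n+m)$ variables and $\mathcal{O}(n+m)$ constraints, whose stated arithmetic complexity then follows from a standard interior-point LP solver at the indicated precision.

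The proof carries no serious obstacle; the only point requiring care is the transition from the $\ell_1$ noise budget to the $\ell_2$-based RWC inequality, which I would resolve simply through the trivial comparison $\|\cdot\|_2 \le \|\cdot\|_1$, absorbing the resulting constant rather than pursuing a sharper mixed-norm estimate. I would double-check only that the parametrization of numerical sparsity ($4s$ from Lemma~\ref{lem: diff_eff_sparse}) matches exactly the $(4s,e)$-RWC hypothesis, so that the well-conditioning inequality is genuinely applicable to $v$.
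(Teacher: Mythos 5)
Your proposal is correct and follows essentially the same route as the paper's own proof: feasibility of $\theta_*$ gives $\|\hat\theta\|_1\le\|\theta_*\|_1$, Lemma~\ref{lem: diff_eff_sparse} makes the error $4s$-numerically sparse, the RWC lower bound plus the comparison $\|\cdot\|_2\le\|\cdot\|_1$ and the triangle inequality on the residuals yield the $\frac{4\sqrt{s}}{\sqrt{1-e}}\xi$ bound, and the runtime follows from a standard LP reformulation. The only cosmetic difference is the LP encoding (you bound $|\theta_i|$ by auxiliary variables $t_i$, while the paper splits $\theta_i=u_i^+-u_i^-$), which does not change the argument.
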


\begin{proof}
For utility guarantee:
    By $\|\Tilde{w}\|_1 \leq \xi$, $\theta_*$ is a feasible solution. Thus, we have $\|\hat{\theta}\|_1 \leq \|\theta_*\|_1$, which implies $h:=\hat{\theta}-\theta_*$ is $4s$-numerically sparse by Lemma~\ref{lem: diff_eff_sparse}. Since $\Phi$ is $(e, 4s)$-RWC, we have:
    \begin{equation}
        (1-e)\|h\|_2^2 \leq \|\Phi h\|_2^2 \leq (1+e)\|h\|_2^2
    \end{equation}
    Now it remains to bound $\|h\|_1$:

    \begin{equation}
    \begin{aligned}
        \|h\|_1 &\leq \sqrt{4s}\|h\|_2 \\ 
        &\leq \sqrt{4s} \cdot \frac{\|\Phi h \|_2}{\sqrt{1-e}} \\
        &\leq \frac{2\sqrt{s}}{\sqrt{1-e}} (\|\Phi \hat{\theta}-b\|_1 + \|\Phi \theta_*-b\|_1) \\
        &\leq \frac{4\sqrt{s}}{\sqrt{1-e}} \cdot \xi
    \end{aligned}
    \end{equation}
    where the last inequality is by feasibility of $\hat{\theta}$ and the structure of $b$.

Now we prove the runtime guarantee. We first reformulate this problem to Linear Programming:
\begin{equation*}
   (P) \qquad \min_{\theta, u^+, u^-, v} \sum_{i=1}^{n} (u_i^+ + u_i^-)
\end{equation*}
subject to:
\begin{equation*}
    \begin{aligned}
        &\theta_i = u_i^+ - u_i^-, \quad u_i^+, u_i^- \geq 0, \quad \forall i = 1, \dots, n, \\
        &\Phi_j \theta - b_j \leq v_j, \quad \forall j = 1, \dots, m, \\
        &-(\Phi_j \theta - b_j) \leq v_j, \quad \forall j = 1, \dots, m, \\
        &\sum_{j=1}^{m} v_j \leq \xi, \\
        &v_j \geq 0, \quad \forall j = 1, \dots, m.
    \end{aligned}    
\end{equation*}
The problem $(P)$ has $2n + m$ variables and $2m + 2n + 1$ constraints. By \citet{vaidya1989speeding}, this can be solved in $\mathcal{O} ((3m+4n+1)^{1.5}(2n+m)B))$ arithmetic operations in the worst case, with each operation being performed to a precision of $\cO(B)$ bits.
\end{proof}

\subsection{A Concentration Inequality for Laplace Random Variables}

\begin{definition}
    [Laplace Distribution]
    $X \sim Lap(b)$ if its probability density function satisfies $f_{X}(t)=\frac{1}{2b}\exp\left( -\frac{|x|}{b}  \right)$.
\end{definition}

\begin{lemma}[Concentration of the $\ell_1$ norm of Laplace vector]\label{lem: conc_l1_exp}
    Let $X=(x_1,\ldots,x_k)$ with each $x_i$ independently identically distributed as $Lap(b)$. Then, with probability at least $1-\zeta$,
    \begin{equation*}
        \|X\|_1 \leq 2 kb + 2b\log(1/\zeta).
    \end{equation*}
\end{lemma}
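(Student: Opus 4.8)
The plan is to observe that $\lrn{X}_1 = \sum_{i=1}^k |x_i|$ is a sum of $k$ independent nonnegative random variables, and to control its upper tail with a standard Chernoff (moment-generating-function) argument. The key structural fact is that when $x_i \sim \Lap(b)$, the absolute value $|x_i|$ is exponentially distributed with mean $b$, so each summand is sub-exponential with a one-sided MGF that is finite on $(0,1/b)$. This makes $\lrn{X}_1$ a Gamma-distributed random variable, and the stated concentration inequality is simply a clean upper-tail bound for that distribution.

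Concretely, I would first record the per-coordinate moment generating function: for any $t \in (0,1/b)$,
\[
\mathbb{E}\!\left[e^{t|x_i|}\right] = \int_0^\infty e^{ty}\,\tfrac{1}{b}e^{-y/b}\,dy = \frac{1}{1-tb}.
\]
By independence of the coordinates, $\mathbb{E}\!\left[e^{t\lrn{X}_1}\right] = (1-tb)^{-k}$ on the same range of $t$. Applying Markov's inequality to $e^{t\lrn{X}_1}$ then yields, for every threshold $s>0$,
\[
\P\!\left[\lrn{X}_1 \ge s\right] \le e^{-ts}(1-tb)^{-k}.
\]

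The final step is to optimize (loosely) over the free parameter $t$. Rather than minimizing exactly, I would take $t = 1/(2b)$, which makes $(1-tb)^{-k} = 2^k$ and leaves the bound $e^{-s/(2b)}2^k$. Substituting the stated threshold $s = 2kb + 2b\log(1/\zeta)$ gives $e^{-s/(2b)} = e^{-k}\zeta$, so the tail probability is at most $\zeta\,(2/e)^k \le \zeta$ because $2/e < 1$. This establishes the claim. There is no genuine obstacle in this proof; the only point deserving care is the choice of $t$: taking $t=1/(2b)$ (rather than the exact minimizer) is precisely what produces the clean universal constant $2$ in front of both $kb$ and $b\log(1/\zeta)$, at the negligible cost of discarding the favorable factor $(2/e)^k$.
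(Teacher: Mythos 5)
Your proof is correct and takes essentially the same route as the paper's: both identify $\|X\|_1$ as a sum of i.i.d.\ exponentials (i.e., a $\Gamma(k,b)$ random variable) and control its upper tail by a Chernoff/MGF bound evaluated at the threshold $s = 2kb + 2b\log(1/\zeta)$, arriving at the same final estimate $\zeta\,(2/e)^k \le \zeta$. The only difference is cosmetic: the paper cites the optimized Chernoff tail bound for the Gamma distribution, $\P(\|X\|_1 \ge t) \le (t/(kb))^k e^{k - t/b}$, whereas you derive an unoptimized bound from the MGF and pick the fixed parameter $t = 1/(2b)$, which makes your write-up self-contained and slightly shorter while giving the identical conclusion.
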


\begin{proof}
    $\|X\|_1=\sum_{i=1}^k |x_i|$ follows the Gamma distribution $\Gamma(k,b)$, with probability density function $f(x)=\frac{1}{\Gamma(k)b^k}x^{k-1}e^{-\frac{x}{b}}$.
    Applying the Chernoff's tail bound of Gamma distribution $\Gamma(k,b)$, we have
    \[
    \P(\|X\|_1\geq t) \leq \lrp{\frac{t}{kb}}^k e^{k-\frac{t}{b}}, \text{ for } t>kb.
    \]
    Taking $t=2 kb + 2b\log(1/\zeta)$, we have 
    \begin{align*}
            \P(\|X\|_1\geq 2 kb + 2b\log(1/\zeta))&\leq \lrp{\frac{2 kb + 2b\log(1/\zeta)}{kb}}^k e^{k-\frac{2 kb + 2b\log(1/\zeta)}{b}}\\
            &\leq 2^k \lrp{1+\frac{\log(1/\zeta)}{k}}^k e^{-k} \zeta^2 \\
            &\leq \frac{1}{\zeta} \zeta^2 \lrp{\frac{2}{e}}^k\\
            &\leq \zeta.
    \end{align*}
\end{proof}

\end{document}